\def\llncs{0}
\def\fullpage{1}
\def\anonymous{0}
\def\notxfont{0}
\def\submission{0}
\def\cameraready{0}
\def\anonymous{1}
\def\llncs{1}
\def\submission{1}
\def\llncs{1}
\def\anonymous{0}
\def\llncs{0}
\definecolor{darkblue}{rgb}{0,0,0.6}
\definecolor{darkgreen}{rgb}{0,0.5,0}
\definecolor{maroon}{rgb}{0.5,0.1,0.1}
\definecolor{dpurple}{rgb}{0.2,0,0.65}
\newtheoremstyle{thicktheorem}%
{\topsep}
{\topsep}
{\itshape}{}%
{\bfseries}%
{.}
{ }%
{\thmname{#1}\thmnumber{ #2}%
		\thmnote{ (#3)}%
}
\newtheoremstyle{remark}
{\topsep}
{\topsep}
	{}
	{}
	{}
	{.}
	{ }
	{\textit{\thmname{#1}}\thmnumber{ #2}
			\thmnote{ (#3)}%
	}
	\theoremstyle{thicktheorem}
	\newtheorem{theorem}{Theorem}[section]
	\newtheorem{lemma}[theorem]{Lemma}
        \newtheorem{assumption}[theorem]{Assumption}
	\newtheorem{corollary}[theorem]{Corollary}
	\newtheorem{definition}[theorem]{Definition}
	\newtheorem{game}[theorem]{Game}
	\theoremstyle{remark}
	\newtheorem{claim}[theorem]{Claim}
	\newtheorem{remark}[theorem]{Remark}
	\crefname{theorem}{Theorem}{Theorems}
	\crefname{assumption}{Assumption}{Assumptions}
	\crefname{construction}{Construction}{Constructions}
	\crefname{corollary}{Corollary}{Corollaries}
	\crefname{conjecture}{Conjecture}{Conjectures}
	\crefname{definition}{Definition}{Definitions}
	\crefname{exmaple}{Example}{Examples}
	\crefname{experiment}{Experiment}{Experiments}
	\crefname{counterexample}{Counterexample}{Counterexamples}
	\crefname{lemma}{Lemma}{Lemmata}
	\crefname{observation}{Observation}{Observations}
	\crefname{proposition}{Proposition}{Propositions}
	\crefname{remark}{Remark}{Remarks}
	\crefname{claim}{Claim}{Claims}
	\crefname{fact}{Fact}{Facts}
	\crefname{note}{Note}{Notes}
 \crefname{appendix}{App.}{Appendices}
 \crefname{section}{Sec.}{Sections}
\renewcommand*{\backref}[1]{}
	\renewcommand*{\backref}[1]{(Cited on page~#1.)}
\newcommand{\taiga}[1]{$\ll$\textsf{\color{red} Taiga: { #1}}$\gg$}
\newcommand{\mor}[1]{$\ll$\textsf{\color{blue} Tomoyuki: { #1}}$\gg$}
\newcommand{\Samp}{\algo{Samp}}
\newcommand{\Ver}{\algo{Vrfy}}
\newcommand{\ans}{\mathsf{ans}}
\newcommand{\puzz}{\mathsf{puzz}}
\newcommand{\la}{\leftarrow}
\newcommand{\ra}{\rightarrow}
\newcommand{\seteq}{\coloneqq}
\newcommand{\cA}{\mathcal{A}}
\newcommand{\cB}{\mathcal{B}}
\newcommand{\cC}{\mathcal{C}}
\newcommand{\cD}{\mathcal{D}}
\newcommand{\cF}{\mathcal{F}}
\newcommand{\cG}{\mathcal{G}}
\newcommand{\cL}{\mathcal{L}}
\newcommand{\cM}{\mathcal{M}}
\newcommand{\cO}{\mathcal{O}}
\newcommand{\cP}{\mathcal{P}}
\newcommand{\cQ}{\mathcal{Q}}
\newcommand{\cR}{\mathcal{R}}
\newcommand{\cS}{\mathcal{S}}
\newcommand{\cT}{\mathcal{T}}
\newcommand{\cU}{\mathcal{U}}
\newcommand{\cV}{\mathcal{V}}
\newcommand{\cZ}{\mathcal{Z}}
\def\makeuppercase#1{
\expandafter\newcommand\csname tl#1\endcsname{\widetilde{#1}}
}
\def\makelowercase#1{
\expandafter\newcommand\csname tl#1\endcsname{\widetilde{#1}}
}
\newcommand{\N}{\mathbb{N}}
\newcommand{\R}{\mathbb{R}}
\newcommand*{\algo}[1]{\ensuremath{\mathsf{#1}}}
\newenvironment{boxfig}[2]{\begin{figure}[#1]\fbox{\begin{minipage}{0.97\linewidth}
                        \vspace{0.2em}
                        \makebox[0.025\linewidth]{}
                        \begin{minipage}{0.95\linewidth}
            {{
                        #2 }}
                        \end{minipage}
                        \vspace{0.2em}
                        \end{minipage}}}{\end{figure}}
\newcommand{\bit}{\{0,1\}}
\newcommand{\Gen}{\algo{Gen}}
\newcommand{\Vrfy}{\algo{Vrfy}}
\newcommand{\negl}{{\mathsf{negl}}}
\newcommand{\poly}{{\mathrm{poly}}}
\let\oldvec\vec
\let\vec\oldvec
\renewcommand*\l@author[2]{}
\renewcommand*\l@title[2]{}
\theoremstyle{remark}
\title{
\textbf{
Hardness of Quantum Distribution Learning 
and Quantum Cryptography
}}
\begin{document}

\ifnum\anonymous=1
\author{\empty}
\else
%
%
\ifnum\llncs=1
\index{Taiga, Hiroka}
\author{
	Taiga Hiroka\inst{1} 
}
\institute{
	Yukawa Institute for Theoretical Physics, Kyoto University, Japan  \and NTT Corporation, Tokyo, Japan
}
\else
%
%

\author[1,2]{Taiga Hiroka}
\author[1]{Min-Hsiu Hsieh}
\author[2,1]{Tomoyuki Morimae}
\affil[1]{{\small Hon Hai Research Institute, Taipei, Taiwan}\authorcr{\small \{taiga.hirooka,min-hsiu.hsieh\}@foxconn.com}}
\affil[2]{{\small Yukawa Institute for Theoretical Physics, Kyoto University, Kyoto, Japan}\authorcr{\small tomoyuki.morimae@yukawa.kyoto-u.ac.jp}}

\renewcommand\Authands{, }
\fi 
\fi

\ifnum\llncs=1
\date{}
\else
\date{}
\fi

\maketitle


\pagenumbering{gobble} 

\begin{abstract}
The existence of one-way functions (OWFs) forms the minimal assumption in classical cryptography.
However, this is not necessarily the case in quantum cryptography [Kretschmer 2021; Morimae and Yamakawa 2022; Ananth, Qian and Yuen 2022].
One-way puzzles (OWPuzzs) introduced by [Khurana and Tomer 2024] provide a natural quantum analogue of OWFs.
The existence of OWPuzzs implies $\mathbf{PP}\neq\mathbf{BQP}$ [Cavalar, Goldin, Gray, Hall, Liu, and Pelecanos 2025], while the converse remains
an open question.
In classical cryptography, the analogous problem -- whether OWFs can be constructed from $\mathbf{P}\neq\mathbf{NP}$ --
is one of the most central open problems, and has long been studied from the viewpoint of hardness of learning.
Hardness of learning in various frameworks (including PAC learning) has been connected to the existence of OWFs
or to $\mathbf{P}\neq\mathbf{NP}$.
In contrast, no such characterization previously existed for OWPuzzs.
In this paper, we establish the first complete characterization of OWPuzzs based on the hardness of a well-studied learning model:
distribution learning.
Specifically, we prove that OWPuzzs exist if and only if 
proper quantum distribution learning is hard on average.
A natural question that follows from our result is whether the worst-case hardness of
proper quantum distribution learning can be derived from $\mathbf{PP}\neq\mathbf{BQP}$,
because if this is the case and the worst-case to average-case hardness reduction within proper quantum distribution learning is achieved,
it would imply OWPuzzs solely from $\mathbf{PP}\neq\mathbf{BQP}$!
However, we show that answering the question positively would be extremely difficult:
if worst-case hardness of proper quantum distribution learning is $\mathbf{PP}$-hard (in a black-box reduction), 
then $\mathbf{SampBQP}\neq\mathbf{SampBPP}$ is derived solely from the infiniteness of the polynomial-time hierarchy,
which solves a long-standing open problem in the field of quantum advantage.
Despite that, we show that $\mathbf{PP}\neq\mathbf{BQP}$ is equivalent to another standard and well-studied notion of hardness of learning,
namely agnostic.
We show that $\mathbf{PP}\neq\mathbf{BQP}$ if and only if agnostic quantum distribution learning with respect to the KL divergence is hard.
Finally, as a byproduct, we also obtain an interesting implication for quantum advantage: we show that
hardness of agnostic quantum distribution learning with respect to the statistical distance
against ${\rm PPT}^{\Sigma_3^\mathbf{P}}$ learner implies $\mathbf{SampBQP}\neq\mathbf{SampBPP}$.
This provides the first construction of sampling-based quantum advantage from
a hardness assumption on a standard and natural framework in learning theory.
Moreover, hardness of agnostic learning is of the worst-case type, and therefore this result is the first construction of
sampling-based quantum advantage solely from a worst-case hardness assumption.
\end{abstract}

\ifnum\cameraready=1
\else
\ifnum\llncs=1
\else
\newpage
  \setcounter{tocdepth}{2}      
  \setcounter{secnumdepth}{2}   
  \tableofcontents
  \pagenumbering{arabic}
  \setcounter{page}{0}          
  \thispagestyle{empty}
  \clearpage
\fi
\fi

\section{Introduction}

In classical cryptography, the existence of one-way functions (OWFs) is the minimal assumption~\cite{FOCS:ImpLub89}:
OWFs are known to be existentially equivalent to a wide range of cryptographic primitives, including
pseudorandom generators (PRGs)~\cite{Hill99}, 
pseudorandom functions (PRFs)~\cite{JACM:GolGolMic86}, 
secret-key encryption (SKE)~\cite{GM84}, message authentication codes (MACs)~\cite{C:GolGolMic84}, 
digital signatures~\cite{STOC:Rompel90}, and commitment schemes~\cite{C:Naor89}.
Moreover, nearly all cryptographic primitives (including public-key encryption (PKE) and multiparty computation) imply OWFs. 

In contrast, in quantum cryptography, OWFs would not necessarily be the minimal assumption~\cite{TQC:Kre21,C:MorYam22,C:AnaQiaYue22}.
Several quantum analogues of OWFs, PRGs, and PRFs have been introduced, 
such as pseudorandom unitaries (PRUs)~\cite{C:JiLiuSon18}, pseudorandom state generators (PRSGs)~\cite{C:JiLiuSon18}, one-way state generators (OWSGs)~\cite{C:MorYam22}, 
one-way puzzles (OWPuzzs)~\cite{STOC:KhuTom24}, and EFI pairs~\cite{ITCS:BCQ23}. 
They could be weaker than OWFs~\cite{TQC:Kre21,STOC:KQST23,STOC:LomMaWri24,myKreQiaTal24},
yet they support a variety of important applications including private-key quantum money~\cite{C:JiLiuSon18}, 
SKE~\cite{C:AnaQiaYue22}, MACs~\cite{C:AnaQiaYue22}, digital signatures~\cite{C:MorYam22}, 
commitments~\cite{C:MorYam22,C:AnaQiaYue22,AC:Yan22}, and multiparty computation~\cite{C:MorYam22,C:AnaQiaYue22,C:BCKM21b,EC:GLSV21}.

In particular, 
one-way puzzles (OWPuzzs) introduced by Khurana and Tomer \cite{STOC:KhuTom24} are
a natural quantum analogue of OWFs, and one of the most fundamental quantum cryptographic primitives.
A OWPuzz is a pair $(\Samp,\Ver)$ of two algorithms. $\Samp$ is a quantum polynomial-time (QPT) algorithm that, on input the security parameter $1^n$, outputs
two classical bit strings, $\ans$ and $\puzz$. $\Ver$ is an unbounded algorithm that, on input $(\puzz,\ans')$, outputs $\top/\bot$.
Correctness requires that $\Ver$ accepts $(\puzz,\ans)\gets\Samp(1^n)$ with high probability.
Security requires that no QPT adversary that receives $\puzz$ can output $\ans'$ such that $(\puzz,\ans')$ is accepted by $\Ver$ with high probability.

The existence of OWPuzzs implies $\mathbf{PP}\neq\mathbf{BQP}$~\cite{arXiv:CGGHLP23}.
On the other hand, in a recent breakthrough paper by Khurana and Tomer~\cite{STOC:KT25}, OWPuzzs were constructed from $\mathbf{PP}\neq\mathbf{BQP}$ in conjunction with 
an additional standard assumption in the field of 
quantum advantage~\cite{STOC:AarArk11,BreMonShe16}, which we refer to as the ``quantum advantage assumption'' for 
simplicity\footnote{This is so-called ``anti-concentration'' and ``average-case $\mathbf{\#P}$-hardness''. For the definition, see \cref{def:QAA}.}.
Although the quantum advantage assumption has been intensively investigated for over a decade, and some partial progress has been made~\cite{NatPhys:BFNV19,Ramis}, 
it remains unproven. Moreover, the assumption was a newly introduced one, devised specifically for this purpose, 
and has not been explored in other areas.
As a result, the following question posed by \cite{STOC:KT25} remains one of the most important open problems in quantum cryptography.
\begin{center}
\emph{Can OWPuzzs be constructed solely from $\mathbf{PP}\neq\mathbf{BQP}$?}
\end{center}

This question serves as a quantum analogue of the long-standing open problem in classical cryptography~\cite{DifHel76}:
Can OWFs be constructed from $\mathbf{P}\neq\mathbf{NP}$ (or $\mathbf{NP}\not\subseteq\mathbf{BPP}$)? 
In the classical setting, OWFs and $\mathbf{P}\neq\mathbf{NP}$ (or $\mathbf{NP}\not\subseteq\mathbf{BPP}$)
have been extensively studied through the lens of learning hardness~\cite{STOC:Valiant84,ACM:PV88,FOCS:ImpLev90,C:BFKL93,STOC:Regev05,C:NaoYog15,FOCS:HirNan23}.
Impagliazzo and Levin \cite{FOCS:ImpLev90} first constructed OWFs based on a specific type of learning hardness, 
and subsequent works~\cite{C:BFKL93,ACM:NR06,C:NaoYog15,FOCS:HirNan23} extended this approach to more standard and general learning models.
In particular, OWFs can be constructed from the average-case hardness of proper PAC learning~\cite{STOC:Valiant84,C:BFKL93},
while $\mathbf{NP}\nsubseteq\mathbf{BPP}$ implies the worst-case hardness of proper PAC learning~\cite{ACM:PV88}.
These characterizations suggest a promising pathway toward resolving the open problem of constructing
OWFs from $\mathbf{NP}\not\subseteq\mathbf{BPP}$, namely,
establishing the worst-case to average-case hardness reduction within proper PAC learning!
Learning theory is closely connected to meta-complexity~\cite{Ko91,FOCS:HirNan21,CCC:GKLO22,CCC:GolKab23}, 
which is another compelling
direction for addressing the open problem~\cite{FOCS:Hirahara18,FOCS:LiuPas20,C:LiuPas21,IRS21,C:LiuPas23,STOC:HILNO23,STOC:Hirahara23}.
Therefore, characterizing OWFs with hardness of learning also deepens our understanding of OWFs in terms of meta-complexity. 

Characterizing cryptographic primitives with hardness of learning has another important implication: it can offer
valuable insights into identifying concrete hardness assumptions on which cryptographic primitives can be based.
Indeed, well-known assumptions such as Learning Parity with Noise (LPN) and Learning with Errors (LWE)~\cite{C:BFKL93,STOC:Regev05}
are rooted in the hardness of specific learning problems.
Establishing foundations for quantum cryptographic primitives based on concrete mathematical hard problems remains one of the central goals in the field.

However, to the best of our knowledge, no such characterization with learning hardness has been established for OWPuzzs.

\if0
The existence of one-way functions (OWFs) is essential for constructing classical cryptographic primitives.
From OWFs, we can construct several cryptographic primitives including 
pseudorandom generators (PRGs)~\cite{Hill99}, 
pseudorandom functions (PRFs)~\cite{JACM:GolGolMic86}, 
secret-key encryption (SKE)~\cite{GM84}, message authentication codes (MAC)~\cite{C:GolGolMic84}, digital signatures~\cite{STOC:Rompel90}, and commitments~\cite{C:Naor89}. 
Furthermore, OWFs is necessary to construct these cryptographic primitives.

\taiga{Similar to previous our paper from here}
Recent several works~\cite{TQC:Kre21,C:MorYam22,C:AnaQiaYue22} open up the possibility of quantum cryptography without OWFs.
Several new primitives have been introduced such as pseudorandom state generators~(PRSGs)~\cite{C:JiLiuSon18}, one-way state generators (OWSGs)~\cite{C:MorYam22}, 
one-way puzzles~(OWPuzzs)~\cite{STOC:KhuTom24}, and EFI pairs~\cite{ITCS:BCQ23}. 
Although they could be weaker than OWFs~\allowbreak\cite{TQC:Kre21,STOC:KQST23,STOC:LomMaWri24,myKreQiaTal24},
they are usefult for constructing several cryptographic primitives such as private-key quantum money schemes~\cite{C:JiLiuSon18}, SKE~\cite{C:AnaQiaYue22}, 
MAC~\cite{C:AnaQiaYue22}, 
digital signatures~\cite{C:MorYam22}, commitments~\cite{C:MorYam22,C:AnaQiaYue22,AC:Yan22}, 
and multiparty computations~\cite{C:MorYam22,C:AnaQiaYue22,C:BCKM21b,EC:GLSV21}.
\taiga{to here}

While several quantum cryptographic primitives are believed to be weaker than OWFs, we do not know how to give instantiations of these primitives without relying on OWFs.
Therefore, one of the most important goal in this field is to base quantum cryptographic protocols from assumptions plausibly weaker than OWFs such as $\mathbf{PP}\neq\mathbf{BQP}$.
In the classical case, while it is a major open problem how to construct cryptographic primitives from $\mathbf{NP}\nsubseteq\mathbf{BPP}$ or its average-case version, several works try to base cryptography from $\mathbf{NP}\nsubseteq\mathbf{BPP}$.
One active approach is to base cryptography from the hardness of learning~\cite{FOCS:ImpLub90,C:BFKL93,STOC:Regev05,C:NaoYog15,FOCS:HirNan23,myLiuMazPas24}.

\cite{FOCS:ImpLub90} initiated the study of constructing cryptographic primitives based on the hardness of learning, and subsequent works~\cite{C:BFKL93} have extended this direction to more standard and general learning models.
These works construct OWFs from the average-case hardness of learning problems such as PAC learning~\cite{STOC:Valiant84}.
Since it has been an open problem whether $\mathbf{NP}\nsubseteq \mathbf{BPP}$ implies the average-case hardness of PAC learning, these results do not immediately imply that OWFs from $\mathbf{NP} \nsubseteq \mathbf{BPP}$.
However, in the classical setting, the worst-case hardness of proper PAC learning follow from $\mathbf{NP} \nsubseteq \mathbf{BPP}$~\cite{ACM:PV88}\taiga{Add citation}.
Therefore, if we can reduce the average-case hardness to the worst-case hardness within proper PAC learning, it may lead to constructions of cryptographic protocols based on the assumption $\mathbf{NP} \nsubseteq \mathbf{BPP}$.
In fact, \cite{FOCS:HirNan23} makes a partial progress in this direction by establishing a partial worst-case to average-case reduction for learning problems.
Furthermore, exploring the general connection between learning and cryptography has also led to the formulation of concrete assumptions such as the Learning Parity with Noise (LPN) and the Learning with Errors (LWE) assumptions~\cite{C:BFKL93,STOC:Regev05}.
Given these backgrounds, it is a natural and important direction to investigate the connections between learning and cryptography in the quantum regime as well.

While some recent research has explored the relationship between cryptography and learning in the quantum regime, still it is unclear which kinds of learning hardness can be used to construct cryptographic primitives.
Therefore, we study the following question:

\begin{center}
\textit{
Which learning hardness could be useful for constructing quantum cryptographic primitives?
}
\end{center}

In the classical case, PAC learning is known to be $\mathbf{NP}$-hard.
Because of the $\mathbf{NP}$-hardness of PAC learning, if we can establish the worst-case to average-case reduction within proper PAC learning, we can base OWFs from $\mathbf{NP}\nsubseteq\mathbf{BPP}$.
On the other hand, in the quantum setting, we even do not known which hardness of learning tasks can be characterized by $\mathbf{PP}\neq\mathbf{BQP}$.
Therefore, it is natural to ask the following question.
\begin{center}
\textit{
Which hardness of learning problems is equivalent to $\mathbf{PP}\neq \mathbf{BQP}$?
}
\end{center}
\fi

\subsection{Our Results}
In this paper, we establish, for the first time, complete characterizations of OWPuzzs and $\mathbf{PP}\neq\mathbf{BQP}$ with hardness of learning.

In the classical setting, characterizing cryptographic primitives through the hardness of PAC learning has a long history~\cite{C:BFKL93,ABX08,Nanashima20}.
In (classical) PAC learning, the learner receives samples $((x_1,y_1),(x_2,y_2),...,(x_T,y_T))$, where $x_1,x_2,...,x_T$ are independently sampled from
a distribution $\cD$, $y_i=f(x_i)$, and $f$ is a function computable in classical deterministic polynomial-time.
The learner's goal is to output a hypothesis function $h$ such that
$\Pr_{x\gets \cD}[f(x)\neq h(x)]$ is small.
A natural quantum analogue is that the function $f$ is evaluated in deterministic QPT.
However, OWPuzzs will not be properly characterized with hardness of such quantum PAC learning, because
hardness of such quantum PAC learning is broken with access to a $\mathbf{QCMA}$ oracle,
while OWPuzzs will not be broken with access to a $\mathbf{QCMA}$ oracle~\cite{TQC:Kre21}.
On the other hand, OWPuzzs are broken with access to a $\mathbf{PP}$ oracle~\cite{arXiv:CGGHLP23}.
Thus, to obtain a correct characterization of OWPuzzs, we must consider a learning model whose hardness is not broken by a
$\mathbf{QCMA}$ oracle but broken by a $\mathbf{PP}$ oracle.

Our key idea is to focus on distribution learning~\cite{STOC:KMRRSS94,FOCS:HirNan23}.
Distribution learning is a learning framework in which the learner receives samples from a distribution 
and must ``learn'' the distribution. 
(For example, in proper distribution learning, the learner receives samples from a distribution with a hidden parameter,
and the learner has to recover the parameter.)
Unlike in PAC learning, distribution learning lacks an efficient verification, and therefore its hardness will not be broken with access to a $\mathbf{QCMA}$ oracle.
On the other hand, because the power of $\mathbf{PP}$ is enough to estimate quantum probability distributions~\cite{FR99}, 
hardness of quantum distribution learning should be broken with access to a $\mathbf{PP}$ oracle.

Moreover, distribution learning is among the most standard learning models, and has long been 
studied in the field of learning theory~\cite{Yatracos85,ML:AW92,STOC:KMRRSS94,Devroye_Lugosi,diakonikolas2019,bousquet2020,FOCS:HirNan23}.
This makes it a natural and well-motivated candidate for exploring connections to OWPuzzs.

\if0
\mor{main result no corollary nanodewa?}
Strangely, as far as we know, there is no explicitly written proof that quantum distribution learning is broken with the $\mathbf{PP}$ oracle.
Therefore for our goal, we first have to confirm that distribution learning is indeed broken with the $\mathbf{PP}$ oracle.
Our first result is the following.
\begin{theorem}
If $\mathbf{PP}=\mathbf{BPP}$ (resp. $\mathbf{NP}\subseteq \mathbf{BPP}$), then
for any QPT (resp. PPT) sampleable distribution $\cD$, a PPT algorithm can learn $\cD$.
\end{theorem}
This theorem is informal, and we do not specify in what sense we can ``learn'' $\cD$.
Details will be given later, but now it is enough to know that all types of hardness of learning that we use in this paper
are broken with the $\mathbf{PP}$ oracle.
As far as we know, no upper-bound of the hardness of agnostic distribution learning
was known before even in the classical case.
This theorem gives the first upper-bound.
\fi

\paragraph{OWPuzzs and distribution learning.}
Our first main result is the complete characterization of OWPuzzs with hardness of distribution learning.

\begin{theorem}\label{thm:OWPuzzs_distlearn}
OWPuzzs exist if and only if
proper quantum distribution learning 
is average-case hard.
\end{theorem}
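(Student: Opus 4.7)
The plan is to prove both implications separately, using the intuition that in the OWPuzz an answer $\ans$ corresponds to a hidden distribution parameter and a puzzle $\puzz$ corresponds to samples from the associated distribution $\cD_{\ans}$. I would work throughout with a multi-sample version of OWPuzz (whose puzzle is a tuple of independent draws) so that the correspondence with the multi-sample setting of distribution learning is transparent.

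For the direction ``hard learning $\Rightarrow$ OWPuzz,'' let $\{\cD_{\theta}\}_{\theta\in\Theta_n}$ be a class of QPT-sampleable distributions for which proper quantum distribution learning is average-case hard under some QPT-sampleable distribution $\mu_n$ on parameters. I would set $\Samp(1^n)$ to sample $\theta \gets \mu_n$, draw $T=\poly(n)$ independent samples $s_1,\ldots,s_T \gets \cD_\theta$, and output $\ans := \theta$ and $\puzz := (s_1,\ldots,s_T)$. The unbounded verifier $\Ver(\puzz,\theta')$ accepts iff a statistical-distance test between the empirical distribution of $\puzz$ and $\cD_{\theta'}$ passes (equivalently, iff a maximum-likelihood-style test for $\theta'$ succeeds). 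Correctness follows from standard empirical-distribution concentration for $T=\poly(n)$. For security, any QPT adversary outputting $\theta'$ such that $\Ver(\puzz,\theta')=\top$ must by a Yatracos-style argument satisfy $\TD{\cD_{\theta'}}{\cD_\theta}$ small, and therefore is itself a proper distribution learner, contradicting the hardness assumption.

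For the direction ``OWPuzz $\Rightarrow$ hard learning,'' given a OWPuzz $(\Samp,\Ver)$ I would first transform it into a canonical two-stage form in which $\Samp$ samples $\ans$ from a marginal $\mu_n$ and then samples $\puzz$ from a QPT-sampleable conditional $\cD_\ans$, and further amplify it so that the puzzle is $\puzz'=(\puzz_1,\ldots,\puzz_T)$ of $T$ independent draws from the same $\cD_\ans$. Taking $\{\cD_\ans\}_\ans$ with hard-instance distribution $\mu_n$ as the distribution class, a hypothetical QPT learner $L$ outputting $\ans'$ with $\TD{\cD_{\ans'}}{\cD_\ans}\le \negl$ immediately yields a OWPuzz adversary: on input $\puzz'$, run $L(\puzz')$ to obtain $\ans'$. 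To conclude that $\Ver(\puzz_1,\ans')$ accepts with noticeable probability, observe that OWPuzz correctness guarantees acceptance of a fresh draw from $\cD_{\ans'}$, and exchanging that draw for one from $\cD_\ans$ changes the acceptance probability by at most the negligible total variation distance.

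The main obstacle is the canonicalization and sample-amplification step in the $\Rightarrow$ direction: a generic OWPuzz need not be equipped with a QPT sampler for the conditional $\cD_\ans$ given a specific $\ans$, because $\Samp$'s joint distribution can tangle $\ans$ and $\puzz$ in ways resistant to a clean two-stage decomposition. I would handle this by first relaxing to a weak OWPuzz whose puzzle is a tuple of samples and then invoking hardness amplification to return to a standard OWPuzz. A secondary subtlety is to make the closeness-to-acceptance transfer rigorous — choosing $T$ and all parameters so that the hybrid argument over the $T$ samples in $\puzz'$ does not wash out the adversary's advantage — but this step is essentially routine once the amplification is in place.
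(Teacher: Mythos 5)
Your first direction (hardness of learning $\Rightarrow$ OWPuzz) matches the paper's approach quite closely: both construct $\Samp$ by sampling the hidden parameter and then drawing $T$ independent samples, and both let the unbounded $\Vrfy$ compute a maximum-likelihood estimate $z^*$ from the samples and accept iff the proposed answer is statistically close to $\cD(z^*)$. The Yatracos-style security argument you sketch is exactly what the paper does (via its $\mathsf{Good}_{n,z,\epsilon}$ event and a triangle inequality), so this half is fine modulo routine bookkeeping.

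The second direction (OWPuzz $\Rightarrow$ hardness of learning) has a genuine gap that your ``canonicalization and amplification'' plan does not fix. The paper's definition of proper quantum distribution learning requires the learning class to be given by a \emph{QPT} algorithm $\cD(1^n,z)$ that takes the parameter $z$ as input. For a generic OWPuzz, $\Samp$ jointly samples $(\ans,\puzz)$, and there is simply no QPT algorithm that, on input a specific $\ans$, samples from the conditional distribution of $\puzz$ given that $\ans$; such a conditional sampler would generally require inverting $\Samp$, which is exactly the kind of power a QPT machine should not have if the primitive is secure. You identify this as ``the main obstacle'' and propose to relax to a tuple-of-samples puzzle plus hardness amplification, but neither step produces the missing conditional sampler: making the puzzle a tuple doesn't change the fact that you cannot efficiently sample conditioned on $\ans$, and hardness amplification operates on the security parameter of an already-well-defined OWPuzz, not on the structure of its sampler. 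A secondary problem is the correctness-transfer claim: OWPuzz correctness guarantees acceptance averaged over $(\ans,\puzz)\gets\Samp$, not acceptance of $(\ans',\puzz')$ where $\puzz'$ is a fresh draw from a putative conditional $\cD_{\ans'}$, so the hybrid argument you sketch has no foothold.

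The paper sidesteps the whole issue by taking a structurally different route: it invokes the known equivalence between OWPuzzs and \emph{non-uniform QPRGs} ($\Gen$), and then \emph{explicitly constructs} a fresh learning instance $(\cS,\cD)$ from $\Gen$: $\cS$ samples $(\mu,b)\gets[n]\times\bit$ uniformly, and $\cD(1^n,(\mu,b))$ outputs $(\xi,\mu)$ with $\xi\gets\Gen(1^n,\mu)$ if $b=0$ and $\xi\gets U_n$ if $b=1$. This $\cD$ is manifestly QPT because $\Gen$ is. A learner that recovers $(\mu,b)$ must in particular distinguish $\Gen(\mu)$ from uniform, breaking the PRG, and the residual subtlety (some $\mu$ may have $\Gen(\mu)$ close to uniform) is handled by a $\mathsf{Check}$ subroutine that identifies good $\mu$. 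So for this direction you would need to replace your plan with a construction of an explicit QPT $\cD$ from some primitive that OWPuzzs are already known to imply, rather than trying to extract a conditional sampler from the OWPuzz itself.
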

Proper quantum distribution learning is a quantum analogue of
proper distribution learning introduced in \cite{STOC:KMRRSS94}.\footnote{
The word proper means that the learner has to output a hypothesis that is inside of the learning class (which is $\{\cD(z)\}_z$ in the current case).
If the learner can output any distribution that is close to the learning distribution, it is called improper.}
Let $\cD$ be a QPT algorithm that, on input a bit string $z$, outputs a bit string. 
In proper quantum distribution learning, the learner is given samples $x_1,...,x_t$ independently sampled from the distribution $\cD(z)$ without knowing $z$,
and has to output a hypothesis $z^*$ such that 
$\mathsf{SD}(\cD(z), \cD(z^*)) \leq \epsilon$, where $\mathsf{SD}$ is the statistical distance. 
We consider the \emph{average-case hardness} of this task, namely, 
$z$ is sampled from a QPT algorithm $\cS$, and the probability (over $\cS$, $\cD(z)$, and the learner's algorithm)
that the learner outputs $z^*$ such that 
$\mathsf{SD}(\cD(z), \cD(z^*)) \leq \epsilon$ should be small for any QPT learner.

\paragraph{$\mathbf{PP}\neq\mathbf{BQP}$ and distribution learning.}
In our first result above, \cref{thm:OWPuzzs_distlearn},
we have shown that {\it average-case} hardness of proper quantum distribution learning is equivalent to the existence of OWPuzzs.
A natural question raised from this result is whether the {\it worst-case} hardness of proper quantum distribution learning is implied by
$\mathbf{PP}\neq\mathbf{BQP}$.
If this is the case, and if the worst-case to average-case hardness reduction within proper quantum distribution learning is achieved,
OWPuzzs is obtained solely from $\mathbf{PP}\neq\mathbf{BQP}$! 
Unfortunately, showing worst-case hardness of proper quantum distribution learning from $\mathbf{PP}\neq\mathbf{BQP}$ seems to be extremely difficult.
This is suggested from the following our second result.
\begin{theorem}\label{thm:PP_worstdistlearn}
If the worst-case hardness of proper quantum distribution learning is $\mathbf{PP}$-hard in a PPT-black box reduction, then
$\mathbf{PP}\not\subseteq\mathbf{BPP}^{\Sigma_3^\mathbf{P}}$ implies $\mathbf{SampBQP}\neq\mathbf{SampBPP}$.
\end{theorem}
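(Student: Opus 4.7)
The plan is to prove the contrapositive: assuming (i) the stated $\mathbf{PP}$-hardness via a PPT black-box reduction $\mathcal{R}$ and (ii) $\mathbf{SampBQP}=\mathbf{SampBPP}$, I will derive $\mathbf{PP}\subseteq\mathbf{BPP}^{\Sigma_3^\mathbf{P}}$, which is equivalent to the stated implication. The overall strategy is to build a classical learner for the worst-case proper quantum distribution learning problem that lives in $\mathbf{BPP}^{\Sigma_3^\mathbf{P}}$, and then plug that learner into $\mathcal{R}$.

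First, I would dequantize the sampler. Given the QPT sampler $\cD$ defining the proper quantum distribution learning task, the assumption $\mathbf{SampBQP}=\mathbf{SampBPP}$ yields a PPT sampler $\widetilde{\cD}$ whose output distribution is $1/\poly$-close in total variation to $\cD$'s for any chosen polynomial. Since the accuracy parameter $\epsilon$ in the learning problem is polynomially bounded, one may safely replace $\cD$ by $\widetilde{\cD}$ without changing what counts as a valid learner. Thus proper quantum distribution learning of $\{\cD(z)\}_z$ reduces to proper classical distribution learning of the PPT-sampleable family $\{\widetilde{\cD}(z^*)\}_{z^*}$.

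Next, I would implement this classical learner in $\mathbf{BPP}^{\Sigma_3^\mathbf{P}}$ via the Yatracos/Scheff\'e minimum-distance estimator. Draw $t=\poly(n/\epsilon)$ samples from $\widetilde{\cD}(z)$. For each pair $(z^*,z^{**})$ of candidate parameters, consider the Scheff\'e set $A_{z^*,z^{**}}=\{x:\widetilde{\cD}(z^*)(x)>\widetilde{\cD}(z^{**})(x)\}$, and compare $z^*$ against $z^{**}$ by checking which of $\widetilde{\cD}(z^*)(A_{z^*,z^{**}})$ and $\widetilde{\cD}(z^{**})(A_{z^*,z^{**}})$ is closer to the empirical mass of $A_{z^*,z^{**}}$. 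Output the $z^*$ that wins the resulting tournament. The two $\#\mathbf{P}$-flavored primitives involved, namely deciding membership in $A_{z^*,z^{**}}$ and approximating $\widetilde{\cD}(\cdot)(A_{z^*,z^{**}})$, are handled by Stockmeyer-style approximate counting on the PPT sampling circuit of $\widetilde{\cD}$, which sits in $\mathbf{BPP}^{\mathbf{NP}}$. Wrapping the existential search over $z^*$ and the universal comparison over $z^{**}$ around these oracle calls, the whole learner collapses to a $\mathbf{BPP}^{\Sigma_3^\mathbf{P}}$ procedure, and classical Yatracos guarantees that it outputs $z^*$ with $\mathsf{SD}(\widetilde{\cD}(z),\widetilde{\cD}(z^*))\le O(\epsilon)$ with overwhelming probability. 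Finally, composing this learner with the assumed PPT black-box reduction $\mathcal{R}$ yields a $\mathbf{BPP}^{\Sigma_3^\mathbf{P}}$ algorithm for every language in $\mathbf{PP}$, completing the contrapositive.

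The main obstacle sits in the second step. The Scheff\'e sets are defined by strict inequalities on $\#\mathbf{P}$-valued quantities, but Stockmeyer's approximate counting only recovers multiplicative estimates; the analysis must therefore be robust to a small "boundary" region of points near equality and to the $1/\poly$ dequantization error inherited from replacing $\cD$ with $\widetilde{\cD}$. One must show that these combined perturbations only blow up the final distance by $O(\epsilon)$, so the Yatracos tournament's noise-tolerance still yields a valid learner. A secondary subtlety is amplifying the learner's success probability to $1-\negl$, as required for $\mathcal{R}$'s worst-case black-box correctness, while staying inside $\mathbf{BPP}^{\Sigma_3^\mathbf{P}}$; standard independent repetition and majority should suffice here, but the bookkeeping must be threaded carefully through the oracle queries.
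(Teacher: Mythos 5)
Your proposal matches the paper's argument: both dequantize $\cD$ via $\mathbf{SampBQP}=\mathbf{SampBPP}$, then invoke a $\mathbf{BPP}^{\Sigma_3^\mathbf{P}}$ agnostic classical learner built from Scheff\'e/Yatracos-style distinguishers implemented with Stockmeyer approximate counting (this is precisely \cref{thm:NP_learning}, used via \cref{thm:classical_upperbound}), noting that an agnostic learner is in particular a worst-case proper learner since $\mathsf{Opt}=0$ in the realizable case, and finally feed that learner to the black-box reduction to conclude $\mathbf{PP}\subseteq\mathbf{BPP}^{\Sigma_3^\mathbf{P}}$. The technical subtleties you flag — robustness to multiplicative estimation error, absorbing the $1/\poly$ dequantization slack into the agnostic guarantee, and extracting the hypothesis bit-by-bit from the $\Sigma_3^\mathbf{P}$ decision oracle — are exactly the points the paper's proof of \cref{thm:NP_learning} addresses.
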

It is a long-standing open problem in the field of quantum advantage whether sampling-based quantum advantage is derived
solely from the infiniteness of the polynomial-time hierarchy. (In fact, \cite{CCC:AarChe17} showed that such a proof should be non-relativized.)
Hence showing the worst-case hardness of proper quantum distribution learning from $\mathbf{PP}\neq\mathbf{BQP}$ seems to be extremely challenging
(at least in a black-box reduction).

We are thus led to consider an alternative question: can $\mathbf{PP}\neq\mathbf{BQP}$ be characterized with any standard hardness notion of distribution learning?
Interestingly, $\mathbf{PP}\neq\mathbf{BQP}$ can be characterized with another well-studied notion of hardness, namely, 
\emph{agnostic}.
We show the following result.
\begin{theorem}\label{thm:PP_agnostic}
$\mathbf{PP}\neq \mathbf{BQP}$
if and only if agnostic quantum distribution learning with respect to the KL divergence is hard.
\end{theorem}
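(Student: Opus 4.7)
I would prove the theorem in two independent directions, each exploiting the special structure of KL divergence. The backbone of both directions is the folklore equivalence: approximating output probabilities of a QPT sampler up to $1/\poly$ multiplicative error is $\mathbf{PP}$-complete (one direction is Stockmeyer-style approximate counting; the other uses $\mathbf{BQP}\subseteq\mathbf{PP}$). KL divergence is the ideal distance here, because the objective to minimize is exactly the expected log-likelihood, so the learning problem is just Maximum Likelihood Estimation against a QPT-samplable hypothesis class, and likelihoods are precisely the $\mathbf{PP}$-hard objects.

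\paragraph{Easy direction ($\mathbf{PP}=\mathbf{BQP}\Rightarrow$ agnostic learning is easy).}
The plan is to implement empirical cross-entropy minimization. Rewriting
\[
\min_{z}\mathrm{KL}(\mathcal{D}^*\|\mathcal{D}(z))\;=\;-H(\mathcal{D}^*)+\min_{z}\,\mathbb{E}_{x\sim\mathcal{D}^*}[-\log \mathcal{D}(z)(x)],
\]
the learner draws $T=\poly(n)$ samples $x_1,\dots,x_T$ from $\mathcal{D}^*$, evaluates $\hat L(z)=-\frac{1}{T}\sum_i\log\mathcal{D}(z)(x_i)$, and outputs $z^\star=\arg\min_z \hat L(z)$. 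Under $\mathbf{PP}=\mathbf{BQP}$, each likelihood $\mathcal{D}(z)(x_i)$ is computable to $1/\poly$ multiplicative accuracy in QPT, so $\hat L(z)$ is QPT-computable; and since Toda's theorem gives $\mathbf{PH}\subseteq\mathbf{P}^{\mathbf{PP}}$, the assumption yields $\mathbf{PH}\subseteq\mathbf{BQP}$, so the search for the minimizer is QPT as well. A standard uniform-convergence bound, after truncating $\log$ at an inverse-exponential floor (equivalently, mixing every hypothesis with a $\negl(n)$ amount of the uniform distribution, which changes the KL by $\negl(n)$), shows $\hat L$ concentrates around the true cross-entropy and hence $z^\star$ is near-optimal.

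\paragraph{Hard direction (agnostic learning is easy $\Rightarrow\mathbf{PP}\subseteq\mathbf{BQP}$).}
I would proceed by contrapositive, using an agnostic KL learner as a subroutine for $\mathbf{PP}$ decisions. Given a QPT sampler producing a distribution $\mathcal{D}^*$, the plan is to design a QPT-samplable learning class $\{\mathcal{D}(z)\}_z$ that is expressive enough to realize $\mathcal{D}^*$ exactly for some parameter $z_0$, so that the agnostic guarantee becomes a proper-realizable guarantee with optimum $0$; the learner then returns $z^\star$ with $\mathrm{KL}(\mathcal{D}^*\|\mathcal{D}(z^\star))\leq\eps$, and by Pinsker's inequality $\mathcal{D}(z^\star)$ is $\sqrt{2\eps}$-close to $\mathcal{D}^*$ in total variation. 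Because $z^\star$ is an explicit description (not just samples), the likelihoods $\mathcal{D}(z^\star)(x)$ are computable in $\mathbf{BQP}$ whenever the class is chosen with efficient likelihood oracles, yielding a QPT additive estimator for $\mathcal{D}^*(x)$. I would then amplify this polynomial-precision estimator to the exponential precision $\mathbf{PP}$ requires by the standard self-reducibility of approximate counting: a $1/\poly$-accurate additive estimator for probabilities of QPT-samplable distributions, combined with random restrictions and the closure properties of $\mathbf{BQP}$, iteratively resolves the exponentially small gap characterizing $\mathbf{PP}$.

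\paragraph{Main obstacle.}
I expect the main obstacle to lie in the hard direction, specifically in the precision amplification step: a naive reduction only yields inverse-polynomial accuracy on individual probabilities, while $\mathbf{PP}$ decisions require distinguishing $\Pr>1/2$ from $\Pr<1/2$ with an exponentially small gap. The cleanest workaround is to rephrase $\mathbf{PP}$ as estimation of entropy or of log-likelihoods (where the natural precision is additive in nats, scaling linearly with circuit size), and to place this estimation inside the learner's loss function itself—so that the learner's $\eps$-optimality in KL translates directly into $\eps$-approximation of the $\mathbf{PP}$-hard quantity. A secondary technical nuisance in the easy direction is the unboundedness of $\log\mathcal{D}(z)(x)$, which I would handle by the smoothing trick indicated above; this is a purely routine fix once the main structure is in place.
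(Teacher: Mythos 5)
Your high-level intuition---that KL divergence reduces to log-likelihood, which reduces to probability estimation, which is the natural $\mathbf{PP}$-complete object---matches the spirit of the paper's proof. The paper also proceeds via maximum-likelihood estimation, introducing worst-case hardness of quantum maximum likelihood (QML) as an explicit intermediate notion, and proving the chain $\mathbf{PP}\neq\mathbf{BQP}\Leftrightarrow\text{QML hard}\Leftrightarrow\text{agnostic KL learning hard}$. The translation between KL learning and QML is exactly your ``KL = expected negative log-likelihood, so learning = MLE'' observation, carried out with a Hoeffding-style uniform-convergence argument (after the smoothing/floor trick you also note). So the overall architecture is close.

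However, both of your concrete reductions have genuine gaps. In the ``easy'' direction, after sampling and forming $\hat L(z)$, you still need to find an approximate minimizer over $z\in\{0,1\}^n$. You invoke Toda to conclude $\mathbf{PH}\subseteq\mathbf{BQP}$ and then claim the search is QPT, but the predicate ``$\hat L(z)\le v$'' is not a $\mathbf{PH}$ predicate: it involves quantum output probabilities, so the relevant class is something like $\mathbf{NP}^{\mathbf{PP}}$, and it is not known that $\mathbf{PP}=\mathbf{BQP}$ collapses $\mathbf{NP}^{\mathbf{PP}}$ into $\mathbf{BQP}$. The paper sidesteps this by citing a prior result that $\mathbf{PP}=\mathbf{BQP}$ gives a QPT algorithm directly achieving the QML guarantee. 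In the ``hard'' direction your plan does not go through: you want to use the learner's output $z^\star$ to get a likelihood oracle, but for a QPT-samplable class, $\Pr[x\gets\cD(z^\star)]$ is $\#\mathbf{P}$-hard to evaluate even given $z^\star$ explicitly; the only way to make ``likelihoods computable in $\mathbf{BQP}$'' true is to restrict the class, but then it is no longer expressive enough to realize arbitrary QPT samplers, which your realizability step requires. You flag the precision-amplification issue as the main obstacle, but the deeper problem is upstream. The paper's key idea, which you are missing, is to make the \emph{index itself} carry the answer: starting from a $\mathbf{PostBQP}$ machine $\cM$ for $L$, it defines $\cD(c)$ as the output distribution of $\cM$ postselected on $b=c$ and $b^*=1$, and feeds the learner the singleton target $\cT_x$; minimizing KL against $\cT_x$ is then equivalent to picking the $c\in\{0,1\}$ with larger postselected success probability, which decides $x\in L$ directly, with no need to ever estimate a probability from the learner's output.
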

Let us explain what is agnostic quantum distribution learning.
Let $\cD$ be a QPT algorithm that takes a bit string $z$ as input and outputs a bit string.
In agnostic quantum distribution learning, 
the learner receives samples $x_1,...,x_T$ independently sampled from an unknown distribution $\cT$, 
which is not necessarily contained in $\{\cD(z)\}_z$.
The goal of the learner is to output a hypothesis $z^*$ such that the KL divergence \(D_{\mathrm{KL}}(\cT \,\|\, \cD(z^*))\) is close to the optimal value, 
\(\min_{z} D_{\mathrm{KL}}(\cT \,\|\, \cD(z))\). 
The hardness means that no QPT learner can, with high probability, output such a near-optimal \(z^*\).
Agnostic distribution learning has been well studied in the learning theory~\cite{Yatracos85,ML:AW92,STOC:KMRRSS94,Devroye_Lugosi,diakonikolas2019,bousquet2020,FOCS:HirNan23}.

In summary, we have characterized OWPuzzs and $\mathbf{PP}\neq\mathbf{BQP}$ with two standard and well-studied hardness notions of distribution learning,
namely, proper and agnostic, respectively.
A natural and important question is whether these two notions of hardness can be connected.
If such a connection could be established, it would lead to a construction of OWPuzzs solely from $\mathbf{PP}\neq\mathbf{BQP}$!
However, establishing this connection is extremely challenging and lies beyond the scope of the present paper. 
Even in the classical setting, only characterizing cryptographic primitives with hardness of learning has been considered a significant achievement~\cite{C:BFKL93,ACM:NR06,C:NaoYog15,FOCS:HirNan23}, and, to date,
no one has succeeded in connecting different notions of hardness except for a very recent partial progress in ~\cite{FOCS:HirNan23}.
Therefore we believe that our first complete characterizations of OWPuzzs and $\mathbf{PP}\neq\mathbf{BQP}$
with hardness of distribution learning
constitute a meaningful contribution to the field of quantum cryptography, and
represent an important first step toward the ultimate goal of constructing OWPuzzs from $\mathbf{PP}\neq\mathbf{BQP}$.

As previously discussed,
studying the connection between the hardness of learning and quantum cryptography has another important implication: it may provide
valuable insights into identifying concrete hardness assumptions on which quantum cryptographic primitives can be based
like Learning Parity with Noise (LPN) and Learning with Errors (LWE) assumptions~\cite{C:BFKL93,STOC:Regev05} in classical cryptography.
Establishing quantum cryptographic primitives based on concrete mathematical hard problems is one of the central goals in the field.
We hope that our characterizations of OWPuzzs with hardness of learning
will contribute to the discovery of such concrete assumptions for quantum setting.

In addition to the main results above, we also obtain several interesting additional results, which we will explain in the following.
\cref{figure} also provides a summary of our results.
\if0
In addition to the above main results, we also obtain several interesting additional results. In the following, we explain them.
\cref{figure} also provides a summary of our results.
\fi

\if0
\paragraph{Hardness of maximum likelihood.}
We also obtain the following two results.
\begin{theorem}\label{thm:average_maxlike}
OWPuzzs exist if and only if
quantum maximum likelihood is
average-case hard.
\end{theorem}
\begin{theorem}\label{thm:worst_maxlike}
$\mathbf{PP}\neq \mathbf{BQP}$
if and only if
quantum maximum likelihood 
is worst-case hard.
\end{theorem}
Let
$\cD$ be a QPT algorithm that takes a bit string $z$ as input and outputs a bit string $x$.
In quantum maximum likelihood, the learner gets a bit string $x$,
and has to find $z^*$ such that $\Pr[x\la\cD(z^*)]$ is close to $\max_{z}\Pr[x\la\cD(z)]$.
In the worst-case hardness, there exists an $x$ such that no QPT learner can find such $z^*$ with high probability.
In the average-case hardness, $z$ is sampled from a QPT algorithm, $x$ is sampled from $\cD(z)$,
and no QPT learner who gets $x$ can find such $z^*$ 
with high probability.

\cref{thm:average_maxlike,thm:worst_maxlike} show that OWPuzzs and $\mathbf{PP}\neq\mathbf{BQP}$ correspond
to the average-case and worst-case hardness of quantum maximum likelihood, respectively.
Hence if we could show the worst-case to average-case hardness reduction within the quantum maximum likelihood, we obtain OWPuzzs
solely from $\mathbf{PP}\neq\mathbf{BQP}$.
Maximum likelihood is a often-used standard technique in learning and complexity theory\mor{cite}, 
and therefore characterizations of OWPuzzs and $\mathbf{PP}\neq\mathbf{BQP}$ with maximum likelihood should
also be useful.
\fi

\paragraph{Sampling-based quantum advantage from hardness of learning.}
Sampling-based quantum advantage~\cite{STOC:AarArk11,BreMonShe16} refers to the existence of a distribution that can be
sampled by a QPT algorithm but not by any PPT algorithm.
Sampling-based quantum advantage has been shown for several ``sub-universal'' quantum computing models including random circuits~\cite{NatPhys:BFNV19}, Boson sampling~\cite{STOC:AarArk11}, IQP~\cite{BreJozShe10,BreMonShe16}, 
and the one-clean-qubit model~\cite{FKMNTT18,MorDQC1additive}.
So far, sampling-based quantum advantage is obtained from the combination of two assumptions. One is $\mathbf{PP}\not\subseteq\mathbf{BPP}^{\mathbf{NP}}$.
The other is a more complicated assumption, which we call ``quantum advantage assumption'' for 
simplicity\footnote{That is so-called ``anti-concentration'' and ``average-case $\mathbf{\#P}$-hardness''. See \cref{def:QAA} for details.}.
Although the quantum advantage assumption has been extensively studied for over a decade and some progress has been made~\cite{NatPhys:BFNV19,Ramis}, 
the assumption itself remains unproven. Moreover, it is a 
newly introduced assumption devised specifically for this purpose, and
has not been explored in any other areas.
Therefore, an important open problem in the field is
to derive sampling-based quantum advantage from a more standard and natural assumption.

We show the following two results. 
\begin{theorem}
If the quantum advantage assumption holds and $\mathbf{PP}\not\subseteq\mathbf{BQP}^{\Sigma_3^{\mathbf{P}}}$,
then agnostic quantum distribution learning with respect to the statistical distance
is hard against QPT algorithms with access to a $\Sigma_3^{\mathbf{P}}$ oracle.
\end{theorem}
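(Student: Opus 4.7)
The plan is to argue the contrapositive: assuming there exists a QPT learner with a $\Sigma_3^{\mathbf{P}}$ oracle that agnostically learns distributions in statistical distance, I will show, together with the quantum advantage assumption (QAA), that $\mathbf{PP}\subseteq\mathbf{BQP}^{\Sigma_3^{\mathbf{P}}}$, contradicting the second hypothesis. The overall template follows the Aaronson--Arkhipov / Bremner--Jozsa--Shepherd sampling-based quantum-advantage proof, except that the hypothetical classical sampler is replaced by the hypothetical agnostic learner, which supplies a succinct classical description $C^\ast$ of a distribution close to the target in statistical distance; this succinct description is then what makes Stockmeyer-style approximate counting applicable.

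Concretely, I would take the circuit family supplied by QAA (say, random IQP circuits) and define the learning class $\{\cD(C)\}_C$ where $\cD(C)$ is the output distribution of $C$ on $|0^n\rangle$, with $\cD$ implemented in QPT. For a random $C$, draw polynomially many samples from $\cD(C)$ and feed them to the agnostic learner. Because $C$ itself lies in the class, the optimum value of $\min_z \mathsf{SD}(\cD(C),\cD(z))$ is $0$, so the learner outputs $C^\ast$ with $\mathsf{SD}(\cD(C),\cD(C^\ast))\le \varepsilon$ for any inverse polynomial $\varepsilon$. A standard Markov argument applied to $\sum_x|\cD(C)(x)-\cD(C^\ast)(x)|\le 2\varepsilon$ implies that, for a $1-O(\sqrt{\varepsilon})$ fraction of $x\sim \cD(C^\ast)$, the ratio $\cD(C)(x)/\cD(C^\ast)(x)$ lies in $1\pm O(\sqrt{\varepsilon})$; by QAA's anti-concentration, a constant-probability event also places $\cD(C)(x)=\Omega(2^{-n})$. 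Estimating $\cD(C^\ast)(x)$ itself via Stockmeyer approximate counting, which sits inside $\mathbf{BPP}^{\mathbf{NP}}\subseteq\mathbf{BPP}^{\Sigma_3^{\mathbf{P}}}$, then yields a $1\pm 1/\poly$ multiplicative approximation of $\cD(C)(x)$ on a noticeable fraction of random $(C,x)$.

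Plugging this approximation into the average-case $\mathbf{\#P}$-hardness component of QAA upgrades it to a $\mathbf{BQP}^{\Sigma_3^{\mathbf{P}}}$ algorithm that solves $\mathbf{PP}$, giving $\mathbf{PP}\subseteq \mathbf{BQP}^{\Sigma_3^{\mathbf{P}}}$ and the required contradiction. The main obstacle I anticipate is the bookkeeping of the three chained error sources -- the learner's statistical-distance slack $\varepsilon$, the Markov-to-multiplicative conversion that degrades it to $\sqrt{\varepsilon}$, and Stockmeyer's relative error -- so that the final multiplicative precision is still $1/\poly$, which is exactly the regime QAA's worst-case to average-case reduction can tolerate; the oracle accounting is essentially routine, since $\mathbf{NP}\subseteq \Sigma_3^{\mathbf{P}}$ lets the same oracle absorb both the learner's queries and the Stockmeyer queries, so the whole procedure remains in $\mathbf{BQP}^{\Sigma_3^{\mathbf{P}}}$.
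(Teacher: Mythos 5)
Your contrapositive plan has a genuine gap at the Stockmeyer step. Because the learning class $\{\cD(z)\}_z$ in your construction is a family of \emph{quantum} circuits (the circuit ensemble from \cref{def:QAA}), the hypothesis $C^\ast$ returned by the agnostic learner is again (an index into) a quantum circuit, so $\Pr[x\la\cD(C^\ast)]$ is a quantum output probability. Stockmeyer-style approximate counting (\cref{claim:stockmeyer}, \cref{thm:estimate}) gives multiplicative estimates only for \emph{PPT-samplable} distributions; for quantum circuits the analogous estimation requires a $\mathbf{PP}$ oracle (\cref{lem:Extrapolation_PP}). So your estimation step lands in $\mathbf{BQP}^{\mathbf{PP}}$, not $\mathbf{BQP}^{\Sigma_3^{\mathbf{P}}}$, and the resulting containment $\mathbf{PP}\subseteq\mathbf{BQP}^{\mathbf{PP}}$ is vacuous. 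In fact, in this setup the learner buys you nothing: $\Pr[x\la\cD(C)]$ and $\Pr[x\la\cD(C^\ast)]$ are equally inaccessible to $\mathbf{BPP}^{\mathbf{NP}}$, so passing from $C$ to $C^\ast$ does not defeat any $\#\mathbf{P}$-hardness. This is precisely why \cref{cor:Q_AD} (the direction that \emph{does} exploit the learner) needs the extra hypothesis $\mathbf{SampBQP}=\mathbf{SampBPP}$, which first replaces the quantum sampler by a classical one so that Stockmeyer applies; the present theorem provides no such hypothesis.

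The paper avoids the issue entirely by arguing \emph{forward} rather than by contraposition: \cref{def:QAA} together with $\mathbf{PP}\nsubseteq\mathbf{BQP}^{\Sigma_3^{\mathbf{P}}}$ gives average-case hardness of quantum probability estimation against $\Sigma_3^{\mathbf{P}}$-aided algorithms (\cref{QPE}, following Khurana--Tomer), hence distributional OWPuzzs and then nuQPRGs secure against such algorithms, hence average-case hardness of \emph{proper} quantum distribution learning against them (the relativized \cref{lem:OWPuzz_to_Learn_avg}), and proper hardness trivially implies agnostic hardness since the target can be taken inside the class so that $\mathsf{Opt}_n=0$. To salvage your approach you would need some mechanism that converts the learner's quantum hypothesis into an object whose output probabilities are $\mathbf{NP}$-estimable, and the theorem's hypotheses do not supply one.
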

\begin{theorem}\label{thm:qad}
Suppose that agnostic quantum distribution learning with respect to the statistical distance is hard against PPT algorithms with access to a $\Sigma_3^{\mathbf{P}}$ oracle.
Then, $\mathbf{SampBQP} \neq \mathbf{SampBPP}$.
\end{theorem}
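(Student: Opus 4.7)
The plan is to prove the contrapositive: assuming $\mathbf{SampBQP}=\mathbf{SampBPP}$, I construct a PPT algorithm with access to a $\Sigma_3^{\mathbf{P}}$ oracle that agnostically learns any QPT distribution family $\{\cD(z)\}_z$ with respect to statistical distance, which contradicts the theorem's hypothesis. Under the collapse, the QPT sampler $\cD$ can be simulated by a PPT sampler, so $\Pr[x\leftarrow\cD(z)]$ is (up to normalization) a $\mathbf{\#P}$-function of $(z,x)$. Stockmeyer's approximate counting theorem then yields $(1\pm 1/\poly(n))$-multiplicative approximations of $\Pr[x\leftarrow\cD(z)]$ in $\mathbf{FBPP}^{\mathbf{NP}}$, and hence using the $\Sigma_3^{\mathbf{P}}$ oracle.

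To turn the probability oracle into an agnostic learner, I use the Yatracos/Scheff\'e minimum-distance estimator. Given $T=\poly(n,1/\epsilon)$ samples $x_1,\ldots,x_T$ from the unknown target $\cT$, define for each pair $(z_1,z_2)$ the Scheff\'e set $A_{z_1,z_2}=\{x:\Pr[x\leftarrow\cD(z_1)]>\Pr[x\leftarrow\cD(z_2)]\}$, and for each candidate $z$ set
\[
\Delta(z)=\max_{z_1,z_2}\Bigl|\tfrac{1}{T}\sum_{i=1}^T\mathbf{1}[x_i\in A_{z_1,z_2}]-\Pr_{y\leftarrow\cD(z)}[y\in A_{z_1,z_2}]\Bigr|.
\]
A standard VC/union-bound argument shows that the minimizer $z^*$ of $\Delta$ achieves $\mathsf{SD}(\cT,\cD(z^*))\le 3\min_z\mathsf{SD}(\cT,\cD(z))+O(\epsilon)$ with high probability; the constant factor can be removed using a refined $T$-estimator if needed, or absorbed into the agnostic error budget used in the theorem's definition of hardness. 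Both quantities inside $|\cdot|$ are computable in $\mathbf{BPP}^{\mathbf{NP}}$: membership $x\in A_{z_1,z_2}$ reduces to comparing two Stockmeyer approximations, the empirical average is $T$ such tests, and $\Pr_{y\leftarrow\cD(z)}[y\in A_{z_1,z_2}]$ is estimated by Monte Carlo using the PPT sampler for $\cD$ together with the same membership test, avoiding any nested $\mathbf{\#P}$ computation.

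Finally, I argue that the search for $z^*$ fits into a single $\Sigma_3^{\mathbf{P}}$ oracle call. The predicate ``$\Delta(z)\le\gamma$'' is $\forall(z_1,z_2)$ over an inner $\mathbf{BPP}^{\mathbf{NP}}$ decision; amplifying the randomness so that one seed $r$ works simultaneously for all exponentially many $(z_1,z_2)$ (Sipser-G\'acs-Lautemann with union bound), and combining the outer existential in $z$ with the existential random seed, one obtains the $\Sigma_3^{\mathbf{P}}$ form $\exists(z,r)\,\forall(z_1,z_2)\,\exists w:\phi$ for a polynomial-time $\phi$ that collects the non-adaptive Stockmeyer $\mathbf{NP}$-witnesses into $w$. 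A single $\Sigma_3^{\mathbf{P}}$ query decides the existence of a good $z^*$, and $\poly(n)$ further queries recover it bit by bit via self-reduction. The main obstacle is controlling Stockmeyer's multiplicative error in the Scheff\'e comparisons when $\Pr[x\leftarrow\cD(z_1)]\approx\Pr[x\leftarrow\cD(z_2)]$; I would handle this by the boundary-thickening trick, replacing $A_{z_1,z_2}$ with $\{x:\Pr[x\leftarrow\cD(z_1)]>(1+\tau)\Pr[x\leftarrow\cD(z_2)]\}$ for some $\tau=1/\poly$ comfortably above the Stockmeyer error, and verifying that the Yatracos guarantee degrades by at most $O(\tau)$, which can be absorbed into the $\epsilon$ budget of agnostic learning.
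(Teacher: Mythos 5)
Your overall approach is the same as the paper's: prove the contrapositive, use the collapse $\mathbf{SampBQP}=\mathbf{SampBPP}$ to turn $\cD$ into a PPT sampler, invoke Stockmeyer to get multiplicative estimates of $\Pr[x\la\cD(z)]$ with an $\mathbf{NP}$ oracle, build a Scheff\'e/Yatracos-type minimum-distance estimator with boundary-thickened comparison sets (exactly the paper's $\mathsf{Dis}_\epsilon^{\mathbf{NP}}$), and search for a good hypothesis bit by bit via $\Sigma_3^{\mathbf{P}}$ self-reduction. The one genuine structural difference is that you maximize over \emph{all} pairs $(z_1,z_2)$ in the Yatracos class, whereas the paper only compares against pairs $(\mathsf{out},b)$ where the first coordinate is the candidate itself; both give the $(3+o(1))\mathsf{Opt}+\epsilon$ guarantee, and your variant is the textbook one, so this is a cosmetic divergence.

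There is, however, a real gap in how you place the randomness inside the $\Sigma_3^{\mathbf{P}}$ query. You write the predicate as $\exists(z,r)\,\forall(z_1,z_2)\,\exists w:\phi$, existentially quantifying the Stockmeyer seed $r$ alongside $z$. This is unsound: the ``yes'' witness may choose an adversarial $r$ for which the approximate-counting estimates are wrong precisely on the critical Scheff\'e comparison $(z,\ell)$, making a bad hypothesis $z$ pass the test. Amplification plus a union bound shows that a \emph{random} $r$ is simultaneously good for all pairs with high probability, but it does not show that \emph{only} good $r$'s exist, so the existence of $(z,r)$ satisfying $\phi$ does not imply existence of a good $z$. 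The paper sidesteps this by having the PPT algorithm sample the Stockmeyer randomness \emph{externally}, fixing it across all oracle queries, and then proving (Hoeffding plus union bound over all pairs, Claim 5.6) that with probability $1-1/\delta$ the fixed seed lands in a ``good'' set on which every query is both sound and complete; the oracle then quantifies only over $(a,b)$ against the deterministic $\cM_\omega^{\mathbf{NP}}$. If you insist on putting randomness inside the oracle, the Sipser--G\'acs--Lautemann form you invoke is $\exists\bar t\,\forall r$, not $\exists r$, and one must also express the inner $\mathbf{P}^{\mathbf{NP}}$ computation (Stockmeyer makes both positive and negative $\mathbf{NP}$ queries, so a lone $\exists w$ does not capture it) as a $\Pi_2^{\mathbf{P}}$ predicate to keep the level count at $\Sigma_3^{\mathbf{P}}$; your written form, taken literally, would either be unsound or overshoot $\Sigma_3^{\mathbf{P}}$. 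A small additional slip: your remark that ``the constant factor can be removed using a refined $T$-estimator'' is not available here; the paper cites a lower bound showing the factor $3$ is necessary for any (even unbounded) learner, which is why the definition of agnostic learning targets $(3+1/\epsilon)\mathsf{Opt}+1/\epsilon$ rather than $\mathsf{Opt}+\epsilon$.
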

These two theorems mean that hardness of agnostic quantum distribution learning is implied by the quantum advantage assumption 
(plus $\mathbf{PP}\not\subseteq\mathbf{BQP}^{\Sigma_3^{\mathbf{P}}}$),
and implies
$\mathbf{SampBQP}\neq\mathbf{SampBPP}$.
(See also \cref{figure}.)
This in particular means that
sampling-based quantum advantage can be derived from potentially weaker and more natural assumption, namely, hardness of agnostic quantum distribution learning. 
As far as we know, this is the first time that sampling-based quantum advantage is derived from
a hardness assumption on a standard and natural framework of learning.
Moreover, hardness of agnostic learning 
is a worst-case hardness,
and therefore our result shows (for the first time) that
sampling-based quantum advantage can be derived
solely from a worst-case hardness assumption.
This is an important advantage, because in general showing worst-case hardness is easier than average-case hardness.

\paragraph{Relation between statistical distance and KL divergence.}
In this paper, we have studied two different notions of hardness of agnostic quantum distribution learning, namely,
with respect to the statistical distance and to the KL divergence.
It is natural to ask how they are related.
We show the following.
\begin{theorem}
The hardness of agnostic quantum distribution learning with respect to the statistical distance implies 
that with respect to the KL divergence.
\end{theorem}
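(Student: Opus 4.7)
The plan is to prove the contrapositive: an efficient agnostic KL-learner yields an efficient agnostic SD-learner, so hardness for SD implies hardness for KL. Suppose $L_{KL}$ is a QPT algorithm that, on polynomially many samples from any target $\cT$, outputs $z^*$ satisfying $D_{\mathrm{KL}}(\cT\|\cD(z^*))\leq\min_z D_{\mathrm{KL}}(\cT\|\cD(z))+\epsilon_{KL}$. I define $L_{SD}$ to simulate $L_{KL}$ verbatim and output the same hypothesis; the task is then to verify that $z^*$ meets the SD-agnostic guarantee with some error $\epsilon_{SD}$.

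The central tool is Pinsker's inequality, $\mathsf{SD}(P,Q)^2\leq\tfrac{1}{2}\,D_{\mathrm{KL}}(P\|Q)$. Applied to $(\cT,\cD(z^*))$ it gives
\[
\mathsf{SD}(\cT,\cD(z^*))\leq\sqrt{\tfrac{1}{2}\bigl(\min_z D_{\mathrm{KL}}(\cT\|\cD(z))+\epsilon_{KL}\bigr)}\leq\sqrt{\tfrac{1}{2}\min_z D_{\mathrm{KL}}(\cT\|\cD(z))}+\sqrt{\tfrac{\epsilon_{KL}}{2}},
\]
using subadditivity of the square root. A second application of Pinsker at any KL-minimizer $z^{KL}$ yields $\min_z\mathsf{SD}(\cT,\cD(z))\leq\mathsf{SD}(\cT,\cD(z^{KL}))\leq\sqrt{\tfrac{1}{2}\min_z D_{\mathrm{KL}}(\cT\|\cD(z))}$, so the envelope $\sqrt{\tfrac{1}{2}\min_z D_{\mathrm{KL}}}$ upper-bounds (and, in the realizable case, coincides with) $\min_z\mathsf{SD}$. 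Choosing $\epsilon_{KL}$ inverse-polynomially tight enough shrinks the $\sqrt{\epsilon_{KL}/2}$ slack below the additive SD error permitted by the agnostic SD definition, making $L_{SD}$ a valid agnostic SD-learner and thus contradicting the assumed SD-hardness.

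The principal obstacle is that Pinsker is one-directional: $\min_z D_{\mathrm{KL}}(\cT\|\cD(z))$ can be arbitrarily, even infinitely, larger than $\min_z\mathsf{SD}(\cT,\cD(z))$, so the envelope $\sqrt{\tfrac{1}{2}\min_z D_{\mathrm{KL}}}$ may not approximate $\min_z\mathsf{SD}$ in full adversarial generality, and the relative agnostic guarantee in KL does not then translate to a relative agnostic guarantee in SD. I expect to handle this by exploiting the specific parameter regime in which the paper's SD-agnostic hardness lives: the hard targets are QPT-sampleable and lie in the realizable or near-realizable regime (up to the slack permitted by the hardness definition), where the two optima are polynomially comparable and the square-root Pinsker loss is absorbed into the polynomial slack built into the agnostic learning definitions.
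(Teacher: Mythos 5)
Your proposal takes a fundamentally different route from the paper, and it has a genuine gap that you have spotted yourself but not actually closed. The paper does not give a direct learner-to-learner reduction at all. It proves this theorem indirectly as a corollary of two structural results: \cref{thm:upperbound_SD} (equivalently \cref{cor:PP_learn_quantum}), which says that $\mathbf{PP}\subseteq\mathbf{BQP}$ implies a QPT algorithm for agnostic SD learning, and \cref{thm:PP_completeness}, which says that hardness of agnostic KL learning is \emph{equivalent} to $\mathbf{PP}\neq\mathbf{BQP}$. Chained together: if SD hardness holds, then no QPT agnostic SD learner exists, so by the contrapositive of \cref{cor:PP_learn_quantum} we get $\mathbf{PP}\neq\mathbf{BQP}$, and then \cref{thm:PP_completeness} delivers KL hardness. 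No Pinsker-type inequality appears anywhere; the complexity class $\mathbf{PP}$ is the pivot, not an analytic comparison of the two divergences.

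The gap in your plan is exactly the one you name in your final paragraph, and the hand-wave you offer to patch it does not hold up. The issue is not merely that Pinsker is one-directional; it is that $\sqrt{\tfrac{1}{2}\min_z D_{\mathrm{KL}}(\cT\|\cD(z))}$ cannot be bounded by any function of $\min_z\mathsf{SD}(\cT,\cD(z))$ of the form $(3+1/\epsilon)\cdot\mathsf{Opt}+1/\epsilon$ that the agnostic SD guarantee requires. A simple example: let $\cT$ put mass $1/2$ on a string to which every $\cD(z)$ assigns probability about $2^{-n}$; then $\min_z\mathsf{SD}$ stays around $1/2$ while $\min_z D_{\mathrm{KL}}$ is $\Theta(n)$, so the Pinsker bound you derive is vacuous (larger than $1$) while the SD-agnostic requirement is nontrivial. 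Your proposed escape hatch --- that the hard targets in the paper's SD-hardness assumption are near-realizable --- is false: \cref{def:ag_SD} and \cref{assumption:PP_hardness_of_Dis_learn} quantify over \emph{arbitrary} target algorithms $\cT$ (the adversary chooses $\cT$), and indeed the $(3+1/\epsilon)$-competitive form is used precisely because the genuinely agnostic regime where $\cT$ is far from the class is the point of the definition. So the "specific parameter regime" you appeal to does not exist; the reduction would have to work for all $\cT$, and it does not. To prove this theorem you need a mechanism that does not pass through an analytic inequality between SD and KL, and the paper supplies one by routing through $\mathbf{PP}\neq\mathbf{BQP}$.
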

This result is actually obtained as a corollary of the following technical result, which is an
upper bound of the complexity of agnostic quantum distribution learning with respect to the statistical distance.
\begin{theorem}
\label{thm:upperbound_SD}
If $\mathbf{PP}\subseteq\mathbf{BQP}$, then there exists a QPT algorithm which achieves agnostic quantum distribution learning with respect to the statistical distance.
\end{theorem}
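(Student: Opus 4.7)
The plan is to implement a quantum version of the classical Yatracos/Scheffé minimum-distance estimator. Given samples $x_1,\dots,x_T$ from the unknown distribution $\cT$, the learner's task is to output a $z^*\in\zo{n}$ that approximately minimizes $\mathsf{SD}(\cT,\cD(z))$.

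First I would observe that $\mathbf{PP}\subseteq\mathbf{BQP}$ implies $\mathbf{BQP}^{\mathbf{PP}}=\mathbf{BQP}$, and hence that any fixed tower of $\mathbf{PP}$'s and polynomial-hierarchy levels is contained in $\mathbf{BQP}$. This yields QPT subroutines for: (i) computing $\Pr[x\la\cD(z)]$ as a polynomial-bit dyadic rational, since this is a $\#\mathbf{P}$-type problem and $\#\mathbf{P}\subseteq\mathbf{P}^{\mathbf{PP}}$; (ii) deciding membership in the Scheffé set $A_{z,z'}\coloneqq\{x:\Pr[x\la\cD(z)]>\Pr[x\la\cD(z')]\}$ by comparing two such quantities; and (iii) evaluating $\cD(z)(A_{z,z'})=\sum_x\Pr[x\la\cD(z)]\,\mathbb{I}[x\in A_{z,z'}]$, which is another $\#\mathbf{P}$-type sum relative to a $\mathbf{PP}$ predicate and is still in $\mathbf{BQP}$ under the assumption.

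The learner then outputs
\[
z^*\;\in\;\arg\min_{z\in\zo{n}}\max_{z'\in\zo{n}}\Bigl|\hat{\cT}(A_{z,z'})-\cD(z)(A_{z,z'})\Bigr|,
\]
where $\hat{\cT}$ is the empirical distribution of the $T$ samples. By the standard Yatracos analysis, $z^*$ satisfies $\mathsf{SD}(\cT,\cD(z^*))\le 3\min_z\mathsf{SD}(\cT,\cD(z))+O(\epsilon)$, which is the required form of agnostic guarantee. Since the Yatracos family $\{A_{z,z'}\}_{z,z'}$ has size at most $2^{2n}$, a union-bounded Hoeffding inequality gives uniform convergence of $\hat{\cT}$ over the family with $T=\poly(n,1/\epsilon)$ samples. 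The minimax $\arg\min_z\max_{z'}$ step is an $\exists z\,\forall z'\,\phi(z,z')$ problem with $\phi$ a $\mathbf{PP}$-predicate, hence in $\Sigma_2^{\mathbf{PP}}\subseteq\mathbf{BQP}^{\mathbf{PP}}=\mathbf{BQP}$.

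The main obstacle is that $\mathbf{BQP}$'s simulation of $\mathbf{PP}$ introduces inverse-polynomial error rather than producing exactly correct answers, so the Scheffé membership predicate for $A_{z,z'}$ is unreliable near the boundary where $\Pr[x\la\cD(z)]\approx\Pr[x\la\cD(z')]$. I would address this by working with a relaxed Scheffé set that only declares $x\in A_{z,z'}$ when the two probabilities differ by more than a small threshold, and then showing that the Yatracos bound still holds with an extra inverse-polynomial additive loss, since points near the boundary can contribute only negligibly to both $\mathsf{SD}(\cT,\cD(z))$ and to the tournament outcome.
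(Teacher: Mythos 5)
Your proposal lands on essentially the same route as the paper: it is the Yatracos/Scheff\'e minimum-distance estimator, where a $\mathbf{PP}$ oracle (via the exact probability-extraction result of Fortnow--Rogers used in the paper's Lemma A.3) supplies the Scheff\'e-set indicator, and a $\Sigma_2^{\mathbf{PP}}$ oracle handles the $\arg\min_z\max_{z'}$ step; the paper first proves the agnostic guarantee for a $\mathrm{PPT}^{\Sigma_2^{\mathbf{PP}}}$ learner (Theorem~\ref{thm:PP_learn_quantum}) and then derives the present statement as a corollary. Two small remarks. First, the ``main obstacle'' you raise at the end --- that the $\mathbf{BQP}$ simulation of $\mathbf{PP}$ only achieves inverse-polynomial accuracy, forcing a relaxed Scheff\'e set near the boundary --- is not an actual obstacle: since $\mathbf{PP}\subseteq\mathbf{BQP}$ is a worst-case containment, the $\mathbf{BQP}$ decider can be amplified to error $2^{-\Omega(n)}$ per query, so a polynomial number of oracle calls are all answered correctly except with negligible probability, and no relaxation of the Scheff\'e sets is needed. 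Keeping the oracle abstract until the final step, as the paper does, sidesteps this confusion entirely. Second, the predicate $\exists z\,\forall z'\,\phi(z,z')$ only gives a $\Sigma_2^{\mathbf{PP}}$ \emph{decision} oracle; to actually output $z^*$ the learner must solve the search version, which the paper does by a bit-by-bit self-reduction (fixing one coordinate of $z^*$ at a time with a slowly increasing threshold, roughly $n$ adaptive queries). You elide this, but it is a standard search-to-decision reduction and not a gap.
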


How about the opposite direction? 
Does the hardness of agnostic quantum distribution learning with respect to the KL divergence
imply that with respect to the statistical distance?
Showing it seems to be extremely difficult because 
from the following theorem, \cref{thm:bb_SD},\footnote{\cref{thm:PP_worstdistlearn} is actually a corollary of \cref{thm:bb_SD}}
if the hardness of agnostic quantum distribution learning with respect to the statistical distance implies
$\mathbf{PP}\neq\mathbf{BQP}$, then $\mathbf{SampBQP}\neq\mathbf{SampBPP}$ can be obtained solely from
the infiniteness of the polynomial-time hierarchy.
\begin{theorem}
\label{thm:bb_SD}
If the hardness of agnostic quantum distribution learning with respect to the statistical distance is $\mathbf{PP}$-hard in a PPT-black box reduction, then
$\mathbf{PP}\not\subseteq\mathbf{BPP}^{\Sigma_3^\mathbf{P}}$ implies $\mathbf{SampBQP}\neq\mathbf{SampBPP}$.
\end{theorem}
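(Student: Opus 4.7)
The plan is to prove the statement by combining the hypothesized black-box reduction with the already-established \cref{thm:qad}, via a simple contrapositive argument on who can break the learning problem. Roughly, the black-box reduction lets me upgrade any breaker of agnostic quantum distribution learning (with respect to the statistical distance) into a decider for $\mathbf{PP}$, while \cref{thm:qad} says that if no $\mathrm{PPT}^{\Sigma_3^{\mathbf{P}}}$ machine breaks the learning problem then $\mathbf{SampBQP}\neq\mathbf{SampBPP}$. So my job is just to show that, under the stated hypotheses, no $\mathrm{PPT}^{\Sigma_3^{\mathbf{P}}}$ machine can break the learning problem.

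First, I would fix the assumptions: there is a PPT oracle machine $R^{(\cdot)}$ such that for every adversary $\cA$ that breaks agnostic quantum distribution learning with respect to the statistical distance, $R^{\cA}$ decides some $\mathbf{PP}$-complete language; and $\mathbf{PP}\not\subseteq\mathbf{BPP}^{\Sigma_3^{\mathbf{P}}}$. I then suppose for contradiction that some $\cA\in \mathrm{PPT}^{\Sigma_3^{\mathbf{P}}}$ breaks the learning problem. Composing the reduction with $\cA$ gives a PPT oracle machine whose only oracle calls are to a $\mathrm{PPT}^{\Sigma_3^{\mathbf{P}}}$ procedure; by standard closure of $\mathbf{BPP}^{\Sigma_3^{\mathbf{P}}}$ under PPT reductions to itself, the composed machine is itself in $\mathbf{BPP}^{\Sigma_3^{\mathbf{P}}}$. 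This places the $\mathbf{PP}$-complete language in $\mathbf{BPP}^{\Sigma_3^{\mathbf{P}}}$, contradicting the assumption. Hence no $\mathrm{PPT}^{\Sigma_3^{\mathbf{P}}}$ adversary breaks the learning problem, so agnostic quantum distribution learning with respect to the statistical distance is hard against $\mathrm{PPT}^{\Sigma_3^{\mathbf{P}}}$.

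Finally, I plug this hardness into \cref{thm:qad} to conclude $\mathbf{SampBQP}\neq\mathbf{SampBPP}$, which is what was wanted.

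There is essentially no technical obstacle in the reduction itself: once the notion of PPT-black-box reduction is made precise (the reduction is uniform, calls the oracle on polynomial-size inputs, and succeeds for every sufficiently good breaker on infinitely many input lengths), the composition and the closure property of $\mathbf{BPP}^{\Sigma_3^{\mathbf{P}}}$ are routine. The only place where care is needed is matching the quantitative parameters: the definition of ``breaking'' that the reduction is stated for must be the same as the one used in \cref{thm:qad}, and the success probability/advantage amplification used when composing a randomized reduction with a randomized oracle should be standard (repetition and majority). Thus the main content of this theorem is not a new proof technique, but rather a conceptual observation: any $\mathbf{PP}$-hardness proof for this learning problem that proceeds by a PPT-black-box reduction already implicitly contains a resolution of the long-standing open problem of obtaining $\mathbf{SampBQP}\neq\mathbf{SampBPP}$ from the infiniteness of $\mathbf{PH}$, which is exactly the barrier identified by \cite{CCC:AarChe17}.
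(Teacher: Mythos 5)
Your proposal is correct and follows essentially the same route as the paper: assume a $\mathrm{PPT}^{\Sigma_3^{\mathbf{P}}}$ breaker exists, compose it with the black-box reduction to place $\mathbf{PP}$ inside $\mathbf{BPP}^{\Sigma_3^{\mathbf{P}}}$, derive a contradiction with $\mathbf{PP}\not\subseteq\mathbf{BPP}^{\Sigma_3^{\mathbf{P}}}$, conclude hardness against $\mathrm{PPT}^{\Sigma_3^{\mathbf{P}}}$, and then invoke \cref{thm:qad}. The paper's proof of \cref{cor:PP_harndess} argues via the contrapositive (``if the hardness assumption fails, then \cref{ass:PP-hard} yields $\mathbf{PP}\subseteq\mathbf{BPP}^{\Sigma_3^{\mathbf{P}}}$'') but the logical content and the composition/simulation step are identical to yours.
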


\paragraph{Classical results.}
Although our primary focus is on quantum learning, our proof techniques are also applicable to the classical setting, and in fact
yield several new classical results.
Our first result is the equivalence between the existence of OWFs and hardness of classical distribution learning. 
\begin{theorem}
\label{thm:c}
The following three conditions are equivalent:
\begin{enumerate}
\item OWFs exist.
\item Average-case hardness of proper classical distribution learning holds.
\item Average-case hardness of improper classical distribution learning holds.
\end{enumerate}
\end{theorem}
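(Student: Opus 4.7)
The plan is to establish the cycle of implications $(1) \Rightarrow (3) \Rightarrow (2) \Rightarrow (1)$. The implication $(3) \Rightarrow (2)$ is immediate: any proper learner that outputs $z^\ast$ trivially induces an improper learner by outputting the distribution $D(z^\ast)$, so improper hardness contrapositively implies proper hardness.

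For $(2) \Rightarrow (1)$, I will adapt the classical distributional OWF approach. Given that the class $\{D(z)\}_z$ with sampler $\cS$ witnesses the average-case hardness of proper learning, define
\[
f(z, r_1, \ldots, r_T) \coloneqq (D(z; r_1), \ldots, D(z; r_T))
\]
with $z$ drawn from $\cS$ and $r_i$'s uniform. The key claim is that $f$ is a distributional OWF: an efficient algorithm that samples a random preimage of $x = f(z, r)$ would output $z^\ast$ distributed according to the Bayesian posterior of $z$ given $x$, and for $T = \poly(n)$ large enough this posterior concentrates on $z^\ast$ satisfying $\mathsf{SD}(D(z), D(z^\ast)) \leq \epsilon$ with overwhelming probability, yielding a proper learner and contradicting the hypothesis. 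Standard OWFs then follow via Impagliazzo--Luby's distributional-to-standard transformation.

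For $(1) \Rightarrow (3)$, I will use a pseudorandom function $F_K$ (available from OWFs). Define $D_K$ as the distribution of $(x, F_K(x))$ for uniform $x \in \{0,1\}^n$, with sampler $\cS$ choosing $K$ uniformly. Suppose a PPT improper learner $L$ outputs an efficient sampler for $D^\ast$ satisfying $\mathsf{SD}(D^\ast, D_K) \leq \epsilon$ with non-negligible probability. I would build a distinguisher $\cA^O$ that queries $O$ on $T$ random inputs to form samples, runs $L$ on these samples to obtain $D^\ast$, samples $(x', y') \leftarrow D^\ast$, queries $O(x')$, and outputs ``PRF'' iff $O(x') = y'$. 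When $O = F_K$, closeness of $D^\ast$ to $D_K$ forces $y' = F_K(x')$ with probability at least $1 - \epsilon$ conditioned on $L$ succeeding. When $O$ is truly random, $D^\ast$'s $x$-marginal is close to uniform (inherited from the $F_K$ case by PRF security), so $x'$ is unseen with probability $\approx 1$, and $F(x')$ is then information-theoretically independent of $y'$, giving a passing probability of at most $\epsilon + T/2^n + \negl(n)$. The resulting non-negligible distinguishing advantage contradicts PRF security.

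The main obstacle is the concentration step in $(2) \Rightarrow (1)$: one must show that the Bayesian posterior of $z$ given polynomially many samples from $D(z)$ concentrates on $z^\ast$ such that $D(z^\ast)$ is close to $D(z)$ \emph{in statistical distance}, not merely in the $z$-coordinate. This requires the target class to admit a polynomial sample-complexity bound for SD learning and careful handling of identifiability. Additional care is needed to convert the distributional-inversion guarantee into a standard OWF while preserving the non-negligible success probability, via the classical Impagliazzo--Luby amplification.
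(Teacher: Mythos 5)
Your proposal is correct, and the $(1)\Rightarrow(3)$ and $(3)\Rightarrow(2)$ directions match the approach the paper takes (the former is the PRF-based argument inherited from Hirahara--Nanashima, the latter is trivial). But you take a genuinely different route for $(2)\Rightarrow(1)$, and it is worth contrasting the two. The paper does not use distributional OWFs: from the hard instance $(\cS,\cD)$ it builds a \emph{one-way puzzle} whose puzzle is $t$ samples $\{x_i\}$ and whose answer is $z$, with an \emph{unbounded} verifier that computes the maximum-likelihood parameter $z^\ast = \mathrm{argmax}_a \Pr[\{x_i\}\gets\cD(a)^{\otimes t}]$ and accepts $h$ iff $\mathsf{SD}(\cD(h),\cD(z^\ast))$ is small; the soundness of this verifier rests on Claim~\ref{claim:useful}, which shows that the MLE concentrates on an SD-neighbourhood of the true $z$. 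One then invokes the known equivalence ``classical OWPuzz $\Leftrightarrow$ OWF'' from Khurana--Tomer. Your route instead packages the $t$ samples as the output of a function $f$, appeals to the Impagliazzo--Luby equivalence between OWFs and distributional OWFs, and argues that an approximate posterior sampler would constitute a proper learner.

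The concentration step you flag as the ``main obstacle'' can in fact be closed without any extra assumption on the class, by an argument morally the same as the paper's Claim~\ref{claim:probabilistic_argument}. Concretely, in the joint distribution induced by $z\gets\cS$, $\{x_i\}\gets\cD(z)^{\otimes T}$, $z^\ast\gets\mathrm{posterior}$, one has for every pair $(z,z^\ast)$
\[
\Pr[(z,z^\ast)]
\;=\;\Pr[z]\Pr[z^\ast]\sum_{\{x_i\}}\frac{P_z(\{x_i\})\,P_{z^\ast}(\{x_i\})}{\sum_{z'}\Pr[z']P_{z'}(\{x_i\})}
\;\le\;\sqrt{\Pr[z]\Pr[z^\ast]}\;F\!\left(\cD(z),\cD(z^\ast)\right)^{T},
\]
where $F$ is the Bhattacharyya coefficient; the inequality uses $\sum_{z'}\Pr[z']P_{z'}(\{x_i\})\ge\sqrt{\Pr[z]\Pr[z^\ast]P_z P_{z^\ast}}$, and the multiplicativity of $F$ over product distributions. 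If $\mathsf{SD}(\cD(z),\cD(z^\ast))\ge\tfrac{1}{2\epsilon}$ then $F\le 1-\Omega(1/\epsilon^2)$, so the posterior mass on all bad pairs is bounded by $\big(\sum_z\sqrt{\Pr[z]}\big)^2 e^{-\Omega(T/\epsilon^2)}\le 2^n e^{-\Omega(T/\epsilon^2)}$, which is negligible for $T=\Theta(\epsilon^2 n)$ --- the same sample count, and essentially the same $2^n$-union-bound-vs-exponential-separation trade-off, as in the paper's MLE argument. So your gap is fillable; no sample-complexity or identifiability assumption is needed.

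What each route buys: the paper's OWPuzz route is the natural one here because it reuses machinery the paper already needs for the quantum case (and OWPuzz $\Rightarrow$ OWF holds classically but not quantumly, which is precisely why the quantum analogue of this theorem is harder). Your distributional-OWF route is self-contained and avoids the OWPuzz formalism entirely, at the cost of the Cauchy--Schwarz/fidelity calculation in place of the paper's more elementary Hoeffding-based Claim~\ref{claim:probabilistic_argument}. Both are valid; yours is not simpler, but it is a legitimately different proof of the new implication $(2)\Rightarrow(1)$.
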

The equivalence between 1 and 3 is actually known result~\cite{FOCS:HirNan23}, and the direction from 3 to 2 is trivial.
(However, for the convenience of readers, here we have combined all results as a single theorem.)
The direction from 2 to 1 is our new result, which is an improvement of the direction from 3 to 1 shown in \cite{FOCS:HirNan23}. 
The technique of \cite{FOCS:HirNan23} cannot be directly used to show 1 from 2, but here we introduce a new technique that overcomes the issue.
Moreover, also for the direction from 3 to 1, we provide a simpler proof than that of \cite{FOCS:HirNan23}.
In the classical setting, as \cref{thm:c} shows, proper and improper learning are equivalent. On the other hand, in the quantum case, we do not know how to show the equivalence of proper and improper learning, because PRFs are used to show 3 from 1 in the classical proof.

Our second classical result is the characterization of $\mathbf{NP}\nsubseteq\mathbf{BPP}$.
\begin{theorem}
The following three conditions are equivalent:
\begin{enumerate}
\item $\mathbf{NP}\nsubseteq \mathbf{BPP}$.
\item Hardness of agnostic classical distribution learning with respect to KL divergence holds.
\item Worst-case hardness of classical maximum likelihood estimation holds.
\end{enumerate}
\end{theorem}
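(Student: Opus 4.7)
I would mirror the quantum proof of \cref{thm:PP_agnostic}, replacing exact quantum probability computation via a $\mathbf{PP}$ oracle with classical approximate counting via an $\mathbf{NP}$ oracle (Stockmeyer), and establish the cycle $\neg 2 \Rightarrow \neg 3 \Rightarrow \neg 1 \Rightarrow \neg 2$, which delivers all three equivalences at once. The step $\neg 2 \Rightarrow \neg 3$ (easy agnostic learning implies easy maximum likelihood) is a point-mass reduction: given an ML instance $(x, \cD)$, feed the agnostic learner $T$ copies of $x$, which are exactly the samples of $\cT \setval \delta_x$; since $D_{\mathrm{KL}}(\delta_x \,\|\, \cD(z)) = -\log \Pr[\cD(z) = x]$, a near-minimizer of the KL divergence is automatically a near-maximizer of the likelihood. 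The step $\neg 3 \Rightarrow \neg 1$ (easy ML implies $\mathbf{NP} \subseteq \mathbf{BPP}$) encodes SAT into ML: let the fixed PPT algorithm $\cD$ parse $z = (\phi', w)$ and deterministically output $(\phi', 1)$ if $w$ satisfies $\phi'$ and $(\phi', 0)$ otherwise; for a SAT instance $\phi$, set $x = (\phi, 1)$, so that $\max_z \Pr[x \leftarrow \cD(z)] \in \{0, 1\}$ according to satisfiability of $\phi$. Running the ML solver on $(x, \cD)$ and verifying $z^*$ by a single sample of $\cD(z^*)$ then decides SAT in $\mathbf{BPP}$ after standard amplification.

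\textbf{Upper bound.} For $\neg 1 \Rightarrow \neg 2$, the hypothesis $\mathbf{NP} \subseteq \mathbf{BPP}$ yields $\mathbf{BPP}^{\mathbf{NP}} \subseteq \mathbf{BPP}$, and indeed $\mathbf{PH} \subseteq \mathbf{BPP}$, by the closure of $\mathbf{BPP}$ under oracle access to itself. Writing $D_{\mathrm{KL}}(\cT \,\|\, \cD(z)) = -H(\cT) - \mathbb{E}_{x \sim \cT}[\log \Pr[\cD(z) = x]]$, the task reduces to maximizing $\mathbb{E}_{x \sim \cT}[\log \Pr[\cD(z) = x]]$ over $z$ from samples of $\cT$. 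I would draw $T = \poly(n)$ samples $x_1, \dots, x_T \sim \cT$, estimate each probability $\Pr[\cD(z) = x_i]$ within multiplicative factor $1 \pm \gamma$ by Stockmeyer's approximate-counting procedure in $\mathbf{BPP}^{\mathbf{NP}}$ (which gives additive accuracy on $\log \Pr$), compute the empirical average, and search over $z$ for a near-maximizer using the collapsed $\mathbf{PH}$ power.

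\textbf{Main obstacle.} The principal technical subtlety lies in this upper-bound step: $\log \Pr[\cD(z) = x]$ can be $-\infty$ or super-polynomially negative when $\Pr[\cD(z) = x]$ is tiny or zero, which simultaneously breaks Stockmeyer's multiplicative approximation (multiplicative accuracy on a near-zero quantity is useless) and violates Hoeffding-style concentration for the empirical average. The standard remedy is to smooth the output of $\cD$ by mixing in a small uniform component (equivalently, to truncate $\log$ at $-\poly(n)$), which introduces a controlled additive bias in the KL objective that can be absorbed into the agnostic slack $\epsilon$. The delicate bookkeeping is then to choose the parameters so that Stockmeyer accuracy, the number of samples, the $\mathbf{PH}$-based search over $z$, and the smoothing bias all balance against the target additive KL accuracy; this is where the classical proof differs most from the quantum version of \cref{thm:PP_agnostic} and where the argument needs the most care.
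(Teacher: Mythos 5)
Your proposal is correct in substance but organizes the proof differently from the paper, and one direction uses a genuinely different argument. The paper establishes the equivalences by showing $2 \Leftrightarrow 3$ (mirroring the quantum \cref{lem:QML_to_KL_Learn} and \cref{lem:KL_learn_to_QML}: from agnostic to ML via a lower bound $\Pr[x\la\cD(z)]\geq 2^{-m(n)}$ and the $t$-fold product $\cD^*(z)=\cD(z)^{\otimes t}$ with Hoeffding concentration, and from ML to agnostic via the smoothed $\cD^*$ plus the point-mass $\cT_x$), together with $1 \Leftrightarrow 3$: for $1\Rightarrow 3$ the paper shows that if worst-case ML is easy then $\mathbf{BPP_{path}}\subseteq\mathbf{BPP}$ and hence $\mathbf{NP}\subseteq\mathbf{BPP}$, deliberately paralleling the $\mathbf{PostBQP}$-based proof of \cref{lem:PP_hardness}; for $3\Rightarrow 1$ it shows ML easiness under $\mathbf{NP}\subseteq\mathbf{BPP}$ via Stockmeyer (\cref{claim:stockmeyer}) in direct analogy with \cref{lem:PP_easiness}. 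Your cycle $3\Rightarrow 2\Rightarrow 1\Rightarrow 3$ is an equally valid decomposition, and two of the steps genuinely diverge from the paper. Your $\neg 3\Rightarrow\neg 1$ reduction (a deterministic $\cD$ that checks whether $w$ satisfies $\phi'$, followed by a sampling-based verification of the returned $z^*$) is a more elementary and self-contained argument than the paper's $\mathbf{BPP_{path}}$ route — the paper's choice buys uniformity with the quantum case, while yours is arguably shorter in the classical setting; note that the definition's restriction to $x$ with $\max_a\Pr[x\la\cD(a)]>0$ means the ML solver is only guaranteed on satisfiable instances, but as you observe the verification step handles unsatisfiable ones regardless. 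Your $\neg 1\Rightarrow\neg 2$ step does the Stockmeyer estimation, the Hoeffding concentration, and the search over $z$ all in one shot for the agnostic problem, whereas the paper factors this: Stockmeyer only needs to handle the single-$x$ ML problem, and the $t$-fold amplification plus concentration is done once inside \cref{lem:KL_learn_to_QML} when passing from ML easiness to agnostic easiness. The paper's factorization is cleaner bookkeeping, but yours works. One point of clarification on your ``main obstacle'': the agnostic definition already forces $\Pr[x\la\cD(z)]\neq 0$ for all $x$ and $z$, so for a PPT $\cD$ with $\poly(n)$ randomness the log-probabilities are automatically polynomially bounded, and Stockmeyer's multiplicative approximation translates to good additive accuracy on $\log\Pr$ without further smoothing — the smoothing mixture is really needed in the $3\Rightarrow 2$ step, where you must convert an arbitrary ML instance $\cD$ (which may have zero probabilities) into an agnostic-learning instance $\cD^*$ meeting that positivity requirement, exactly as the paper does in the proof of \cref{lem:QML_to_KL_Learn}.
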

The equivalence between 2 and 3, and the direction from 1 to 3 were shown in \cite{ML:AW92}. 
Moreover, \cite{ML:AW92} cited a personal communication with Osamu Watanabe as the source of a proof of 1 from 3, 
but to the best of our knowledge, no explicit proof has appeared in the literature.
In this paper, we provide such a proof and thereby complete the logical equivalence.
In addition, we provide simpler proofs than those of \cite{ML:AW92}.

Our final classical result is an upper bound of the complexity of agnostic classical distribution learning.
\begin{theorem}\label{thm:classical_upperbound}
There exists a PPT algorithm with access to a $\Sigma_3^\mathbf{P}$ oracle which achieves agnostic classical distribution learning with respect to the statistical distance.
\end{theorem}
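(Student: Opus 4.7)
My plan is to implement the classical Scheffé--Yatracos minimum-distance estimator, with computationally expensive subroutines handled by the $\Sigma_3^\mathbf{P}$ oracle.

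First, draw $N=\mathrm{poly}(n,1/\epsilon)$ independent samples $x_1,\ldots,x_N$ from $\cT$, form the empirical $\hat\cT$, and target the Yatracos loss
\[
L_Y(z) := \sup_{z_1,z_2}\bigl|\hat\cT(A_{z_1,z_2}) - p_z(A_{z_1,z_2})\bigr|,\qquad A_{z_1,z_2} := \{x : p_{z_1}(x)>p_{z_2}(x)\}.
\]
By the classical Yatracos/Devroye--Lugosi analysis, any approximate minimizer $z^*$ of $L_Y$ satisfies $\mathsf{SD}(\cT,\cD(z^*))\leq 3\,\mathrm{OPT}+O(\epsilon)$ with high probability over the samples, yielding the agnostic-SD guarantee up to a universal constant (which can be tightened to the standard $\mathrm{OPT}+\epsilon$ form by a refined estimator if the paper's definition demands it).

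Next, reduce $L_Y$ to oracle calls. Since every $p_z(x)=\Pr[\cD(z)=x]$ is a $\#\mathbf{P}$ function, Stockmeyer's approximate counting computes polynomially tight approximations of $p_z(x)$ in $\mathbf{BPP}^{\mathbf{NP}}\subseteq \Sigma_3^\mathbf{P}\cap\Pi_3^\mathbf{P}$. This gives (i) an approximate decider for the Scheffé predicate $p_{z_1}(x)>p_{z_2}(x)$; (ii) an estimate of $\hat\cT(A_{z_1,z_2})=\frac{1}{N}\sum_i \mathbf{1}[p_{z_1}(x_i)>p_{z_2}(x_i)]$ by applying (i) to each $x_i$; and (iii) a Monte-Carlo estimate of $p_z(A_{z_1,z_2})$ by sampling from $\cD(z)$ and applying (i). The slack from approximate counting is absorbed into the $O(\epsilon)$ error term.

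Finally, use the $\Sigma_3^\mathbf{P}$ oracle to locate $z^*$ by binary-searching on the threshold $\tau$ and prefix-searching bit-by-bit for a $z$ with $L_Y(z)\leq \tau$. The step I expect to be the main obstacle is packaging the decision ``$\exists z\,\forall(z_1,z_2):|\hat\cT(A_{z_1,z_2})-p_z(A_{z_1,z_2})|\leq \tau$'' into a single $\Sigma_3^\mathbf{P}$ query: because the inner predicate lives in $\mathbf{BPP}^{\mathbf{NP}}$, the naive alternation reads $\exists\,\forall\,\forall\,\exists\,\forall$, which sits at $\Sigma_4^\mathbf{P}$, one level above what $\Sigma_3^\mathbf{P}$ decides directly. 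To force the query down to $\Sigma_3^\mathbf{P}$, I will write the $\mathbf{BPP}^{\mathbf{NP}}$ certificate in its $\Pi_3^\mathbf{P}$ form so that the leading universal absorbs the outer $\forall(z_1,z_2)$, and then argue that the final outer existential over $z$ can be folded into the innermost existential witness by amplifying the Stockmeyer step so that the combined $(z,\text{randomness})$ witness certifies Yatracos optimality in one shot. If this collapse cannot be pushed through, a backup is to replace the Yatracos supremum by a Scheffé tournament over a polynomially sized candidate pool generated by a preliminary $\Sigma_3^\mathbf{P}$ enumeration, which turns the problematic $\forall(z_1,z_2)$ into a polynomial conjunction of $\Sigma_3^\mathbf{P}$ predicates and therefore keeps each oracle call inside $\Sigma_3^\mathbf{P}$; polynomially many such queries then suffice to output $z^*$, giving the claimed $\mathrm{PPT}^{\Sigma_3^\mathbf{P}}$ learner.
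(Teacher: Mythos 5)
Your high-level plan matches the paper's actual proof quite closely: a minimum-distance (Yatracos/Scheff\'e) estimator driven by a Stockmeyer-style $\mathbf{BPP}^{\mathbf{NP}}$ distinguisher, with the search over hypotheses executed by prefix search and threshold binary search against $\Sigma_3^\mathbf{P}$ queries. You also correctly flag the one genuinely delicate point, namely that the naive alternation $\exists z\,\forall(z_1,z_2)$ sitting on top of a $\mathbf{BPP}^{\mathbf{NP}}$ inner predicate overshoots $\Sigma_3^\mathbf{P}$. The paper even uses a slight simplification of your loss: instead of the full Yatracos supremum $\sup_{z_1,z_2}$, it fixes the first index to the candidate itself and minimizes $\max_b\bigl|\Pr_{x\gets\cT}[1\gets\mathsf{Dis}(a,b,x)]-\Pr_{x\gets\cD(a)}[1\gets\mathsf{Dis}(a,b,x)]\bigr|$ over $a$, which still gives the factor-$3$ guarantee by the standard triangle-inequality argument and keeps the alternation depth identical.

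Where your proposal has a genuine gap is in how you collapse the alternation, and both of your proposed fixes fail as stated. Your primary fix is to ``amplify the Stockmeyer step so that the combined $(z,\text{randomness})$ witness certifies Yatracos optimality in one shot,'' i.e.\ to push the Stockmeyer randomness into the existential witness. Amplification does guarantee that \emph{most} random strings make the estimator simultaneously accurate for all tuples, so completeness goes through. But soundness breaks: a cheating existential witness is free to pick a \emph{bad} random string $r$ under which the $\mathbf{P}^{\mathbf{NP}}$ estimator lies, and then a bad hypothesis $z$ can be accepted. There is no efficient predicate available to the verifier that certifies ``$r$ is good,'' so you cannot existentially quantify over it. Your quantifier manipulation (``write the $\mathbf{BPP}^{\mathbf{NP}}$ certificate in its $\Pi_3^\mathbf{P}$ form so that the leading universal absorbs the outer $\forall(z_1,z_2)$, then fold the outer $\exists z$ into the innermost existential'') does not type-check: even after merging the two universals, the outer $\exists z$ sits left of a $\forall$, and you cannot commute an $\exists$ past a $\forall$. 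Your backup---a Scheff\'e tournament over a ``preliminary $\Sigma_3^\mathbf{P}$ enumeration''---begs the question, since there is no reason a polynomial-size enumerated pool from $\bit^n$ should contain a near-optimal hypothesis; producing such a pool is essentially the original problem.

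The missing idea, which is exactly how the paper resolves this, is to move the Stockmeyer randomness \emph{outside} the oracle queries: the PPT learner samples a single random string $r=\{r_i(1),r_i(2),r_i(3)\}_i$ using its own coins, amplifies the Stockmeyer estimator so that with high probability this one $r$ is simultaneously accurate for \emph{all} pairs $(a,b)$ (a union bound over at most $2^{2n}$ pairs), and then hands $r$ to every oracle query as fixed input. With $r$ fixed, the inner predicate becomes a deterministic $\mathbf{P}^{\mathbf{NP}}$ computation $\cM_\omega^{\mathbf{NP}}(1^n,a,b,\{x_i,r_i\}_i)$, and the language
\[
\Bigl\{(\{x_i,r_i\}_i,n)\;:\;\exists a\,\forall b,\ 1\gets\cM_\omega^{\mathbf{NP}}(1^n,a,b,\{x_i,r_i\}_i)\Bigr\}
\]
lies in $\Sigma_2^{\mathbf{NP}}=\Sigma_3^{\mathbf{P}}$. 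The learner then prefix-searches for $z^*$ bit by bit, at each bit performing a short binary search over the threshold $\omega$ to decide which extension is feasible; the claimed $\mathrm{PPT}^{\Sigma_3^\mathbf{P}}$ bound follows, and the error introduced by amplification, sampling, and threshold discretization is absorbed into the $O(1/\epsilon)$ slack just as you anticipated. If you replace your ``fold randomness into the witness'' step with ``sample the randomness once, outside, with enough amplification for a union bound,'' your proof becomes essentially the paper's.
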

In the field of classical learning, an upper bound of agnostic classical distribution learning with respect to the statistical distance was unknown.

\if0
In \cref{sec:total_variation_distance}, we study the computational complexity of agnostic quantum distribution learning with respect to total variation distance.
In this setting, the learner has sample access to an unknown distribution $\cT$ and is given a description of a family of QPT distributions $\{\cD(z)\}_{z \in \bit^n}$.
We say that an algorithm $\cA$ agnostically learns $\cD$, if the algorithm $\cA$ can find a hypothesis $z^*$ such that the total variation distance $\mathsf{SD}(\cT , \cD(z^*))\leq C\cdot\mathsf{SD}(\cT,\cD(z^*) )+\epsilon$ for some small constant $C$.
Our first result in \cref{sec:total_variation_distance} is the following theorem.
\fi

\usetikzlibrary{positioning} 
\usetikzlibrary{calc} 
\usetikzlibrary {quotes}
\tikzset{>=latex} 

\tikzstyle{mysmallarrow}=[->,black,line width=1.6]
\tikzstyle{myblackbotharrow}=[<->,black,line width=1.6]
\tikzstyle{myredbotharrow}=[<->,red,line width=1.6]
\tikzstyle{newarrow}=[->,red,line width=1.6]
\tikzstyle{newsinglearrow}=[->,red,line width=1.6]
\tikzstyle{carrow}=[->,red,line width=1.6]
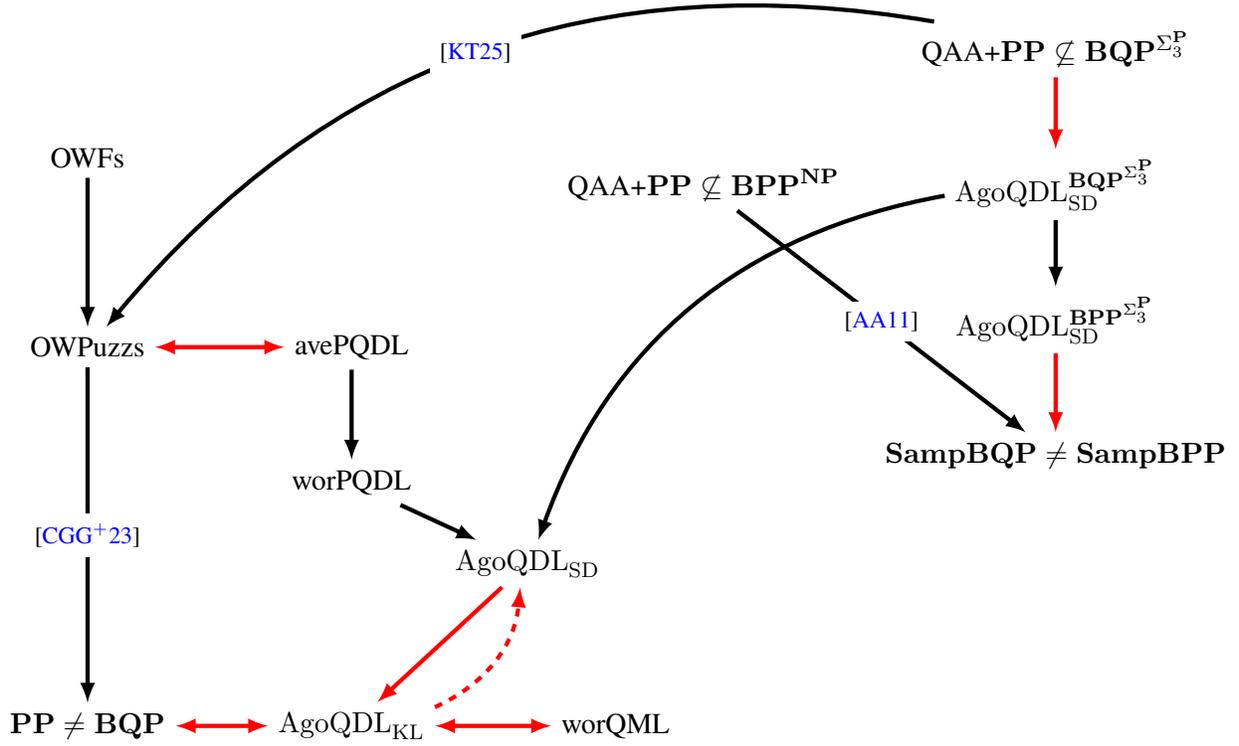
\begin{figure}
\begin{center}
    \begin{tikzpicture}[scale=0.9,every edge quotes/.style = {font=\footnotesize,fill=white}]
      \def\h{-2.0} 
      \def\w{2.6} 

       \node[] (OWFs) at (1*\w,2.8*\h) {OWFs};
       
       \node[] (OWPuzzs) at (1*\w,4.2*\h) {OWPuzzs};
       \node[] (avePQDL) at (2.5*\w,4.2*\h) {avePQDL};
       \node[] (worPQDL) at (2.5*\w,5.2*\h) {worPQDL};

       \node[] (AgoQDL_SD) at (3.5*\w,5.8*\h) {${\rm AgoQDL}_{\rm SD}$};

       \node[] (PP) at (1*\w,7*\h) {$\mathbf{PP}\neq \mathbf{BQP}$};
       \node[] (AgoQDL_KL) at (2.5*\w,7*\h) {${\rm AgoQDL}_{\rm KL}$};
       \node[] (worQML) at (4*\w,7*\h) {worQML};

       \node[] (QAA1) at (4.5*\w,3*\h) {QAA+$\mathbf{PP}\not\subseteq\mathbf{BPP}^{\mathbf{NP}}$};
       \node[] (QAA2) at (6.5*\w,2*\h) {QAA+$\mathbf{PP}\not\subseteq\mathbf{BQP}^{\Sigma_3^\mathbf{P}}$};
       \node[] (AgoQDL_SDQ) at (6.5*\w,3*\h) {${\rm AgoQDL}_{\rm SD}^{\mathbf{BQP}^{\Sigma_3^\mathbf{P}}}$};
       \node[] (AgoQDL_SDP) at (6.5*\w,4*\h) {${\rm AgoQDL}_{\rm SD}^{\mathbf{BPP}^{\Sigma_3^\mathbf{P}}}$};
       \node[] (Samp) at (6.5*\w,5*\h) {$\mathbf{SampBQP}\neq\mathbf{SampBPP}$};

        \draw[mysmallarrow] (QAA2)[bend right=30] edge["\cite{STOC:KT25}"] (OWPuzzs);
        \draw[mysmallarrow] (AgoQDL_SDQ)[bend right=30] edge[] (AgoQDL_SD);

        \draw[mysmallarrow] (avePQDL) edge[] (worPQDL);
        \draw[newarrow] (AgoQDL_SD) edge[] (AgoQDL_KL);
        \draw[newarrow,dashed,color=red][bend right=30] (AgoQDL_KL) edge[] (AgoQDL_SD);
        
        \draw[mysmallarrow] (QAA1) edge["\cite{STOC:AarArk11}"] (Samp);
        \draw[newarrow] (QAA2) edge[] (AgoQDL_SDQ);
        \draw[mysmallarrow] (AgoQDL_SDQ) edge[] (AgoQDL_SDP);
        \draw[newarrow] (AgoQDL_SDP) edge[] (Samp);

        \draw[myredbotharrow] (OWPuzzs) edge[] (avePQDL);
        
        \draw[myredbotharrow] (PP) edge[] (AgoQDL_KL);
        \draw[myredbotharrow] (AgoQDL_KL) edge[] (worQML);

        \draw[mysmallarrow] (OWPuzzs) edge["\cite{arXiv:CGGHLP23}"] (PP);
        \draw[mysmallarrow] (OWFs) edge[] (OWPuzzs);
        
        \draw[mysmallarrow] (worPQDL) edge[] (AgoQDL_SD);
    \end{tikzpicture}
\end{center}
\caption{A summary of results. 
Black lines are known results or trivial implications.
Red lines are new in our work. 
avePQDL is average-case hardness of proper quantum distribution learning.
worPQDL is worst-case hardness of proper quantum distribution learning.
worQML is worst-case hardness of quantum maximum likelihood estimation.
${\rm AgoQDL}_{\rm KL}$ is hardness of agnostic quantum distribution learning with respect to the KL divergence.
${\rm AgoQDL}_{\rm SD}^{X^{\Sigma_3^{\mathbf{P}}}}$ is hardness of agnostic quantum distribution learning with respect to the statistical distance against
$X=\{\mbox{PPT},\mbox{QPT}\}$ learners with access to a $\Sigma_3^\mathbf{P}$ oracle.
QAA stands for the quantum advantage assumption.
The dotted line suggests that showing that implication seems to be difficult.
(If it is shown, then $\mathbf{SampBQP}\neq\mathbf{SampBPP}$ is derived from the infiniteness of the polynomial-time hierarchy.)
}
\label{figure}
\end{figure}

\subsection{Related Works}
\label{sec:relatedworks}

\paragraph{Distribution learning.}
The study of distribution learning has a long history and we cannot cover all results here.
\if0\footnote{
Distribution learning has also been studied over real number spaces. In these communities, distribution learning is called as hypothesis selection or density estimation.}\fi
We only mention four related results.



\cite{FOCS:HirNan23} showed the equivalence between OWFs and the average-case hardness of several learning problems including distribution learning.
Their proof for constructing OWFs from hardness of distribution learning can be directly used to construct OWPuzzs from hardness of quantum ditribution learning, but
in that case we obtain OWPuzzs from hardness of \emph{improper} learning. We use a completely different technique so that we can construct OWPuzzs from 
hardness of \emph{proper} learning, which is stronger.
On the other hand, their proof for showing hardness of distribution learning from OWFs cannot be used in our case, because their technique completely relies on classical properties.
We therefore use another technique, but we do not know how to show hardness of \emph{improper} learning from OWPuzzs.

\cite{SSHE21,myPRL:HINH23,myPJSE24} considered distribution learning in the quantum regime. 
\cite{myPRL:HINH23} showed that specific families of quantum circuits are hard to learn under the LPN assumption.
\cite{SSHE21,myPJSE24} showed existence of some quantum advantage in distribution learning task.

\paragraph{Quantum cryptography and hardness of learning quantum states.}
\cite{myBHHP24,HH24,myFGSY25,PQS24} investigated relationships between quantum cryptography and hardness of learning quantum states
with the motivation of basing quantum cryptographic primitives on hardness of learning. 
\cite{myFGSY25,HH24} showed the equivalence between the existence of OWSGs with pure-state outputs and average-case hardness of learning pure quantum states.
\cite{myBHHP24} 
constructed OWSGs from
hardness of obtaining classical descriptions of given pure states.
\cite{PQS24} introduced a hardness assumption what they call \emph{leaning stabilizers with noise} and constructed EFIs from it.

\paragraph{Quantum cryptography from concrete assumptions.}
The construction of OWPuzzs in \cite{STOC:KT25} from quantum advantage assumption plus $\mathbf{PP}\neq\mathbf{BQP}$
suggests several concrete assumptions for OWPuzzs implementable with sub-universal models such as random circuits and IQP.

\paragraph{Quantum advantage and hardness of estimating quantum probability.}
\cite{STOC:MSY25} showed that average-case hardness of estimating quantum probability implies
$\mathbf{SampBQP}\neq\mathbf{SampBPP}$,
and therefore their hardness could also be considered as another assumption for sampling-based quantum advantage.
However, our assumption, namely, hardness of agnostic quantum distribution learning, is worst-case hardness assumption, and therefore
more advantageous than that of \cite{STOC:MSY25}.

\if0
\paragraph{Sampling-based quantum advantage.}
\taiga{Add more references.}
\cite{STOC:AarArk11,BreMonShe16,myKT24,myMSY24} studies\mor{study} quantum advantage for sampling tasks.
\cite{myMSY24} formalizes the notion of quantum advantage samplers (QASs), which captures an average-case version of $\mathbf{SampBQP} \neq \mathbf{SampBPP}$. 
To demonstrate quantum advantage in sampling, several computational assumptions have been proposed, although none have been proven so far. 
These assumptions, together with the conjecture $\mathbf{P^{\#P}} \nsubseteq \mathbf{BPP^{NP}}$, imply the existence of QASs~\cite{myMSY24} (and therefore $\mathbf{SampBQP}\neq\mathbf{SampBPP}$). 
\cite{myKT24} shows that all of these assumptions imply \cref{def:QAA}, and that \cref{def:QAA} combined with $\mathbf{P^{\#P}} \nsubseteq \mathbf{(io)BQP}$ yields quantumly secure OWPuzzs. 
This suggests that proving these computational assumptions is at least as hard as constructing quantum cryptographic primitives based on worst-case hardness assumptions.

Our assumptions~(\cref{assumption:PP_hardness_of_Dis_learn}) are implied by \cref{def:QAA} together with $\mathbf{P^{\#P}} \nsubseteq \mathbf{(io)BQP^{\Sigma_3}}$. 
While our assumptions imply $\mathbf{SampBQP} \neq \mathbf{SampBPP}$, it remains unclear whether they also imply QASs.
\fi

\paragraph{Other characterizations of OWPuzzs.}
\cite{EC:CGGH25} and \cite{myHM24} showed the equivalence between the existence of
OWPuzzs and hardness of estimating Kolmogorov complexity.
These two papers and \cite{STOC:KT25} also showed the equivalence between the existence of OWPuzzs and
hardness of estimating quantum probability.

\subsection{Technical Overview}

In this subsection, we provide a high-level overview of the proofs of our main results.

\paragraph{Proof of \cref{thm:OWPuzzs_distlearn}.}
\if0
In the classical case, \cite{FOCS:HirNan23} shows the equivalence between OWFs and the average-case hardness of improper classical distribution learning.
One natural approach is to follow the classical approach.
Unfortunately, their technique of showing the hardness of improper distribution learning from OWFs depends on the classical case.
On the other hand, their construction of OWFs from  hardness of learning depends on the improper case.
learning, and it is unclear how to generalize it to the proper learning case~\footnote{In their proof, they assume that no OWFs exist, then show how to estimate classical probability on average.
From it, they construct a learning algorithm, which contradicts to average-case hardness of \emph{improper} distribution learning.
Crucially, their learning algorithm outputs an arbitrary PPT algorithm that needs to compute classical probability.
Therefore, their technique inherently can be applied to improper learning.}.
Therefore, we develop a new idea, and give a construction of OWPuzzs.
Our construction can be applied to the classical case as well, and we can construct OWFs from the average-case hardness of proper classical distribution learning.
\taiga{koko nakutemo yoi}
\fi
Let us first explain our proof of \cref{thm:OWPuzzs_distlearn}, which is the equivalence between
OWPuzzs and average-case hardness of proper quantum distribution learning.
We first explain how to construct OWPuzzs from average-case hardness of proper quantum distribution learning.
To show it, we first assume average-case hardness of proper quantum distribution learning. This means that there exist a QPT algorithm $\cS$ that samples a hidden parameter $z$, and a QPT algorithm $\cD$ that takes $z$ as input, and outputs a bit string $x$.
The hardness of learning requires that, for any polynomial $t$ and any QPT algorithm that receives $x_1, \dots, x_t$ 
independently sampled from $\cD(z)$, it is hard to find $z^*$ such that $\mathsf{SD}(\cD(z^*),\cD(z))$ is small.

We construct a OWPuzz, $(\Samp, \Vrfy)$, from such $\cS$ and $\cD$ as follows. $\Samp$ algorithm samples $z$ from $\cS$ and independently samples $x_1, \dots, x_t$ from $\cD(z)$. It then outputs $\ans \coloneqq z$ and $\puzz \coloneqq \{x_i\}_{i \in [t]}$.
Next, $\Vrfy$ algorithm takes $\ans^* = h$ and $\puzz = \{x_i\}_{i \in [t]}$ as input. It first computes $z^*$ such that
\begin{align}
\Pr[\{x_i\}_{i \in [t]} \leftarrow \cD(z^*)] = \max_{a} \Pr[\{x_i\}_{i \in [t]} \leftarrow \cD(a)].
\end{align}
Then, $\Vrfy$ outputs $1$ if $\cD(z^*)$ is statistically close to $\cD(h)$, and outputs $0$ if $\cD(z^*)$ is statistically far from $\cD(h)$.

We can show that, $\cD(z^*)$ and $\cD(z)$ are statistically close with high probability over $z\la\cS$ and $x_1, \dots, x_t \la\cD(z)^{\otimes t}$.
Therefore, in order to output $h$ such that $1 \leftarrow \Vrfy(\{x_i\}_{i \in [t]}, h)$, the adversary must produce $h$ such that $\cD(h)$ is close to $\cD(z)$. However, this is prohibited by the average-case hardness of proper quantum distribution learning. Therefore, the security of the OWPuzz follows.

\if0
We first explain how to construct OWPuzzs from average-case hardness of proper quantum distribution learning.
Instead of directly constructing OWPuzzs, we introduce another problem which we call quantum maximum likelihood (QML).
We first show that average-case hardness of quantum distribution learning implies average-case hardness of QML, and then show that average case hardness of QML implies OWPuzzs.

In QML, given a description of QPT algorithm $\cD^*$ and $x$, we need to find $z$ such that $\Pr[x\la\cD^*(1^n,z)]$ is close to $\max_{a}\Pr[x\la\cD^*(1^n,a)]$.
Its average-case hardness means that $z$ is sampled from some QPT algorithm $\cS^*(1^n)$, then $x$ is sampled from $\cD^*(1^n,z)$, and for any QPT algorithm, the probability that it outputs a correct $z$ is small.
\if0
More formally, we require that there exists $\delta,\epsilon>0$, and a QPT algorithm $\cS^*$ (that takes $1^n$ as input and outputs $n$-bit strings) and a $\cD^*$ (that takes $1^n$ and $n$-bit strings, and outputs $n$-bit strings) such that for any $\cA$, 
\begin{align}
\Pr[ 
\frac{\max_{z\in \bit^n}\{\Pr[x \la\cD^*(1^n,z)]\} }{\Pr[x\la \cD^*(1^n,h) ]}\leq 2^{\epsilon}:
\begin{array}{ll}
&z\la\cS^*(1^n)\\
&x\la\cD^*(1^n,z) \\
& h\la \cA(1^n,1^{\lfloor1/\delta\rfloor},1^{\lfloor 1/\epsilon\rfloor},x)
\end{array}
]\leq 1-\delta-2^{n-\epsilon}
\end{align}
for all sufficiently large $n\in\N$.
\fi

Now, let us explain how to construct OWPuzzs from average-case hardnes of QML.
\cite{C:ChuGolGra24} showed the equivalence between OWPuzzs and distributional OWPuzzs.
If OWPuzzs do not exist, then distributional OWPuzzs do not exist as well.
If distributional OWPuzzs do not exist exist, then there exists a QPT algorithm $\mathsf{Ext}$ that, given $z\la \cS(1^n)$, can sample $x$ such that the statistical distance between $(z,x)_{z\la\cS(1^n),x\la\cD(1^n,z)}$ and $(z,\mathsf{Ext}(z))_{z\la\cS(1^n)}$ is small with high probability over $z\la\cS(1^n)$.
On the other hand, we observe that the output $x$ of $\cD(1^n,z)$ satisfies $\Pr[x\la\cD(1^n,z)]\approx\max_{a}\Pr[x\la\cD(1^n,a)]$ with high probability.
Therefore, the output $x^*$ of $\mathsf{Ext}$ satisfies the condition above as well with high probability.
This means that average-case hardness of QML does not hold.

Next, let us explain how to obtain average-case hardness of QML from average case hardness of quantum distribution learning~\footnote{The technique is based on \cite{ML:AW92,HH24}, but we leverage the arguments to average-case setting.}.
For contradiction, assume that average-case hardness of QML dodes not hold.
Then, for any QPT algorithm $\cS^*$ and $\cD^*$, there is a QPT algorithm $\cA$ such that the output $z$ of $\cA(x)$ satisfies $\Pr[x\la\cD^*(z)]\approx\max_{a\in\bit^*}\Pr[x\la\cD^*(a)]$ with high probability over $z^*\la\cS^*(1^n)$ and $x\la\cD^*(z^*)$.
We use the $\cA$ for breaking average-case hardness of quantum distribution learning.
Let $\cD$ be a QPT algorithm which we want to learn, and let $\cS$ be a QPT algorithm that samples instance of distribution learning.
\color{red}
In average-case learning, we receive $\{x_i\}_{i\in[T]}\la\cD^{\otimes T}(z^*)$, where $z\la\cS(1^n)$ for sufficiently large $T$.
One natural attempt is using $\cA$ to obtain $z$ such that $\Pr[\{x_i\}_{i\in[T]}\la \cD^{\otimes T}(z)]\approx\max_{a}\left\{\Pr[\{x_i\}_{i\in[T]}\la\cD^{\otimes T}(a)]\right\}$, and prove that $\cD(z)$ is close to $\cD(z^*)$ in statistical distance by using concentration inequality.
In spite of the intuition, we cannot succeed in proving it.
\taiga{Maybe add the following sentence in footnote. 
``Learning distribution by maximum likelihood is one of natural strategy.
However, the strategy does not work in general, and there exists examples that the strategy does not work.''}
To overcome the issue, we instead use $\cA$ to obtain $z$ such that $\Pr[\{x_i\}_{i\in[T]}\la \cD_{1/2}^{\otimes T}(z)]\approx\max_{a}\left\{\Pr[\{x_i\}_{i\in[T]}\la\cD_{1/2}^{\otimes T}(a)]\right\}$.
Here, $\cD_{1/2}(z)$ is an algorithm that outputs $x_i\cD(z)$ with probability $1/2$ and outputs uniform random string $x_i$ with probability $1/2$
\footnote{\taiga{This technique is based on \cite{ML:AW92}.}
}.
This modification make it possible to prove that the statistical distance between $\cD(z)$ and $\cD(z^*)$ is small for the $z$ such that $\Pr[\{x_i\}_{i\in[T]}\la \cD_{1/2}^{\otimes T}(z)]\approx\max_{a}\left\{\Pr[\{x_i\}_{i\in[T]}\la\cD_{1/2}^{\otimes T}(a)]\right\}$.
\color{black}
\fi

Next, let us explain how to show 
average-case hardness of proper quantum distribution learning from OWPuzzs.
\cite{STOC:KhuTom24,C:ChuGolGra24} showed that OWPuzzs imply non-uniform QPRGs (nuQPRGs).
Therefore, it is sufficient to show average-case hardness of proper quantum distribution learning from nuQPRGs.
A nuQPRG is a QPT algorithm $\Gen$ that takes an advice string $\mu\in[n]$ as input, and outputs $n$-bit strings.
The security guarantees that there exists a $\mu^*\in[n]$ such that 
$\Gen(\mu^*)$ is statistically far but computationally indistinguishable from the uniform distribution $U_n$.

We show the average-case hardness of proper quantum distribution learning from nuQPRGs, $\Gen$.
We construct a pair $(\cS,\cD)$ of QPT algorithms as follows. 
$\cS$ samples $\mu\gets[n]$ and $b\gets\bit$,
and outputs $(\mu,b)$.
$\cD$ takes $(\mu,b)$ as input, and outputs $(\mu,x)$, where
$x\gets U_n$ if $b=1$, and $x\gets\Gen(\mu)$ if $b=0$.
For the sake of contradiction, suppose that $(\cS,\cD)$ is not average-case hard to learn. 
This means that there exists a QPT algorithm $\cA$, given samples $(\mu,x_1),...,(\mu,x_t)$ independently sampled from $\cD(\mu,b)$ (with $(\mu,b)\gets\cS$), 
that can output $(\mu^*,b^*)$ such that $\cD(\mu^*,b^*)$ is statistically close to $\cD(\mu,b)$.

Because the first output of $\cD(\mu,b)$ is $\mu$, the first output $\mu^*$ of $\cA$ has to be equal to $\mu$: otherwise,
$\mathsf{SD}(\cD(\mu,b),\cD(\mu^*,b^*))$ cannot be small.
Moreover, if $\mu$ is such that $\Gen(\mu)$ is statistically far from the uniform distribution,
$\cA$'s second output $b^*$ has to be equal to $b$, because otherwise
$\mathsf{SD}(\cD(\mu,b),\cD(\mu,b^*))$ cannot be small.

Therefore, from such $\cA$, we can construct a QPT algorithm $\cB$ that breaks the security of $\Gen$
as follows.
Let $\mathsf{Good}$ be the set of all $\mu\in[n]$ such that
$\Gen(\mu)$ is statistically far from $U_n$.
Our construction of $\cB$ is as follows: 
On input $x_1,...,x_t$ independently sampled from $\Gen(\mu^*)$ or $U_n$,
$\cB$ runs $\cA((\mu, x_1),...,(\mu,x_t))$ for all $\mu\in\mathsf{Good}$. 
$\cB$ outputs 1(0) if $\cA$ outputs 1(0) for all $\mu\in\mathsf{Good}$.
As we have explained above, $\cA$ outputs 1(0) for all $\mu\in\mathsf{Good}$ with high probability if
$x_1,...,x_t$ are sampled from $U_n$ ($\Gen$), and therefore,
$\cB$ breaks the security of $\Gen$.

However, one issue in the construction is that we do not know how to obtain $\mathsf{Good}$ in QPT.
We therefore define another set $\mathsf{Good}'$ that is the set of all $\mu\in[n]$
such that
$\Pr_{(y_1,...,y_t)\gets U_n^{\otimes t}}[1\gets\cA((\mu,y_1),...,(\mu,y_t))]$ is large.
Such a set can be obtained in QPT, and therefore $\cB$ can find it by itself.
It is easy to verify that $\mathsf{Good}\subseteq\mathsf{Good}'$.
Therefore, if we modify $\cB$ as follows, $\cB$ can correctly distinguish $\Gen$ and $U_n$:
$\cB$ first computes the set $\mathsf{Good}'$.
On input $x_1,...,x_t$ independently sampled from $\Gen(\mu^*)$ or $U_n$,
$\cB$ runs $\cA((\mu,x_1),...,(\mu,x_t))$ for all $\mu\in\mathsf{Good}'$.
$\cB$ outputs 1 if $\cA$ outputs 1 for all $\mu\in\mathsf{Good}'$.
$\cB$ outputs 0 if $\cA$ outputs 0 for at least one $\mu\in\mathsf{Good}'$.

\if0
For simplicity, we first explain our idea by assuming that
$\Gen(\mu)$ is far from the uniform distribution for all $\mu\in[n]$. 
Then we can easily construct the QPT algorithm $\cB$.
This is because the second output of $\cA((\mu,x_1),...,(\mu,x_T))$ is 0 if and only if $x_1,...,x_T$ are sampled from the uniform distribution.
$\cB$ can just run $\cA((\mu,x_1),...,(\mu,x_T))$ and outputs its second output.
\fi

\if0
There could be, however, some $\mu$ such that $\Gen(\mu)$ is not far from the uniform distribution.
For such $\mu$, the second output of $\cA$ could be 0 even if $x_1,...,x_T $ are sampled from the uniform distribution.
The list constructed in the first step of $\cB$ excludes such $\mu$. 
\fi

\if0
Given $x_1,...,x_T$, 
$\cB$ runs $(\mu,b_\mu)\la\cA( (\mu,x_1),...,(\mu,x_T))$ for all $\mu\in\mathsf{Good}$, and outputs $1$ if and only if $b_\mu=1$ for all $\mu\in\mathsf{Good}$.
$\cB$ outputs $1$ when its input $(x_1,...,x_T)$ are generated by uniform distribution because, for all $\mu\in\mathsf{Good}$, $\cA( (\mu,x_1),...,(\mu,x_T))$ outputs $(\mu,1)$ with high probability.
On the other hand, if $\Gen(\mu)$ is far from uniform distribution and $\cB$'s inputs $x_1,...,x_T$ are generated from $\Gen(\mu)$, $\cB$ outputs $0$ because $\cA( (\mu,x_1),...,(\mu,x_T))$ outputs $(\mu,0)$ for some $\mu$.
\fi

\if0
\taiga{Elaborate kokokara}
$\cB$ first receives samples $x_1,...,x_T$ independently sampled from $\Gen(1^n,\mu)$ or from the uniform distribution over $\bit^n$. 
For all $\mu$, $\cB$ tests if $\cA( (\mu,X_1),...,(\mu,X_T))$ outputs $(\mu,1)$ when $X_1,...,X_T$ are sampled from uniform distribution over $\bit^n$.
We denote such a set of $\mu$ as $\mathsf{Good}$.
Then, for all $\mu\in\mathsf{Good}$, $\cB$ runs $\cA( (\mu,x_1),...,(\mu,x_T))$, and outputs $0$ if $\cA( (\mu,x_1),...,(\mu,x_T))$ outputs $(\mu,0)$ for some $\mu$.
If $x_1,...,x_T$ are identically sampled from $\Gen(1^n,\mu)$, then $\cB$ obviously outputs $0$.
On the other hand, if $x_1,...,x_T$ are sampled from identically uniform distribution over $\bit^n$, $\cB$ outputs $1$ because $\cA((\mu,x_1),...,(\mu,x_t) )$ does not outputs $0$ with high probability for all $\mu\in\mathsf{Good}$.
This contradicts to the security of nuQPRGs.
\taiga{koko}
\fi

\if0
This is obtained by showing that non-uniform QEFID (nuQEFID) implies average-case hardness of proper quantum distribution learning because nuQEFID can be constructed from OWPuzzs~\cite{STOC:KhuTom24,C:ChuGolGra24}.
nuQEFID is a QPT algorithm $\Gen$ that takes the security parameter $1^n$, an advice string $\mu$ of $\log(n)$ length and $b\in\bit$ as input, and outputs $n$-bit strings.
We require that for some $\mu^*$, $\Gen(1^n,\mu^*,0)$ is statistically far but computationally indistinguishable from $\Gen(1^n,\mu^*,1)$.
From nuQEFID, it is easy to show the average-case hardness of proper quantum distribution learning.
Let $\cS$ be an algorithm that uniformly randomly outputs $\mu\la[n]$ and $b\la\bit$, and let $\cD$ be a QPT algorithm that takes $\mu$ and $b$ as input, and outputs $x\la\Gen(1^n,\mu,b)$.
If there is a QPT algorithm $\cA(x)$ can find $\mu'$ and $b'$ such that $\Gen(\mu',b')\approx\Gen(\mu,b)$, then we can construct a QPT algorithm that can break the security of nuQEFID.
This implies that average-case hardness of proper quantum distribution learning follows.
\fi

\paragraph{Proof of \cref{thm:PP_agnostic}.}
Let us next explain our proof of \cref{thm:PP_agnostic}, which is the equivalence between $\mathbf{BQP}\neq \mathbf{PP}$ and hardness of agnostic quantum distribution learning with respect to KL divergence.
Our key idea is to introduce the worst-case hardness of quantum maximum likelihood (QML) estimation as an intermediate object.
Let $\cD$ be a QPT algorithm that, on input a bit string $z$, outputs a bit string $x$.
Worst-case hardness of QML means that there exists a bit string $x$ such that
no QPT learner given $x$ can find $z^*$ such that
$\Pr[x\la\cD(z^*)]$ is close to $\max_{z}\Pr[x\la\cD(z)]$.
Intuitively, the worst-case hardness of QML is equivalent to the hardness of agnostic quantum distribution learning,
but we focus on the former, because it is more convenient to show the equivalence to $\mathbf{BQP}\neq\mathbf{PP}$.\footnote{Although
they are equivalent at the intuition level, showing the equivalence is non-trivial,
and we carefully show their equivalence, which is also our technical contribution.}

Hence our goal is to show the equivalence between the worst-case hardness of QML and $\mathbf{BQP}\neq\mathbf{PP}$.
\cite{HH24} already showed $\mathbf{BQP}\neq\mathbf{PP}$ from the worst-case hardness of QML. 
Therefore the remaining task is to show the other direction.

\if0
We explain that hardness of agnostic quantum distribution learning with respect to KL divergence implies $\mathbf{BQP}\neq \mathbf{PP}$.
The idea is similar to constructing OWPuzzs from average-case hardness of proper quantum distribution learning.
Instead of directly showing $\mathbf{BQP}\neq\mathbf{PP}$,
we introduce the worst-case hardness of QML.
We first show that agnostic quantum distribution learning with respect to KL divergence implies worst-case hardness of QML in the same way as showing that average-case hardness of quantum distribution learning implies average-case hardness of QML.
Then, we show that worst-case hardness of QML implies $\mathbf{BQP}\neq\mathbf{PP}$~\footnote{\taiga{This was shown in \cite{HH24}.}}.
Assuming $\mathbf{BQP}=\mathbf{PP}$, for any QPT algorithm $\cQ$, there exists a QPT algorithm $\mathsf{Estimate}$ that can estimate $\Pr[x\la \cQ(1^n)]$~\cite{FR99}.
We can use the $\mathsf{Estimate}$ algorithm for constructing  a QPT algorithm $\mathsf{Ext}$ such that $(z,x)_{(z,x)\la\cD(1^n)}$ is close to $(z,\mathsf{Ext}(z))_{z\la\cD(1^n)}$ for all $z$.
This implies that, \emph{for all $z$}, $\mathsf{Ext}(z)$ statistically simulates $\cD(1^n,z)$ where $\cD(1^n,z)$ means an algorithm that samples $(z^*,x^*)\la\cD(1^n)$, and outputs $x^*$ conditioned on $z=z^*$.
Furthermore, the output $x$ of $\cD(1^n,z)$ must satisfy $\Pr[x\la\cD(1^n,z)]\approx\max_{a}\Pr[x\la \cD(1^n,a)]$ with high probability.
Hence, \emph{for all $z$}, $\mathsf{Ext}(z)$ outputs a correct answer of QML with high probability as well.
\fi

Let $\cL$ be a language in $\mathbf{PP}=\mathbf{PostBQP}$.
Then, there exists a QPT algorithm $\cM$ such that $\Pr_{(b,b^*)\la\cM(x)}[b=1|b^*=1]\geq \frac{2}{3}$ 
for all $x\in\cL$, and $\Pr_{(b,b^*)\la\cM(x)}[b=1|b^*=1]\leq \frac{1}{3}$ for all $x\not\in\cL$.
For the QPT algorithm $\cM$, we consider $\cM^*(x,c)$ that runs $(b,b^*)\la\cM(x)$, and outputs $x$ if $b=c$ and $b^*=1$, and outputs $\bot$ otherwise.
Crucially, if $x\in\cL$, then $\Pr[x\la\cM^*(x,1)]$ is much larger than $\Pr[x\la\cM^*(x,0)]$ while if $x\not\in\cL$, then $\Pr[x\la\cM^*(x,0)]$ is much larger than $\Pr[x\la\cM^*(x,1)]$.
Therefore, if QML is easy in the worst-case setting, we can check if $x\in\cL$ or $x\not\in\cL$ in QPT by doing the maximum likelihood estimation of $\cM^*$.

\paragraph{Proof of \cref{thm:classical_upperbound}}
\cref{thm:classical_upperbound} is the construction of a PPT algorithm with access to a $\Sigma_3^{\mathbf{P}}$ oracle that can
achieve agnostic classical distribution learning with respect to the statistical distance.
Let $\cT$ be an unknown algorithm that outputs classical bit strings, and let $\cD$ be a known QPT algorithm that takes an $n$-bit string $z$ as input, and outputs a classical bit string $x$.
The learner receives $x_1,...,x_t$ independently sampled from $\cT$, and has to find $z^*$ such that $\mathsf{SD}(\cD(z^*),\cT)$ is small.

In order to make our explanation simpler, let us first assume that $\cD$'s input $z$ is just a single bit.
In this setting, access to an $\mathbf{NP}$ oracle, not a $\Sigma_3^{\mathbf{P}}$ oracle, is enough.
Given $x_1,...,x_t$ independently sampled from $\cT$, the learner has to find $b\in\bit$ such that $\cT$ is statistically closer to $\cD(b)$
than $\cD(b\oplus1)$.
How can we do that?
By the definition of the statistical distance, there exists an unbounded algorithm $\mathsf{Dis}$ such that
\begin{align}
\abs{\Pr[1\la\mathsf{Dis}(x):x\la\cD(1)]-\Pr[1\la\mathsf{Dis}(x):x\la\cD(0)]}=\mathsf{SD}(\cD(0),\cD(1)).
\end{align}
Given $x_1,...,x_t$ as input,
which is independently sampled from $\cT$,
the learner first estimates the value of 
$\Pr[1\la\mathsf{Dis}(x):x\la\cT]$. 
It then outputs 0 (1) if
the estimated value of $\Pr[1\la\mathsf{Dis}(x):x\la\cT]$ is 
closer to that of 
$\Pr[1\la\mathsf{Dis}(x):x\la\cD(0)]$ ($\Pr[1\la\mathsf{Dis}(x):x\la\cD(1)]$).
The reason why this learning algorithm works can be easily understood from the inequality
\begin{align}
\abs{\Pr[1\la\mathsf{Dis}(x):x\la\cT]-\Pr[1\la\mathsf{Dis}(x):x\la\cD(b)]}\le\mathsf{SD}(\cT,\cD(b)).
\end{align}
This learning algorithm can be implemented with access to an $\mathbf{NP}$ oracle,
because 
$\mathsf{Dis}$ is the algorithm that takes $x$ as input and outputs $b$ if $x$ is more likely from $\cD(b)$,
and such an algorithm can be implemented with the Stockmeyer theorem~\cite{STOC:Stockmeyer83}:
a PPT algorithm with access to an $\mathbf{NP}$ oracle can approximate classical probability distributions.
Hence, in this simplified setting, a PPT algorithm with access to an $\mathbf{NP}$ oracle can achieve agnostic distribution learning.

Next let us consider the actual case where $\cD$'s input $z$ is an $n$-bit string. 
Again, from the definition of the statistical distance, 
there exists an unbounded algorithm $\mathsf{Dis}$ such that
\begin{align}
\abs{\Pr[1\la\mathsf{Dis}(a,b,x):x\la\cD(a)]-\Pr[1\la\mathsf{Dis}(a,b,x):x\la\cD(b)]}=\mathsf{SD}(\cD(a),\cD(b))
\end{align}
for all $a$ and $b$.
Given $x_1,...,x_t$ independently sampled from $\cT$, the learner first estimates the value of
$\Pr[1\la\mathsf{Dis}(a,b,x):x\la\cT]$ for all $a,b$.  
The learner outputs $a^*$ such that
\begin{align}
\abs{\Pr[1\la\mathsf{Dis}(a^*,b,x):x\la\cT] - \Pr[1\la\mathsf{Dis}(a^*,b,x):x\la\cD(a^*)]}
\end{align}
is small for all $b$.
The reason why this learning algorithm works can be understood from the following inequality:
\begin{align}
\mathsf{SD}(\cT,\cD(a))
&\leq \mathsf{SD}(\cT,\cD(b)) + \mathsf{SD}(\cD(a),\cD(b))\\
&=  \mathsf{SD}(\cT,\cD(b))+ \abs{\Pr[1\la\mathsf{Dis}(a,b,x):x\la\cD(b)] - \Pr[1\la\mathsf{Dis}(a,b,x):x\la\cD(a)]}\\
&\leq \mathsf{SD}(\cT,\cD(b)) 
+\abs{\Pr[1\la\mathsf{Dis}(a,b,x):x\la\cD(b)] 
-
\Pr[1\la\mathsf{Dis}(a,b,x):x\la\cT]}\\
&\,\,\,\,\,+\abs{\Pr[1\la\mathsf{Dis}(a,b,x):x\la\cT] 
-
\Pr[1\la\mathsf{Dis}(a,b,x):x\la\cD(a)]}\\
&\le 2\mathsf{SD}(\cT,\cD(b))
+\abs{\Pr[1\la\mathsf{Dis}(a,b,x):x\la\cT] 
-
\Pr[1\la\mathsf{Dis}(a,b,x):x\la\cD(a)]},
\end{align}
which suggests that in order to minimize $\mathsf{SD}(\cT,\cD(a))$, 
\begin{align}
\abs{\Pr[1\la\mathsf{Dis}(a,b,x):x\la\cT] 
-
\Pr[1\la\mathsf{Dis}(a,b,x):x\la\cD(a)]}
\end{align}
has to be minimized over $a$.

Finally, with a careful investigation, we show that the learning algorithm can be implemented with access to a $\Sigma_3^{\mathbf{P}}$ oracle.\footnote{We do not know whether $\Sigma_3^{\mathbf{P}}$ is optimal. This could be improved.}

\if0
For finding $z^*$ such that $\mathsf{SD}(\cD(z^*),\cT)$ is small, it is sufficient to find $z^*$ such that $\mathsf{SD}(\cD(z^*),\cD(b))$ is small for all $b$ because of the triangle inequality.
For finding such $z^*$, we consider $\mathsf{Dis}$ algorithm, which takes $a$, $b$, and $x$ as input, and distinguishes whether $x$ is sampled from $\cD(a)$ or $\cD(b)$.
To find $z^*$ such that $\mathsf{SD}(\cD(z^*),\cD(b))$ is small for all $b$, it is sufficient to find $z^*$ such that $\Pr[\mathsf{Dis}(a,b,x):x\la\cD(a)]-\Pr[\mathsf{SD}(a,b,x): x\la\cT]  $ is as small as possible as for all $b$.

To achieve agnostic distribution learning, it is sufficient to find $z^*$ such that
\begin{align*}
\max_{b} \abs{\Pr_{x\la\cT}[1\la\widetilde{\mathsf{Dis}}^{\mathbf{NP}}(z^*,b,x)] - \Pr_{x\la\cD(z^*)}[1\la\widetilde{\mathsf{Dis}}^{\mathbf{NP}}(z^*,b,x)]}
\end{align*}
is as small as possible.  
In short, the intuitive reason is that such $z^*$ can minimize $\mathsf{SD}(\cD(b), \cD(z^*))$ for every $b$, and hence we can also minimize $\mathsf{SD}(\cT, \cD(z^*))$.  
On the other hand,  
with high probability over $(x_1,\ldots,x_T)\la\cT^{\otimes T}$ and $(r_1,\ldots,r_T)\la U^{\otimes T}$,\footnote{Here, $U$ is the uniform distribution over the internal randomness used for sampling from $\cD(a)$.}
\begin{align*}
\max_{b} \frac{1}{T} \abs{\sum_{i\in [T]} \widetilde{\mathsf{Dis}}^{\mathbf{NP}}(a,b,x_i) - \sum_{i\in[T]} \widetilde{\mathsf{Dis}}^{\mathbf{NP}}(a,b,\cD(a;r_i))}
\end{align*}
is a good approximation of
\begin{align*}
\max_{b} \abs{\Pr_{x\la\cT}[1\la\widetilde{\mathsf{Dis}}^{\mathbf{NP}}(a,b,x)] - \Pr_{x\la\cD(z^*)}[1\la\widetilde{\mathsf{Dis}}^{\mathbf{NP}}(a,b,x)]}
\end{align*}
for all $a$.  
Hence, it is sufficient to find $z^*$ such that 
\begin{align*}
\max_{b} \frac{1}{T} \abs{\sum_{i\in [T]} \widetilde{\mathsf{Dis}}^{\mathbf{NP}}(z^*,b,x_i) - \sum_{i\in[T]} \widetilde{\mathsf{Dis}}^{\mathbf{NP}}(z^*,b,\cD(z^*;r_i))}
\end{align*}
is as small as possible.
By carefully constructing a $\Sigma_3^{\mathbf{P}}$ oracle, we show that, given $x_1,\ldots,x_T$, a PPT algorithm with access to a $\Sigma_3^{\mathbf{P}}$ oracle can find such $z^*$.
\if0
In general, $\mathsf{Dis}$ is computationally unbounded, and hence we first show that a PPT algorithm with access to an $\mathbf{NP}$ oracle, denoted $\widetilde{\mathsf{Dis}}^{\mathbf{NP}}$, can approximately simulate the $\mathsf{Dis}$ algorithm.
This can be shown by a PPT algorithm with an $\mathbf{NP}$ oracle that can approximate classical probability distributions~\cite{STOC:Stockmeyer83}.  
\fi
\fi

\paragraph{Proof of \cref{thm:qad}.}
In the following, we explain our proof of \cref{thm:qad},
which is $\mathbf{SampBQP}\neq\mathbf{SampBPP}$ from hardness of agnostic quantum distribution learning with respect to the statistical distance
against $PPT^{\Sigma_3^{\mathbf{P}}}$ learner.
Let $\cT$ be an unknown algorithm that outputs classical bit strings, 
and let $\cD$ be a known QPT algorithm that takes a bit string $z$ as input and outputs a classical bit string $x$. The learner who receives $x\gets\cT$
has to find $z^*$ such that $\cD(z^*)$ is statistically close to $\cT$.  

For the sake of contradiction, assume that
$\mathbf{SampBQP} = \mathbf{SampBPP}$.  
Then, there exists a PPT algorithm $\cC$ such that $\mathsf{SD}(\cD(z),\cC(z))$ is small for all $z$.  
Hence, given samples $x_1, \ldots, x_t$ from $\cT$, it is sufficient to find $z^*$ such that $\cC(z^*)$ is statistically close to $\cT$,  
which is 
agnostic {\it classical} distribution learning.
From \cref{thm:classical_upperbound},
it is possible
with access to a $\Sigma_3^{\mathbf{P}}$ oracle.

\if0
\mor{kokomo soudan}
We explain how to show \cref{thm:PP_worstdistlearn}.
\cref{thm:PP_worstdistlearn} states that, if we can show the $\mathbf{PP}$-hardness of worst-case hardness of proper quantum distribution learning and $\mathbf{PP}\nsubseteq \mathbf{BPP}^{\Sigma_3^{\mathbf{P}}}$, then we have $\mathbf{SampBQP}\neq\mathbf{SampBPP}$.
For showing \cref{thm:PP_worstdistlearn}, it is sufficient to show that a PPT algorithm with the $\Sigma_3^{\mathbf{P}}$ oracle can solve agnostic classical distribution learning with respect to the statistical distance.
The reason is as follows:
If $\mathbf{SampBQP}=\mathbf{SampBPP}$, then for any QPT algorithm $\cD$, which takes $z$ and outputs $x$, there exists a PPT algorithm $\cD^*$ such that $\cD^*(z)$ is close to $\cD(z)$ for all $z$.
Therefore, if we can show that a PPT algorithm with the $\Sigma_3^{\mathbf{P}}$ oracle can solve agnostic classical distribution learning with respect to the statistical distance, then $\mathbf{SampBQP}\neq\mathbf{SampBPP}$ implies that a PPT algorithm with the $\Sigma_3^{\mathbf{P}}$ oracle can solve worst-case proper quantum distribution learning as well.
On the other hand, if we can show the $\mathbf{PP}$-hardness of worst-case proper quantum distribution learning, then no PPT algorithms with the $\Sigma_3^{\mathbf{P}}$ oracle can solve worst-case proper quantum distribution learning assuming $\mathbf{PP\neq \mathbf{BPP}}^{\Sigma_3^\mathbf{P}}$.
This is a contradiction.
\fi

\paragraph{Proof of \cref{thm:PP_worstdistlearn}.}
\cref{thm:PP_worstdistlearn}
states that
if the worst-case hardness of proper quantum distribution learning is $\mathbf{PP}$-hard (via a black-box reduction), 
then $\mathbf{PP} \nsubseteq \mathbf{BPP}^{\Sigma_3^{\mathbf{P}}}$ implies $\mathbf{SampBQP} \neq \mathbf{SampBPP}$. 
In order to show it,
assume that the worst-case hardness of proper quantum distribution learning can be reduced to the $\mathbf{PP}$-hardness.
Let us also assume that $\mathbf{SampBQP} = \mathbf{SampBPP}$. 
It is straightforward that the worst-case hardness of proper quantum distribution learning implies
the hardness of agnostic quantum distribution learning with respect to the statistical distance.
From the assumption of 
$\mathbf{SampBQP} = \mathbf{SampBPP}$, 
the latter implies
the hardness of agnostic {\it classical} distribution learning with respect to the statistical distance.
From \cref{thm:classical_upperbound}, this is broken by a PPT algorithm with access to a $\Sigma_3^{\mathbf{P}}$ oracle.
Therefore, $\mathbf{PP}\subseteq \mathbf{BPP}^{\Sigma_3^{\mathbf{P}}}$.

\if0
In the following, we explain how to show that a PPT algorithm with the $\Sigma_3^{\mathbf{P}}$ oracle can solve agnostic classical distribution learning with respect to the statistical distance.
We leverage an information-theoretical learning algorithm~\cite{Yatracos85,Devroye_Lugosi}, and construct a PPT learning algorithm with the $\Sigma_3^{\mathbf{P}}$ oracle.
First, let us explain the information-theoretical algorithm.
Let $\cT$ be an unknown distribution, from which the learner can get independently sampled strings $x_1,...,x_T$, and let $\cD$ be a PPT algorithm that takes a bit string $z$, and outputs a bit strings $x$, which the learner wants to find $z^*$ such that the statistical distance between $\cD(z^*)$ and $\cT$ is small.
Let $\mathsf{Dis}$ be an algorithm that takes $a$, $b$ and $x$ as input, and distinguish if $x$ is sampled from $\cD(a)$ or $\cD(b)$.
More formally, $\mathsf{Dis}$ satisfies the following condition:
\begin{align*}
\abs{\Pr_{x\la\cD(a)}[1\la \mathsf{Dis}(a,b,x)]-\Pr_{x\la\cD(b)}[1\la\mathsf{Dis}(a,b,x)]}=\mathsf{SD}(\cD(a),\cD(b))
\end{align*}
for all $a$ and $b$.
Here, $\mathsf{SD}(\cD(a),\cD(b))$ denotes the statistical distance between $\cD(a)$ and $\cD(b)$.
Then, for solving agnostic classical distribution learning, it is sufficient to find $z^*$ such that
\begin{align*}
\max_{b}\abs{\Pr_{x\la\cT}[1\la\mathsf{Dis}(z^*,b,x)]- \Pr_{x\la\cD(z^*)}[1\la\mathsf{Dis}(z^*,b,x)]}
\end{align*}
is as small as possible.
\taiga{This is because...}
Furthermore, from the probabilistic argument, with high probability over $(x_1,...,x_T)\la\cT^{\otimes T}$, we can find such $z^*$ by taking $T$ sufficiently large.

Now, we explain how to construct a PPT algorithm with the $\Sigma_3^{\mathbf{P}}$ oracle.
For this, we first show that a PPT algorithm $\widetilde{\mathsf{Dis}}^{\mathbf{NP}}$ with the $\mathbf{NP}$ oracle can approximately implement the distinguish algorithm $\mathsf{Dis}$ because a PPT algorithm with the $\mathbf{NP}$ oracle can approximate classical probability distribution~\cite{STOC:Stockmeyer83}.
We consider a deterministic algorithm with the $\mathbf{NP}$ oracle $\cA^{\mathbf{NP}}(a,b,\{x_i,r_i(1),r_i(2),r_i(3)\}_{i\in[T]}\})$ that estimates $\abs{\Pr_{x\la\cT}[1\la\mathsf{Dis}(a,b,x)]- \Pr_{x\la\cD(a)}[1\la\mathsf{Dis}(a,b,x)]}$, and outputs $1$ if $\abs{\Pr_{x\la\cT}[1\la\mathsf{Dis}(a,b,x)]- \Pr_{x\la\cD(a)}[1\la\mathsf{Dis}(a,b,x)]}$ is small, and outputs $0$ otherwise.
Note that $\cA^{\mathbf{NP}}$ can do this by computing
$\mathsf{Dis}(a,b,x_i;r_i(1))$, $x_i^*\la \cD(a;r_i(2))$, and $\mathsf{Dis}(a,b,x_i^*;r_i(3))$ for all $i\in[T]$ by querying $\mathbf{NP}$ oracle.
From probabilistic argument, with high probability over $\{x_i\}_{i\in[T]}\la\cT^{\otimes T}$ and $\{r_i(1),r_i(2),r_i(3)\}_{i\in[T]}$, it outputs $1$ if $\abs{\Pr_{x\la\cT}[1\la\mathsf{Dis}(a,b,x)]- \Pr_{x\la\cD(a)}[1\la\mathsf{Dis}(a,b,x)]}$ is small, and outputs $0$ otherwise \emph{for all $a$ and $b$ at the same time}.
Therefore, with high probability over $\{x_i\}_{i\in[T]}\la\cT^{\otimes T}$ and $\{r_i(1),r_i(2),r_i(3)\}$, it is sufficient to find $z^*$ such that $1\la\cA^{\mathbf{NP}}(z^*,b,\{x_i,r_i(1),r_i(2),r_i(3)\}_{i\in[T]})\})$ for all $b$.
We can do this by querying $\Sigma_3^{\mathbf{P}}$ oracle.
This is because if we define $\cL\seteq\{\{x_i,r_i(1),r_i(2),r_i(3)\}_{i\in[T]}: \exists a \mbox{\,\,} \forall b,  1\la\cA^{\mathbf{NP}}(a,b,\{x_i,r_i(1),r_i(2),r_i(3)\}_{i\in[T]})\}$, then we have $\cL\in \Sigma_3^{\mathbf{P}}$.
Therefore, by querying $\Sigma_3^{\mathbf{P}}$ oracle,
we can find $z^*$ such that, for all $b$, we have
$1\la\cA^{\mathbf{NP}}(z^*,b,\{x_i,r_i(1),r_i(2),r_i(3)\}_{i\in[T]})$.
\fi

\section{Preliminaries}
\subsection{Basic Notations}
We introduce basic notations and mathematical tools used in this paper.

We use the standard notations of cryptography and quantum information.
$[n]$ denotes the set $\{1,2,...,n\}$.
$\negl$ is a negligible function.
$\poly$ is a polynomial.
QPT stands for quantum polynomial time,
and PPT stands for classical probabilistic polynomial time.
For an algorithm (or a Turing machine) $\cA$, $y\gets\cA(x)$ means that $\cA$ runs on input $x$ and outputs $y$.
When we explicitly show that $\cA$ uses randomness $r$, we write $y\la \cA(x;r)$.
The notation $\{y_i\}_{i\in[N]}\la\cA^{\otimes N}(x)$ means that $\cA$ is run on input $x$ independently $N$ times, and $y_i$ is the $i$th result.
For any distribution $D$, $a\gets D$ means that $a$ is sampled according to the distribution $D$.
For any set $S$, $a\gets S$ means that $a$ is sampled uniformly at random from $S$.
We denote $U_n$ to mean a uniform distribution over $\bit^n$.
For two distributions $D$ and $E$, we write
$\mathsf{SD}(D,E)$ or $\mathsf{SD}(x_{x\la D}, x_{x\la E})$ to mean the statistical distance between two distributions $D$ and $E$, and we write $D_{KL}(D\,\|\,E)\seteq  \sum_{a}\Pr[a\la D]\log\left(\frac{\Pr[a\la D]}{\Pr[a\la E]} \right)$ to mean the KL-divergence from $D$ to $E$.
If $A$ is an algorithm that outputs bit strings, we often denote $A$ by the output distribution of $A$ when it is clear from the context.

\subsection{Lemmas}
We introduce lemmas, which we use in this paper.

\if0
\begin{lemma}\label{lem:Hoeffding}
    Let $n\in\N$ and $M\in\R$.
    Let $\cF\coloneqq\{
    f:\bit^{\poly(n)}\ra [0,M]
    \}$ be a family of functions. 
    Let $\epsilon,\delta>0$, and
    \begin{align}
        T\geq \frac{M^2}{\epsilon^2}\abs{\log{\abs{\cF}}+\log(1/\delta)}.
    \end{align}
    Then, for any distribution $\cD$ over $\cZ$ with probability at least $1-\delta$ over $\{z_i\}_{i\in[T]}\la \cD^{\otimes T}$, we have
    \begin{align}
        \abs{\frac{1}{m}\sum_{i\in[T]}X(z_i)-  \mathbb{E}_{z\la\cD}[X(z)] }\leq \epsilon
    \end{align}
    for all $X\in\cF$ at the same time.
    \mor{@koko yoku wakaran}
\end{lemma}
\fi

\begin{lemma}\label{lem:Hoeffding}
    Let $M,\epsilon,\delta>0$ and let
    $\cF_M\coloneqq\{
    f:\bit^{*}\ra [0,M]
    \}$ be a family of functions such that $|\cF_M|$ is finite. 
   Let $T$ be an integer such that
    \begin{align}
        T\geq \frac{M^2}{\epsilon^2}\abs{\log{\abs{\cF_M}}+\log(1/\delta)}.
    \end{align}
     Let $\cD$ be a distribution over $\bit^*$.
     We have
    \begin{align}
        \Pr\left[\abs{\frac{1}{T}\sum_{i\in[T]}f(x_i)-  \mathbb{E}_{x\la\cD}[f(x)] }\leq \epsilon \mbox{\,\,\,for\, all\,\,} f\in\cF_M:\{x_i\}_{i\in[T]}\la\cD^{\otimes T} \right]\geq 1-\delta.
    \end{align}
    Here, $\{z_i\}_{i\in[T]}\gets\cD^{\otimes T}$ means that each $z_i$ is independently sampled from $\cD$. 
\end{lemma}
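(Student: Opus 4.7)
The plan is a textbook two-step argument: Hoeffding for a single function, then a union bound over the finite class $\cF_M$. First I would fix an arbitrary $f \in \cF_M$ and view $Y_i \coloneqq f(x_i)$ as i.i.d.\ random variables bounded in $[0,M]$ with common mean $\mathbb{E}_{x \la \cD}[f(x)]$. By Hoeffding's inequality applied to the centered sum,
\begin{equation*}
\Pr\left[\,\left|\tfrac{1}{T}\sum_{i=1}^{T} f(x_i) - \mathbb{E}_{x \la \cD}[f(x)]\right| > \epsilon\,\right] \;\le\; 2\exp\!\left(-\tfrac{2T\epsilon^{2}}{M^{2}}\right).
\end{equation*}

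Next I would take a union bound over all $f \in \cF_M$, which is legal because $|\cF_M|$ is finite by assumption. This gives
\begin{equation*}
\Pr\left[\,\exists f \in \cF_M:\, \left|\tfrac{1}{T}\sum_{i=1}^{T} f(x_i) - \mathbb{E}_{x \la \cD}[f(x)]\right| > \epsilon\,\right] \;\le\; 2|\cF_M|\exp\!\left(-\tfrac{2T\epsilon^{2}}{M^{2}}\right).
\end{equation*}
It then suffices to choose $T$ so that the right-hand side is at most $\delta$, i.e.\ $T \ge \tfrac{M^{2}}{2\epsilon^{2}}\bigl(\log(2|\cF_M|) + \log(1/\delta)\bigr)$. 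The hypothesis of the lemma, $T \ge \tfrac{M^{2}}{\epsilon^{2}}\bigl(\log|\cF_M| + \log(1/\delta)\bigr)$, is strictly stronger than this requirement (the extra factor of $2$ in front absorbs the $\log 2$ and the factor $1/2$), so the bound holds.

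There is essentially no obstacle: the only mild care needed is the factor between the Hoeffding exponent and the lemma's stated sample complexity, and the absolute-value signs in the statement (which I read as a typo for parentheses, since $\log|\cF_M| + \log(1/\delta)$ is already nonnegative). The proof is therefore a one-paragraph Hoeffding-plus-union-bound argument, and I would present it as such.
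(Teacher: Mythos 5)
Your proof is correct and is exactly the approach the paper has in mind: the paper omits the proof, stating only that the lemma ``directly follows from Hoeffding's inequality and the union bound,'' which is precisely the two-step argument you give. One small nit: your parenthetical claim that the stated bound on $T$ is ``strictly stronger'' than $T \ge \tfrac{M^{2}}{2\epsilon^{2}}\bigl(\log(2|\cF_M|) + \log(1/\delta)\bigr)$ holds only when $\log|\cF_M| + \log(1/\delta) \ge \log 2$ (equivalently $|\cF_M|/\delta \ge 2$), which is the case in every use the paper makes of the lemma but not literally for all $M,\epsilon,\delta,|\cF_M|$; this is a constant-factor sloppiness in the lemma statement rather than a gap in your argument.
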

\cref{lem:Hoeffding} directly follows from Hoeffding's inequality and the union bound.
Therefore, we omit the proof.

The following theorem guarantees that for all PPT algorithms $\cD$, which takes $1^n$ as input, and outputs a classical string $x$, there exists a PPT algorithm 
with access to an $\mathbf{NP}$ oracle that can estimate $\Pr[x\la\cD(1^n)]$ within $1/\poly$-multiplicative error for all $x\in\bit^*$ and for all sufficiently large $n\in\N$.
\begin{theorem}[\cite{STOC:Stockmeyer83}]\label{claim:stockmeyer}
For any polynomial $\epsilon$, and any PPT algorithm $\cD$ which takes $1^n$ as input, and outputs $x\in\bit^{\poly(n)}$, there exists a PPT algorithm $\mathsf{Estimate}$ with access to an $\mathbf{NP}$ oracle such that, for all $x\in\bit^n$, we have
\begin{align}
\Pr[
\abs{\mathsf{Estimate}^{\mathbf{NP}}(1^n,x)-\Pr[x\la\cD(1^n)]}\leq\frac{\Pr[x\la\cD(1^n)]}{\epsilon(n))}]\geq 1-1/\epsilon(n)
\end{align}
for all sufficiently large $n\in\N$.
\end{theorem}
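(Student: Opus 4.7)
The plan is to reduce the estimation problem to approximate counting of an $\mathbf{NP}$ set and then invoke the hashing-based approximate counting technique. Let $r=r(n)=\poly(n)$ be the number of random bits used by $\cD(1^n)$, and for every $x$ define
\begin{equation*}
S_x \seteq \{\omega\in\bit^{r} : \cD(1^n;\omega)=x\}, \qquad \Pr[x\la\cD(1^n)]=\frac{|S_x|}{2^{r}}.
\end{equation*}
Membership in $S_x$ is decidable in deterministic polynomial time (just run $\cD$ on the given randomness and check whether the output equals $x$), so $S_x$ is the witness set of an $\mathbf{NP}$ predicate whose size we need to approximate within a $1/\epsilon(n)$ multiplicative factor with high probability.

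First I would recall the classical Sipser–Stockmeyer hashing lemma: for a pairwise independent (or better, $k$-wise independent) family $\cH$ of hash functions $h:\bit^{r}\to\bit^{s}$, the probability that a random $h\in\cH$ satisfies $|\{\omega\in S_x : h(\omega)=0^s\}|\in[1,\alpha]$ concentrates sharply around the right value of $s$ relative to $\log_2|S_x|$. Crucially, the question ``does there exist $\omega$ with $\omega\in S_x$ and $h(\omega)=0^s$?'' is an $\mathbf{NP}$ question, since a verifier can check both the deterministic predicate $\cD(1^n;\omega)=x$ and the hash equation $h(\omega)=0^s$ in polynomial time once $h$ is given. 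Similarly, ``do there exist $k$ distinct $\omega_1,\ldots,\omega_k\in S_x$ with $h(\omega_i)=0^s$?'' is in $\mathbf{NP}$ for any fixed polynomial $k$.

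Next I would carry out the standard binary search: the PPT algorithm samples $\poly(n,1/\epsilon)$ independent hash functions $h$ from a suitable family and, for each guess $s\in\{0,1,\ldots,r\}$, queries the $\mathbf{NP}$ oracle to decide the existential question above. By the pairwise-independence analysis, this lets us locate $s^*\approx \log_2 |S_x|$ within an additive constant with high probability, yielding a constant multiplicative approximation of $|S_x|$. To upgrade this to a $1/\epsilon(n)$ multiplicative approximation, I would use the sharper version in which the hash family is $k$-wise independent with $k=\poly(n)/\epsilon(n)^2$ and the $\mathbf{NP}$ query counts the number of collisions exactly up to $k$; Chebyshev (or the higher-moment version for $k$-wise independence) then concentrates the estimator to within a $(1\pm 1/\epsilon)$ factor, and a majority over $\poly(n)$ independent trials boosts the confidence to $1-1/\epsilon(n)$.

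The main technical obstacle is handling the \emph{tails}: when $|S_x|$ is very small (e.g.\ $0$ or $1$), the hashing-based estimator degenerates, and when $|S_x|$ is close to $2^{r}$ the event ``$h(\omega)=0^s$'' is almost certain and yields no information. I would handle the lower tail by a separate case, first asking the $\mathbf{NP}$ oracle whether $S_x=\emptyset$, and then for a sequence of small explicit thresholds (``$|S_x|\ge k$?'' for $k\le\poly(n)/\epsilon(n)^2$) before moving to the hashing regime; symmetrically, the upper tail can be reduced to the lower tail by considering the complement set $\bit^{r}\setminus S_x$, whose membership predicate is in $\mathbf{coNP}$ but whose ``non-membership witness'' is again an $\mathbf{NP}$ question that our oracle can answer. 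Putting these pieces together gives the PPT$^{\mathbf{NP}}$ algorithm $\mathsf{Estimate}$ with the claimed multiplicative accuracy and confidence.
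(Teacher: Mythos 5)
The paper provides no proof of this statement---it is cited directly from \cite{STOC:Stockmeyer83}---so there is no in-paper argument to compare against. Your reconstruction is the standard Sipser--Stockmeyer hashing approach and is substantially correct: reduce probability estimation to approximate counting of the polytime-decidable set $S_x$, use pairwise-independent hashing together with $\mathbf{NP}$ queries to locate $\log_2|S_x|$, and then tighten to a $(1\pm 1/\epsilon(n))$ multiplicative factor.

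Two comments on the details. First, for the tightening step, the route you sketch (counting collisions up to $\poly(n)/\epsilon(n)^2$ and invoking higher-moment concentration from $k$-wise independence) does work, but the more standard and cleaner method---implicit in Stockmeyer's bound---is the tensor-power trick: run a constant-factor estimator on the product set $S_x^t$ for $t=\Theta(\epsilon(n))$, whose membership predicate is still polytime-decidable, and take the $t$-th root. One also does not actually need $k$-wise independence: pairwise independence plus Chebyshev already suffices once $s$ is chosen so that the expected number of hash survivors is $\Omega(\epsilon(n)^2)$ and one takes the median of $\poly(n)$ independent trials. Second, your discussion of the upper tail rests on a misconception: the membership predicate of $\bit^{r}\setminus S_x$ is decidable in deterministic polynomial time (run $\cD$ on $\omega$ and test inequality), so the complement is also the witness set of an $\mathbf{NP}$ language, not a $\mathbf{coNP}$ one. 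In fact the upper tail needs no special casing at all; the threshold version of the hashing estimator handles $|S_x|$ near $2^{r}$ without degeneracy. Only the lower tail (very small $|S_x|$) requires the direct threshold $\mathbf{NP}$ queries, which you include.
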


In this paper, we will use the following \cref{thm:estimate}, which generalizes \cref{claim:stockmeyer}.
The difference is that, in \cref{thm:estimate}, the PPT algorithm $\cD$ takes a classical bit string $z\in\bit^n$ in addition to the security parameter $1^n$
as input.

\begin{theorem}\label{thm:estimate}
    For any polynomial $\epsilon$, and for any PPT algorithm $\cD$ that takes $1^n$ and $z\in\bit^n$ as input, and outputs $x\in\bit^{\poly(n)}$, there exists a PPT algorithm $\mathsf{Estimate}$ with access to an $\mathbf{NP}$ oracle such that, for all $(z,x)\in\bit^n\times\bit^{\poly(n)}$, we have
    \begin{align}
       \Pr[
       \abs{\mathsf{Estimate}(1^n,z,x)-\Pr[x\la\cD(1^n,z)] }\leq \frac{\Pr[x\la\cD(1^n,z)]}{\epsilon(n)} ]\geq 1-1/\epsilon(n) 
    \end{align}
    for all sufficiently large $n\in\N$.
\end{theorem}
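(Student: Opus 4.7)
\bigskip

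\noindent\textbf{Proof proposal for \cref{thm:estimate}.}
The plan is to reduce to the already-known Stockmeyer-style statement of \cref{claim:stockmeyer} by folding the auxiliary input $z$ into the output of a new sampler. The starting point is the observation that \cref{claim:stockmeyer} gives a multiplicative approximation to $\Pr[x\la\cD(1^n)]$ for samplers that depend only on the security parameter, whereas here we need a single estimator that works \emph{uniformly over} all $z\in\bit^{n}$. Randomizing over $z$ inside the sampler and rescaling the output of the Stockmeyer estimator will do the job without changing the multiplicative form of the error.

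Concretely, first I would define a new PPT algorithm $\cD'$ which, on input $1^{n}$, samples $z\gets\bit^{n}$ uniformly at random, runs $x\la\cD(1^{n},z)$, and outputs the concatenation $w\seteq z\concat x\in\bit^{n+\poly(n)}$. Since $z$ has fixed length $n$, the pair $(z,x)$ can be recovered unambiguously from $w$. By construction, for every $(z,x)\in\bit^{n}\times\bit^{\poly(n)}$,
\begin{align}
\Pr[w=z\concat x:w\la\cD'(1^{n})] \;=\; \frac{1}{2^{n}}\cdot\Pr[x\la\cD(1^{n},z)].
\end{align}

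Next, I would apply \cref{claim:stockmeyer} to $\cD'$ with polynomial $\epsilon$ (the same polynomial as in the statement to be proved). This yields a PPT algorithm $\mathsf{Estimate}'$ with access to an $\mathbf{NP}$ oracle such that, for every $w\in\bit^{n+\poly(n)}$,
\begin{align}
\Pr\!\left[\,\abs{\mathsf{Estimate}'^{\mathbf{NP}}(1^{n},w)-\Pr[w\la\cD'(1^{n})]}\leq \tfrac{\Pr[w\la\cD'(1^{n})]}{\epsilon(n)}\,\right]\geq 1-\tfrac{1}{\epsilon(n)}
\end{align}
for all sufficiently large $n$. I would then define the desired algorithm by
\begin{align}
\mathsf{Estimate}^{\mathbf{NP}}(1^{n},z,x) \;\seteq\; 2^{n}\cdot\mathsf{Estimate}'^{\mathbf{NP}}(1^{n},z\concat x),
\end{align}
which is still PPT with an $\mathbf{NP}$ oracle (multiplication by the constant $2^{n}$ is a trivial shift in a binary representation).

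Finally, the correctness check is immediate: multiplying the inequality inside the probability by $2^{n}$ turns the bound on $\mathsf{Estimate}'$ into
\begin{align}
\abs{\mathsf{Estimate}^{\mathbf{NP}}(1^{n},z,x)-\Pr[x\la\cD(1^{n},z)]}\leq \tfrac{\Pr[x\la\cD(1^{n},z)]}{\epsilon(n)},
\end{align}
which holds with probability $\geq 1-1/\epsilon(n)$ over the internal randomness of $\mathsf{Estimate}^{\mathbf{NP}}$, uniformly for every $(z,x)$. This is exactly the claim. There is no real obstacle: the only thing to double-check is that multiplicative error is invariant under the rescaling by $2^{n}$, and that the failure probability from \cref{claim:stockmeyer} already holds \emph{pointwise} in the output string $w$, so taking $w=z\concat x$ gives a per-$(z,x)$ guarantee without any union bound blowup.
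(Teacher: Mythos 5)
Your proposal is correct and matches the paper's own proof essentially line for line: the paper also defines a new sampler $\cD^*$ that draws $z$ uniformly and outputs $(z,\cD(1^n,z))$, notes $\Pr[x\la\cD(1^n,z)]=2^n\Pr[(z,x)\la\cD^*(1^n)]$, and invokes \cref{claim:stockmeyer}. The only cosmetic difference is that you spell out the rescaling by $2^n$ and the preservation of multiplicative error explicitly, which the paper leaves implicit.
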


\begin{proof}[Proof of \cref{thm:estimate}]

For an arbitrary PPT algorithm $\cD$, which takes $1^n$ and $z\in\bit^n$ as input, and outputs $x\in\bit^{\poly(n)}$, we consider another PPT algorithm $\cD^*$ that takes $1^n$ as input, and uniformly randomly samples $z\la\bit^n$ and outputs $(z,\cD(1^n,z))$.
From the construction of $\cD^*$, we have $\Pr[x\la\cD(1^n,z)]=2^n\Pr[(z,x)\la\cD^*(1^n)]$.
On the other hand, from \cref{claim:stockmeyer}, there exists a PPT algorithm with access to an $\mathbf{NP}$ oracle that can estimate $\Pr[(z,x)\la\cD^*(1^n)]$ for all $x\in\bit^{\poly(n)}$ and $z\in\bit^n$.
Therefore, a PPT algorithm with access to an $\mathbf{NP}$ oracle can estimate $\Pr[x\la\cD(1^n,z)]$ for all $x\in\bit^{\poly(n)}$ and $z\in\bit^n$.
\end{proof}

\subsection{Complexity Classes}
We introduce basic complexity classes, which we consider in this paper.

\begin{definition}[\cite{myAaronson05}]
$\mathbf{PostBQP}$ is the class of language $\cL\subseteq \bit^*$ for which there exist a polynomial $p$, and a QPT algorithm $\cM$, which takes $x\in\bit^*$ as input and outputs $(b,b^*)\in\bit\times \bit$ such that for all $x\in\bit^*$, the following three conditios are satisfied:
\begin{enumerate}
\item $\Pr[b^*=1:(b,b^*)\la\cM(x)]\geq 2^{-p(\abs{x})}$.
\item If $x\in\cL$, then we have $\Pr[b=1:(b,1)\la\cM(x)]\geq \frac{2}{3}$.
\item If $x \notin\cL$, then we have $\Pr[b=1:(b,1)\la\cM(x)]\leq \frac{1}{3}$.
\end{enumerate}
\end{definition}

\begin{definition}
$\Sigma_3^{\mathbf{P}}$ is the class of language $\cL\subseteq\bit^*$ for which there exist polynomials $p$, $q$, and $r$, and a deterministic polynomial time algorithm $\cM$, which takes $x\in\bit^*$, $a\in\bit^{p(\abs{x})}$, $b\in\bit^{q(\abs{x})}$, and $c\in\bit^{r(\abs{x})}$ such that, for all 
$x\in\bit^*$, the following two conditions are satisfied:
\begin{enumerate}
\item If $x\in\cL$, then there exists $a\in\bit^{p(\abs{x})}$, such that for all $b\in\bit^{q(\abs{x})}$, there exists $c\in\bit^{r(\abs{x})}$ such that we have $\cM(x,a,b,c)=1$.
\item If $x\notin\cL$, then for all $a\in\bit^{p(\abs{x})}$, there exists $b\in\bit^{q(\abs{x})}$ such that for all $c\in\bit^{r\abs{x}}$, $\cM(x,a,b,c)=0$.
\end{enumerate}
\end{definition}

\begin{definition}
$\Sigma_2^{\mathbf{PP}}$ is the class of language $\cL\subseteq\bit^*$ for which there exist polynomials $p$ and $q$, and a deterministic polynomial time algorithm $\cM$ with access to a $\mathbf{PP}$ oracle, which takes $x\in\bit^*$, $a\in\bit^{p(\abs{x})}$, and $b\in\bit^{q(\abs{x})}$ such that for all 
$x\in\bit^*$, the following two conditions are satisfied:
\begin{enumerate}
\item If $x\in\cL$, then there exists $a\in\bit^{p(\abs{x})}$, such that for all $b\in\bit^{q(\abs{x})}$, we have $\cM^{\mathbf{PP}}(x,a,b)=1$.
\item If $x\notin\cL$, then for all $a\in\bit^{p(\abs{x})}$, there exists $b\in\bit^{q(\abs{x})}$ such that $\cM^{\mathbf{PP}}(x,a,b)=0$.
\end{enumerate}
\end{definition}


\begin{definition}[Sampling Problems~\cite{myAaronson14,myABK24}]
A sampling problem  is a collection of probability distributions $\{D_{x}\}_{x\in\bit^*}$, where each $D_x$ is a probability distribution over $\bit^{p(x)}$ for some fixed polynomial $p$. 
\end{definition}

\begin{definition}[SampBQP~\cite{myAaronson14,myABK24}]
$\mathbf{SampBQP}$ is a class of sampling problems $\{D_x\}_{x\in\bit^*}$ for which there exists a uniform QPT algorithm $\cQ$ such that for all $x\in\bit^*$ and for all $\epsilon>0$, $\mathsf{SD}(\cQ(1^{\lfloor 1/\epsilon\rfloor},x),D_x)\leq \epsilon$.
\end{definition}

\begin{definition}[SampBPP~\cite{myAaronson14,myABK24}]
$\mathbf{SampBPP}$ is a class of sampling problems $\{D_x\}_{x\in\bit^*}$ for which there exists a uniform PPT algorithm $\cC$ such that for all $x\in\bit^*$ and for all $\epsilon>0$, $\mathsf{SD}(\cC(1^{\lfloor 1/\epsilon\rfloor},x),D_x)\leq \epsilon$.
\end{definition}

\subsection{Quantum Advantage Assumption}
We introduce quantum advantage assumption.

\begin{assumption}[Quantum Advantage Assumption~\cite{STOC:AarArk11,BreMonShe16,STOC:KT25}]\label{def:QAA}
We say that quantum advantage assumption holds if both of the following two conditions are satisfied:
\begin{enumerate}
\item There exists a family $\cC=\{\cC_n\}_{n\in\N}$ of distributions such that for each $n\in\N$, $\cC_n$ is a (uniform) QPT sampleable distribution over quantum circuits $C$, which takes $1^n$ as input, and outputs $n$-bit classical bit strings.
\item
There exist polynomials $p$ and $\gamma$ such that:
\begin{enumerate}
\item For all sufficiently large $n\in\N$,
\begin{align}
\Pr\left[\Pr[x\la C(1^n)]\geq \frac{1}{p(n)2^n}:C\la\cC_n,x\la\bit^n\right]\geq\frac{1}{\gamma(n)}.
\end{align}
\item For any oracle $\mathcal{O}$ satisfying that for all sufficiently large $n\in\N$,
\begin{align}
\Pr\left[\abs{\mathcal{O}(C,x)-\Pr[x\la C(1^n)]}\leq\frac{\Pr[x\la C(1^n)]}{p(n)}:C\la\cC_n,x\la\bit^n \right]\geq \frac{1}{\gamma(n)}-\frac{1}{p(n)},
\end{align}
we have that $\mathbf{P^{\#P}}\subseteq \mathbf{BPP}^{\mathcal{O}}$.
\end{enumerate}
\end{enumerate}
\end{assumption}

\subsection{Cryptographic Primitives}
We introduce cryptographic primitives used in this paper.

\begin{definition}[One-Way Functions (OWFs)]\label{def:OWFs}
    A function $f:\bit^*\to\bit^*$ that is computable in classical deterministic polynomial-time is a classically-secure one-way function (OWF) if for any PPT adversary $\cA$,
    \begin{align}
    \Pr[f(x')=f(x): x\gets\bit^{n}, x'\gets\cA(1^{n},f(x))] \le\negl(n).
    \end{align}
\end{definition}

\begin{definition}[Classical One-Way Puzzles (OWPuzzs)~\cite{STOC:KhuTom24}]
Let $p$ be a polynomial.
A classical one-way puzzle (OWPuzz) with $\left(1-\frac{1}{p(n)}\right)$-security is a pair $(\Samp, \Ver)$ of algorithms with the following syntax:
\begin{itemize}
    \item $\Samp(1^n) \rightarrow (\ans,\puzz)$: It is a PPT algorithm that, on input $1^n$, outputs two classical bit strings $(\ans,\puzz)$. 
    \item $\Ver(\ans',\puzz) \rightarrow \top/\bot$: It is an unbounded algorithm that, on input $(\ans',\puzz)$, outputs $\top/\bot$.
\end{itemize}
We require the following correctness and security.

\paragraph{Correctness:}
\begin{align}
\Pr [\top\gets\Ver(\ans,\puzz):(\ans,\puzz)\leftarrow \Samp(1^n)] \ge 1 - \negl(n).
\end{align}

\paragraph{Security:}
For any uniform PPT algorithm $\cA$,
\begin{align}
\Pr[\top\gets\Ver(\cA(1^n,\puzz),\puzz):(\ans,\puzz) \leftarrow \Samp(1^n)] \le 1-\frac{1}{p(n)}.
\end{align}
\end{definition}
We call OWPuzzs with $\negl(n)$ security as OWPuzzs for simplicity.

\begin{theorem}[\cite{C:ChuGolGra24}]\label{thm:amplification_c_OWPuzz}
For any polynomial $p$, classical OWPuzzs exist if and only if classical OWPuzzs with $\left(1-\frac{1}{p(n)}\right)$-security exist.
\end{theorem}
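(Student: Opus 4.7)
The ``only if'' direction is immediate, since every negligible function is dominated by $1/p(n)$ for all sufficiently large $n$. For the ``if'' direction, I would use $k$-fold parallel repetition. Given a $(1-1/p(n))$-secure OWPuzz $(\Samp,\Ver)$, define $\Samp'(1^n)$ to run $\Samp(1^n)$ independently $k=k(n)$ times, producing pairs $\{(\ans_i,\puzz_i)\}_{i\in[k]}$, and to output $\ans'=(\ans_1,\dots,\ans_k)$, $\puzz'=(\puzz_1,\dots,\puzz_k)$; let $\Ver'(\ans',\puzz')$ accept iff $\Ver(\ans_i,\puzz_i)=\top$ for every $i$. Correctness follows by a union bound over the $k$ components. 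Choosing $k=n\cdot p(n)$ makes $(1-1/p(n))^k$ negligible, which is the target security level.

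For security, suppose toward contradiction that a PPT adversary $\cA'$ breaks $(\Samp',\Ver')$ with non-negligible probability $\delta(n)\geq 1/q(n)$ for some polynomial $q$. I would construct a PPT adversary $\cA$ breaking the single-copy $(\Samp,\Ver)$ with probability strictly greater than $1-1/p(n)$, contradicting weak security. The natural reduction embeds the challenge $\puzz$ at a uniformly random coordinate $i\in[k]$, honestly samples the remaining $k-1$ pairs $(\ans_j,\puzz_j)\gets\Samp(1^n)$, invokes $\cA'$ on the composed tuple of puzzles, and outputs the $i$-th coordinate of $\cA'$'s answer. A single invocation already succeeds with probability $\geq\delta$ by averaging, so the remaining task is to amplify this up to $1-1/p(n)$ through many independent trials, in a direct-product-style argument.

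The main obstacle is the direct product theorem when $\Ver$ is unbounded: the standard Canetti--Halevi--Steiner-style amplification filters ``good'' trials by running the verifier, which the reduction cannot do here. To bypass this, I would use the honestly sampled answers $\ans_j$ themselves as an efficient correctness proxy, retaining only trials in which $\cA'$ reproduces the honest $\ans_j$ on every simulated coordinate $j\neq i$; this check is polynomial time because the reduction generated the $\ans_j$'s itself. To justify that equality with $\ans_j$ is a faithful stand-in for $\Ver(\ans'_j,\puzz_j)=\top$, I would first normalize $(\Samp,\Ver)$ so that each puzzle has an essentially unique accepting answer, for example by redefining $\ans$ to include the sampler's randomness, in the spirit of the Impagliazzo--Levin distributional reduction. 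A Chernoff-style concentration over the polynomially many retained trials then boosts the per-trial success at coordinate $i$ past $1-1/p(n)$, contradicting the weak security of $(\Samp,\Ver)$ and completing the proof.
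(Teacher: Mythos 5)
The paper does not prove this statement itself; it is cited from \cite{C:ChuGolGra24}, so your proposal can only be judged on its own terms. Your high-level plan---$k$-fold parallel repetition plus a direct-product reduction that embeds the challenge at a random coordinate and filters trials by comparing to the honest answers on the simulated coordinates---is a sensible route, and you correctly flag the inefficient verifier as the crux. The problem is the ``normalization'' step you rely on to make the filter sound. Redefining $\ans$ to be $\Samp$'s randomness turns the game into inverting the function $f(r) := \Samp(1^n;r)_2$, and that is not equivalent to the original OWPuzz game: a given $\puzz$ can have exponentially many preimages, correctness is only statistical (so not every preimage yields a pair that $\Ver$ accepts), and nothing forces an adversary in the inversion game to return a ``good'' preimage; conversely an adversary against $(\Samp,\Ver)$ need not output anything in the image of $r\mapsto\Samp(1^n;r)_1$ at all. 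The Impagliazzo--Levin technique you cite flattens distributions; it does not collapse a puzzle's accepting set to a singleton while preserving both correctness and one-wayness, and no argument for such a collapse is given. Without genuine uniqueness, the filter ``keep a trial iff $\cA'$ reproduces $\ans_j$ exactly'' is too strict and throws away exactly the trials where $\cA'$ succeeds via a different accepting answer.

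Even granting uniqueness, the concluding ``Chernoff-style'' step is not the right tool. Conditioned on the filter passing, the per-trial success probability at coordinate $i$ is some fixed number $\mu$; Chernoff only concentrates an empirical frequency around $\mu$, it cannot push $\mu$ above $1-1/p(n)$. What is actually needed is a direct-product inequality of the CHS/IJKW type: if $\cA'$ solves all $k$ coordinates with probability $\geq\delta$, then for an appropriate $k$ the reduction conditioned on solving the $k-1$ honestly simulated coordinates succeeds at the challenge coordinate with probability $>1-1/p(n)$, and moreover the conditioning event itself occurs non-negligibly often. That inequality is the technical heart of any amplification argument here, it is precisely where the inefficiency of $\Ver$ bites, and your write-up leaves it entirely unaddressed.
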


It is shown that classical OWPuzzs exist if and only if OWFs exist.
\begin{theorem}[\cite{STOC:KhuTom24}]\label{thm:classical_OWPuzz}
Classical OWPuzzs exist if and only if OWFs exist.
\end{theorem}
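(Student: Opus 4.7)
The theorem asserts an equivalence, and I would handle the two directions separately. The easy direction, from OWFs to classical OWPuzzs, is essentially a definitional unrolling: given a OWF $f$, set $\Samp(1^n)$ to sample $x \gets \{0,1\}^n$ and output $(\ans,\puzz) \coloneqq (x, f(x))$, and set $\Ver(\ans',\puzz)$ to accept iff $f(\ans') = \puzz$. Correctness is immediate, and any PPT adversary that, given $\puzz = f(x)$, produces an accepting $\ans'$ is by definition an inverter for $f$, so security reduces directly to one-wayness.

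For the substantive direction, from classical OWPuzzs to OWFs, my plan is to route through \emph{distributional} one-way functions and then invoke the classical Impagliazzo--Luby equivalence between distributional OWFs and standard OWFs. Let $\ell(n)$ denote the randomness length of $\Samp(1^n)$, and define $f : \{0,1\}^{\ell(n)} \to \{0,1\}^{\poly(n)}$ by $f(r) \coloneqq \puzz$, where $(\ans,\puzz) = \Samp(1^n; r)$. I would argue that $f$ is a distributional OWF: suppose some PPT $\mathcal{A}$ on input $\puzz$ outputs $r'$ such that $(r', f(r'))$ is statistically close to $(r, f(r))$ for uniform $r$. Then the OWPuzz adversary $\cB$ that receives $\puzz$, runs $r' \gets \mathcal{A}(\puzz)$, computes $(\ans', \puzz') \coloneqq \Samp(1^n; r')$, and outputs $\ans'$ breaks security: by construction the joint distribution of $(\ans', \puzz)$ produced by $\cB$ (when $\puzz$ is drawn honestly via $\Samp$) is statistically close to the honest output of $\Samp(1^n)$, so correctness of the OWPuzz forces $\Ver(\ans', \puzz)$ to accept with overwhelming probability, contradicting security. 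If needed, I would invoke \cref{thm:amplification_c_OWPuzz} first so that the starting OWPuzz has negligible security gap and the reduction has slack to absorb the statistical-distance loss.

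The main obstacle is accounting correctly for that loss: distributional one-wayness only guarantees closeness of the \emph{joint} distribution $(r', f(r'))$ to $(r, f(r))$, not that $r'$ is a uniformly random preimage for each fixed $\puzz$. I would therefore write the breaking probability of $\cB$ as an expectation over $\puzz$ drawn honestly and bound the deviation by the total statistical distance, using the fact that post-processing by the deterministic map $r' \mapsto \ans'$ (computed from $\Samp(1^n; r')$) preserves statistical closeness. Once distributional one-wayness of $f$ is established, the Impagliazzo--Luby transformation is black-box and immediately yields a standard OWF, completing the equivalence.
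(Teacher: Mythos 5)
The paper does not prove this theorem; it is cited from Khurana--Tomer, so I am evaluating your argument on its own terms. Your high-level plan is the right one: the OWF~$\Rightarrow$~OWPuzz direction is exactly as you describe, and the substantive direction does go through distributional one-way functions and the Impagliazzo--Luby equivalence, with $f(r) \coloneqq \puzz$ for $(\ans,\puzz) = \Samp(1^n;r)$ as the candidate distributional OWF.

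There is, however, a genuine error in how you state the distributional-inverter hypothesis, repeated twice, and it is load-bearing. You posit that $\cA$ outputs $r'$ such that $(r', f(r'))$ is statistically close to $(r, f(r))$; the correct definition asks that $(\cA(f(r)), f(r))$ be close to $(r, f(r))$ --- the second coordinate in both tuples is $f$ of the \emph{original} $r$, i.e.\ the actual challenge $\puzz$. Your weaker formulation is satisfied trivially by the $\cA$ that ignores its input and outputs a uniformly random $r'$ (then $(r', f(r'))$ is identically distributed to $(r, f(r))$), yet that $\cA$ plainly gives no OWPuzz adversary. Your post-processing step also needs the correct hypothesis: applying the map $(s,y)\mapsto(\ans(s),y)$, where $\ans(s)$ is the first output of $\Samp(1^n;s)$, to $(r', f(r)) \approx (r, f(r))$ yields $(\ans', \puzz) \approx (\ans, \puzz)$, precisely the pair fed to $\Ver$; applying it to your stated hypothesis instead yields $(\ans', \puzz') \approx (\ans, \puzz)$, where $\puzz' = f(r')$ need not equal the verifier's challenge $\puzz$, so correctness of $\Ver$ gives you nothing. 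Once you fix the hypothesis the reduction is sound, and you do not even need \cref{thm:amplification_c_OWPuzz}: a classical OWPuzz already has negligible security gap, which comfortably absorbs a $1/\poly(n)$ statistical-distance loss.
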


\begin{definition}[One-Way Puzzles (OWPuzzs) \cite{STOC:KhuTom24}]
\label{def:OWPuzz}
Let $p$ be a polynomial.
A one-way puzzle (OWPuzz) with $\left(1-\frac{1}{p(n)}\right)$-security is a pair $(\Samp, \Ver)$ of algorithms with the following syntax:
\begin{itemize}
    \item $\Samp(1^n) \rightarrow (\ans,\puzz)$: It is a QPT algorithm that, on input $1^n$, outputs two classical bit strings $(\ans,\puzz)$. 
    \item $\Ver(\ans',\puzz) \rightarrow \top/\bot$: It is an unbounded algorithm that, on input $(\ans',\puzz)$, outputs $\top/\bot$.
\end{itemize}
We require the following correctness and security.

\paragraph{Correctness:}
\begin{align}
\Pr [\top\gets\Ver(\ans,\puzz):(\ans,\puzz)\leftarrow \Samp(1^n)] \ge 1 - \negl(n).
\end{align}

\paragraph{Security:}
For any uniform QPT algorithm $\cA$,
\begin{align}
\Pr[\top\gets\Ver(\cA(1^n,\puzz),\puzz):(\ans,\puzz) \leftarrow \Samp(1^n)] \le 1-\frac{1}{p(n)}.
\end{align}
\end{definition}
We call OWPuzzs with $\negl(n)$ security as OWPuzzs for simplicity.
\begin{theorem}[\cite{C:ChuGolGra24}]\label{thm:amplification_OWPuzz}
For any polynomial $p$, OWPuzzs exist if and only if OWPuzzs with $\left(1-\frac{1}{p(n)}\right)$-security exist.
\end{theorem}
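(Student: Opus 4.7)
The ``only if'' direction is immediate: full (negligible) security implies $(1-1/p(n))$-security since $\negl(n) < 1/p(n)$ for all polynomials $p$ and sufficiently large $n$. The substantive direction is the amplification. Given $(\Samp, \Ver)$ with $(1-1/p(n))$-security, I would construct $(\Samp^*, \Ver^*)$ with negligible adversarial success via parallel repetition, following the standard Yao-style weak-to-strong amplification for one-way functions.

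Concretely, set $T = T(n) \coloneqq n \cdot p(n)$. Define $\Samp^*(1^n)$ to independently run $\Samp(1^n)$ exactly $T$ times, yielding $\{(\ans_i, \puzz_i)\}_{i \in [T]}$, and output $\ans^* \coloneqq (\ans_1, \ldots, \ans_T)$ and $\puzz^* \coloneqq (\puzz_1, \ldots, \puzz_T)$. Define $\Ver^*(\ans', \puzz^*)$ to parse $\ans' = (\ans_1', \ldots, \ans_T')$ and accept iff $\Ver(\ans_i', \puzz_i) = \top$ for every $i \in [T]$. Correctness is a union bound over the $T$ copies, which is preserved because the per-copy error is already negligible.

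For security, I would argue by contraposition. Suppose a QPT adversary $\cA^*$ wins $(\Samp^*, \Ver^*)$ with non-negligible probability $\epsilon$. I would build $\cA$ that wins $(\Samp, \Ver)$ with probability strictly greater than $1 - 1/p(n)$, contradicting weak security. On input $\puzz$, $\cA$ repeats the following $K = \poly(n)/\epsilon$ times: pick a uniform $i \in [T]$, sample fresh $(\ans_j, \puzz_j) \gets \Samp(1^n)$ for $j \ne i$, set $\puzz_i \coloneqq \puzz$, run $\cA^*$ on $(\puzz_1, \ldots, \puzz_T)$, and keep the $i$-th output as a candidate; finally, output any candidate (since $\Ver$ is unbounded, $\cA$ cannot itself check correctness). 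Classify each $\puzz$ as \emph{easy} if the single-trial conditional success probability (over $i$, the other samples, and $\cA^*$'s randomness) is at least $\tau = \epsilon/(2T)$, and \emph{hard} otherwise. A Markov-style averaging shows that the mass of hard $\puzz$'s must be less than $1/p(n)$: otherwise, a fresh $T$-tuple would contain a hard coordinate with probability $\ge 1 - (1-1/p(n))^T \ge 1 - e^{-n}$, and on such a tuple $\cA^*$ fails at that coordinate, contradicting $\epsilon$ non-negligible. Consequently, on at least a $1 - 1/p(n)$ fraction of $\puzz$'s, $\cA$'s per-trial success is $\ge \tau$, and iterating $K$ times amplifies the overall success on those $\puzz$'s to $\ge 1 - 2^{-n}$, yielding total success strictly above $1 - 1/p(n)$.

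The main obstacle will be making the quantitative trade-off tight: the threshold $\tau$, the number of repetitions $K$, and the blow-up parameter $T$ must be balanced so that the fraction of $\puzz$'s on which $\cA$ succeeds, times the per-$\puzz$ success probability, exceeds $1 - 1/p(n) + 1/\poly(n)$. The quantum aspect poses no additional difficulty here since the reduction uses $\cA^*$ in a strict black-box manner on freshly sampled classical inputs (no rewinding or state cloning is required), so Yao's classical argument transfers verbatim. A subtle point is that $\cA$ cannot internally verify candidates; this is handled by outputting the first candidate and showing, via the above dichotomy, that for easy $\puzz$'s the first candidate already succeeds with overwhelming probability after boosting, while hard $\puzz$'s are too rare to matter.
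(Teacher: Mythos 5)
The paper does not prove this statement; it cites it directly from \cite{C:ChuGolGra24} as a black box, so there is no in-paper proof to compare your argument against. Evaluating your proposal on its own merits, however, there is a genuine gap, and it is exactly at the point you yourself flag.

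Your reduction sets ``$\puzz$ is easy'' to mean that a single self-embedding trial of $\cA^*$ succeeds with probability at least $\tau=\epsilon/(2T)$, which is tiny (inverse polynomial, and of the order of $\epsilon/T$, not of the order of $1$). To break $(1-1/p(n))$-security you must produce an adversary that succeeds with probability strictly greater than $1-1/p(n)$, i.e.\ close to $1$. In Yao's amplification for one-way functions this gap is closed by running $K=\mathrm{poly}/\tau$ independent trials and \emph{outputting the trial that verifies correctly}, which turns a per-trial success of $\tau$ into an overall success of $1-2^{-n}$. You correctly observe that this step is unavailable here because $\Ver$ is unbounded, so $\cA$ cannot recognize which of its $K$ candidates is correct. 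But you then assert, without any mechanism to replace verification, that ``iterating $K$ times amplifies the overall success on those $\puzz$'s to $\geq 1-2^{-n}$'' and that ``the first candidate already succeeds with overwhelming probability after boosting.'' This is contradictory: if $\cA$ outputs the first (or a random) candidate, its success probability on an easy $\puzz$ is exactly the single-trial success, which is only $\geq\tau$, not $\geq 1-2^{-n}$; and no amount of repetition changes what $\cA$ outputs if it cannot distinguish a good candidate from a bad one. The dichotomy argument that bounds the mass of hard $\puzz$'s is also not enough on its own, since even if all $\puzz$'s were ``easy'' in your sense, $\cA$ would still only succeed with probability $\approx\tau\ll 1-1/p(n)$.

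This is precisely why hardness amplification for OWPuzzs (and more generally for puzzles with inefficient verification) cannot be done by direct parallel repetition plus Yao-style boosting, and why the cited result in \cite{C:ChuGolGra24} requires a genuinely different argument (working through a chain of equivalences with distributional one-way puzzles and related objects, where the loss can be charged to statistical distance rather than to a verification-dependent boosting step). To repair the proof you would need to replace the boosting step with a verification-free mechanism, and I do not see how to do that within your current framework.
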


\if0
\taiga{Remove this}
\color{red}
\begin{definition}[Distributional OWPuzzs~\cite{C:ChuGolGra24}]
A distributional one-way puzzle is a uniform QPT algorithm $\Samp$, which takes $1^n$ as input and outputs $(\puzz,\ans)$ that satisfies the following.
There exists a polynomial $p$ such that for all uniform QPT algorithms $\cA$, 
\begin{align}
\mathsf{SD}\left((\puzz,\cA(\puzz))_{(\puzz,\ans)\la\Samp(1^n)},(\puzz,\ans)_{(\puzz,\ans)\la\Samp(1^n)}\right)\ge \frac{1}{p(n)}
\end{align}
for all sufficiently large $n\in\N$
\end{definition}

It is known that OWPuzzs are existentially equivalent to distributional OWPuzzs.
\begin{theorem}[\cite{C:ChuGolGra24}]\label{thm:DisOWPuzz_OWPuzz}
OWPuzzs exist if and only if distributional OWPuzzs exist.    
\end{theorem}
\color{black}
\fi


\begin{definition}[non-uniform QPRGs (nuQPRGs)] 
A non-uniform QPRG is a QPT algorithm $\Gen(1^n,\mu)$ that takes a security parameter $1^n$ and $\mu\in [n]$ as input, and outputs a classical 
bit string $x\in\bit^n$.
We require that there exists $\mu^*\in[n]$ such that the following two conditions hold:
    \paragraph{Statistically far:} 
    \begin{align}
        \mathsf{SD}(\Gen(1^n,\mu^*),U_n)\geq 1-\negl(n).
    \end{align}
    Here, $U_n$ is the uniform distribution over $\bit^n$.
    \paragraph{Computationally indistinguishable:} 
    For any QPT algorithm $\cA$ and any polynomial $t$,
    \begin{align}
        &\Bigg|\Pr[1\la\cA(1^n,\{x_i\}_{i\in[t(n)]}):\{x_i\}_{i\in [t(n)]}\la\Gen(1^n,\mu^*)^{\otimes t(n)}]\\
        &\hspace{2cm}-\Pr[1\la\cA(1^n,\{x_i\}_{i\in[t(n)]}):\{x_i\}_{i\in[t(n)]}\la U_n^{\otimes t(n)}]\Bigg|\leq\negl(n).
    \end{align}
\end{definition}

\if0
\begin{definition}[Non-Uniform QEFID (nuQEFID)] 
A non-uniform QEFID is a QPT algorithm $\Gen(1^n,\mu,b)$ that takes the security parameter $1^n$, $\mu\in [n]$, and $b\in\bit$ as input, and outputs a classical 
bit string $x\in\bit^n$.
We require that there exists $\mu^*\in[n]$ such that the following two conditions hold:
    \paragraph{Statistically far:} 
    \begin{align}
        \mathsf{SD}(\Gen(1^n,\mu^*,0),\Gen(1^n,\mu^*,1))\geq 1-\negl(n).
    \end{align}
    \paragraph{Computationally indistinguishable:} 
    For any uniform QPT algorithm $\cA$,
    \begin{align}
        \abs{\Pr[1\la\cA(1^n,x):x\la\Gen(1^n,\mu^*,0)]-\Pr[1\la\cA(1^n,x):x\la \Gen(1^n,\mu^*,1)]}\leq\negl(n).
    \end{align}
\end{definition}
\fi

It is known that OWPuzzs are existentially equivalent to nuQPRGs.
\begin{theorem}[\cite{STOC:KhuTom24,C:ChuGolGra24}]\label{thm:QEFID_OWPuzz}
OWPuzzs exist if and only if nuQPRGs exist.
\end{theorem}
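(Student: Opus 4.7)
The plan is to prove both directions of the equivalence, treating the reverse direction as the more delicate one.

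\textbf{nuQPRG $\Rightarrow$ OWPuzz.} I would leverage the fact that $\Ver$ is unbounded to build a one-way puzzle directly from $\Gen$, averaging over the unknown advice $\mu^*$. Define $\Samp(1^n)$ to sample $\mu \gets [n]$ uniformly along with internal randomness $s$, compute $y \gets \Gen(1^n,\mu;s)$, and output $\ans \seteq s$, $\puzz \seteq (\mu,y)$; let $\Ver(\ans',(\mu,y))$ accept iff $\Gen(1^n,\mu;\ans') = y$. Correctness is immediate. For security, condition on the event $\mu = \mu^*$, which occurs with probability $1/n$. A successful adversary returns a valid preimage of $y \gets \Gen(1^n,\mu^*)$. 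Since $\mathsf{SD}(\Gen(1^n,\mu^*),U_n) \ge 1 - \negl(n)$, the typical support of $\Gen(1^n,\mu^*)$ occupies a negligible fraction of $\bit^n$, so running the adversary on one sample and checking preimage validity produces a single-sample distinguisher between $\Gen(1^n,\mu^*)$ and $U_n$, contradicting the pseudorandomness guarantee (which holds in particular for $t=1$). This yields a $(1 - 1/(2n))$-secure OWPuzz, which I then amplify to a standard OWPuzz via \cref{thm:amplification_OWPuzz}.

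\textbf{OWPuzz $\Rightarrow$ nuQPRG.} Here I would adapt the HILL-style pseudo-entropy-to-pseudorandomness paradigm to the quantum puzzle setting, using the advice $\mu \in [n]$ to encode an unknown entropy threshold. First, convert the OWPuzz to a distributional OWPuzz $\Samp(1^n) \to (\puzz,\ans)$ via \cite{C:ChuGolGra24}: no QPT $\cA$ can sample $(\puzz, \cA(\puzz))$ statistically close to $(\puzz,\ans)$ on average. The conditional min-entropy $H_\infty(\ans \mid \puzz)$ is some unknown but polynomially bounded integer $k^* \in [n]$. Define $\Gen(1^n,\mu)$ to draw $(\puzz,\ans) \gets \Samp(1^n)$, sample a universal hash $h$ with $\mu$-bit output, and output the $n$-bit string obtained by suitably padding $(\puzz,h,h(\ans))$. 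Choosing $\mu^*$ slightly above $k^*$, a leftover-hashing argument shows the output is statistically far from uniform (there is simply not enough entropy in $\ans$ to fill $\mu^*$ hash bits), while a Goldreich–Levin–style hardcore argument backed by distributional one-wayness guarantees computational indistinguishability from uniform. The multi-sample pseudorandomness required by the definition of nuQPRG then follows by a standard hybrid over independent invocations of $\Samp$.

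The main obstacle is the reverse direction, where I must obtain statistical farness and computational indistinguishability simultaneously without recourse to PRFs (which are not known to follow from OWPuzzs, and in particular would be needed if one tried to build stateful multi-sample pseudorandomness by keying the hash). The non-uniform advice $\mu \in [n]$ is essential precisely because the correct entropy threshold $k^*$ is not known to be efficiently computable in uniform QPT; encoding it in $\mu$ sidesteps having to identify $k^*$ explicitly and yields the nuQPRG guarantee at $\mu = \mu^*$.
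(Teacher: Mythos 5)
This theorem is stated in the paper as a cited result of \cite{STOC:KhuTom24,C:ChuGolGra24}; the paper does not prove it. I therefore assess your proposal on its own terms.

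\textbf{Forward direction (nuQPRG $\Rightarrow$ OWPuzz): genuine gap.} You set $\ans\seteq s$ where $s$ is ``the internal randomness'' of $\Gen$ and have $\Ver$ check $\Gen(1^n,\mu;\ans')=y$. But $\Gen$ is a \emph{QPT} algorithm: its output randomness comes from quantum measurement, not from a classical random tape that could be handed out as the answer or fed back in to re-derive $y$. The object $\Gen(1^n,\mu;s)$ is simply undefined for a quantum sampler, so the puzzle as written is not well-posed. Moreover, even if one pretended $\Gen$ were classical with seed $s\in\bit^{m}$, the reduction is still problematic: $m$ can be an arbitrary polynomial in $n$, so the image of $\Gen(1^n,\mu^*,\cdot)$ can cover essentially all of $\bit^n$, and the statistical-farness guarantee $\mathsf{SD}(\Gen(1^n,\mu^*),U_n)\ge 1-\negl(n)$ only shrinks the \emph{high-probability} set, not the image. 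Hence your single-sample distinguisher (``run the adversary, accept iff it returned a valid preimage'') need not accept with negligible probability on $y\la U_n$, and the contradiction with pseudorandomness does not follow. The standard fix is to make the puzzle a \emph{distinguishing} game rather than an inversion game: have $\Samp$ sample $\mu\la[n]$, $b\la\bit$, draw $t$ samples from $\Gen(1^n,\mu)$ if $b=0$ and from $U_n$ if $b=1$, set $\puzz$ to the samples (and $\mu$), $\ans\seteq b$, and let the unbounded $\Ver$ accept exactly when $b'$ agrees with the maximum-likelihood label for the samples (conditioning on $\mu=\mu^*$). This is precisely the structure the paper itself uses when deriving hardness of proper quantum distribution learning from a nuQPRG in the proof of \cref{lem:OWPuzz_to_Learn_avg}.

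\textbf{Backward direction (OWPuzz $\Rightarrow$ nuQPRG): too compressed to verify, and at least one step is wrong as stated.} You correctly identify that this is the hard direction and that the non-uniform advice $\mu$ encodes an unknown entropy threshold; that is indeed the right intuition. However, the claim that a ``Goldreich--Levin--style hardcore argument backed by distributional one-wayness guarantees computational indistinguishability'' of the $\mu$-bit hash output is not supported. Goldreich--Levin extracts $O(\log n)$ hardcore bits from (ordinary, not distributional) one-wayness; turning that into $\mu$ bits of pseudorandomness is exactly the content of the HILL-style pseudoentropy-to-PRG machinery, which is highly nontrivial in the quantum, non-uniform, multi-sample setting of \cite{STOC:KhuTom24,C:ChuGolGra24}. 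Likewise, ``multi-sample pseudorandomness follows by a standard hybrid'' is not automatic here: the single-sample object you construct depends on the same $\Samp$ and the same hash family, and the hybrid argument has to be set up carefully to avoid stateful reuse that would require a PRF. As written, the backward direction is a plausible outline of the right program but does not constitute a proof.
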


\if0
\subsection{One-Way Functions}
\begin{definition}[OWFs on $\Sigma$~\cite{MorShiYam24}]\label{def:OWFsSigma}
    Let $\Sigma\subseteq\mathbb{N}$ be a set.
    A function $f:\bit^*\to\bit^*$ 
    that is computable in classical deterministic polynomial-time
    is a classically-secure (resp. quantumly-secure) OWF on $\Sigma$ if
    there exists an efficiently-computable polynomial $m$ such that
    for any PPT (resp. QPT) adversary $\cA$ and any 
    polynomial $p$
    there exists $n^*\in\mathbb{N}$ such that
    \begin{equation}
    \Pr[f(x')=f(x): x\gets\bit^{m(n)}, x'\gets\cA(1^{m(n)},f(x))] \le\frac{1}{p(n)}
    \end{equation} 
    holds
    for all $n\ge n^*$ in $\Sigma$. 
\end{definition}

\subsection{QAS/OWF Condition}
\begin{definition}[The QAS/OWF Condition~\cite{MorShiYam24}]\label{def:QAS/OWF}
    The QAS/OWF condition holds if there exist a polynomial $p$, a QPT algorithm $\cQ$ that takes $1^n$ as input and outputs a classical string, 
    and a function $f:\bit^*\to\bit^*$ that is computable in classical deterministic polynomial-time
    such that for any PPT algorithm $\cS$, the following holds:
    if we define
    \begin{align}
        \Sigma_\cS := \left\{ n\in\mathbb{N} :  \mathsf{SD}(\cQ(1^n),\cS(1^n)) \le \frac{1}{p(n)} \right\},
    \end{align}
    then $f$ is a classically-secure OWF on $\Sigma_\cS$.
\end{definition}

We will use the following lemma.
\begin{lemma}[\cite{MorShiYam24}]\label{lem:shirakawa}
If the QAS/OWF condition is not satisfied, then the following statement
is satisfied:
for any QPT algorithm $\cQ$ that takes $1^{n}$ as input and outputs a classical string and for any real $k>0$, there exists a PPT algorithm $\cS$ such that for any efficiently-computable polynomial $m$ and any family $\{f_n:\bit^{m(n) }\ra \bit^*\}_{n\in\N}$ of functions that are computable in classical deterministic polynomial-time, there exists a PPT algorithm $\cR$ such that
\begin{align}
    \mathsf{SD}(\cQ(1^n),\cS(1^n))\leq \frac{1}{n^k}
\end{align}
and
\begin{align}
    \mathsf{SD}\left(\{x,f_n(x)\}_{x\la\bit^{m(n)}},\{\cR(1^{m(n)},f_n(x)),f_n(x)\}_{x\la\bit^{m(n)}} \right)\leq \frac{1}{n^k}
\end{align}
for infinitely many $n\in\N$.
\end{lemma}

\subsection{IV-PoQ}
\begin{definition}[Inefficient-Verifier Proofs of Quantumness (IV-PoQ)~\cite{C:MorYam24}]\label{def:IVPoQ}
    An IV-PoQ is a tuple $(\cP,\cV_1,\cV_2)$ of interactive algorithms. 
    $\cP$ (prover) is QPT, $\cV_1$ (first verifier) is PPT, and $\cV_2$ (second verifier) is unbounded.
    The protocol is divided into two phases.
    In the first phase, $\cP$ and $\cV_1$ take the security parameter $1^n$ as input and interact with each other over a classical channel.
    Let $\tau$ be the transcript, i.e., the sequence of all classical messages exchanged between $\cP$ and $\cV_1$.
    In the second phase, $\cV_2$ takes $1^n$ and $\tau$ as input and outputs $\top$ (accept) or $\bot$ (reject).
    We require the following two properties for some functions $c$ and $s$ such that $c(n)-s(n)\ge 1/\poly(n)$.
    \begin{itemize}
    \item
    {\bf $c$-completeness:} 
        \begin{equation}
            \Pr[\top\gets\cV_2(1^n,\tau):\tau\gets\langle\cP,\cV_1\rangle(1^n)] \ge c(n)
        \end{equation}
    holds for all sufficiently large $n\in\mathbb{N}$.
    \item
    {\bf $s$-soundness:} For any PPT prover $\cP^*$,
        \begin{equation}
            \Pr[\top\gets\cV_2(1^n,\tau):\tau\gets\langle\cP^*,\cV_1\rangle(1^n)] \le s(n)
        \end{equation}
    holds for all sufficiently large $n\in\mathbb{N}$.
    \end{itemize}
\end{definition}

\begin{theorem}[\cite{MorShiYam24}]
IV-PoQ exist if and only if the QAS/OWF condition is satisfied.    
\end{theorem}

\subsection{Kolmogorov Complexity}
We also review some basics of Kolmogorov complexity. For details, see for example \cite{LV19}.
Throughout this paper, we consider a fixed deterministic universal Turing machine $U$.

\begin{definition}[Kolmogorov Complexity]
    The Kolmogorov complexity $\mathrm{K}(x)$ for a string $x$ is defined as
    \begin{equation}
        \mathrm{K}(x) \coloneqq \min_{d\in\bit^*} \{|d| : x \la U(d) \}.
    \end{equation}
\end{definition}

\begin{definition}[$\mathsf{GapK}$]
Let $s_1:\N\ra\N$ and $s_2:\N\ra\N$ be functions such that $s_2(n)-s_1(n)>w(\log(n))$.
$\mathsf{GapK}[s_1,s_2]\coloneqq (\cL_{\mathsf{Yes}},\cL_{\mathsf{No}})\subseteq \bit^*$ is a promise problem whose yes instances are strings $x$ such that $K(x)\leq s_1(\abs{x})$ and no instances are strings $x$ such that $K(x)\geq s_2(\abs{x})$.
\end{definition}
\fi
\section{Average-Case Hardness of Proper Distribution Learning}\label{Sec:Average}

\subsection{Quantum Case}

We introduce average-case hardness of proper quantum distribution learning, and show that it is equivalent to the existence of OWPuzss.
We first define average-case hardness of proper quantum distribution learning.
Our formalism is based on \cite{STOC:KMRRSS94,FOCS:HirNan23}.

\begin{definition}[Average-Case Hardness of Proper Quantum Distribution Learning]\label{def:q_avg_learn}
We say that average-case hardness of proper quantum distribution learning holds if the following holds.
There exist some polynomials $\epsilon$ and $\delta$, a QPT algorithm $\cS$, which takes $1^n$ as input, and outputs $z\in\bit^n$, and a QPT algorithm $\cD$, which takes $1^n$ and $z\in\bit^n$ as input, and outputs $x\in\bit^{\poly(n)}$, such that, for all QPT algorithms $\cA$ and any polynomial $t$,
\begin{align}
\label{ineq:q_avg_learn}
\Pr[ 
\mathsf{SD}(\cD(1^n,z),\cD(1^n,h))\leq 1/\epsilon(n):
\begin{array}{ll}
&z\la\cS(1^n)\\
&\{x_i\}_{i\in[t(n)]}\la\cD(1^n,z)^{\otimes t(n)} \\
& h\la \cA(1^n,\{x_i\}_{i\in[t(n)]})
\end{array}
]\leq 1-1/\delta(n)
\end{align}
for all sufficiently large $n\in\N$.
Here, $\{x_i\}_{i\in [t(n)]}\gets\cD(1^n,z)^{\otimes t(n)}$ means that $\cD(1^n,z)$ is run $t(n)$ times, and $x_i$ is the $i$th output.
\end{definition}
\begin{remark}
In the above definition, we require that \cref{ineq:q_avg_learn} holds \emph{for all sufficiently large security parameters} instead of \emph{for infinitely many security parameters}.
This is because we want to show its equivalence to OWPuzzs instead of infinitely-often OWPuzzs.
If we use the ``for infinitely many security parameters'' security, we obtain the equivalence to infinitely-often OWPuzzs
in a similar proof.
\end{remark}

\if0
\begin{definition}[Average-Case Hardness of Proper Quantum Distribution Learning]
There exists a QPT algorithm $\cS$, which takes as input $1^n$, and outputs $z\in\bit^n$, and a QPT algorithm $\cD$, which takes as input $1^n$ and $z\in\bit^n$, and outputs $x\in\bit^n$, such that, for all QPT algorithms $\cA$ and all polynomials $T$, there exists some $\epsilon>0$ and $\delta>0$ such that
\begin{align}
\Pr_{z\la\cS(1^n)}\left[\Pr[ 
\mathsf{SD}(\cD(1^n,z),\cD(1^n,h))\leq \epsilon:
\begin{array}{ll}
&\{x_i\}_{i\in[T]}\la\cD(1^n,z)^{\otimes T} \\
& h\la \cA(1^n,1^{\lfloor 1/\epsilon\rfloor},1^{\lfloor1/\delta\rfloor},\{x_i\}_{i\in[T]})
\end{array}
]\leq 1-\delta\right]\geq 1-\delta
\end{align}
for all sufficiently large $n\in\N$.
\end{definition}
\fi

\begin{theorem}\label{thm:OWPuzz}
The following two conditions are equivalent:
\begin{enumerate}
\item OWPuzzs exist.
\item Average-case hardness of proper quantum distribution learning holds.
\end{enumerate}
\end{theorem}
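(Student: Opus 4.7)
The plan is to prove both directions separately: for $(2) \Rightarrow (1)$ I will construct a weak OWPuzz directly from the hardness assumption and amplify via \cref{thm:amplification_OWPuzz}, and for $(1) \Rightarrow (2)$ I will route through non-uniform QPRGs using \cref{thm:QEFID_OWPuzz}.

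For $(2) \Rightarrow (1)$, let $(\cS, \cD, \epsilon, \delta)$ witness the average-case hardness and fix a sufficiently large polynomial $t(n)$. Define $\Samp(1^n)$ to sample $z \la \cS(1^n)$ together with $\{x_i\}_{i \in [t(n)]} \la \cD(1^n, z)^{\otimes t(n)}$ and output $\ans \seteq z$, $\puzz \seteq \{x_i\}_{i \in [t(n)]}$. Define the unbounded verifier $\Vrfy(\ans', \puzz)$ to compute a maximum-likelihood estimator
\[
z^* \in \arg\max_{z \in \bit^n}\, \Pr[\{x_i\} \la \cD(1^n, z)^{\otimes t(n)}]
\]
and accept iff $\mathsf{SD}(\cD(1^n, z^*), \cD(1^n, \ans')) \leq 1/(2\epsilon(n))$. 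Correctness will follow from a uniform-convergence analysis of MLE over the finite hypothesis class $\{\cD(1^n, z)\}_{z \in \bit^n}$ of size at most $2^n$: for $t(n)$ polynomial in $n$ and $\epsilon(n)$, with overwhelming probability over $(z, \{x_i\})$ the MLE satisfies $\mathsf{SD}(\cD(1^n, z), \cD(1^n, z^*)) \leq 1/(2\epsilon(n))$, forcing $\Vrfy(\ans, \puzz)$ to accept. For security, if a QPT adversary $\cA$ produces $h$ with $\Vrfy(h, \puzz) = \top$ with probability exceeding $1 - 1/(2\delta(n))$, combining the correctness bound with the triangle inequality gives $\mathsf{SD}(\cD(1^n, z), \cD(1^n, h)) \leq 1/\epsilon(n)$ with probability exceeding $1 - 1/\delta(n)$, contradicting the hardness. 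This yields a $(1 - 1/(2\delta(n)))$-secure OWPuzz, which \cref{thm:amplification_OWPuzz} amplifies to a standard OWPuzz.

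For $(1) \Rightarrow (2)$, apply \cref{thm:QEFID_OWPuzz} to obtain a nuQPRG $\Gen$ with good advice index $\mu^* \in [n]$. Define $\cS(1^n)$ to output $(\mu, b) \la [n] \times \bit$ uniformly and $\cD(1^n, (\mu, b))$ to output $(\mu, x)$, where $x \la U_n$ if $b = 1$ and $x \la \Gen(1^n, \mu)$ if $b = 0$, aiming to show hardness with $\delta(n) = n^2$ and some small polynomial $\epsilon$. Suppose for contradiction that a QPT learner $\cA$ succeeds with probability exceeding $1 - 1/n^2$ averaged over $\cS$; a Markov argument then implies that $\cA$ succeeds with probability at least $3/4$ on each specific pair $(\mu, b)$, including $(\mu^*, 0)$ and $(\mu^*, 1)$. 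Because the first output coordinate of $\cD$ coincides with its first input coordinate, closeness forces the first coordinate of $\cA$'s answer to equal $\mu$; moreover, whenever $\mathsf{SD}(\Gen(1^n, \mu), U_n) > 1/\epsilon(n)$ (in particular for $\mu^*$, since $\mathsf{SD}(\Gen(1^n, \mu^*), U_n) \geq 1 - \negl(n)$) closeness also forces the second coordinate to equal $b$. To make the distinguisher $\cB$ uniform, $\cB$ first computes by Monte-Carlo sampling the set $\mathsf{Good}' \subseteq [n]$ of indices $\mu$ for which $\cA$ returns second coordinate $1$ with high probability on uniform samples $\{(\mu, y_i)\}_{i}$, $y_i \la U_n$; by the analysis above $\mu^* \in \mathsf{Good}'$. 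On a challenge $\{x_i\}_i$ drawn either from $\Gen(1^n, \mu^*)^{\otimes t}$ or from $U_n^{\otimes t}$, $\cB$ runs $\cA$ on $\{(\mu, x_i)\}_i$ for every $\mu \in \mathsf{Good}'$ and outputs $1$ iff every run returns second coordinate $1$. Uniform inputs yield output $1$ by construction of $\mathsf{Good}'$, while the $\Gen(\mu^*)$ challenge forces the run on $\mu^*$ to return $0$, giving a $1/\poly(n)$ distinguishing advantage that contradicts the security of $\Gen$.

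The main obstacle I foresee is in the $(2) \Rightarrow (1)$ direction: making the MLE analysis quantitatively strong enough to yield TV-distance $1/(2\epsilon(n))$ from polynomially many samples with overwhelming success probability, despite log-likelihood ratios that can be unbounded or undefined for quantum-generated classical distributions. The natural remedies are to work with an approximate MLE over a Laplace-smoothed class (so that the relevant random variables are bounded and \cref{lem:Hoeffding} applies uniformly over the $2^n$ hypotheses), or to replace the MLE by a Yatracos-type minimum-distance estimator; either way the analysis must leave enough slack to absorb the union bound over hypotheses while preserving the triangle-inequality argument that closes the security reduction.
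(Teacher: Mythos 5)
Your overall architecture matches the paper's: for $(2)\Rightarrow(1)$ you build a weak OWPuzz with $\Samp$ emitting $(z, \{x_i\})$ and an unbounded $\Vrfy$ that computes the maximum-likelihood hypothesis $z^*$ from the puzzle and accepts iff $\cD(1^n,h)$ is close to $\cD(1^n,z^*)$, then amplify; for $(1)\Rightarrow(2)$ you construct $(\cS,\cD)$ from a nuQPRG with $\cD(1^n,(\mu,b))$ outputting $\mu$ in the clear plus a sample from $\Gen(1^n,\mu)$ or $U_n$, and you derandomize the ``set of good $\mu$'' via Monte-Carlo estimation so that the distinguisher ANDs the learner's second coordinate over the good indices. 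This is exactly the paper's plan, including the $\mathsf{Good}'$ trick.

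The one place you diverge is the correctness/concentration analysis of the MLE in $(2)\Rightarrow(1)$, where you flag unbounded log-likelihood ratios as a threat and propose Laplace smoothing over the hypothesis class or switching to a Yatracos-type minimum-distance estimator. The paper avoids log-likelihood concentration entirely and uses a more elementary argument: for each fixed $z'$ with $\mathsf{SD}(\cD(1^n,z),\cD(1^n,z'))>\tfrac{1}{2\epsilon(n)}$, it shows that statistical distance amplifies under the $t$-fold product (via Hoeffding applied to the indicator of the ``likelihood-ratio-favoring-$z$'' set, not to a log-likelihood), so $\mathsf{SD}(\cD^{\otimes t}(1^n,z),\cD^{\otimes t}(1^n,z'))>1-2^{-t/(8\epsilon^2)+1}$, and then observes (a one-line summation identity) that whenever two product distributions are at statistical distance $>\alpha$, the probability that a sample from the first assigns it higher mass than the second also exceeds $\alpha$. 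So each bad $z'$ individually wins the MLE with probability at most $2^{-t/(8\epsilon^2)+1}$, and a union bound over the at most $2^n$ hypotheses with $t=16\epsilon^2 n$ gives overall failure $\le 2^{-n+1}$. This needs no boundedness, no smoothing of the class, and no replacement estimator; it is both simpler and tighter than what you sketch. Your Laplace-smoothing route would likely work but is overkill, and the Yatracos approach would change the verifier. You should also double-check your constants in $(1)\Rightarrow(2)$: ANDing the learner over $n$ good indices, each tested by a Monte-Carlo check, requires a much smaller per-index failure probability than what $\delta(n)=n^2$ plus a crude Markov argument buys you; the paper uses $\delta(n)=n^{100}$ and a Check with $n^{25}$ repetitions precisely so the union over indices and repetitions stays negligible.
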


\cref{thm:OWPuzz} follows from the following \cref{lem:OWPuzz,lem:OWPuzz_to_Learn_avg}.
\begin{lemma}\label{lem:OWPuzz}
If average-case hardness of proper quantum distribution learning holds, then OWPuzzs exist.
\end{lemma}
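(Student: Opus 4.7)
The plan is to construct a weakly secure OWPuzz directly from the hardness assumption, and then apply the amplification theorem of \cite{C:ChuGolGra24} (\cref{thm:amplification_OWPuzz}) to upgrade it to a standard OWPuzz. Let $(\cS,\cD)$, $\epsilon(\cdot)$, and $\delta(\cdot)$ be as given by \cref{def:q_avg_learn}. Fix a polynomial $t(n)$ to be chosen later and define $(\Samp,\Ver)$ by
\begin{align*}
\Samp(1^n):\;\; z\la\cS(1^n),\; \{x_i\}_{i\in[t(n)]}\la\cD(1^n,z)^{\otimes t(n)},\; \text{output }(\ans,\puzz)\seteq(z,\{x_i\}_{i\in[t(n)]}),
\end{align*}
and let $\Ver(h,\{x_i\})$ be the unbounded algorithm that first computes a maximum-likelihood estimator $z^{*}\in\arg\max_{a\in\bit^n}\Pr[\{x_i\}_{i\in[t(n)]}\la\cD(1^n,a)^{\otimes t(n)}]$ and then outputs $\top$ iff $\mathsf{SD}(\cD(1^n,z^{*}),\cD(1^n,h))\le\tfrac{1}{2\epsilon(n)}$. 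I expect this to give a OWPuzz with security roughly $1-\tfrac{1}{2\delta(n)}$, which suffices by \cref{thm:amplification_OWPuzz}.

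The first step, and the main technical step, is the correctness analysis: I must show that with probability at least $1-\negl(n)$ over $z\la\cS(1^n)$ and $\{x_i\}\la\cD(1^n,z)^{\otimes t(n)}$, every MLE $z^{*}$ satisfies $\mathsf{SD}(\cD(1^n,z^{*}),\cD(1^n,z))\le\tfrac{1}{2\epsilon(n)}$. The standard route is through the likelihood ratio: for any candidate $z'$ with $\mathsf{SD}(\cD(z),\cD(z'))\ge\tfrac{1}{2\epsilon(n)}$, a Chernoff/Hellinger argument gives
\begin{align*}
\Pr_{\{x_i\}\la\cD(z)^{\otimes t}}\!\Bigl[\,\prod_i\Pr[x_i\la\cD(z')]\ge\prod_i\Pr[x_i\la\cD(z)]\,\Bigr]\le \exp\!\bigl(-t\cdot\Omega(1/\epsilon(n)^2)\bigr),
\end{align*}
since the Hellinger distance lower-bounds the total variation (up to a factor), and the expected log-likelihood ratio is at least the squared Hellinger distance. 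Taking $t(n)=\mathrm{poly}(n,\epsilon(n))$ with $t(n)\gg n\cdot\epsilon(n)^2$ and union-bounding over the $2^n$ candidates $z'\in\bit^n$ drives the failure probability of MLE below $\negl(n)$, so $\Ver$ accepts $\ans=z$ with overwhelming probability. This MLE-in-TV analysis is the piece I expect to require the most care; a minor obstacle is that MLE gives Hellinger (not TV) bounds naturally, but the conversion $\mathsf{SD}\le\sqrt{2}\cdot\mathsf{Hellinger}$ is enough once $t(n)$ is chosen large enough.

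The second step is the security analysis, which is straightforward once correctness is established. Suppose for contradiction that some QPT $\cA$ wins against $(\Samp,\Ver)$ with probability strictly greater than $1-\tfrac{1}{2\delta(n)}$ infinitely often. On any execution where $\Ver(\cA(\puzz),\puzz)=\top$ and the correctness event $\mathsf{SD}(\cD(z^{*}),\cD(z))\le\tfrac{1}{2\epsilon(n)}$ also holds, the triangle inequality gives $\mathsf{SD}(\cD(1^n,z),\cD(1^n,h))\le\tfrac{1}{\epsilon(n)}$ with $h\seteq\cA(\puzz)$. A union bound thus yields
\begin{align*}
\Pr\!\bigl[\mathsf{SD}(\cD(z),\cD(h))\le\tfrac{1}{\epsilon(n)}\bigr]\;\ge\;\Pr[\top\la\Ver]-\negl(n)\;>\;1-\tfrac{1}{\delta(n)},
\end{align*}
contradicting \cref{def:q_avg_learn} applied with the same $\epsilon$, $\delta$, and polynomial $t$. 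Hence $(\Samp,\Ver)$ is a $\bigl(1-\tfrac{1}{2\delta(n)}\bigr)$-secure OWPuzz, and \cref{thm:amplification_OWPuzz} converts this into a (negligible-security) OWPuzz, completing the proof.
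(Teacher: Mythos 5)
Your proposal is correct and takes essentially the same approach as the paper: a maximum-likelihood verifier, a concentration argument showing the MLE lands within $\tfrac{1}{2\epsilon(n)}$ of the true parameter with overwhelming probability, a triangle-inequality reduction to the hardness assumption, and amplification via \cref{thm:amplification_OWPuzz}. The only substantive difference is that the paper obtains the MLE concentration by applying Hoeffding to the indicator of the event $\Pr[x\la\cD(1^n,z)]>\Pr[x\la\cD(1^n,z')]$ rather than your Hellinger/Chernoff route --- both yield the same $\exp(-\Omega(t/\epsilon^2))$ tail --- though be careful that the Chernoff exponent here comes from the Bhattacharyya coefficient $1-H^2(\cD(z),\cD(z'))$ rather than from the expected log-likelihood ratio, and the conversion you need is $\mathsf{SD}\le\sqrt{2}\,H$ (a TV lower bound gives a Hellinger lower bound), which your phrase ``Hellinger lower-bounds TV'' states in the wrong direction.
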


\begin{lemma}\label{lem:OWPuzz_to_Learn_avg}
If OWPuzzs exist, then average-case hardness of proper quantum distribution learning holds.
\end{lemma}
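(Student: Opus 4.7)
The plan is to invoke \cref{thm:QEFID_OWPuzz} to reduce the task to constructing a distinguisher against a nuQPRG, and then build such a distinguisher from a hypothetical proper distribution learner. Assume OWPuzzs exist, so there is a nuQPRG $\Gen$ with advice space $[n]$ and a ``good'' advice $\mu^*\in[n]$ such that $\Gen(1^n,\mu^*)$ is statistically far from $U_n$ but computationally indistinguishable from it. I define the distribution learning instance by letting $\cS(1^n)$ sample $\mu\gets[n]$ and $b\gets\bit$ and output $(\mu,b)$, and letting $\cD(1^n,(\mu,b))$ output $(\mu,x)$ with $x\gets U_n$ if $b=1$ and $x\gets\Gen(1^n,\mu)$ if $b=0$. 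Note that the first coordinate of $\cD$'s output forces any learner's hypothesis to name the correct $\mu$, and for every $\mu$ in the set
\[
\mathsf{Good}\coloneqq\{\mu\in[n]\,:\,\mathsf{SD}(\Gen(1^n,\mu),U_n)\ge 1/\mathrm{poly}(n)\}
\]
the hypothesis bit must also match $b$ in order for $\cD(1^n,(\mu,b^*))$ to be close to $\cD(1^n,(\mu,b))$.

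Suppose for contradiction that \cref{def:q_avg_learn} fails for this $(\cS,\cD)$. Then there exist a QPT learner $\cA$ and a polynomial $t$ such that, for every polynomial slack, with non-negligible probability over $(\mu,b)\gets\cS$ and $\{(\mu,x_i)\}\gets\cD^{\otimes t}$, $\cA$ outputs $(\mu,b^*)$ with $\mathsf{SD}(\cD(1^n,(\mu,b)),\cD(1^n,(\mu,b^*)))\le 1/\epsilon(n)$. Since we cannot identify $\mathsf{Good}$ in QPT, I replace it by the efficiently-computable surrogate
\[
\mathsf{Good}'\coloneqq\Bigl\{\mu\in[n]\,:\,\widehat{\Pr}_{\{y_i\}\gets U_n^{\otimes t}}\bigl[\cA(1^n,\{(\mu,y_i)\})\text{ outputs }(\mu,1)\bigr]\ge 1-1/q(n)\Bigr\}
\]
for a suitable polynomial $q$, where $\widehat{\Pr}$ denotes the standard Chernoff-accurate empirical estimate obtained by the distinguisher from polynomially many samples of $U_n^{\otimes t}$. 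A routine averaging argument shows $\mathsf{Good}\subseteq\mathsf{Good}'$ with overwhelming probability: whenever $\mu\in\mathsf{Good}$ and $b=1$, the learner must output $(\mu,1)$ to achieve small statistical distance to $\cD(1^n,(\mu,1))=(\mu,U_n)$.

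The distinguisher $\cB$ against $\Gen$ at advice $\mu^*$ proceeds as follows: given $\{x_1,\dots,x_t\}$ (drawn from either $\Gen(1^n,\mu^*)$ or $U_n$), it first computes $\mathsf{Good}'$, then runs $\cA(1^n,\{(\mu,x_i)\}_i)$ for every $\mu\in\mathsf{Good}'$, and outputs $1$ iff $\cA$ returns $(\mu,1)$ for every such $\mu$. If the samples are from $U_n$, then by definition of $\mathsf{Good}'$ every invocation returns $(\mu,1)$ with high probability, and a union bound over the at most $n$ elements of $\mathsf{Good}'$ still yields output $1$ w.h.p. If the samples are from $\Gen(1^n,\mu^*)$ and $\mu^*\in\mathsf{Good}$, then by the hypothesis on $\cA$ applied to $\mu=\mu^*,b=0$, the learner returns $(\mu^*,0)$ with non-negligible probability, so $\cB$ outputs $0$ with noticeable probability. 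The resulting noticeable distinguishing advantage contradicts the pseudorandomness of $\Gen$, completing the proof.

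The main obstacle is the standard subtlety that $\mathsf{Good}$ is not efficiently recognizable; the surrogate $\mathsf{Good}'$ resolves this, and the only care needed is choosing $q$ so that $\mathsf{Good}\subseteq\mathsf{Good}'$ while the union bound over $\mathsf{Good}'$ on the uniform side still beats the non-negligible success probability of $\cA$ on the pseudorandom side, which is a matter of standard polynomial amplification. A secondary care point is that the nuQPRG is only guaranteed at a single advice $\mu^*$, so the final reduction identifies $\mu^*$ by non-uniformity exactly as in the definition of nuQPRG.
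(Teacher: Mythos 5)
Your proposal is correct and follows essentially the same route as the paper: reduce via nuQPRGs (\cref{thm:QEFID_OWPuzz}), build the same $(\cS,\cD)$ pair where $\cS$ samples $(\mu,b)$ and $\cD$ either outputs $\Gen(1^n,\mu)$ or uniform tagged with $\mu$, and then distinguish by running the learner over an efficiently-estimated surrogate set $\mathsf{Good}'$; the paper's $\mathsf{Check}$ subroutine together with its $\mathsf{Bad}_n$ set is precisely the formal version of your $\mathsf{Good}'$ construction. One small nit: the distinguisher $\cB$ never needs to ``identify $\mu^*$ by non-uniformity'' --- it simply loops over all $\mu\in\mathsf{Good}'$ and wins because $\mu^*\in\mathsf{Good}\subseteq\mathsf{Good}'$ --- and the learner's success on the nuQPRG side should be invoked as ``probability $\geq 1-1/\delta(n)$ for an arbitrarily large polynomial $\delta$'' rather than merely ``non-negligible,'' since the union bound over $\mathsf{Good}'$ and the $\mathsf{Good}\subseteq\mathsf{Good}'$ inclusion both rely on success probabilities that are polynomially close to $1$.
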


In the following, we give the proofs.

\begin{proof}[Proof of \cref{lem:OWPuzz}]
From \cref{thm:amplification_OWPuzz}, it is sufficient to construct OWPuzzs with $\left(1-\frac{1}{p(n)}\right)$ security for some polynomial $p$.

Assume that average-case hardness of proper quantum distribution learning holds.
Then, there exist polynomials $\epsilon$ and $\delta$, a QPT algorithm $\cS$, which takes $1^n$ as input, and outputs $z\in\bit^n$, 
and a QPT algorithm $\cD$, which takes $1^n$, and $z\in\bit^n$ as input, and outputs $x\in\bit^{\poly(n)}$ 
such that for any polynomial $t$, and any QPT algorithm $\cA$, 
\begin{align}\label{ineq_learn_A}
\Pr[ 
\mathsf{SD}(\cD(1^n,z),\cD(1^n,h))\leq 1/\epsilon(n):
\begin{array}{ll}
&z\la\cS(1^n)\\
&\{x_i\}_{i\in[t(n)]}\la\cD(1^n,z)^{\otimes t(n)} \\
& h\la \cA(1^n,\{x_i\}_{i\in[t(n)]})
\end{array}
]\leq 1-1/\delta(n)
\end{align}
for all sufficiently large $n\in\N$.
From $\cS$ and $\cD$, we construct OWPuzzs $(\Samp,\Vrfy)$ with $1-\frac{1}{9\delta(n)^2}$ security.

Before describing the construction of OWPuzzs, let us introduce notations and a claim, which we will use for showing correcntess and security.
Let $t(n)\seteq 16\epsilon(n)^2\cdot n$.
Let us denote $\mathsf{Eval}$ to mean a deterministic unbounded-time algorithm that takes $1^n$, and $\{x_i\}_{i\in[t(n)]}$ as input, and outputs an arbitrary $z^*$ such that
\begin{align}
\Pr[\{x_i\}_{i\in[t(n)]}\la\cD(1^n,z^*)^{\otimes t(n)}]
=\max_{a\in\bit^n}\left\{\Pr[\{x_i\}_{i\in[t(n)]}\la \cD(1^n,a)^{\otimes t(n)}]\right\}.
\end{align}
\begin{claim}\label{claim:useful}
For all $z\in\bit^n$,
we have
\begin{align}
&\Pr\left[
\mathsf{SD}(\cD(1^n,z),\cD(1^n,z^*))\leq \frac{1}{2\epsilon(n)}:
\begin{array}{ll}
     & \{x_i\}_{i\in[t(n)]}\la \cD(1^n,z)^{\otimes t(n)} \\
     & z^*\la\mathsf{Eval}(1^n,\{x_i\}_{i\in[t(n)]})
\end{array}
\right]\geq 1-2^{-n+1}
\end{align}
for all $n\in\N$.
\end{claim}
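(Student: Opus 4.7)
The claim asserts that the maximum-likelihood estimator $z^* = \mathsf{Eval}(1^n, \vec{x})$ recovers a distribution statistically close to $\cD(1^n, z)$ with probability at least $1 - 2^{-n+1}$ over $\vec{x} \la \cD(1^n, z)^{\otimes t(n)}$. My plan is to prove this by a Bhattacharyya-coefficient bound on pairwise likelihood ratios, followed by a union bound over the $2^n$ candidate parameters, calibrated to the sample size $t(n) = 16\epsilon(n)^2 n$.

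First, I would recast the bad event. Set $\gamma \coloneqq 1/(2\epsilon(n))$ and $L_{\vec{x}}(a) \coloneqq \Pr[\vec{x} \la \cD(1^n, a)^{\otimes t(n)}]$; since $\mathsf{Eval}$ outputs a maximizer of $L_{\vec{x}}$, we always have $L_{\vec{x}}(z^*) \geq L_{\vec{x}}(z)$. Hence the event $\mathsf{SD}(\cD(1^n, z), \cD(1^n, z^*)) > \gamma$ is contained in the union, over those $z' \in \bit^n$ with $\mathsf{SD}(\cD(1^n, z), \cD(1^n, z')) > \gamma$, of the events $\{L_{\vec{x}}(z') \geq L_{\vec{x}}(z)\}$. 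A union bound will reduce the claim to a per-$z'$ tail estimate.

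Next, for each such $z'$, I would apply Markov's inequality to the nonnegative random variable $\sqrt{L_{\vec{x}}(z')/L_{\vec{x}}(z)}$ and use independence of the $t(n)$ samples to obtain
\begin{align*}
\Pr_{\vec{x} \la \cD(1^n,z)^{\otimes t(n)}}\!\left[L_{\vec{x}}(z') \geq L_{\vec{x}}(z)\right] \leq \mathbb{E}\!\left[\sqrt{L_{\vec{x}}(z')/L_{\vec{x}}(z)}\right] = \mathrm{BC}\bigl(\cD(1^n,z),\, \cD(1^n,z')\bigr)^{t(n)},
\end{align*}
where $\mathrm{BC}(P,Q) \coloneqq \sum_x \sqrt{P(x) Q(x)}$ is the Bhattacharyya coefficient. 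Combining with the standard inequality $\mathrm{BC}(P,Q) \leq 1 - \mathsf{SD}(P,Q)^2/2$ (a quick Cauchy--Schwarz), the hypothesis $\mathsf{SD} > \gamma$ yields $\mathrm{BC}^{t(n)} \leq (1 - \gamma^2/2)^{t(n)} \leq \exp(-t(n)\gamma^2/2) = \exp(-2n)$ after plugging in $t(n) = 16\epsilon(n)^2 n$ and $\gamma = 1/(2\epsilon(n))$.

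Finally, a union bound over the at most $2^n$ candidate $z' \in \bit^n$ gives $\Pr[\mathsf{SD}(\cD(1^n,z), \cD(1^n,z^*)) > \gamma] \leq 2^n \cdot e^{-2n} \leq 2^{-n+1}$, where the last inequality reduces to $2(1-\ln 2)n + \ln 2 \geq 0$ and holds for every $n \geq 0$. The main (and essentially only) nontrivial step is the Bhattacharyya--SD inequality together with the Markov-on-$\sqrt{\cdot}$ trick; the rest is bookkeeping. Notably, no regularity or boundedness assumption on $\cD$ is required, since the argument uses the likelihood ratio directly rather than its logarithm, which is why the sample complexity $t(n) = \Theta(\epsilon(n)^2 n)$ suffices even for output alphabets that are exponentially large in $n$.
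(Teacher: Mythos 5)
Your proof is correct, and it reaches the same union-bound structure as the paper but handles the per-candidate tail estimate with a different (and cleaner) tool. Both arguments first observe that the bad event is contained in the union, over $z'$ with $\mathsf{SD}(\cD(1^n,z),\cD(1^n,z'))>1/(2\epsilon(n))$, of the events $\{L_{\vec x}(z')\geq L_{\vec x}(z)\}$, and then union-bound over the at most $2^n$ such $z'$. Where you diverge is in bounding each individual event $\{L_{\vec x}(z')\geq L_{\vec x}(z)\}$. The paper proves this in two sub-claims: \cref{claim:probabilistic_argument} applies Hoeffding's inequality to the indicator $f(x)=\mathbf{1}[\Pr[x\la\cD(1^n,z)]>\Pr[x\la\cD(1^n,z')]]$ to show $\mathsf{SD}(\cD(1^n,z)^{\otimes t(n)},\cD(1^n,z')^{\otimes t(n)})\geq 1-2^{-t(n)/(8\epsilon(n)^2)+1}$, and \cref{claim:statistical_distance} converts that near-unit statistical distance into the desired likelihood-ratio tail bound. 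You instead apply Markov's inequality to $\sqrt{L_{\vec x}(z')/L_{\vec x}(z)}$, use tensorization of the Bhattacharyya coefficient, and then the Hellinger--SD inequality $\mathrm{BC}(P,Q)\leq 1-\mathsf{SD}(P,Q)^2/2$. Both routes are elementary and give the same $2^{-\Theta(n)}$ rate with the chosen $t(n)=16\epsilon(n)^2 n$ (yours yields $e^{-2n}$ per candidate, the paper's $2^{-2n+1}$); yours is a one-step Chernoff-style likelihood-ratio bound that avoids introducing the auxiliary Neyman--Pearson test, while the paper's relies only on Hoeffding and a pointwise SD fact. Neither approach needs any regularity on $\cD$, so that remark, while true, isn't a distinguishing feature. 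One minor point worth making explicit if you write this up: the random variable $\sqrt{L_{\vec x}(z')/L_{\vec x}(z)}$ should be defined on the support of $\cD(1^n,z)^{\otimes t(n)}$, which is where the expectation lives; this is exactly what makes the Bhattacharyya identity $\mathbb{E}[\cdot]=\mathrm{BC}(\cD(1^n,z),\cD(1^n,z'))^{t(n)}$ go through without worrying about division by zero.
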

We defer the proof of \cref{claim:useful} to the end of the proof.
Our construction of OWPuzzs $(\Samp,\Vrfy)$ with $1-\frac{1}{9\delta(n)^2}$ security is as follows.
\begin{description}
\item[$\Samp(1^n)$:] $ $
\begin{enumerate}
\item Run $z\la\cS(1^n)$.
\item Run $\{x_i\}_{i\in[t(n)]}\la\cD(1^n,z)^{\otimes t(n)}$.
\item Output $\puzz\coloneqq\{x_i\}_{i\in[t(n)]}$ and $\ans\coloneqq z$.
\end{enumerate}
\item[$\Vrfy(\puzz,\ans)$:]$ $
\begin{enumerate}
\item Parse $\puzz=\{x_i\}_{i\in[t(n)]}$ and $\ans=z$.
\item Compute $z^*\la\mathsf{Eval}(1^n,\{x_i\}_{i\in[t(n)]})$.
\item Output $\top$ if
\begin{align}
\mathsf{SD}(\cD(1^n,z^*),\cD(1^n,z))\leq \frac{3}{2\epsilon(n)}.
\end{align}
Otherwise, output $\bot$.
\end{enumerate}
\end{description}

\paragraph{Correctness.}
First, let us show the correctness.
From the construction of $\Vrfy(\{x_i\}_{i\in[t(n)]},z)$, it runs $z^*\la\mathsf{Eval}(1^n,\{x_i\}_{i\in[t(n)]})$,
computes $\mathsf{SD}(\cD(1^n,z^*),\cD(1^n,z))$, and outputs $\top$ if $\mathsf{SD}(\cD(1^n,z^*),\cD(1^n,z))\leq \frac{3}{2\epsilon(n)}$, and outputs $\bot$ otherwise.
On the other hand, from \cref{claim:useful}, we have
\begin{align}
\Pr[ 
\mathsf{SD}(\cD(1^n,z^*),\cD(1^n,z))\leq 1/2\epsilon(n):
\begin{array}{ll}
&z\la\cS(1^n)\\
&\{x_i\}_{i\in[t(n)]}\la\cD(1^n,z)^{\otimes t(n)} \\
& z^*\la \mathsf{Eval}(1^n,\{x_i\}_{i\in[t(n)]})
\end{array}
]\geq 1-2^{-n+1}
\end{align}
for all $n\in\N$.
This implies that we have
\begin{align}
\Pr\left[\top\la\Vrfy(\puzz,\ans):(\puzz,\ans)\la\Samp(1^n)\right]\geq 1-\negl(n).
\end{align}

\paragraph{Security.}
Next, let us show the security.
For contradiction, let us assume that the OWPuzz is not secure.
This means that there exists a QPT algorithm $\cB$ such that
\begin{align}
\Pr[\top\la\Vrfy(\puzz,\ans^*):
\begin{array}{ll}
&(\puzz,\ans)\la\Samp(1^n)\\
& \ans^*\la\cB(1^n,\puzz)
\end{array}
]\geq 1-\frac{1}{9\delta^2(n)}
\end{align}
for infinitely many $n\in\N$.
Note that from the standard average argument, this implies that with probability 
at least $1-\frac{1}{3\delta(n)}$ over $z\la\cS(1^n)$ and $\{x_i\}_{i\in[t(n)]}\la\cD(1^n,z)^{\otimes t(n)}$, we have
\begin{align}\label{ineq:B}
\Pr[ \top\la\Vrfy(\{x_i\}_{i\in[t(n)]},h):
 h\la\cB(1^n,\{x_i\}_{i\in[t(n)]})
]\geq 1-\frac{1}{3\delta(n)}
\end{align}
for infinitely many $n\in\N$.

From $\cB$, we construct a QPT algorithm $\cA$ that contradicts to \cref{ineq_learn_A}.
$\cA$ receives $1^n$ and $\{x_i\}_{i\in[t(n)]}$, which is sampled by $\{x_i\}_{i\in[t(n)]}\la\cD(1^n,z)^{\otimes t(n)}$, where $z\la\cS(1^n)$, runs $h\la\cB(1^n,\{x_i\}_{i\in[t(n)]})$, and outputs $h$. 
In the following, we analyze the $\cA$.

From \cref{claim:useful}, with probability at least $1-2^{-n+1}$ over 
$\{x_i\}_{i\in[t(n)]}\la\cD(1^n,z)^{\otimes t(n)}$, 
we have
\begin{align}\label{ineq:Eval}
\mathsf{SD}(\cD(1^n,z),\cD(1^n,z^*))\leq \frac{1}{2\epsilon(n)},
\end{align}
where $z^*$ is the output of $\mathsf{Eval}(1^n,\{x_i\}_{i\in[t(n)]})$.
Let $\mathsf{Good}_{n,z,\epsilon}$ be a set of $\{x_i\}_{i\in[t(n)]}$ that satisfies the following two conditions:
\begin{enumerate}
    \item  
    The output $z^*$ of $\mathsf{Eval}(1^n,\{x_i\}_{i\in[t(n)]})$ satisfies
    \begin{align}
    \mathsf{SD}(\cD(1^n,z),\cD(1^n,z^*))\leq \frac{1}{2\epsilon(n)}.
    \end{align}
    \item
    \begin{align}
    \Pr[ \top\la\Vrfy(\{x_i\}_{i\in[t(n)]},h):
     h\la\cB(1^n,\{x_i\}_{i\in[t(n)]})
    ]\geq 1-\frac{1}{3\delta(n)}.
    \end{align}
\end{enumerate}
From the union bound, \cref{ineq:Eval,ineq:B} imply that 
\begin{align}
\Pr[\{x_i\}_{i\in[t(n)]}\in\mathsf{Good}_{n,z,\epsilon}:
\begin{array}{ll}
     & z\gets\mathcal{S}(1^n) \\
     & \{x_i\}_{i\in[t(n)]}\gets\cD(1^n,z)^{\otimes t(n)}
\end{array}
]\geq1-\frac{2}{3\delta(n)}.\label{ineq:Good}
\end{align}

We will use the following \cref{claim:Good}, which we defer the proof of \cref{claim:Good} to the end of the proof.
\begin{claim}\label{claim:Good}
For all $\{x_i\}_{i\in[t(n)]}\in\mathsf{Good}_{n,z,\epsilon}$,
if we have
\begin{align}
\top\la\Vrfy(\{x_i\}_{i\in[t(n)]},h),     
\end{align}
then $h$ satisfies
\begin{align}
\mathsf{SD}(\cD(1^n,h),\cD(1^n,z))\leq  \frac{1}{\epsilon(n)}.
\end{align}
\end{claim}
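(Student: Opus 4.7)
The plan is to prove \cref{claim:Good} by a direct triangle-inequality argument using the two properties that define the set $\mathsf{Good}_{n,z,\epsilon}$ together with the acceptance condition of $\Vrfy$. The hypothesis $\top\gets\Vrfy(\{x_i\}_{i\in[t(n)]},h)$ unpacks, by the construction of $\Vrfy$, into a bound of the form $\mathsf{SD}(\cD(1^n,z^*),\cD(1^n,h))\le 3/(2\epsilon(n))$, where $z^*$ is the (deterministic) output of $\mathsf{Eval}(1^n,\{x_i\}_{i\in[t(n)]})$. Meanwhile, the first defining property of $\mathsf{Good}_{n,z,\epsilon}$ gives $\mathsf{SD}(\cD(1^n,z),\cD(1^n,z^*))\le 1/(2\epsilon(n))$ for the \emph{same} $z^*$, since $\mathsf{Eval}$ is deterministic in $\{x_i\}_{i\in[t(n)]}$.

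So the first step I would take is simply to unfold both the definitions of $\mathsf{Good}_{n,z,\epsilon}$ and $\Vrfy$ and observe that they refer to the identical maximum-likelihood string $z^*$; the second defining property of $\mathsf{Good}_{n,z,\epsilon}$ (the lower bound on $\cB$'s success probability) is not used here — it is only needed later in the reduction that follows the claim. The second step is to combine the two bounds above using the triangle inequality for statistical distance to obtain a bound on $\mathsf{SD}(\cD(1^n,h),\cD(1^n,z))$. This is a one-line calculation and contains the entire content of the claim.

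The main (and essentially only) bookkeeping issue is verifying that the two thresholds — the $3/(2\epsilon(n))$ built into $\Vrfy$ and the $1/(2\epsilon(n))$ built into condition 1 of $\mathsf{Good}_{n,z,\epsilon}$ — sum to the quantity claimed in the conclusion. This is where I would be most careful, since the whole reason for choosing the loose threshold $3/(2\epsilon(n))$ inside $\Vrfy$ (instead of, say, $1/(2\epsilon(n))$) is to guarantee completeness whenever $\mathsf{Eval}$'s output $z^*$ is within $1/(2\epsilon(n))$ of the true $z$, while still keeping the sum small enough to plug into \cref{ineq_learn_A} and derive a contradiction with the assumed average-case hardness.

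Beyond this triangle-inequality bookkeeping, there is no real obstacle: the claim is essentially a deterministic consequence of the construction, and no probabilistic argument is needed inside the proof of \cref{claim:Good} itself — all of the probability has already been absorbed into the definition of $\mathsf{Good}_{n,z,\epsilon}$ and into \cref{ineq:Good}. The only subtlety to flag, for clarity of exposition, is the fact that $\mathsf{Eval}$ is specified to output \emph{some} maximum-likelihood $z^*$ and hence may a priori be non-unique, but since $\Vrfy$ runs the same deterministic $\mathsf{Eval}$ used in defining $\mathsf{Good}_{n,z,\epsilon}$, the $z^*$ that appears on both sides is the same string, and the triangle inequality applies verbatim.
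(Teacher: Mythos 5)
Your approach is identical to the paper's: unpack the acceptance condition of $\Vrfy$ and condition~1 of $\mathsf{Good}_{n,z,\epsilon}$, observe that both refer to the same deterministic output $z^*$ of $\mathsf{Eval}$, and glue the two bounds with the triangle inequality for statistical distance. You are also correct that condition~2 of $\mathsf{Good}_{n,z,\epsilon}$ plays no role here, and that the determinism of $\mathsf{Eval}$ removes any ambiguity about which $z^*$ is meant.

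However, the ``bookkeeping issue'' you flag as the place to be careful is exactly where the argument does \emph{not} close, and you never actually carry out the one-line calculation. From $\top\gets\Vrfy(\{x_i\}_{i\in[t(n)]},h)$ you get $\mathsf{SD}(\cD(1^n,z^*),\cD(1^n,h))\leq \tfrac{3}{2\epsilon(n)}$, and condition~1 of $\mathsf{Good}_{n,z,\epsilon}$ gives $\mathsf{SD}(\cD(1^n,z),\cD(1^n,z^*))\leq \tfrac{1}{2\epsilon(n)}$. The triangle inequality then yields
\begin{align}
\mathsf{SD}(\cD(1^n,h),\cD(1^n,z)) \;\leq\; \frac{3}{2\epsilon(n)} + \frac{1}{2\epsilon(n)} \;=\; \frac{2}{\epsilon(n)},
\end{align}
which is \emph{not} the $\tfrac{1}{\epsilon(n)}$ asserted in the claim. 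The paper's own proof hides this by passing to a reverse-triangle-inequality chain that, read carefully, establishes only the converse (``every $h$ rejected by $\Vrfy$ has $\mathsf{SD}(\cD(h),\cD(z))\geq\tfrac{1}{\epsilon(n)}$''), not the direction needed. So the gap is not in your plan of attack — the plan is correct and matches the paper — but in the failure to notice that the constants in $\Vrfy$ and $\mathsf{Good}_{n,z,\epsilon}$ as given in \cref{lem:OWPuzz} do not sum to the stated bound. The natural repair is to set $\Vrfy$'s acceptance threshold to $\tfrac{1}{2\epsilon(n)}$ rather than $\tfrac{3}{2\epsilon(n)}$; correctness still follows from \cref{claim:useful}, and the triangle inequality then gives exactly $\tfrac{1}{\epsilon(n)}$. (Alternatively, weaken the claim's conclusion to $\tfrac{2}{\epsilon(n)}$ and instantiate the OWPuzz with a halved $\epsilon$ so that the contradiction with \cref{ineq_learn_A} still goes through.)
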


From \cref{claim:Good}, for all $\{x_i\}_{i\in[t(n)]}\in\mathsf{Good}_{n,z,\epsilon}$, we have 
\begin{align}
    &\Pr[ \mathsf{SD}(\cD(1^n,z),\cD(1^n,h))\leq \frac{1}{\epsilon(n)}:
    \begin{array}{ll}
     h\la\cB(1^n,\{x_i\}_{i\in[t(n)]})
    \end{array}
    ]\label{ineq:Vrfy_B}\\
    &\geq\Pr[ \top\la\Vrfy(\{x_i\}_{i\in[t(n)]},h):
    \begin{array}{ll}
     h\la\cB(1^n,\{x_i\}_{i\in[t(n)]})
    \end{array}
    ]\geq 1-\frac{1}{3\delta(n)}.
\end{align}
Here, in the final inequality, we have used the definition of $\mathsf{Good}_{n,z,\epsilon}$.

\cref{ineq:Vrfy_B,ineq:Good} imply that
\begin{align}
\Pr[ 
\mathsf{SD}(\cD(1^n,z),\cD(1^n,h))\leq \frac{1}{\epsilon(n)}:
\begin{array}{ll}
&z\la\cS(1^n)\\
&\{x_i\}_{i\in[t(n)]}\la\cD(1^n,z)^{\otimes t(n)} \\
& h\la \cA(1^n,\{x_i\}_{i\in[t(n)]})
\end{array}
]\geq 1-\frac{1}{\delta(n)}
\end{align}
for infinitely many $n\in\N$.
This completes the proof.

Finally, we prove \cref{claim:useful,claim:Good}.
For showing \cref{claim:useful}, we will use the following \cref{claim:probabilistic_argument,claim:statistical_distance}.
\begin{claim}\label{claim:probabilistic_argument}
If 
\begin{align}
\mathsf{SD}(\cD(1^n,z), \cD(1^n,z^*)) > \frac{1}{2\epsilon(n)},
\end{align}
then we have
\begin{align}
\mathsf{SD}(\cD(1^n,z)^{\otimes t(n)},\cD(1^n,z^*)^{\otimes t(n)})> 1-2^{-\frac{t(n)}{8\epsilon(n)^2} +1}.
\end{align}
\end{claim}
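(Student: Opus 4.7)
The plan is to apply a standard amplification-by-majority argument. I will lower-bound the statistical distance of the $t$-fold product by exhibiting an explicit distinguisher and analysing it via Hoeffding's inequality.

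First I would use the fact that statistical distance equals the optimal single-sample distinguishing advantage: there is a (computationally unbounded) deterministic function $\phi:\bit^{*}\to\{0,1\}$ such that
\begin{align}
p_z - p_{z^*} \;=\; \mathsf{SD}(\cD(1^n,z),\cD(1^n,z^*)) \;>\; \frac{1}{2\epsilon(n)},
\end{align}
where $p_y \seteq \Pr_{x\la\cD(1^n,y)}[\phi(x)=1]$ for $y\in\{z,z^*\}$ (swap the roles of $z,z^*$ if needed so the sign is as written). Set $m\seteq\tfrac{1}{2}(p_z+p_{z^*})$, so that $m$ lies exactly in the middle of the two means, at distance greater than $\tfrac{1}{4\epsilon(n)}$ from each.

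Next I would define a distinguisher $\Phi$ on inputs $(x_1,\dots,x_{t(n)})$ that computes the empirical mean $\bar\phi \seteq \tfrac{1}{t(n)}\sum_{i\in[t(n)]}\phi(x_i)$ and outputs $1$ iff $\bar\phi \ge m$. Under $\cD(1^n,z)^{\otimes t(n)}$ each $\phi(x_i)\in[0,1]$ is i.i.d.\ with mean $p_z$, and by Hoeffding's inequality
\begin{align}
\Pr_{\{x_i\}\la\cD(1^n,z)^{\otimes t(n)}}\!\bigl[\Phi(x_1,\dots,x_{t(n)})=0\bigr]
\;=\;\Pr\!\bigl[\bar\phi < m\bigr]
\;\le\; \exp\!\Bigl(-2\,t(n)\,(p_z-m)^2\Bigr)
\;\le\; 2^{-\,t(n)/(8\epsilon(n)^2)},
\end{align}
and symmetrically $\Pr_{\{x_i\}\la\cD(1^n,z^*)^{\otimes t(n)}}[\Phi=1]\le 2^{-t(n)/(8\epsilon(n)^2)}$; here I use $(p_z-m)^2>\tfrac{1}{16\epsilon(n)^2}$ and $e^{-2x}\le 2^{-x}$ for $x\ge0$.

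Combining these two tail bounds,
\begin{align}
\mathsf{SD}\!\bigl(\cD(1^n,z)^{\otimes t(n)},\cD(1^n,z^*)^{\otimes t(n)}\bigr)
\;\ge\; \Pr[\Phi=1\mid z] - \Pr[\Phi=1\mid z^*]
\;\ge\; 1 - 2\cdot 2^{-t(n)/(8\epsilon(n)^2)},
\end{align}
which is exactly $1-2^{-t(n)/(8\epsilon(n)^2)+1}$ as claimed. The only non-routine point is the choice of the middle threshold $m$, which is what lets a single application of Hoeffding handle both sides symmetrically; I do not anticipate any real obstacle, since $\phi$ may be arbitrarily inefficient and the argument is purely information-theoretic.
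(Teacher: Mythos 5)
Your proof is correct and follows essentially the same route as the paper's: the paper explicitly takes $\phi$ to be the indicator of $\cG_n \coloneqq \{x : \Pr[x\gets\cD(1^n,z)] > \Pr[x\gets\cD(1^n,z^*)]\}$ (the standard optimal single-sample distinguisher you invoke abstractly), applies Hoeffding on each side with the same midpoint threshold, and concludes $\mathsf{SD}(\cD(z)^{\otimes t},\cD(z^*)^{\otimes t}) \ge 1 - 2\cdot 2^{-t I_n^2/2}$ before plugging in $I_n > 1/(2\epsilon(n))$. (One small slip: the helper inequality you want is $e^{-y}\le 2^{-y}$ applied to $y = 2t(p_z-m)^2 \ge t/(8\epsilon(n)^2)$, not ``$e^{-2x}\le 2^{-x}$''; this does not affect the conclusion.)
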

\begin{claim}\label{claim:statistical_distance}
If $\mathsf{SD}(D(1^n,z),\cD(1^n,z^*))> \alpha$, then
\begin{align}
\Pr[\Pr[x\la\cD(1^n,z)]> \Pr[x\la\cD(1^n,z^*)] : x\la\cD(1^n,z) ]> \alpha.
\end{align}
\end{claim}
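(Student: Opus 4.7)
The plan is to unfold the standard ``dominating set'' characterization of statistical distance and then upper bound it trivially by the probability mass of the dominating set under $\cD(1^n,z)$.

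First, I would fix notation by writing $D \seteq \cD(1^n,z)$ and $D^* \seteq \cD(1^n,z^*)$, viewed as probability mass functions on $\bit^{\poly(n)}$, and define the dominating set
\begin{align}
S \seteq \{x : D(x) > D^*(x)\}.
\end{align}
The first step is to recall the identity
\begin{align}
\mathsf{SD}(D, D^*) = \sum_{x \in S} (D(x) - D^*(x)),
\end{align}
which follows directly from the definition $\mathsf{SD}(D, D^*) = \frac{1}{2}\sum_{x}|D(x) - D^*(x)|$ by splitting the sum according to the sign of $D(x) - D^*(x)$ and using that $\sum_x D(x) = \sum_x D^*(x) = 1$.

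The second and final step is the one-line bound
\begin{align}
\mathsf{SD}(D, D^*) = \sum_{x \in S} (D(x) - D^*(x)) \leq \sum_{x \in S} D(x) = \Pr_{x \la D}[x \in S],
\end{align}
where the inequality uses $D^*(x) \ge 0$. Substituting back the definitions of $D$, $D^*$, and $S$, the right-hand side is exactly $\Pr[\Pr[x\la\cD(1^n,z)] > \Pr[x\la\cD(1^n,z^*)] : x\la\cD(1^n,z)]$. Combined with the hypothesis $\mathsf{SD}(D, D^*) > \alpha$, the claim follows.

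There is essentially no obstacle here; the only thing to be a little careful about is whether the stated inequality in the claim is strict or non-strict. The derivation above actually gives $\Pr_{x \la D}[D(x) > D^*(x)] \ge \mathsf{SD}(D, D^*)$, and combining with the strict hypothesis $\mathsf{SD}(D, D^*) > \alpha$ yields the strict conclusion stated in the claim. So the proof is genuinely a two-line manipulation, and I would present it as such.
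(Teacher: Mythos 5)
Your proof is correct and takes essentially the same approach as the paper: both use the dominating-set characterization of statistical distance and drop the nonnegative $D^*$-mass on $S$ to bound $\mathsf{SD}(D,D^*)$ by $\Pr_{x\la D}[x\in S]$. The only cosmetic difference is the direction in which you traverse the chain of (in)equalities.
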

\cref{claim:probabilistic_argument,claim:statistical_distance} follows from standard probabilistic argument.
For clarity, we describe the proof of \cref{claim:probabilistic_argument,claim:statistical_distance} in the end of the proof.

\begin{proof}[Proof of \cref{claim:useful}]
Define
\begin{align}
\mathsf{Bad}_{n,z,\epsilon}\seteq\left\{z^*:\mathsf{SD}(\cD(1^n,z),\cD(1^n,z^*))> \frac{1}{2\epsilon(n)}\right\}.    
\end{align}
From \cref{claim:probabilistic_argument}, for any $z^*\in\mathsf{Bad}_{n,z,\epsilon}$, we have 
\begin{align}
\mathsf{SD}(\cD(1^n,z)^{\otimes t(n)},\cD(1^n,z^*)^{\otimes t(n)})> 1- 2^{-\frac{t(n)}{8\epsilon(n)^2}+1}.
\end{align}
From \cref{claim:statistical_distance}, for any $z^*\in\mathsf{Bad}_{n,z,\epsilon}$, 
we have
\begin{align}
&\Pr[\Pr[\{x_i\}_{i\in[t(n)]}\la\cD(1^n,z)^{\otimes t(n)}]>\Pr[\{x_i\}_{i\in[t(n)]}\la\cD(1^n,z^*)^{\otimes t(n)}]:\{x_i\}_{i\in[t(n)]}\la \cD(1^n,z)^{\otimes t(n)}]\\
&>
1- 2^{-\frac{t(n)}{8\epsilon(n)^2}+1}.
\end{align}
This implies that, for any $z^*\in\mathsf{Bad}_{n,z,\epsilon}$, we have 
\begin{align}
\Pr[z^*\la\mathsf{Eval}(1^n,\{x_i\}_{i\in[t(n)]}):\{x_i\}_{i\in[t(n)]}\la \cD(1^n,z)^{\otimes t(n)}]\leq 2^{-\frac{t(n)}{8\epsilon(n)^2}+1}
\end{align}
\if0
This implies that for any $z^*\in\mathsf{Bad}_{n,z,\epsilon}$, with probability at least $1-2^{-\frac{t(n)}{8\epsilon(n)^2}+1}$, over $\{x_i\}_{i\in[t(n)]}\la\cD(1^n,z)^{\otimes t(n)}$, we have
\begin{align}
\Pr[\{x_i\}_{i\in[t(n)]}\la\cD(1^n,z^*)^{\otimes t(n)}]\neq \max_{a\in\bit^n} \Pr[\{x_i\}_{i\in[t(n)]}\la\cD(1^n,a)^{\otimes t(n)}].
\end{align}
\fi
Therefore, we have
\begin{align}
&\sum_{z^*\in \mathsf{Bad}_{n,z,\epsilon}} \Pr[z^*\la\mathsf{Eval}(1^n,\{x_i\}_{i\in[t(n)]}):\{x_i\}_{i\in[t(n)]}\gets\cD(1^n,z)^{\otimes t(n)}]\leq\sum_{h\in \mathsf{Bad}_{n,z,\epsilon}}2^{-\frac{t(n)}{8\epsilon(n)^2}+1}\\
&\leq \abs{\mathsf{Bad}_{n,z,\epsilon}} 2^{-\frac{t(n)}{8\epsilon(n)^2}+1}\leq 2^{n-\frac{t(n)}{8\epsilon(n)^2}+1}\leq 2^{-n+1}.
\end{align}
Therefore, we have
\begin{align}
\Pr[\mathsf{SD}(\cD(1^n,z^*),\cD(1^n,z))\leq\frac{1}{2\epsilon(n)}:
\begin{array}{ll}
     & \{x_i\}_{i\in[t(n)]}\la\cD(1^n,z)^{\otimes t(n)}\\
     & z^*\la\mathsf{Eval}(1^n,\{x_i\}_{i\in[t(n)]})
\end{array}
]\geq 1-2^{-n+1}
\end{align}
for all $n\in\N$.
\end{proof}

\begin{proof}[Proof of \cref{claim:Good}]
If $\{x_i\}_{i\in [t(n)]}\in \mathsf{Good}_{n,z,\epsilon}$, then, by definition, $\mathsf{Eval}(1^n,\{x_i\}_{i\in[t(n)]})$ outputs $z^*$ such that
\begin{align}
\mathsf{SD}(\cD(1^n,z^*),\cD(1^n,z))< \frac{1}{2\epsilon(n)}
\end{align}
with probability $1$, and
from the definition of $\Vrfy$, $\Vrfy(\{x_i\}_{i\in[t(n)]},h)$ outputs $\bot$ for all $h$ such that
\begin{align}
    \mathsf{SD}(\cD(1^n,h),\cD(1^n,z^*))\geq \frac{3}{2\epsilon(n)},
\end{align}
where $z^*$ is the output of $\mathsf{Eval}(1^n,\{x_i\}_{i\in [t(n)]})$.
From the triangle inequality, this implies that $\Vrfy(\{x_i\}_{i\in[t(n)]},h)$ outputs $\bot$ for all $h$ such that
\begin{align}
\mathsf{SD}(\cD(1^n,h),\cD(1^n,z))\geq \mathsf{SD}(\cD(1^n,h),\cD(1^n,z^*))-\mathsf{SD}(\cD(1^n,z^*),\cD(1^n,z))> \frac{1}{\epsilon(n)}.
\end{align}
Therefore, for all $\{x_i\}_{i\in[t(n)]}\in\mathsf{Good}_{n,z,\epsilon}$, if $\Vrfy(\{x_i\}_{i\in[t(n)]},h)$ outputs $\top$, then we have
\begin{align}
\mathsf{SD}(\cD(1^n,h),\cD(1^n,z))\leq \frac{1}{\epsilon(n)}.
\end{align}
\end{proof}

\begin{proof}[Proof of \cref{claim:probabilistic_argument}]
Let 
\begin{align}
I_n\coloneqq\mathsf{SD}(\cD(1^n,z),\cD(1^n,z^*)).
\end{align}
Let $\cG_{n}$ be a set of $x\in\bit^*$ such that
\begin{align}
    \Pr[x\la\cD(1^n,z)]>\Pr[x\la\cD(1^n,z^*)].
\end{align}
Let $f$ be a function such that
\begin{align}
f(x)=
\left\{
\begin{array}{ll}
1& \mbox{\,\, if\,\, } x\in\cG_n\\
0& \mbox{\,\, if\,\, } x\notin\cG_n.
\end{array}
\right.
\end{align}
From Hoeffding inequality, we have
\begin{align}\label{ineq:D_z}
\Pr\left[\sum_{i\in[t(n)]}\frac{f(x_i)}{t(n)}\leq \Pr[x\in\cG_n :x\la\cD(1^n,z)]- \frac{I_n}{2} :\{x_i\}_{i\in[t(n)]} \la \cD(1^n,z)^{\otimes t(n)}\right]\leq 2^{-\frac{t(n)\cdot I_n^2}{2}}.
\end{align}
From Hoeffding inequality, we also have
\begin{align}
2^{-\frac{t(n)\cdot I_n^2}{2}}
&\geq 
\Pr\left[\sum_{i\in[t(n)]}\frac{f(x_i)}{t(n)} \geq\Pr[x\in\cG_n :x\la\cD(1^n,z^*)] + \frac{I_n}{2}:\{x_i\}_{i\in[t(n)]} \la \cD(1^n,z^*)^{\otimes t(n)}\right]\\
&=\Pr\left[\sum_{i\in[t(n)]}\frac{f(x_i)}{t(n)} \geq\Pr[x\in\cG_n :x\la\cD(1^n,z)] - \frac{I_n}{2}:\{x_i\}_{i\in[t(n)]} \la \cD(1^n,z^*)^{\otimes t(n)}\right].
\end{align}
This implies that
\begin{align}\label{ineq:D_z^*}
\Pr\left[\sum_{i\in[t(n)]}\frac{f(x_i)}{t(n)} \leq\Pr[x\in\cG_n :x\la\cD(1^n,z)] - \frac{I_n}{2}:\{x_i\}_{i\in[t(n)]} \la \cD(1^n,z^*)^{\otimes t(n)}\right]\geq 1-2^{-\frac{t(n)\cdot I_n^2 }{2}}.
\end{align}

\cref{ineq:D_z,ineq:D_z^*} imply that
\begin{align}
&\mathsf{SD}(\cD(1^n,z)^{\otimes t(n)},\cD(1^n,z^*)^{\otimes t(n)})\geq 1-2\cdot 2^{-\frac{t(n)\cdot I_n^2}{2}}.
\end{align}
Because $I_n=\mathsf{SD}(\cD(1^n,z),\cD(1^n,z^*))>\frac{1}{2\epsilon(n)}$, we have
\begin{align}
\mathsf{SD}(\cD(1^n,z)^{\otimes t(n)},\cD(1^n,z^*)^{\otimes t(n)})\geq1-2\cdot 2^{-\frac{t(n)\cdot I_n^2}{2}}> 1-2\cdot 2^{-\frac{t(n)}{8\epsilon(n)^2}}.
\end{align}

\if0
Let $\cG_{n}$ be a set of $x\in\bit^*$ such that
\begin{align}
    \Pr[x\la\cD(1^n,z)]>\Pr[x\la\cD(1^n,z^*)].
\end{align}
Let 
\begin{align}
p\seteq \Pr[x\in\cG_n:x\la\cD(1^n,z)]
\end{align}
and let $\cG_{*,n}^{T(n)}$ be a set of $\{x_i\}_{i \in [T(n)]}$ such that at least $(p - \frac{1}{4\epsilon(n)})T(n)$\mor{Is this always integer?} of the elements $x_i$ are contained in $\cG_n$.

From Hoeffding inequality, we have\mor{humei}
\begin{align}
\Pr[ \{x_i\}_{i\in[T(n)]}\in\cG_{*,n}^{T(n)}:\{x_i\}_{i\in[T(n)]}\la \cD(1^n,z)^{\otimes T(n)}]\geq 1-2^{-\frac{T(n)}{8\epsilon(n)^2}}    
\end{align}
and
\begin{align}
\Pr[\{x_i\}_{i\in[T(n)]}\in\cG_{*,n}^{T(n)}:\{x_i\}_{i\in[T(n)]}\la \cD(1^n,z^*)^{\otimes T(n)}]\leq 2^{-\frac{T(n)}{8\epsilon(n)^2}}.
\end{align}

This implies that
\begin{align}
\mathsf{SD}(\cD(1^n,z)^{\otimes T(n)},\cD(1^n,z^*)^{\otimes T(n)})
\geq 1-2^{-\frac{T(n)}{8\epsilon(n)^2}+1}.
\end{align}
\fi
\end{proof}

\begin{proof}[Proof of \cref{claim:statistical_distance}]
We have 
\begin{align}
&\Pr\left[\Pr[x\la\cD(1^n,z)] >\Pr[x\la\cD(1^n,z^*)]:x\la\cD(1^n,z) \right]\\
&=\sum_{x :\Pr[x\la\cD(1^n,z)] >\Pr[x\la\cD(1^n,z^*)] }\Pr[x\la\cD(1^n,z)]\\
&=\sum_{x :\Pr[x\la\cD(1^n,z)] >\Pr[x\la\cD(1^n,z^*)] }\left(\Pr[x\la\cD(1^n,z)]-\Pr[x\la\cD(1^n,z^*)]\right)\\
&\hspace{2cm}+\sum_{x :\Pr[x\la\cD(1^n,z)] >\Pr[x\la\cD(1^n,z^*)] } \Pr[x\la\cD(1^n,z^*)]\\
&\geq \sum_{x :\Pr[x\la\cD(1^n,z)] >\Pr[x\la\cD(1^n,z^*)] }\left(\Pr[x\la\cD(1^n,z)]-\Pr[x\la\cD(1^n,z^*)]\right)\\
&=\mathsf{SD}(\cD(1^n,z),\cD(1^n,z^*) )>\alpha.
\end{align}

\end{proof}

\end{proof}

\if0
\begin{proof}[Proof of \cref{lem:OWPuzz_to_Learn_avg}]
\mor{atode check}
From \cref{thm:QEFID_OWPuzz}, non-uniform QEFID is existentially equivalent to OWPuzzs.
We therefore show average-case hardness of proper quantum distribution
learning from the existence of a non-uniform QEFID.
In the following, from a non-uniform QEFID, $\Gen$, we construct $\cS$ and $\cD$, which are hard to learn on average.
\begin{description}
\item[$\cS(1^n)$:]$ $
\begin{enumerate}
\item Sample $\mu\la [n]$. 
\item Sample $b\la\bit$.
\item Output $z\seteq (\mu,b)$.
\end{enumerate}
\item[$\cD(1^n,z)$:] $ $
\begin{enumerate}
\item Parse $z=(\mu,b)$.
\item Run $\xi\la\Gen(1^n,\mu,b)$.
\item Output $x\seteq (\xi,\mu)$.
\end{enumerate}
\end{description}
\if0
Let $\mu\in[n]$ be an advice such that we have 
\begin{align}
\mathsf{SD}((x)_{x\la\Gen(1^n,\mu,0)},(x)_{x\la \Gen(1^n,\mu,1)})\geq 1-\negl(n)
\end{align}
and   
\begin{align}
\abs{\Pr_{x\la\Gen(1^n,\mu,0)}[1\la\cB(1^n,x)]-\Pr_{x\la\Gen(1^n,\mu,1)}[1\la\cB(1^n,x)]}\leq\negl(n)
\end{align}
for any QPT algorithm $\cB$.
\fi

For contradiction, suppose that there exists a QPT algorithm $\cA$ and a polynomial $t$ such that
\begin{align}\label{ineq_3}
\Pr\left[ 
\mathsf{SD}(\cD(1^n,(\mu,b)),\cD(1^n,h))\leq 1/n:
\begin{array}{ll}
&(\mu,b)\la\cS(1^n)\\
&\{\xi_i,\mu\}_{i\in[t(n)]}\la\cD(1^n,(\mu,b))^{\otimes t(n)} \\
& h\la \cA(1^n,\{\xi_i\}_{i\in[t(n)]},\mu)
\end{array}
\right]\geq 1-\frac{1}{n^4}
\end{align}
for infinitely many $n\in\N$.

From the standard average argument, we have the following claim.
\begin{claim}\label{claim:markov_EFID}
For all $(\mu,b)$, we have
\begin{align}\label{ineq_4}
\Pr\left[ 
\mathsf{SD}(\cD(1^n,(\mu,b)),\cD(1^n,(\mu',b')))\leq 1/n:
\begin{array}{ll}
&\{\xi_i\}_{i\in[t(n)]}\la\Gen(1^n,\mu,b)^{\otimes t(n)} \\
&(\mu',b')\la \cA(1^n,\{\xi_i\}_{i\in[t(n)]},\mu)
\end{array}
\right]\geq 1-\frac{1}{n^2}
\end{align}
for infinitely many $n\in\N$.
\end{claim}
We defer the proof of \cref{claim:markov_EFID}.
\cref{claim:markov_EFID} implies that for all $\mu\in[n]$ such that
\begin{align}
    \mathsf{SD}(\Gen(1^n,\mu,0),\Gen(1^n,\mu,1))\geq 1-\negl(n),
\end{align}
we have 
\begin{align}
\Pr[ b=b', \mu=\mu':
\begin{array}{ll}
     & \{\xi_i\}_{i\in[t(n)]}\la\Gen(1^n,\mu,b)^{\otimes t(n)} \\
     & (\mu',b')\la\cA(1^n,\{\xi_i\}_{i\in[t(n)]},\mu)
\end{array}
]  \geq 1-\frac{1}{n^2}
\end{align}
for infinitely many $n\in\N$.
\mor{why?}
\taiga{Setumei tukeru}

From $\cA$, we construct a QPT algorithm $\cB$ such that, for all $\mu\in[n]$ that satisfies 
\begin{align}
\mathsf{SD}(\Gen(1^n,\mu,0),\Gen(1^n,\mu,1))\geq 1-\negl(n),
\end{align}
we have
\begin{align}
&\Bigg|\Pr[1\la\cB(1^n,\{\xi_i\}_{i\in[t(n)]}):\{\xi_i\}_{i\in[t(n)]}\la\Gen(1^n,\mu,0)^{\otimes t(n)}]\\
&\hspace{2cm}-\Pr[1\la\cB(1^n,\{\xi_i\}_{i\in[t(n)]}):\{\xi_i\}_{i\in[t(n)]}\la\Gen(1^n,\mu,1)^{\otimes t(n)}]
\Bigg|
\geq 1/\poly(n)
\end{align}

\if0
\begin{align}
\abs{\Pr_{\{\xi_i\}_{i\in[t(n)]}\la\Gen(1^n,\mu,0)^{\otimes t(n)}}[1\la\cB(1^n,\{\xi_i\}_{i\in[t(n)]})]-\Pr_{\{\xi_i\}_{i\in[t(n)]}\la\Gen(1^n,\mu,1)^{\otimes t(n)}}[1\la\cB(1^n,\{\xi_i\}_{i\in[t(n)]})]}\geq 1/\poly(n)
\end{align}
\fi
for infinitely many $n\in\N$.
\mor{korega $t=1$ no toki imply suru?}
\taiga{Section 2 ni setumei tukeru.}

$\cB$ works as follows:
 $\cB$ receives $\{\xi_i\}_{i\in[t(n)]}$, which is generated from $\{\xi_i\}_{i\in[t(n)]}\la\Gen(1^n,\mu,b)^{\otimes t(n)}$, where $b\la\bit$.
$\cB$ samples $\mu^*\la[n]$, runs $(\mu',b')\la\cA(1^n,\{\xi_i\}_{i\in[t(n)]},\mu^*)$, and outputs $b'$.

For all $\mu\in[n]$ such that
\begin{align}
    \mathsf{SD}(\Gen(1^n,\mu,0),\Gen(1^n,\mu,1))\geq 1-\negl(n),
\end{align}
we have\mor{ika humei}\taiga{Sita kakinaosu.}
\begin{align}
&\Bigg|\Pr[1\la\cB(1^n,\{\xi_i\}_{i\in[t(n)]}):\{\xi_i\}_{i\in[t(n)]}\la\Gen(1^n,\mu,1)^{\otimes t(n)}]\\
&\hspace{2cm}-\Pr[1\la\cB(1^n,\{\xi_i\}_{i\in[t(n)]}):\{\xi_i\}_{i\in[t(n)]}\la\Gen(1^n,\mu,0)^{\otimes t(n)}]\Bigg|\\
&\geq \frac{1}{n}\Bigg|\Pr[(\mu',1)\la\cA(1^n,\{\xi_i\}_{i\in[t(n)]},\mu):\{\xi_i\}_{i\in[t(n)]}\la\Gen(1^n,\mu,1)^{\otimes t(n)}]\\
&\hspace{2cm}-\Pr[(\mu',1)\la\cA(1^n,\{\xi_i\}_{i\in[t(n)]}):\{\xi_i\}_{i\in[t(n)]}\la\Gen(1^n,\mu,0)^{\otimes t(n)}]\Bigg|\\
&\geq \frac{1}{n}\left(1-\frac{2}{n^2}\right)
\end{align}
\if0
\begin{align}
&\abs{\Pr_{\{\xi_i\}_{i\in[t(n)]}\la\Gen(1^n,\mu,1)^{\otimes t(n)}}[1\la\cB(1^n,\{\xi_i\}_{i\in[t(n)]})]-\Pr_{\{\xi_i\}_{i\in[t(n)]}\la\Gen(1^n,\mu,0)^{\otimes t(n)}}[1\la\cB(1^n,\{\xi_i\}_{i\in[t(n)]})]}\\
&\geq \frac{1}{n}\abs{\Pr_{\{\xi_i\}_{i\in[t(n)]}\la\Gen(1^n,\mu,1)^{\otimes t(n)}}[(\mu',1)\la\cA(1^n,\{\xi_i\}_{i\in[t(n)]},\mu)]-\Pr_{\{\xi_i\}_{i\in[t(n)]}\la\Gen(1^n,\mu,0)^{\otimes t(n)}}[(\mu',1)\la\cA(1^n,\{\xi_i\}_{i\in[t(n)]})]}\\
&\geq \frac{1}{n}\left(1-\frac{2}{n^2}\right)
\end{align}
\fi
for infinitely many $n\in\N$.
This is a contradiction.

\begin{proof}[Proof of \cref{claim:markov_EFID}]
Let $\mathsf{Good}_{n}$ be a family of $(\mu,b)$ such that
\begin{align}
\Pr\left[ 
\mathsf{SD}(\cD(1^n,(\mu,b)),\cD(1^n,(\mu',b')))\leq 1/n:
\begin{array}{ll}
&\{\xi_i\}_{i\in[t(n)]}\la\Gen(1^n,\mu,b)^{\otimes t(n)} \\
&(\mu',b')\la \cA(1^n,\{\xi_i\}_{i\in[t(n)]},\mu)
\end{array}
\right]\geq 1-\frac{1}{n^2}
\end{align}
Then, from \cref{ineq_3}, we have
\begin{align}
1-\frac{1}{n^4}&\leq \Pr\left[ 
\mathsf{SD}(\cD(1^n,(\mu,b)),\cD(1^n,h))\leq 1/n:
\begin{array}{ll}
&(\mu,b)\la\cS(1^n)\\
&\{\xi_i,\mu\}_{i\in[t(n)]}\la\cD(1^n,(\mu,b))^{\otimes t(n)} \\
& h\la \cA(1^n,\{\xi_i\}_{i\in[t(n)]},\mu)
\end{array}
\right]\\
&=\sum_{(\mu,b)\in\mathsf{Good}_n}\frac{1}{2n}\Pr\left[ 
\mathsf{SD}(\cD(1^n,(\mu,b)),\cD(1^n,(\mu',b')))\leq 1/n:
\begin{array}{ll}
&\{\xi_i\}_{i\in[t(n)]}\la\Gen(1^n,\mu,b)^{\otimes t(n)} \\
&(\mu',b')\la \cA(1^n,\{\xi_i\}_{i\in[t(n)]},\mu)
\end{array}
\right]\\
&+\sum_{(\mu,b)\notin\mathsf{Good}_n}\frac{1}{2n}\Pr\left[ 
\mathsf{SD}(\cD(1^n,(\mu,b)),\cD(1^n,(\mu',b')))\leq 1/n:
\begin{array}{ll}
&\{\xi_i\}_{i\in[t(n)]}\la\Gen(1^n,\mu,b)^{\otimes t(n)} \\
&(\mu',b')\la \cA(1^n,\{\xi_i\}_{i\in[t(n)]},\mu)
\end{array}
\right]\\
&\leq\sum_{(\mu,b)\in\mathsf{Good}_n}\frac{1}{2n}+\sum_{(\mu,b)\notin\mathsf{Good}_n}\frac{1}{2n}\left(1-\frac{1}{n^2}\right)\\
&=\abs{\mathsf{Good}_{n}}\frac{1}{2n}+\left(2n-\abs{\mathsf{Good}_n}\right)\frac{1}{2n}\left(1-\frac{1}{n^2}\right)
=1-\frac{1}{2n^2}+ \frac{\abs{\mathsf{Good}_n}}{4n^3}
\end{align}
\mor{saigo no toushiki humei}
This implies that
\begin{align}
2n-\frac{4}{n}\leq\abs{\mathsf{Good}_n}.
\end{align}
This completes the proof.
\end{proof}
\end{proof}
\fi

\begin{proof}[Proof of \cref{lem:OWPuzz_to_Learn_avg}]
Suppose that OWPuzzs exist.
Then, from \cref{thm:QEFID_OWPuzz}, there exists a non-uniform QPRG, $\Gen$.
From $\Gen$, we construct $\cS$ and $\cD$, which is hard to learn on average, as follows.
\begin{description}
\item[$\cS(1^n)$:]$ $
\begin{enumerate}
\item Sample $\mu\la [n]$.
\item Sample $b\la\bit$.
\item Output $z\seteq (\mu,b)$.
\end{enumerate}
\item[$\cD(1^n,z)$:] $ $
\begin{enumerate}
\item Parse $z=(\mu,b)$.
\item If $b=0$, then run $\xi\la\Gen(1^n,\mu)$, and output $x\seteq (\xi,\mu)$.
\item If $b=1$, then run $\xi\la\bit^n$, and output $x\seteq (\xi,\mu)$.
\end{enumerate}
\end{description}
\if0
Let $\mu\in[n]$ be an advice such that we have 
\begin{align}
\mathsf{SD}((x)_{x\la\Gen(1^n,\mu,0)},(x)_{x\la \Gen(1^n,\mu,1)})\geq 1-\negl(n)
\end{align}
and   
\begin{align}
\abs{\Pr_{x\la\Gen(1^n,\mu,0)}[1\la\cB(1^n,x)]-\Pr_{x\la\Gen(1^n,\mu,1)}[1\la\cB(1^n,x)]}\leq\negl(n)
\end{align}
for any QPT algorithm $\cB$.
\fi

For contradiction, suppose that there exist a QPT algorithm $\cA$ and a polynomial $t$ such that
\begin{align}\label{ineq_3}
\Pr\left[ 
\mathsf{SD}(\cD(1^n,(\mu,b)),\cD(1^n,(\mu^*,b^*)))\leq 1/n:
\begin{array}{ll}
&(\mu,b)\la\cS(1^n)\\
&\{\xi_i,\mu\}_{i\in[t(n)]}\la\cD(1^n,(\mu,b))^{\otimes t(n)} \\
& (\mu^*,b^*)\la \cA(1^n,\{\xi_i\}_{i\in[t(n)]},\mu)
\end{array}
\right]\geq 1-\frac{1}{n^{100}}
\end{align}
for infinitely many $n\in\N$.

From $\cA$, we construct a QPT algorithm $\cB$ that breaks the security of $\Gen$. 
More formally, we construct a QPT algorithm $\cB$ such that, for all $\mu\in[n]$ with
\begin{align}
    \mathsf{SD}(\Gen(1^n,\mu),U_n)\geq 1-\negl(n),
\end{align}
we have
\begin{align}
&\Bigg|\Pr[1\la\cB(1^n,\{\xi_i\}_{i\in[t(n)]}):\{\xi_i\}_{i\in[t(n)]}\la\Gen(1^n,\mu)^{\otimes t(n)}]\\
&\hspace{2cm}-\Pr[1\la\cB(1^n,\{\xi_i\}_{i\in[t(n)]}):\{\xi_i\}_{i\in[t(n)]}\la U_n^{\otimes t(n)}]
\Bigg|\geq \frac{1}{\poly(n)}
\end{align}
for infinitely many $n\in\N$.
Here, $U_n$ is a uniform distribution over $\bit^n$.

For describing the algorithm $\cB$, let us introduce the following $\mathsf{Check}$ algorithm.
\begin{description}
\item[$\mathsf{Check}(1^n,\mu)$:] $ $
\begin{enumerate}
\item Receive $1^n$ and $\mu\in[n]$ as input.
\item Run $\{X_i(j)\}_{i\in[t(n)]}\la U_n^{\otimes t(n)}$ for all $j\in[n^{25}]$.
\item Run $(\mu',b_{\mu}(j) )\la\cA(1^n,\{X_i(j)\}_{i\in[t(n)]},\mu)$ for all $j\in[n^{25}]$.
\item If $b_{\mu}(j)=1$ for all $j\in[n^{25}]$, then output $\top$.
Otherwise, output $\bot$.
\end{enumerate}
\end{description}

Now, we give the construction of $\cB$.
\begin{description}
\item[$\cB(1^n,\{\xi_i\}_{i\in[t(n)]})$:]$ $
\begin{enumerate}
\item Receive $1^n$ and $\{\xi_i\}_{i\in[t(n)]}$ as input.
\item For all $\mu\in[n]$, run $\mathsf{flag}_{\mu}\la\mathsf{Check}(1^n,\mu)$.
\item For all $\mu\in[n]$, 
run $(\mu^*,b_\mu)\la\cA(1^n,\{\xi_i\}_{i\in[t(n)]},\mu)$.
\item Output $0$ if $b_\mu=0$ and $\mathsf{flag}_{\mu}=\top$ for some $\mu\in[n]$.
Otherwise, output $1$.
\end{enumerate}
\end{description}

We have the following \cref{claim:B_far,claim:B_uniform}.
\begin{claim}\label{claim:B_far}
For all $\mu\in[n]$ such that
\begin{align}
    \mathsf{SD}(\Gen(1^n,\mu),U_n)\geq 1-\negl(n),
\end{align}
we have
\begin{align}
\Pr[0\la\cB(1^n,\{\xi_i\}_{i\in[t(n)]}):\{\xi_i\}_{i\in[t(n)]}\la\Gen(1^n,\mu)^{\otimes t(n)}]\geq 1-\frac{1}{n}
\end{align}
for infinitely many $n\in\N$.
\end{claim}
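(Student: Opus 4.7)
The plan is to show that $\cB$ outputs $0$ with high probability because, on this particular $\mu$, the ``witness'' conjunction $b_{\mu}=0 \wedge \mathsf{flag}_{\mu}=\top$ is realized. Throughout, let $n$ range over the infinite set on which \cref{ineq_3} holds.

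First, I would turn the population-level bound \cref{ineq_3} into a per-$(\mu,b)$ guarantee. Since $\cS(1^n)$ is uniform on the set $[n]\times\bit$ of size $2n$, a Markov argument shows that for every pair $(\mu',b')\in[n]\times\bit$,
\begin{align}
\Pr\left[\mathsf{SD}(\cD(1^n,(\mu',b')),\cD(1^n,(\mu^*,b^*)))\leq 1/n :
\begin{array}{ll}
&\{\xi_i,\mu'\}_{i\in[t(n)]}\la\cD(1^n,(\mu',b'))^{\otimes t(n)}\\
&(\mu^*,b^*)\la\cA(1^n,\{\xi_i\}_{i\in[t(n)]},\mu')
\end{array}\right]\geq 1-\tfrac{1}{n^{50}},
\end{align}
for all sufficiently large $n$ in the infinite set.

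Next, I would exploit the structure of $\cD$ to upgrade the SD-closeness to an equality of outputs. The output of $\cD(1^n,(\mu',b'))$ always carries $\mu'$ as its second coordinate, so $\mathsf{SD}(\cD(1^n,(\mu',b')),\cD(1^n,(\mu^*,b^*)))\leq 1/n$ forces $\mu^*=\mu'$. For the given $\mu$ with $\mathsf{SD}(\Gen(1^n,\mu),U_n)\geq 1-\negl(n)$, it further forces $b^*=b'$, since otherwise the SD would be at least $1-\negl(n) > 1/n$. Applying this to $(\mu',b')=(\mu,0)$, I conclude that $\cA(1^n,\{\xi_i\}_{i\in[t(n)]},\mu)$ outputs $(\mu,0)$ with probability $\geq 1-1/n^{50}$ when $\{\xi_i\}_{i\in[t(n)]}\la\Gen(1^n,\mu)^{\otimes t(n)}$; hence $b_\mu=0$ with the same probability in the execution of $\cB$.

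Third, I would bound $\mathsf{flag}_\mu$. Applying the same per-pair guarantee with $(\mu',b')=(\mu,1)$, each of the $n^{25}$ independent runs inside $\mathsf{Check}(1^n,\mu)$ returns some $(\mu',1)$ with probability $\geq 1-1/n^{50}$. By a union bound over the $n^{25}$ runs, $\mathsf{flag}_\mu=\top$ with probability $\geq 1-n^{25}/n^{50}=1-1/n^{25}$. Finally, a union bound between the $b_\mu=0$ event and $\mathsf{flag}_\mu=\top$ event gives that $\cB$ outputs $0$ with probability $\geq 1-1/n^{25}-1/n^{50}\geq 1-1/n$ for all sufficiently large $n$ in the infinite set.

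The main obstacle is the second step: showing that SD-closeness between the joint distributions rigidly forces $\mu^*=\mu$ and $b^*=b$. Once that rigidity is in hand, the Markov/union-bound bookkeeping is routine; the rigidity itself relies critically on both the ``identifier'' structure of $\cD$ (carrying $\mu$ in the output) and the near-perfect statistical separation of $\Gen(1^n,\mu)$ from $U_n$, and both pieces must be used simultaneously.
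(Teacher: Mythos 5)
Your proof is correct and follows essentially the same approach as the paper's: you reconstruct inline what the paper packages as \cref{claim:markov_EFID} (a Markov/averaging step to get per-pair learning guarantees from the population-level bound of \cref{ineq_3}, plus the rigidity argument that the $\mu$-identifier in $\cD$'s output and the near-perfect separation $\mathsf{SD}(\Gen(1^n,\mu),U_n)\geq 1-\negl(n)$ force $\mu^*=\mu$ and $b^*=b$), and then combine the events $b_\mu=0$ and $\mathsf{flag}_\mu=\top$. The only differences are cosmetic: you use a union bound where the paper computes the product $\left(1-\tfrac{1}{n^{80}}\right)^{n^{26}}$, and your exponents ($1/n^{50}$, $1/n^{25}$) differ from the paper's ($1/n^{80}$), but both choices comfortably yield the target $1-1/n$.
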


\begin{claim}\label{claim:B_uniform}
We have
\begin{align}
\Pr[1\la\cB(1^n,\{\xi_i\}_{i\in[t(n)]}):\{\xi_i\}_{i\in[t(n)]}\la U_n^{\otimes t(n)}]\geq 1-\frac{1}{n}
\end{align}
for infinitely many $n\in\N$.
\end{claim}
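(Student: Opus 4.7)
The plan is to bound the probability that $\cB$ outputs $0$ on uniform input, namely $\Pr[\exists \mu \in [n] : b_\mu = 0 \land \mathsf{flag}_\mu = \top]$, by a union bound over $\mu \in [n]$, and for each fixed $\mu$ to exploit an independence property between $b_\mu$ (computed from the actual input $\{\xi_i\}$) and $\mathsf{flag}_\mu$ (computed from fresh samples inside $\mathsf{Check}$).

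First, I would observe the crucial independence: the samples $\{X_i(j)\}_{i \in [t(n)], j\in[n^{25}]}$ used inside $\mathsf{Check}(1^n,\mu)$ are drawn independently of $\{\xi_i\}_{i\in[t(n)]}$, and each invocation of $\cA$ uses fresh internal randomness. Hence, for each fixed $\mu$, the random variables $b_\mu$ and $\mathsf{flag}_\mu$ are independent. Moreover, when $\{\xi_i\}\la U_n^{\otimes t(n)}$, the distribution of $b_\mu$ is identical to that of each $b_\mu(j)$. Define
\[
q_\mu \seteq \Pr[(\mu',1)\la\cA(1^n,\{X_i\}_{i\in[t(n)]},\mu):\{X_i\}_{i\in[t(n)]}\la U_n^{\otimes t(n)}].
\]
Then $\Pr[b_\mu = 0] = 1-q_\mu$ and $\Pr[\mathsf{flag}_\mu = \top] = q_\mu^{n^{25}}$, so by independence
\[
\Pr[b_\mu = 0 \land \mathsf{flag}_\mu = \top] = (1-q_\mu)\, q_\mu^{n^{25}}.
\]

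Next, I would do a short case analysis on $q_\mu$. If $q_\mu \leq 1 - 3\ln(n)/n^{25}$, then $q_\mu^{n^{25}} \leq e^{-3\ln n} = 1/n^3$, so the product is at most $1/n^3$. Otherwise $1-q_\mu \leq 3\ln(n)/n^{25}$, so the product is at most $3\ln(n)/n^{25}$. In either case the product is $\leq 2/n^3$ for all sufficiently large $n$. Applying the union bound over the $n$ choices of $\mu$ yields
\[
\Pr[0 \la \cB(1^n,\{\xi_i\}_{i\in[t(n)]}):\{\xi_i\}_{i\in[t(n)]}\la U_n^{\otimes t(n)}] \leq n\cdot \frac{2}{n^3} = \frac{2}{n^2} \leq \frac{1}{n}
\]
for all sufficiently large $n$, which gives the claim (in fact for all sufficiently large $n$, hence in particular for infinitely many $n$).

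The only subtle step will be the independence argument: one has to be careful that $b_\mu$ and $b_\mu(j)$, for different $j$ and for the actual input, are genuinely i.i.d.\ Bernoulli with parameter $q_\mu$. This is immediate from the way $\cB$ and $\mathsf{Check}$ are written, since each invocation of $\cA$ uses fresh classical samples drawn uniformly and fresh internal (quantum) randomness. Everything else is a routine probabilistic estimate, and the polynomial $n^{25}$ hardwired in $\mathsf{Check}$ is more than enough to drive the bound below $1/n$.
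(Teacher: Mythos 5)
Your proof is correct, and it takes a genuinely different route from the paper's. The paper fixes a threshold $1/n^5$ in advance and defines a set $\mathsf{Bad}_n$ of all $\mu$ with $q_\mu \le 1 - 1/n^5$; it then decomposes the failure event according to whether some $\mathsf{flag}_{\mu^*}=\top$ occurs with $\mu^*\in\mathsf{Bad}_n$ (bounded via \cref{claim:check} and a union bound over $\mathsf{Bad}_n$, giving $\le 1/n^9$) or not (bounded via the definition of $\mathsf{Bad}_n$ and a union bound over its complement, giving $\le 1/n^4$). You instead observe directly that, for each fixed $\mu$, $b_\mu$ and $\mathsf{flag}_\mu$ are independent because $\mathsf{Check}$ draws fresh uniform samples, so $\Pr[b_\mu=0 \wedge \mathsf{flag}_\mu=\top]=(1-q_\mu)\,q_\mu^{n^{25}}$, and you then optimize the threshold per $\mu$ via the case analysis on $q_\mu$ before applying a single union bound. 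The two buy different things: the paper's $\mathsf{Bad}_n$ decomposition mirrors the structure of \cref{claim:B_far} and reuses the standalone lemma \cref{claim:check}, which makes the two claims read in parallel; your version is shorter and makes the source of the bound — the independence that the construction of $\mathsf{Check}$ is designed to provide — more transparent, and it avoids needing \cref{claim:check} as a separate step. One minor point: since the failure probability you obtain, $\max\{1/n^3,\,3\ln n/n^{25}\}$, holds per $\mu$ only for sufficiently large $n$, you should state (as you do) that the final bound holds for all sufficiently large $n$, which indeed suffices for the claim as stated.
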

We defer the proof of \cref{claim:B_far,claim:B_uniform}.
From \cref{claim:B_far,claim:B_uniform}, for all $\mu\in[n]$ such that
\begin{align}
    \mathsf{SD}(\Gen(1^n,\mu),U_n)\geq 1-\negl(n),
\end{align}
we have
\begin{align}
&\Pr[1\la\cB(1^n,\{\xi_i\}_{i\in[t(n)]}):\{\xi_i\}_{i\in[t(n)]}\la U_n^{\otimes t(n)}]\\
&\hspace{2cm}-\Pr[1\la\cB(1^n,\{\xi_i\}_{i\in[t(n)]}):\{\xi_i\}_{i\in[t(n)]}\la \Gen(1^n,\mu)^{\otimes t(n)}]\geq 1-\frac{2}{n}
\end{align}
for infinitely many $n\in\N$.
This contradicts the security of $\Gen$.

Now, we prove \cref{claim:B_far,claim:B_uniform}.
For showing them, we will use the following \cref{claim:check,claim:markov_EFID}.

\begin{claim}\label{claim:check}
For all sufficiently large $n\in\N$,
if $\mu\in[n]$ satisfies 
\if0
either
\begin{align}
\Pr[b^*=0:
\begin{array}{ll}
     & \{\xi_i\}_{i\in [t(n)]}\la \Gen(1^n,\mu)^{\otimes t(n)} \\
     & (\mu^*,b^*)\la\cA(1^n,\mu)
\end{array}]
\leq 1-\frac{1}{n^5}
\end{align}
or
\fi
\begin{align}
\Pr[b^*=1:
\begin{array}{ll}
     & \{\xi_i\}_{i\in [t(n)]}\la U_n^{\otimes t(n)} \\
     & (\mu^*,b^*)\la\cA(1^n,\{\xi_i\}_{i\in [t(n)]},\mu)
\end{array}]
\leq 1-\frac{1}{n^5},
\end{align}
then 
\begin{align}
\Pr[\top\la\mathsf{Check}(1^n,\mu)]\leq \frac{1}{n^{10}}.
\end{align}
\end{claim}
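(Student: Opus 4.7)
The plan is to observe that \cref{claim:check} reduces immediately to a standard independence-and-amplification argument, since the $n^{25}$ trials performed inside $\mathsf{Check}(1^n,\mu)$ are independent and each is identically distributed to the experiment in the hypothesis.

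First I would unpack the definition of $\mathsf{Check}(1^n,\mu)$: for each $j\in[n^{25}]$, it samples $\{X_i(j)\}_{i\in[t(n)]}\la U_n^{\otimes t(n)}$ and then runs $(\mu', b_{\mu}(j))\la\cA(1^n,\{X_i(j)\}_{i\in[t(n)]},\mu)$, and outputs $\top$ iff $b_\mu(j)=1$ for every $j$. Crucially, the samples across different $j$ are independent, so the Bernoulli random variables $\mathbf{1}[b_\mu(j)=1]$ are i.i.d.\ Since each of them is distributed exactly as the random variable appearing in the hypothesis, each satisfies
\begin{align}
\Pr[b_\mu(j)=1]\le 1-\frac{1}{n^5}.
\end{align}

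Next I would use independence and the elementary inequality $1-x\le e^{-x}$ to bound
\begin{align}
\Pr[\top\la\mathsf{Check}(1^n,\mu)]=\prod_{j\in[n^{25}]}\Pr[b_\mu(j)=1]\le\left(1-\frac{1}{n^5}\right)^{n^{25}}\le e^{-n^{20}}.
\end{align}
For all sufficiently large $n$, we have $e^{-n^{20}}\le n^{-10}$, which yields the desired bound.

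There is essentially no obstacle: the argument is just independent repetition plus a standard exponential tail bound. The only minor subtlety is to make sure that the hypothesis on $\mu$ is stated with respect to the \emph{same} joint distribution that $\mathsf{Check}$ uses internally, but this is immediate from the definitions, so the claim is straightforward.
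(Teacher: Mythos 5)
Your proof is correct and follows essentially the same route as the paper's: both express $\Pr[\top\la\mathsf{Check}(1^n,\mu)]$ as the $n^{25}$-th power of the single-trial probability (using independence of the $n^{25}$ trials) and then bound $\left(1-\tfrac{1}{n^5}\right)^{n^{25}}\le\frac{1}{n^{10}}$ for sufficiently large $n$. The only cosmetic difference is that you pass through the intermediate bound $e^{-n^{20}}$, while the paper states the final inequality directly.
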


\begin{claim}\label{claim:markov_EFID}
For all $\mu\in[n]$ such that
\begin{align}
\mathsf{SD}(\Gen(1^n,\mu),U_n)\geq 1-\negl(n),
\end{align}
we have
\begin{align}\label{ineq_4}
\Pr\left[ 
b'=0:
\begin{array}{ll}
&\{\xi_i\}_{i\in[t(n)]}\la\Gen(1^n,\mu)^{\otimes t(n)} \\
&(\mu',b')\la \cA(1^n,\{\xi_i\}_{i\in[t(n)]},\mu)
\end{array}
\right]\geq 1-\frac{1}{n^{80}}
\end{align}
and 
\begin{align}
\Pr\left[ 
b'=1:
\begin{array}{ll}
&\{\xi_i\}_{i\in[t(n)]}\la U_n^{\otimes t(n)} \\
&(\mu',b')\la \cA(1^n,\{\xi_i\}_{i\in[t(n)]},\mu)
\end{array}
\right]\geq 1-\frac{1}{n^{80}}
\end{align}
for infinitely many $n\in\N$.
\end{claim}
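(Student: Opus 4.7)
The plan is to derive the two probability statements of Claim~\ref{claim:markov_EFID} from the averaging of inequality~\eqref{ineq_3}, exploiting the fact that the second coordinate of $\cD(1^n,(\mu,b))$ is the deterministic string $\mu$, which forces any learner that succeeds in the $1/n$-statistical-distance sense to correctly recover $\mu$. First I would fix an arbitrary $n$ for which \eqref{ineq_3} holds and introduce $p_{\mathrm{succ}}(\mu,b)$ as the conditional probability over $\{\xi_i\}_{i\in[t(n)]}\la\cD(1^n,(\mu,b))^{\otimes t(n)}$ and the coins of $\cA$ that $\mathsf{SD}(\cD(1^n,(\mu,b)),\cD(1^n,(\mu^*,b^*)))\le 1/n$. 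Since $(\mu,b)\la\cS(1^n)$ is the uniform distribution on $[n]\times\{0,1\}$, \eqref{ineq_3} reads $\frac{1}{2n}\sum_{\mu,b}p_{\mathrm{succ}}(\mu,b)\ge 1-1/n^{100}$, and since each term $1-p_{\mathrm{succ}}(\mu,b)$ is non-negative, every individual term is bounded by $2/n^{99}$, hence $p_{\mathrm{succ}}(\mu,b)\ge 1-2/n^{99}$ for every pair $(\mu,b)$.

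Next I would translate successful learning into agreement on $(\mu^*,b^*)$ for the good $\mu$ under consideration. Because $\cD(1^n,(\mu,b))$ emits $\mu$ in the clear as its second coordinate, if $\mu^*\neq\mu$ then the two marginals on the second coordinate are disjoint point masses, forcing $\mathsf{SD}(\cD(1^n,(\mu,b)),\cD(1^n,(\mu^*,b^*)))=1>1/n$. Thus success implies $\mu^*=\mu$. Conditioned on $\mu^*=\mu$, the statistical distance collapses to $0$ if $b^*=b$ and to $\mathsf{SD}(\Gen(1^n,\mu),U_n)$ if $b^*\neq b$. By the hypothesis of the claim, $\mathsf{SD}(\Gen(1^n,\mu),U_n)\ge 1-\negl(n)$, which in particular exceeds $1/n$ for sufficiently large $n$, so in the $b^*\neq b$ case the success condition fails. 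Altogether, success forces $b^*=b$.

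Combining the two steps, for any good $\mu$ and each $b\in\{0,1\}$,
\begin{align}
\Pr[b^*=b\mid\mu,b]\;\ge\;p_{\mathrm{succ}}(\mu,b)\;\ge\;1-\frac{2}{n^{99}}\;\ge\;1-\frac{1}{n^{80}}
\end{align}
for all sufficiently large $n$ among the infinitely many indices where \eqref{ineq_3} holds. Specializing this to $b=0$ gives the first inequality of the claim (samples from $\Gen(1^n,\mu)^{\otimes t(n)}$ lead $\cA$ to output $b'=0$), and specializing to $b=1$ gives the second inequality (samples from $U_n^{\otimes t(n)}$ lead $\cA$ to output $b'=1$). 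I do not anticipate a serious obstacle in this argument; the only place that requires care is the structural observation that the deterministic second output of $\cD$ serves as an unforgeable tag for $\mu$, ruling out any $\mu^*\neq\mu$ that would otherwise have to be dealt with by a more delicate hybrid.
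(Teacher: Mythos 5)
Your proposal is correct and follows essentially the same route as the paper: an averaging argument over the uniform choice of $(\mu,b)$ to lift the global bound~\eqref{ineq_3} to a per-pair bound, combined with the observation that the deterministic second coordinate $\mu$ of $\cD(1^n,(\mu,b))$ forces $\mu'=\mu$, and $\mathsf{SD}(\Gen(1^n,\mu),U_n)\ge 1-\negl(n)$ forces $b'=b$, so that distributional closeness is equivalent to $b'=b$. The only cosmetic difference is that you derive $p_{\mathrm{succ}}(\mu,b)\ge 1-2/n^{99}$ directly from nonnegativity of the summands, whereas the paper fixes the threshold $1-1/n^{80}$, counts the number of pairs above it via the same inequality, and concludes the count is $2n$ by integrality; both give the same conclusion.
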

We defer the proof of \cref{claim:check,claim:markov_EFID} in the end of the proof.

\begin{proof}[Proof of \cref{claim:B_far}]
From the construction of $\cB$, we have
\begin{align}
&\Pr[0\la\cB(1^n,\{\xi_i\}_{i\in[t(n)]}):\{\xi_i\}_{i\in[t(n)]}\la\Gen(1^n,\mu)^{\otimes t(n)}]\\
&=\Pr[\left(\mathsf{flag}_{\mu^*}=\top \wedge b_{\mu^*}=0\right) \mbox{for some }\mu^*\in[n] :
\begin{array}{ll}
& \{\xi_i\}_{i\in[t(n)]}\la\Gen(1^n,\mu)^{\otimes t(n)}\\
& \mathsf{flag}_{\mu^*}\la \mathsf{Check}(1^n,\mu^*) \mbox{ for all  } \mu^*\in[n] \\
&  (\mu',b_{\mu^*})\la\cA(1^n,\{\xi_i\}_{i\in[t(n)]},\mu^*) \mbox{ for all }\mu^*\in[n]
\end{array}
]\\
&\geq 
\Pr[\left(\mathsf{flag}_{\mu}=\top \wedge b_{\mu}=0\right)  :
\begin{array}{ll}
& \{\xi_i\}_{i\in[t(n)]}\la\Gen(1^n,\mu)^{\otimes t(n)}\\
& \mathsf{flag}_{\mu}\la \mathsf{Check}(1^n,\mu) \\
&  (\mu',b_{\mu})\la\cA(1^n,\{\xi_i\}_{i\in[t(n)]},\mu)
\end{array}
]\\
&=\Pr[ b_{\mu}=0  :
\begin{array}{ll}
& \{\xi_i\}_{i\in[t(n)]}\la\Gen(1^n,\mu)^{\otimes t(n)}\\
&  (\mu',b_{\mu})\la\cA(1^n,\{\xi_i\}_{i\in[t(n)]},\mu)
\end{array}
]\cdot \Pr[\top\la\mathsf{Check}(1^n,\mu)]\\
&=\Pr[ b_{\mu}=0  :
\begin{array}{ll}
& \{\xi_i\}_{i\in[t(n)]}\la\Gen(1^n,\mu)^{\otimes t(n)}\\
&  (\mu',b_{\mu})\la\cA(1^n,\{\xi_i\}_{i\in[t(n)]},\mu)
\end{array}
]\\
&\hspace{1cm}\cdot 
\left(
\Pr[ b_{\mu}=1  :
\begin{array}{ll}
& \{\xi_i\}_{i\in[t(n)]}\la U_n^{\otimes t(n)}\\
&  (\mu',b_{\mu})\la\cA(1^n,\{\xi_i\}_{i\in[t(n)]},\mu)
\end{array}
]
\right)^{n^{25}}
\end{align}
On the other hand, from \cref{claim:markov_EFID}, for all $\mu$ such that
\begin{align}
\mathsf{SD}(\Gen(1^n,\mu),U_n )\geq 1-\negl(n),
\end{align}
we have
\begin{align}
\Pr\left[ 
b'=0:
\begin{array}{ll}
&\{\xi_i\}_{i\in[t(n)]}\la\Gen(1^n,\mu)^{\otimes t(n)} \\
&(\mu',b')\la \cA(1^n,\{\xi_i\}_{i\in[t(n)]},\mu)
\end{array}
\right]\geq 1-\frac{1}{n^{80}}
\end{align}
and 
\begin{align}
\Pr\left[ 
b'=1:
\begin{array}{ll}
&\{\xi_i\}_{i\in[t(n)]}\la U_n^{\otimes t(n)} \\
&(\mu',b')\la \cA(1^n,\{\xi_i\}_{i\in[t(n)]},\mu)
\end{array}
\right]\geq 1-\frac{1}{n^{80}}
\end{align}
for infinitely many $n\in\N$.
Therefore,
for all $\mu$ such that 
\begin{align}
\mathsf{SD}(\Gen(1^n,\mu),U_n )\geq 1-\negl(n),
\end{align}
we have
\begin{align}
&\Pr[0\la\cB(1^n,\{\xi_i\}_{i\in[t(n)]}):\{\xi_i\}_{i\in[t(n)]}\la\Gen(1^n,\mu)^{\otimes t(n)}]\geq\left(1-\frac{1}{n^{80}}\right)^{n^{26}}\geq 1-\frac{1}{n}
\end{align}
for infinitely many $n\in\N$.
\end{proof}

\begin{proof}[Proof of \cref{claim:B_uniform}]

For any $n\in\N$, let $\mathsf{Bad}_n$ be a set of $\mu$ such that the following inequality  holds
\if0
\begin{align}\label{Bad_1}
\Pr[b^*=0:
\begin{array}{ll}
     & \{\xi_i\}_{i\in [t(n)]}\la \Gen(1^n,\mu)^{\otimes t(n)} \\
     & (\mu^*,b^*)\la\cA(1^n,\mu)
\end{array}]
\leq 1-\frac{1}{n^5}
\end{align}
\fi
\begin{align}\label{Bad_2}
\Pr[b^*=1:
\begin{array}{ll}
     & \{\xi_i\}_{i\in [t(n)]}\la U_n^{\otimes t(n)} \\
     & (\mu^*,b^*)\la\cA(1^n,\{\xi_i\}_{i\in [t(n)]},\mu)
\end{array}]
\leq 1-\frac{1}{n^5}.
\end{align}

From the construction of $\cB$, we have
\begin{align}
&\Pr[0\la\cB(1^n,\{\xi_i\}_{i\in[t(n)]}):\{\xi_i\}_{i\in[t(n)]}\la U_n^{\otimes t(n)}]\\
&= \Pr[\left(\mathsf{flag}_{\mu^*}=\top \wedge b_{\mu^*}=0\right) \mbox{for some }\mu^*\in[n] :
\begin{array}{ll}
& \{\xi_i\}_{i\in[t(n)]}\la U_n^{\otimes t(n)}\\
& \mathsf{flag}_{\mu^*}\la \mathsf{Check}(1^n,\mu^*) \mbox{ for all } \mu^*\in[n] \\
&  (\mu',b_{\mu^*})\la\cA(1^n,\{\xi_i\}_{i\in[t(n)]},\mu^*) \mbox{ for all }\mu^*\in[n]
\end{array}
]\\
&
=
\Pr[
\begin{array}{ll}
&\left(\mathsf{flag}_{\mu^*}=\top \wedge b_{\mu^*}=0\right) \mbox{ for some }\mu^*\in[n] \\
&\mathsf{flag}_{\mu^*}=\top \mbox{ for some } \mu^*\in\mathsf{Bad}_n
\end{array}
:
\begin{array}{ll}
& \{\xi_i\}_{i\in[t(n)]}\la U_n^{\otimes t(n)}\\
& \mathsf{flag}_{\mu^*}\la \mathsf{Check}(1^n,\mu^*) \mbox{ for all  } \mu^*\in[n] \\
&  (\mu',b_{\mu^*})\la\cA(1^n,\{\xi_i\}_{i\in[t(n)]},\mu^*) \mbox{ for all }\mu^*\in[n]
\end{array}
]\label{ineq:long_1}
\\
&
+\Pr[
\begin{array}{ll}
&\left(\mathsf{flag}_{\mu^*}=\top \wedge b_{\mu^*}=0\right) \mbox{ for some }\mu^*\in[n] \\
&\mathsf{flag}_{\mu^*}=\bot \mbox{ for all } \mu^*\in\mathsf{Bad}_n
\end{array}
:
\begin{array}{ll}
& \{\xi_i\}_{i\in[t(n)]}\la U_n^{\otimes t(n)}\\
& \mathsf{flag}_{\mu^*}\la \mathsf{Check}(1^n,\mu^*) \mbox{ for all  } \mu^*\in[n] \\
&  (\mu',b_{\mu^*})\la\cA(1^n,\{\xi_i\}_{i\in[t(n)]},\mu^*) \mbox{ for all }\mu^*\in[n]
\end{array}
]\label{ineq:long_2}.
\end{align}

In the following, we give the upper bound of \cref{ineq:long_1,ineq:long_2}.
We have
\begin{align}
&\mbox{\cref{ineq:long_1}}\\
&\leq 
\Pr[
\begin{array}{ll}
\mathsf{flag}_{\mu^*}=\top \mbox{ for some } \mu^*\in\mathsf{Bad}_n
\end{array}
:
\begin{array}{ll}
& \{\xi_i\}_{i\in[t(n)]}\la U_n^{\otimes t(n)}\\
& \mathsf{flag}_{\mu^*}\la \mathsf{Check}(1^n,\mu^*) \mbox{ for all  } \mu^*\in[n] \\
&  (\mu',b_{\mu^*})\la\cA(1^n,\{\xi_i\}_{i\in[t(n)]},\mu^*) \mbox{ for all }\mu^*\in[n]
\end{array}
]\\
&\leq 
\sum_{\mu^*\in\mathsf{Bad}_n}\Pr[
\begin{array}{ll}
\mathsf{flag}_{\mu^*}=\top 
\end{array}
:
\begin{array}{ll}
& \mathsf{flag}_{\mu^*}\la \mathsf{Check}(1^n,\mu^*) 
\end{array}
]\\
&\leq\abs{\mathsf{Bad}_n}\cdot \frac{1}{n^{10}} \leq\frac{1}{n^9}\leq\frac{1}{2n}
\end{align}
for all sufficiently large $n\in\N$.
Here, in the second inequality, we have used the union bound,
and in the third inequality, we have used \cref{claim:check} and the definition of $\mathsf{Bad}_n$.

We also have
\begin{align}
&\mbox{\cref{ineq:long_2}}\\
&\leq\Pr[
\begin{array}{ll}
\left(\mathsf{flag}_{\mu^*}=\top \wedge b_{\mu^*}=0\right) \mbox{ for some }\mu^*\notin\mathsf{Bad}_n
\end{array}
:
\begin{array}{ll}
& \{\xi_i\}_{i\in[t(n)]}\la U_n^{\otimes t(n)}\\
& \mathsf{flag}_{\mu^*}\la \mathsf{Check}(1^n,\mu^*) \mbox{ for all  } \mu^*\in[n] \\
&  (\mu',b_{\mu^*})\la\cA(1^n,\{\xi_i\}_{i\in[t(n)]},\mu^*) \mbox{ for all }\mu^*\in[n]
\end{array}
]\\
&\leq\Pr[
 b_{\mu^*}=0 \mbox{ for some }\mu^*\notin\mathsf{Bad}_n
:
\begin{array}{ll}
& \{\xi_i\}_{i\in[t(n)]}\la U_n^{\otimes t(n)}\\
&  (\mu',b_{\mu^*})\la\cA(1^n,\{\xi_i\}_{i\in[t(n)]},\mu^*) \mbox{ for all }\mu^*\in[n]
\end{array}
]\label{ineq_Bad}\\
&\leq \left(n-\abs{\mathsf{Bad}_n}\right)\cdot \frac{1}{n^5}\leq \frac{1}{n^4}\leq \frac{1}{2n}
\end{align}
for all sufficiently large $n\in\N$.
Here, in the third inequality, we have used that, for all $\mu^*\notin\mathsf{Bad}_n$, we have
\begin{align}
\Pr[
 b^*=0 
:
\begin{array}{ll}
& \{\xi_i\}_{i\in[t(n)]}\la U_n^{\otimes t(n)}\\
&  (\mu',b^*)\la\cA(1^n,\{\xi_i\}_{i\in[t(n)]},\mu^*) 
\end{array}
]\leq\frac{1}{n^5}.
\end{align}

Therefore, we have
\begin{align}
\Pr[0\la\cB(1^n,\{\xi_i\}_{i\in[t(n)]} ): \{\xi_i\}_{i\in[t(n)]}\la U_n^{\otimes t(n)}]\leq \frac{1}{n}
\end{align}
for all sufficiently large $n\in\N$.
\end{proof}

\begin{proof}[Proof of \cref{claim:check}]
From the construction of $\mathsf{Check}$, we have
\begin{align}
\Pr[\top\la\mathsf{Check}(1^n,\mu)]
=
\left(
\Pr[b^*=1:
\begin{array}{ll}
     & \{\xi_i\}_{i\in [t(n)]}\la U_n^{\otimes t(n)} \\
     & (\mu^*,b^*)\la\cA(1^n,\{\xi_i\}_{i\in [t(n)]},\mu)
\end{array}]
\right)^{n^{25}}.
\end{align}
Therefore, if 
we have
\begin{align}
    \Pr[b^*=1:
\begin{array}{ll}
     & \{\xi_i\}_{i\in [t(n)]}\la U_n^{\otimes t(n)} \\
     & (\mu^*,b^*)\la\cA(1^n,\{\xi_i\}_{i\in [t(n)]},\mu)
\end{array}
]\leq 1-\frac{1}{n^5},
\end{align}
then we have 
\begin{align}
\Pr[\top\la\mathsf{Check}(1^n,\mu)]\leq \left(1-\frac{1}{n^{5}}\right)^{n^{25}}\leq \frac{1}{n^{10}}
\end{align}
for all sufficiently large $n\in\N$.

\end{proof}

\begin{proof}[Proof of \cref{claim:markov_EFID}]
For all $\mu\in[n]$ such that 
\begin{align}
    \mathsf{SD}(\Gen(1^n,\mu), U_n)\geq 1-\negl(n),
\end{align}
we have
\begin{align}
&\Pr\left[ 
\mathsf{SD}(\cD(1^n,(\mu',b)),\cD(1^n,(\mu,0)))\leq 1/n:
\begin{array}{ll}
&\{\xi_i\}_{i\in[t(n)]}\la\cD(1^n,(\mu,0))^{\otimes t(n)} \\
&(\mu',b)\la \cA(1^n,\{\xi_i\}_{i\in[t(n)]},\mu)
\end{array}
\right]\\
&=\Pr\left[ 
b=0:
\begin{array}{ll}
&\{\xi_i\}_{i\in[t(n)]}\la\Gen(1^n,\mu)^{\otimes t(n)} \\
&(\mu',b)\la \cA(1^n,\{\xi_i\}_{i\in[t(n)]},\mu)
\end{array}
\right]\label{eq:A_1}
\end{align}
and
\begin{align}
&\Pr\left[ 
\mathsf{SD}(\cD(1^n,(\mu',b)),\cD(1^n,(\mu,1)))\leq 1/n:
\begin{array}{ll}
&\{\xi_i\}_{i\in[t(n)]}\la\cD(1^n,(\mu,1))^{\otimes t(n)} \\
&(\mu',b)\la \cA(1^n,\{\xi_i\}_{i\in[t(n)]},\mu)
\end{array}
\right]\\
&=\Pr\left[ 
b=1:
\begin{array}{ll}
&\{\xi_i\}_{i\in[t(n)]}\la U_n^{\otimes t(n)} \\
&(\mu',b)\la \cA(1^n,\{\xi_i\}_{i\in[t(n)]},\mu)
\end{array}
\right]\label{eq:A_2}.
\end{align}
Here, in the \cref{eq:A_1}, we have used the fact that $\cD(1^n,(\mu,0))$ is far from $\cD(1^n,(\mu^*,b^*))$ for all $\mu^*\neq \mu$ and $b^*\neq 0$, and in the \cref{eq:A_2}, we have used the fact that $\cD(1^n,(\mu,1)))$ is far from $\cD(1^n,(\mu^*,b^*))$ for all $\mu^*\neq \mu$ and $b^*\neq 1$.

In the following, we show that
for all $(\mu,b)\in[n]\times\bit$, we have
\begin{align}
\Pr\left[ 
\mathsf{SD}(\cD(1^n,(\mu,b)),\cD(1^n,(\mu',b')))\leq 1/n:
\begin{array}{ll}
&\{\xi_i\}_{i\in[t(n)]}\la\cD(1^n,(\mu,b))^{\otimes t(n)} \\
&(\mu',b')\la \cA(1^n,\{\xi_i\}_{i\in[t(n)]},\mu)
\end{array}
\right]\geq 1-\frac{1}{n^{80}}
\end{align}
for infinitely many $n\in\N$.
By combining it with the above argument, we obatin the claim.

To show it,    
let $\mathsf{Good}_{n}$ be a set of $(\mu,b)$ such that
\begin{align}
\Pr\left[ 
\mathsf{SD}(\cD(1^n,(\mu,b)),\cD(1^n,(\mu',b')))\leq 1/n:
\begin{array}{ll}
&\{\xi_i\}_{i\in[t(n)]}\la\cD(1^n,(\mu,b))^{\otimes t(n)} \\
&(\mu',b')\la \cA(1^n,\{\xi_i\}_{i\in[t(n)]},\mu)
\end{array}
\right]\geq 1-\frac{1}{n^{80}}.
\end{align}
We will show that $|\mathsf{Good}|=2n$.

From \cref{ineq_3}, we have
\begin{align}
1-\frac{1}{n^{100}}&\leq \Pr\left[ 
\mathsf{SD}(\cD(1^n,(\mu,b)),\cD(1^n,h))\leq 1/n:
\begin{array}{ll}
&(\mu,b)\la\cS(1^n)\\
&\{\xi_i,\mu\}_{i\in[t(n)]}\la\cD(1^n,(\mu,b))^{\otimes t(n)} \\
& h\la \cA(1^n,\{\xi_i\}_{i\in[t(n)]},\mu)
\end{array}
\right]\\
&=\sum_{(\mu,b)\in\mathsf{Good}_n}\frac{1}{2n}\Pr\left[ 
\mathsf{SD}(\cD(1^n,(\mu,b)),\cD(1^n,(\mu',b')))\leq 1/n:
\begin{array}{ll}
&\{\xi_i\}_{i\in[t(n)]}\la\cD(1^n,(\mu,b))^{\otimes t(n)} \\
&(\mu',b')\la \cA(1^n,\{\xi_i\}_{i\in[t(n)]},\mu)
\end{array}
\right]\\
&+\sum_{(\mu,b)\notin\mathsf{Good}_n}\frac{1}{2n}\Pr\left[ 
\mathsf{SD}(\cD(1^n,(\mu,b)),\cD(1^n,(\mu',b')))\leq 1/n:
\begin{array}{ll}
&\{\xi_i\}_{i\in[t(n)]}\la\cD(1^n,(\mu,b))^{\otimes t(n)} \\
&(\mu',b')\la \cA(1^n,\{\xi_i\}_{i\in[t(n)]},\mu)
\end{array}
\right]\\
&\leq\sum_{(\mu,b)\in\mathsf{Good}_n}\frac{1}{2n}+\sum_{(\mu,b)\notin\mathsf{Good}_n}\frac{1}{2n}\left(1-\frac{1}{n^{80}}\right)\\
&=\abs{\mathsf{Good}_{n}}\frac{1}{2n}+\left(2n-\abs{\mathsf{Good}_n}\right)\frac{1}{2n}\left(1-\frac{1}{n^{80}}\right)
=1-\frac{1}{n^{80}}+ \frac{\abs{\mathsf{Good}_n}}{2n^{81}}
\end{align}
for infinitely many $n\in\N$.

This implies that
\begin{align}
2n-\frac{2}{n^{19}}\leq\abs{\mathsf{Good}_n}
\end{align}
for infinitely many $n\in\N$.
Because $|\mathsf{Good}_n|$ is an integer, this means that
$|\mathsf{Good}_n|=2n$ for infinitely many $n\in\mathbb{N}$.

\end{proof}
\end{proof}

\subsection{Classical Case}
In this subsection, we apply the previous quantum results to the classical case.
\if
\taiga{
In this section, we show that the existence of OWFs is equivalent to the average-case hardness of proper classical distribution learning.  
\cite{FOCS:HirNan23} establishes this equivalence in the case of \emph{improper} learning.  
Their techniques inherently rely on improper learning, and it is unclear how to extend them to the \emph{proper} setting.  
Therefore, we present the classical result for proper learning in this section.
Our technique parallels that used to establish the equivalence between OWPuzzs and the average-case hardness of proper quantum distribution learning.
}
\fi

\begin{definition}[Average-Case Hardness of Proper Classical Distribution Learning~\cite{STOC:KMRRSS94,FOCS:HirNan23}]
We say that average-case hardness of proper classical distribution learning holds if the following holds.
There exist some polynomials $\epsilon$ and $\delta$, a PPT algorithm $\cS$, which takes $1^n$ as input, and outputs $z\in\bit^n$, and a PPT algorithm $\cD$, which takes $1^n$ and $z\in\bit^n$ as input, and outputs $x\in\bit^{\poly(n)}$, such that, for all PPT algorithms $\cA$ and any polynomial $t$,
\begin{align}
\Pr[ 
\mathsf{SD}(\cD(1^n,z),\cD(1^n,h))\leq 1/\epsilon(n):
\begin{array}{ll}
&z\la\cS(1^n)\\
&\{x_i\}_{i\in[t(n)]}\la\cD(1^n,z)^{\otimes t(n)} \\
& h\la \cA(1^n,\{x_i\}_{i\in[t(n)]})
\end{array}
]\leq 1-1/\delta(n)
\end{align}
for all sufficiently large $n\in\N$.
\end{definition}

\begin{definition}[Average-Case Hardness of Improper Classical Distribution Learning~\cite{STOC:KMRRSS94,FOCS:HirNan23}]
We say that average-case hardness of improper classical distribution learning holds if the following holds.
There exist some polynomials $\epsilon$ and $\delta$, a PPT algorithm $\cS$, which takes $1^n$ as input, and outputs $z\in\bit^n$, and a PPT algorithm $\cD$, which takes $1^n$ and $z\in\bit^n$ as input, and outputs $x\in\bit^{\poly(n)}$, such that, for all PPT algorithms $\cA$ and any polynomial $t$, 
\begin{align}
\Pr[ 
\mathsf{SD}(\cD(1^n,z),h)\leq 1/\epsilon(n):
\begin{array}{ll}
&z\la\cS(1^n)\\
&\{x_i\}_{i\in[t(n)]}\la\cD(1^n,z)^{\otimes t(n)} \\
& h\la \cA(1^n,\{x_i\}_{i\in[t(n)]})
\end{array}
]\leq 1-1/\delta(n)
\end{align}
for all sufficiently large $n\in\N$.
Here, $h$ is a description of a PPT algorithm.
\end{definition}

\if0
\begin{definition}[Average-Case Hardness of Classical Maximum Likelihood]
We say that average-case hardness of classical maximum likelihood holds if the following holds.
There exist some polynomials $\epsilon$ and $\delta$, a PPT algorithm $\cS$, which takes $1^n$ as input, and outputs $z\in\bit^n$, and a PPT algorithm $\cD$, which takes $1^n$ and $z\in\bit^n$ as input, and outputs $x\in\bit^{\poly(n)}$, such that, for all PPT algorithms $\cA$
\begin{align}
\Pr[ 
\frac{\max_{z\in \bit^n}\{\Pr[x \la\cD(1^n,z)]\} }{\Pr[x\la \cD(1^n,h) ]}\leq 2^{\epsilon(n)}:
\begin{array}{ll}
&z\la\cS(1^n)\\
&x\la\cD(1^n,z) \\
& h\la \cA(1^n,x)
\end{array}
]\leq 1-1/\delta(n)-2^{n-\epsilon(n)}
\end{align}
for all sufficiently large $n\in\N$.
\end{definition}
\fi

\begin{theorem}\label{thm:OWF}
The following four conditions are equivalent:
\begin{enumerate}
\item OWFs exist.
\item Classical OWPuzzs exist.
\item Average-case hardness of proper classical distribution learning holds.
\item Average-case hardness of improper classical distribution learning holds.
\end{enumerate}
\end{theorem}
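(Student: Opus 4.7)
The plan is to close the equivalences into a cycle. Since $1\Leftrightarrow 2$ is given by \cref{thm:classical_OWPuzz} and $4\Rightarrow 3$ is immediate (a proper hypothesis is a special case of an improper one, so an improper learner can always simulate a proper learner, hence hardness of improper learning descends to the proper case), it suffices to prove $3\Rightarrow 2$ and $1\Rightarrow 4$. Together these complete the cycle $1\Rightarrow 4\Rightarrow 3\Rightarrow 2\Rightarrow 1$.

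For $3\Rightarrow 2$, I plan to replay the argument of \cref{lem:OWPuzz} verbatim in the classical setting. Given the hard-on-average pair $(\cS,\cD)$ of PPT algorithms from assumption $3$, define a classical OWPuzz by having $\Samp(1^n)$ sample $z\gets\cS(1^n)$ and $\{x_i\}_{i\in[t(n)]}\gets\cD(1^n,z)^{\otimes t(n)}$ with $t(n)=16\epsilon(n)^2 n$, outputting $\puzz\coloneqq\{x_i\}_{i\in[t(n)]}$ and $\ans\coloneqq z$, while the unbounded $\Vrfy$ computes a maximum-likelihood estimate $z^*$ and accepts iff $\mathsf{SD}(\cD(1^n,z^*),\cD(1^n,\ans))\le 3/(2\epsilon(n))$. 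The correctness and security analyses in \cref{lem:OWPuzz} use only Hoeffding concentration and the triangle inequality for statistical distance, both of which are model-agnostic, so the proof transfers with QPT replaced by PPT throughout. The resulting $(1-1/(9\delta(n)^2))$-secure classical OWPuzz amplifies to full security via \cref{thm:amplification_c_OWPuzz}.

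For $1\Rightarrow 4$, I plan to use that OWFs imply pseudorandom functions (PRFs)~\cite{JACM:GolGolMic86}. Fix a PRF family $\{F_k\}_k$ and define $\cS(1^n)$ to sample $k\gets\bit^n$ and $\cD(1^n,k)$ to sample $x\gets\bit^n$ and output $(x,F_k(x))$. Suppose for contradiction that an improper PPT learner $\cA$, given $t(n)$ samples from $\cD(1^n,k)$, outputs with non-negligible probability a PPT description $h$ satisfying $\mathsf{SD}(\cD(1^n,k),h)\le 1/\epsilon(n)$. Then a PRF distinguisher $\cB$ with oracle $O\in\{F_k,\text{random}\}$ simulates the learner's samples by querying $O$ on fresh random inputs, feeds them to $\cA$ to obtain $h$, draws one fresh sample $(x^*,y^*)$ from $h$, queries $O(x^*)$, and outputs $1$ iff $y^*=O(x^*)$. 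When $O=F_k$, the event $y^*=O(x^*)$ holds with probability at least $1-1/\epsilon(n)-o(1)$ by the SD bound, while when $O$ is random and $x^*$ was not queried before, the event holds with probability at most $2^{-n}$. The resulting non-negligible distinguishing advantage contradicts PRF security; this is essentially the strategy of \cite{FOCS:HirNan23}, which can be cited directly.

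The main obstacle is parameter bookkeeping rather than conceptual novelty: both directions mirror arguments already in the paper or the literature, so the care lies in verifying that every quantitative bound --- the $t(n)=16\epsilon(n)^2 n$ sample complexity, the security gap amplified in \cref{thm:amplification_c_OWPuzz}, and the PRF-distinguishing advantage --- survives the translation and respects the ``for all sufficiently large $n$'' quantifier appearing in the classical definitions. The single substantive asymmetry with the quantum \cref{thm:OWPuzz} is precisely the availability of PRFs in $1\Rightarrow 4$, which is exactly why the classical theorem pins down all four notions at once whereas the quantum analog only captures proper hardness.
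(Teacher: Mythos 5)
Your decomposition matches the paper's exactly: both prove the cycle $1\Rightarrow 4\Rightarrow 3\Rightarrow 2\Rightarrow 1$ by citing \cite{FOCS:HirNan23} for the OWF-to-improper direction, noting $4\Rightarrow 3$ is immediate, transcribing \cref{lem:OWPuzz} into the classical setting for $3\Rightarrow 2$, and invoking \cref{thm:classical_OWPuzz} for $2\Rightarrow 1$. The PRF sketch you give for $1\Rightarrow 4$ elides the case where $x^*$ collides with a previously queried point (a memorizing $h$ would then pass the check even for a random oracle), but since you defer to \cite{FOCS:HirNan23} for the rigorous version --- exactly as the paper does --- this is not a gap in the proof itself.
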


\if0
\begin{remark}
\color{red}
Note that it is open if we can show the equivalence between OWPuzzs and average case hardness of improper quantum distribution learning.
\cite{FOCS:HirNan23} shows the equivalence between OWFs and average-case hardness of improper classical distribution learning.
They show the hardness of improper learning from OWFs based on the equivalence between OWFs and PRFs.
In the quantum case, it is an open problem how to show the equivalence between OWPuzzs and a quantum analog of PRFs.
Therefore, we do not know how to show the hardness of improper quantum distribution learning from OWPuzzs.
\color{black}
\mor{Ueto onaji?}
\end{remark}
\fi

Because its proof is similar to that of \cref{thm:OWPuzz}, we give only a sketch.
\begin{proof}[Proof sketch of \cref{thm:OWF}]
\cite{FOCS:HirNan23} shows the equivalence between $(1)$ and $(4)$.
$(3)\Leftarrow (4)$ directly follows.
$(2)\Leftarrow (3) $ follows in the same way as \cref{lem:OWPuzz}.
The difference is that we consider classical OWPuzzs instead of OWPuzzs.
$(1)\Leftarrow (2)$ follows from \cref{thm:classical_OWPuzz}.
\end{proof}
\section{Agnostic Distribution Learning with Respect to KL Divergence}\label{sec:KL_divergence}
\subsection{Quantum Case}\label{sec:quantum_KL}
In this section, we introduce hardness of agnostic quantum distribution learning with respect to KL divergence based on \cite{ML:AW92}.
Then, we show that $\mathbf{BQP}\neq \mathbf{PP}$ holds if and only if hardness of agnostic quantum distribution learning with respect to KL divergence holds.
\begin{definition}[Hardness of Agnostic Quantum Distribution Learning with Respect to KL Divergence]\label{def:Learn_KL}
We say that hardness of agnostic quantum distribution learning with respect to KL divergence holds if the following holds:
There exist some polynomials $\epsilon$, $\delta$ and $p$, and a QPT algorithm $\cD$, which takes $1^n$ and $z\in\bit^n$ as input, and outputs $x\in\bit^{p(n)}$ such that the following two conditions are satisfied:

\begin{enumerate}
\item For all $n\in\N$, $z\in\bit^n$, and $x\in\bit^{p(n)}$, we have
\begin{align}
\Pr[x\la\cD(1^n,z)]\neq 0.
\end{align}
\item 
For any QPT algorithm $\cA$ and any polynomial $t$, there exists an algorithm $\cT$, which takes $1^n$ as input and outputs $x\in\bit^{p(n)}$ such that 
\begin{align}
\label{ineq:q_KL_learn}
\Pr[ 
D_{KL}(\cT(1^n)\,\|\,\cD(1^n,h))\leq \mathsf{Opt}_n+1/\epsilon(n):
\begin{array}{ll}
&\{x_i\}_{i\in[t(n)]}\la\cT(1^n)^{\otimes t(n)}\\
& h\la \cA(1^n,\{x_i\}_{i\in[t(n)]})
\end{array}
]\leq 1-1/\delta(n)
\end{align}
for infinitely many $n\in\N$.
Here, $\mathsf{Opt}_n\seteq \min_{a\in\bit^n}\{D_{KL}(\cT(1^n)\,\|\,\cD(1^n,a))\}$.
\end{enumerate}
\end{definition}

\begin{definition}[Worst-Case Hardness of Quantum Maximum Likelihood Estimation]\label{def:QML}
We say that worst-case hardness of quantum maximum likelihood estimation holds if the following holds.
There exist some polynomials $\epsilon$, $\delta$ and $p$, and a QPT algorithm $\cD$, which takes $1^n$ and $z\in\bit^n$ as input, and outputs $x\in\bit^{p(n)}$ such that, for any QPT algorithm $\cA$, there exists $x\in\bit^{p(n)}$ with $\max_{a\in\bit^n}\Pr[x\la\cD(1^n,a)]>0$ such that 
\begin{align}\label{ineq:QML}
\Pr[ \frac{\max_{a\in\bit^n}\Pr[x\la\cD(1^n,a)]}{\Pr[x\la\cD(1^n,h)]} \leq 2^{1/\epsilon(n)} : h\la\cA(1^n,x)]\leq 1-1/\delta(n)
\end{align}
for infinitely many $n\in\N$.
\end{definition}
\begin{remark}
In the above definitions, we require that \cref{ineq:QML,ineq:q_KL_learn} hold \emph{for infinitely many security parameters} instead of \emph{for all sufficiently large security parameters}.
This is because we want to show its equivalence to $\mathbf{PP}\neq \mathbf{BQP}$ instead of $\mathbf{PP}\neq \mathbf{ioBQP}$.
If we use the ``for all sufficiently large security parameters'' security, we obtain the equivalence to
$\mathbf{PP}\neq \mathbf{ioBQP}$
in a similar proof.
\end{remark}

\begin{theorem}\label{thm:PP_completeness}
The following three conditions are equivalent:
\begin{enumerate}
\item $\mathbf{PP}\neq \mathbf{BQP}$.
\item Hardness of agnostic quantum distribution learning with respect to KL divergence holds.
\item Worst-case hardness of quantum maximum likelihood estimation holds.
\end{enumerate}
\end{theorem}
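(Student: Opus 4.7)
}
The plan is to establish the three-way equivalence by proving a cycle that uses the worst-case hardness of QML as the central pivot, and to invoke the known half-reductions from \cite{HH24} where applicable. Concretely, I would prove $(3) \Rightarrow (1) \Rightarrow (3)$ to obtain $(1) \Leftrightarrow (3)$, and then prove $(2) \Leftrightarrow (3)$ separately. For $(3) \Rightarrow (1)$, I would simply cite the result of \cite{HH24}.

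For $(1) \Rightarrow (3)$ I would invoke Aaronson's theorem $\mathbf{PP} = \mathbf{PostBQP}$. Given any language $\cL \in \mathbf{PP}$, let $\cM$ be the associated PostBQP machine producing $(b, b^*)$. Define a QPT algorithm $\cD$ on input $(x, c)$ that runs $(b, b^*) \la \cM(x)$ and outputs $x$ if $b = c$ and $b^* = 1$, and a padding symbol $\bot$ otherwise. A short calculation using the PostBQP acceptance condition shows that for $x \in \cL$ we have $\Pr[x \la \cD(x, 1)] \geq 2\,\Pr[x \la \cD(x, 0)]$, and the reverse inequality for $x \notin \cL$. Hence any QPT QML solver, applied to $\cD$ on input $x$, must output a hypothesis $(x, c^*)$ with $c^* \in \{0, 1\}$ revealing membership in $\cL$; this would yield $\mathbf{PP} \subseteq \mathbf{BQP}$. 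Taking the contrapositive gives worst-case QML hardness from $\mathbf{PP} \neq \mathbf{BQP}$. A minor technical point is to calibrate the polynomials $\epsilon, \delta$ so that the constant-factor likelihood gap survives the usual gap amplification.

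For $(3) \Rightarrow (2)$, I would show that QML hardness for $\cD$ yields agnostic hardness for a mild modification $\cD_{1/2}$ that outputs a sample from $\cD(z)$ with probability $1/2$ and a uniform string otherwise, ensuring the nowhere-zero condition (1) of \cref{def:Learn_KL}. Given a QPT agnostic learner $\cA$, define a QPT QML solver $\cA'$ that, on input $x$, feeds $\cA$ the degenerate sample sequence $x_1 = \cdots = x_t = x$ and returns whatever $\cA$ outputs. Let $x^*$ be the hard instance guaranteed by QML hardness and take $\cT = \delta_{x^*}$. Since $D_{KL}(\delta_{x^*} \,\|\, \cD_{1/2}(h)) = -\log\Pr[x^* \la \cD_{1/2}(h)]$ and $\mathsf{Opt}_n = -\log \max_a \Pr[x^* \la \cD_{1/2}(a)]$, the agnostic-learning success condition is exactly the QML likelihood-ratio condition, and $\cA$ must fail for $\cT = \delta_{x^*}$. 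For $(2) \Rightarrow (3)$, I would go the other way: given agnostic-hard $\cD$, consider the $t$-fold product distribution $\cD^\star(h) \coloneqq \cD(h)^{\otimes t}$ for a sufficiently large polynomial $t$. If QML were easy for $\cD^\star$, a QPT solver could find an MLE $h$ on input $(x_1, \ldots, x_t) \la \cT^{\otimes t}$; a standard Hoeffding-type concentration argument (applied after bounding $|\log \Pr[x \la \cD_{1/2}(h)]| \leq O(n)$ via the $1/2$-mixing) then shows the empirical log-likelihood $\tfrac{1}{t} \sum_i \log \Pr[x_i \la \cD_{1/2}(h)]$ is uniformly close over $h \in \bit^n$ to its expectation $-H(\cT) - D_{KL}(\cT \,\|\, \cD_{1/2}(h))$, so the MLE achieves KL within $1/\epsilon$ of the optimum. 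This contradicts agnostic hardness.

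I expect the main obstacle to be the quantitative coordination in the $(2) \Rightarrow (3)$ reduction: the union-bound concentration over all $2^n$ hypotheses forces $t$ to be comfortably larger than $n \cdot \epsilon(n)^2$, and one must check that the target parameters $(\epsilon, \delta, p)$ for QML hardness in \cref{def:QML} can be matched consistently with those of \cref{def:Learn_KL} while respecting the ``infinitely many $n$'' quantifier (as opposed to ``all sufficiently large $n$''). A secondary subtlety is ensuring that the $\cD_{1/2}$-style mixing does not degrade the optimal KL value $\mathsf{Opt}_n$ by more than the slack $1/\epsilon(n)$, which just requires the mixing weight to be an appropriate polynomial rather than $1/2$ if needed.
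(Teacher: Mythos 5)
Your overall plan — (1) $\Leftrightarrow$ (3) via the $\mathbf{PostBQP}$ machine construction and \cite{HH24}, and (2) $\Leftrightarrow$ (3) via a point-mass/mixing argument in one direction and a $t$-fold-product/Hoeffding argument in the other — is exactly the decomposition the paper uses, and the $(1)\Rightarrow(3)$ construction with $\cD(x,c)$ postselecting on $b^*=1$ is the paper's argument verbatim. The one place where the proposal has a genuine gap is the choice of mixing weight in $(3)\Rightarrow(2)$. You first propose $\cD_{1/2}$ (mixing weight $1/2$) and then, anticipating that this may degrade $\mathsf{Opt}_n$ too much, suggest the fix is to use ``an appropriate polynomial rather than $1/2$.'' This is not sufficient. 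Writing the mixed sampler as $\cD_w(z) = (1-w)\cD(z) + w\cdot U_{p(n)}$, the calculation that lets you transfer a likelihood-ratio guarantee for $\cD_w$ back to a likelihood-ratio guarantee for $\cD$ requires
\begin{align}
\left(2^{-1/2\epsilon(n)}-2^{-1/\epsilon(n)}\right)\max_{a}\Pr[x\la\cD(1^n,a)] \ \gtrsim\ w\cdot 2^{-p(n)},
\end{align}
and since $\max_a\Pr[x\la\cD(1^n,a)]$ can be as small as $2^{-m(n)}$ for the polynomial $m$ bounding $\cD$'s output probabilities from below (with $m(n)\geq p(n)$), one is forced to take $w \lesssim 2^{p(n)-m(n)}\cdot\frac{1}{\poly(n)}$, which is exponentially small in general. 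The paper handles this with the explicit choice $C_n = (2^{-1/2\epsilon(n)}-2^{-1/\epsilon(n)})\cdot 2^{-(m(n)+1-p(n))}$, and with a $1/\poly(n)$ weight the additive error from the uniform component would swamp the multiplicative likelihood gap whenever $\max_a\Pr[x\la\cD(1^n,a)]$ is near its minimum, so the intended transfer step would simply fail.

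A smaller point: in your $(2)\Rightarrow(3)$ sketch you invoke $\cD_{1/2}$ again when bounding $|\log\Pr[x\la\cD_{1/2}(h)]|$, but no mixing is needed there — in that direction you start from a $\cD$ that already satisfies the nowhere-zero condition (condition (1) of the agnostic KL definition), so $|\log\Pr[x\la\cD(1^n,h)]|\leq m(n)$ for a polynomial $m$ directly, and it is this bound, not an $O(n)$ bound, that feeds into the Hoeffding/union-bound step over all $2^n$ hypotheses.
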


\cref{thm:PP_completeness} follows from the following \cref{lem:PP_hardness,lem:PP_easiness,lem:KL_learn_to_QML,lem:QML_to_KL_Learn}.
\begin{lemma}\label{lem:PP_hardness}
If $\mathbf{PP}\neq \mathbf{BQP}$, then worst-case hardness of quantum maximum likelihood estimation holds.
\end{lemma}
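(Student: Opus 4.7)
I would exploit the characterization $\mathbf{PP}=\mathbf{PostBQP}$ to embed a language $\cL\in\mathbf{PP}\setminus\mathbf{BQP}$ (which exists by assumption, since $\mathbf{BQP}\subseteq\mathbf{PP}$) into a worst-case hard QML instance. Fix a $\mathbf{PostBQP}$ machine $\cM$ for $\cL$: on input $x$, $\cM$ outputs $(b,b^*)\in\bit\times\bit$ with $\Pr[b^*=1]\ge 2^{-p(\abs{x})}$, and with $\Pr[b=1\mid b^*=1]\ge 2/3$ iff $x\in\cL$ while $\Pr[b=1\mid b^*=1]\le 1/3$ iff $x\notin\cL$. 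Following the technical overview, introduce the auxiliary QPT algorithm $\cM^*(x,c)$ that runs $(b,b^*)\la\cM(x)$ and outputs $x$ when $(b,b^*)=(c,1)$, and a reserved failure symbol $\bot$ otherwise. Take $\cD(1^n,z)$ to parse $z=(x,c)\in\bit^{n-1}\times\bit$ and output $\cM^*(x,c)$; set the witnessing parameters to $\epsilon(n)\equiv 2$, $\delta(n)\equiv 3$, and $p(n)\equiv n-1$.

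\textbf{Reduction.} For any target output $x^*\in\bit^{n-1}$, a maximizer $z^*$ of $\Pr[x^*\la\cD(1^n,z)]$ must have the form $(x^*,c)$, since $\cM^*(y,c)$ can output $x^*$ only when $y=x^*$; moreover the value $c^*:=\chi_{\cL}(x^*)$ strictly dominates the other bit by at least a factor $2$ thanks to the $\mathbf{PostBQP}$ gap. Therefore every $h$ achieving likelihood ratio to the maximum at most $2^{1/\epsilon(n)}=\sqrt{2}<2$ must equal $(x^*,c^*)$, so $\chi_\cL(x^*)$ can be read off from such an $h$. Given any candidate QPT QML solver $\cA$, construct a uniform QPT decider $\cM_\cL(x)$ that sets $n:=\abs{x}+1$, runs $h\la\cA(1^n,x)$, parses $h=(x',c)$, and outputs $c$ (returning $0$ on a parse failure). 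Contrapositively, suppose $\cA$ successfully achieves ratio $\le\sqrt{2}$ with probability $>1-1/\delta(n)=2/3$ on every $x^*\in\bit^{p(n)}$ for all but finitely many $n$; then $\cM_\cL$ decides $\cL$ with probability $\ge 2/3$ on every input of all but finitely many lengths, and hard-coding the finitely many short-input answers into $\cM_\cL$ yields a uniform $\mathbf{BQP}$ decider for $\cL$, contradicting $\cL\notin\mathbf{BQP}$. Hence for every QPT $\cA$ there exist infinitely many $n$ together with an input $x^*\in\bit^{p(n)}$ on which the QML failure inequality of \cref{def:QML} holds, which is exactly worst-case hardness of QML.

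\textbf{Main obstacle.} The delicate point is matching the ``for infinitely many $n$'' quantifier of \cref{def:QML} with the unstructured ``there exists an erroneous input'' quantifier coming from $\cL\notin\mathbf{BQP}$. Concretely, one must verify that an adversary $\cA$ that succeeds on every $x^*\in\bit^{p(n)}$ for all but finitely many $n$ can, after absorbing the finitely many short input lengths via a uniform lookup table, be converted into a uniform $\mathbf{BQP}$ decider for $\cL$; this uses the standard closure of $\mathbf{BQP}$ under finite modification. The remaining bookkeeping -- that $\Pr[b^*=1]>0$ guarantees $\max_{a\in\bit^n}\Pr[x^*\la\cD(1^n,a)]>0$ so that every $x^*\in\bit^{p(n)}$ is an admissible QML input, and that degenerate cases such as $\Pr[b=1-c^*,b^*=1]=0$ only enlarge the likelihood gap -- is routine and does not affect the reduction.
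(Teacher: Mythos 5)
Your proposal is correct and essentially mirrors the paper's proof: both fix a $\mathbf{PostBQP}=\mathbf{PP}$ machine for a language $\cL$, wrap it in an algorithm $\cM^*(x,c)$ that emits $x$ exactly when postselection succeeds and $b=c$, observe that the two candidates $z=(x,0)$ and $z=(x,1)$ have a likelihood ratio bounded away from $1$ by the acceptance gap, and conclude that an efficient maximum-likelihood solver would read off $\chi_{\cL}(x)$ and place $\cL$ in $\mathbf{BQP}$. The only cosmetic differences are that the paper first amplifies the $\mathbf{PostBQP}$ thresholds to $3/4,1/4$ (gap $\geq 3$) and takes $2^{1/\epsilon(n)}=2$, whereas you keep the default $2/3,1/3$ thresholds (gap $\geq 2$) and choose $\epsilon(n)\equiv 2$ so that $2^{1/\epsilon(n)}=\sqrt{2}<2$; both parameterizations work, and your handling of the quantifier bookkeeping (finite modification closure of $\mathbf{BQP}$, the $\bot$ padding, and the $\max_a>0$ admissibility check) matches what the paper leaves implicit.
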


\begin{lemma}\label{lem:PP_easiness}
If $\mathbf{PP}=\mathbf{BQP}$, then worst-case hardness of quantum maximum likelihood estimation does not hold.
\end{lemma}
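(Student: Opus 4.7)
The plan is to show the contrapositive explicitly: assuming $\mathbf{PP}=\mathbf{BQP}$, I construct, for every QPT sampler $\cD$ and every polynomials $\epsilon,\delta,p$ from \cref{def:QML}, a single QPT algorithm $\cA$ such that, on every input $x\in\bit^{p(n)}$ with $\max_{a\in\bit^n}\Pr[x\la\cD(1^n,a)]>0$, $\cA$ outputs $h$ satisfying the ratio bound of \cref{ineq:QML} with probability $\geq 1-1/\delta(n)$, for all sufficiently large $n$. This directly falsifies the ``for infinitely many $n$'' clause of \cref{def:QML}.

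The algorithmic engine comes from two consequences of $\mathbf{PP}=\mathbf{BQP}$. First, $\mathbf{PP}=\mathbf{PostBQP}$ \cite{myAaronson05}, combined with a Stockmeyer-style approximation (the quantum analogue of \cref{claim:stockmeyer} and \cref{thm:estimate}), gives a PPT algorithm with $\mathbf{PP}$ oracle returning a $(1\pm 1/q(n))$-multiplicative estimate of $\Pr[x\la\cD(1^n,z)]$ with high probability, for any polynomial $q$; under the assumption this is itself a QPT algorithm. Second, since $\mathbf{BQP}$ is closed under Turing reductions, $\mathbf{NP}^{\mathbf{PP}}\subseteq \mathbf{BQP}^{\mathbf{BQP}}=\mathbf{BQP}$, so any predicate of the form ``$\exists\,a\in\bit^n$ with $\Pr[x\la\cD(1^n,a)]\geq t$'' lies in $\mathbf{BQP}$.

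The algorithm $\cA$ then proceeds in two phases. Phase one estimates $p^*:=\max_{a\in\bit^n}\Pr[x\la\cD(1^n,a)]$ up to a factor $(1+1/q(n))$ by binary searching over polynomially many dyadic thresholds, each step being an $\mathbf{NP}^{\mathbf{PP}}$ query. Phase two reconstructs a witness $h=h_1\cdots h_n$ greedily, bit by bit: having committed to a prefix $h_{\leq i}$, $\cA$ asks, for each $b\in\bit$, whether some completion $h_{>i+1}$ makes $\Pr[x\la\cD(1^n,h_{\leq i}\,b\,h_{>i+1})]\geq p^*/(1+1/q(n))$, and commits to any $b$ whose answer is yes; such a $b$ always exists by induction. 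After $n$ rounds, the returned $h$ satisfies $\Pr[x\la\cD(1^n,h)]\geq p^*/(1+1/q(n))$, which for $q$ chosen large enough (say $q(n)=n\epsilon(n)$) comfortably beats $2^{-1/\epsilon(n)}\,p^*$.

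The main obstacle I expect is the bookkeeping of bounded error across the $\poly(n)$ $\mathbf{BQP}$ sub-queries, since the greedy reconstruction requires these queries to be answered consistently with respect to a single underlying ``ground truth'' estimator. I would handle this by amplifying each individual sub-query's success probability to $1-2^{-n^2}$ via standard repetition and union-bounding over all queries, so that with overall probability at least $1-1/\delta(n)$ every threshold and witness query is answered correctly, yielding a valid $h$. A minor subtlety is that the Stockmeyer-style approximator is randomized: fixing a single random tape in advance and reusing it across all queries ensures internal consistency whenever the same probability needs to be estimated multiple times along the reconstruction.
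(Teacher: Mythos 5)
Your strategy — on every $x$ with $\max_a\Pr[x\la\cD(1^n,a)]>0$, output $h$ with $\Pr[x\la\cD(1^n,h)]\geq 2^{-1/\epsilon(n)}\max_a\Pr[x\la\cD(1^n,a)]$ by binary-searching the optimum and then reconstructing a near-optimal witness bit by bit — is sound, and it is the same route the paper takes by deferring to~\cite{HH24} Lemma~4.2, which (as the paper notes in the proof sketch of \cref{thm:NP_completeness}) rests on \cref{lem:Extrapolation_PP}.

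There is, however, a genuine gap in your complexity-class step. You write that ``since $\mathbf{BQP}$ is closed under Turing reductions, $\mathbf{NP}^{\mathbf{PP}}\subseteq\mathbf{BQP}^{\mathbf{BQP}}=\mathbf{BQP}$.'' Self-lowness of $\mathbf{BQP}$ gives the equality $\mathbf{BQP}^{\mathbf{BQP}}=\mathbf{BQP}$, but it is silent on why $\mathbf{NP}^{\mathbf{PP}}$ should fit inside $\mathbf{BQP}^{\mathbf{BQP}}$. The unrelativized containment $\mathbf{NP}\subseteq\mathbf{PP}=\mathbf{BQP}$ does not upgrade to $\mathbf{NP}^{\mathcal{O}}\subseteq\mathbf{BQP}^{\mathcal{O}}$ for a general oracle $\mathcal{O}$, and $\mathbf{NP}^{\mathbf{PP}}\subseteq\mathbf{PP}$ is not known unconditionally (it lives at the second level of the counting hierarchy). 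The containment you want is nevertheless true, but the missing ingredient is the Fortnow--Rogers lowness theorem $\mathbf{PP}^{\mathbf{BQP}}=\mathbf{PP}$, proved in the same paper~\cite{FR99} that gives \cref{lem:Extrapolation_PP}: then $\mathbf{NP}^{\mathbf{PP}}\subseteq\mathbf{PP}^{\mathbf{PP}}$ (the inclusion $\mathbf{NP}\subseteq\mathbf{PP}$ does relativize) $=\mathbf{PP}^{\mathbf{BQP}}=\mathbf{PP}=\mathbf{BQP}$. This lowness fact is also what the paper implicitly invokes in \cref{cor:PP_learn_quantum} when asserting $\Sigma_2^{\mathbf{PP}}\subseteq\mathbf{BQP}$. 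Without it, your greedy queries of the form ``$\exists\,a_{>i}:\Pr[x\la\cD(1^n,h_{\leq i}\,b\,a_{>i})]\geq t$'' are not known to be decidable in QPT and the whole reconstruction is stranded.

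Secondarily, reaching for a Stockmeyer-style multiplicative approximator is what creates the error-accumulation and shared-random-tape bookkeeping you flag at the end, and you can dispense with all of it. Under $\mathbf{PP}=\mathbf{BQP}$ you already have \cref{lem:Extrapolation_PP}, which computes $\Pr[x\la\cD(1^n,a)]$ \emph{exactly and deterministically} in $\mathbf{P}^{\mathbf{PP}}$; the threshold predicate then becomes a deterministic $\mathbf{NP}^{\mathbf{PP}}$ language, the binary search recovers $p^*$ exactly (a dyadic rational on polynomially many bits), the greedy induction needs no slack factor $(1+1/q(n))$, and the consistency worry evaporates. This is consistent with how the paper describes the~\cite{HH24} argument.
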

\cref{lem:PP_easiness} follows from Lemma 4.2 in \cite{HH24}.
Therefore, we omit the proof and refer it to \cite{HH24}.

\begin{lemma}\label{lem:QML_to_KL_Learn}
If worst-case hardness of quantum maximum likelihood estimation holds, then hardness of agnostic quantum distribution learning with respect to KL divergence holds.
\end{lemma}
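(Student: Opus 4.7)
The plan is to turn the KL-divergence minimization into a maximum-likelihood problem by taking $\cT$ to be a point mass, and to smooth $\cD^*$ so as to satisfy the positivity requirement of \cref{def:Learn_KL} without erasing the structure that QML hardness exploits. Given a QPT algorithm $\cD^*$ witnessing worst-case QML hardness with polynomials $\epsilon_{QML},\delta_{QML},p_{QML}$, fix a polynomial $r$ such that any nonzero output probability of $\cD^*$ is at least $2^{-r(n)}$ (this exists because $\cD^*$ is QPT and its output probabilities are rationals with polynomially bounded denominators). Set $q(n)\seteq p_{QML}(n)+r(n)+n^2$ and define a new QPT algorithm $\cD(1^n,z)$ that with probability $1/2$ outputs $(\cD^*(1^n,z),0^{q(n)-p_{QML}(n)})$ and with probability $1/2$ outputs a uniformly random $y\in\bit^{q(n)}$. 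Every output then has probability at least $2^{-q(n)-1}$, so positivity holds with $p\seteq q$. I will argue that $\cD$ witnesses KL hardness with parameters $\epsilon_{KL}\seteq 2\epsilon_{QML}$ and $\delta_{KL}\seteq \delta_{QML}$.

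I would proceed by contraposition. Suppose KL hardness fails for this $(\cD,\epsilon_{KL},\delta_{KL},q)$; then there exist a QPT algorithm $\cA^*$ and a polynomial $t$ such that for every algorithm $\cT$, the hypothesis $h\la \cA^*(1^n,\{y_i\}_{i\in[t(n)]})$ drawn from $\{y_i\}_{i\in[t(n)]}\la \cT^{\otimes t(n)}$ satisfies $D_{KL}(\cT\,\|\,\cD(1^n,h))\le \mathsf{Opt}_n+1/\epsilon_{KL}(n)$ with probability greater than $1-1/\delta_{KL}(n)$ for all sufficiently large $n$. Define a QPT adversary $\cA(1^n,x)$ for QML that forms $y\seteq(x,0^{q(n)-p_{QML}(n)})$ and returns $\cA^*(1^n,(y,\ldots,y))$ with $t(n)$ copies. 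Applying the above guarantee to the point-mass distribution $\cT_y$, for which $D_{KL}(\cT_y\,\|\,\cD(1^n,h))=-\log \Pr[y\la \cD(1^n,h)]$ and $\mathsf{Opt}_n=-\log\max_{a}\Pr[y\la \cD(1^n,a)]$, gives $\max_{a}\Pr[y\la \cD(1^n,a)]/\Pr[y\la \cD(1^n,h)]\le 2^{1/\epsilon_{KL}(n)}$ with probability greater than $1-1/\delta_{KL}(n)$.

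The remaining step is to pull this ratio bound back from $\cD$ to $\cD^*$. Writing $M^*\seteq \max_a\Pr[x\la\cD^*(1^n,a)]$ and $q_h\seteq \Pr[x\la\cD^*(1^n,h)]$, the identity $\Pr[y\la\cD(1^n,a)]=\Pr[x\la\cD^*(1^n,a)]/2+2^{-q(n)-1}$ for $y=(x,0^{q(n)-p_{QML}(n)})$ rearranges the previous inequality into $q_h\ge M^*/2^{1/\epsilon_{KL}(n)}-2^{-q(n)}$. Whenever $M^*>0$ one has $M^*\ge 2^{-r(n)}$, while $2^{-q(n)}\le 2^{-r(n)-n^2}$, so the additive error is a $2^{-\Omega(n^2)}$ fraction of the leading term. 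A short calculation using $\log(1+\alpha)\le \alpha/\ln 2$ for small $\alpha$ then gives $M^*/q_h\le 2^{1/\epsilon_{KL}(n)+o(1)}\le 2^{1/\epsilon_{QML}(n)}$ for all sufficiently large $n$. Thus the constructed $\cA$ succeeds on every $x\in\bit^{p_{QML}(n)}$ with $M^*>0$ for all sufficiently large $n$, contradicting the QML hardness of $\cD^*$.

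The main obstacle is the regime where $M^*$ is very small: there the uniform-noise term of $\cD$ would swamp the structure of $\cD^*$ and the KL guarantee could become vacuous. The resolution rests on the quantitative fact that a QPT algorithm's positive output probabilities are always bounded below by $2^{-\poly(n)}$, which lets one choose $q(n)$ far enough beyond $r(n)$ for the noise to fall below $M^*$ by a super-polynomial factor. A secondary bookkeeping point is that the reduction incurs a small multiplicative slack in the exponent (the $o(1)$ above), which is absorbed by choosing $\epsilon_{KL}$ to be twice $\epsilon_{QML}$; the probability parameter $\delta$ passes through unchanged.
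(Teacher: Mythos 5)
Your proof is correct and follows the same strategy as the paper's: smooth the QML-hard sampler with a uniform-noise component so that every outcome has nonzero probability, feed the hypothesized KL learner point-mass target distributions so that the KL guarantee collapses to a $\max_a \Pr[x\la\cD(1^n,a)]/\Pr[x\la\cD(1^n,h)]$ ratio bound, and then transfer that bound back from the smoothed distribution to the original one using the lower bound on positive QPT output probabilities. The only divergence is in how the smoothing is arranged: the paper keeps the output length fixed and mixes with a carefully tuned probability $C_n = (2^{-1/2\epsilon(n)}-2^{-1/\epsilon(n)})\cdot 2^{-(m(n)+1-p(n))}$ that is a $1/\mathrm{poly}$ fraction of the minimum positive probability, whereas you pad the output to length $q(n)=p_{QML}(n)+r(n)+n^2$ and mix with probability $1/2$ so the per-string noise $2^{-q(n)}$ is a $2^{-n^2}$ fraction of that minimum — a bookkeeping variant that costs you the same factor-two slack in $\epsilon$ and is equally valid.
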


\begin{lemma}\label{lem:KL_learn_to_QML}
If hardness of agnostic quantum distribution learning with respect to KL divergence holds, then worst-case hardness of quantum maximum likelihood estimation holds.
\end{lemma}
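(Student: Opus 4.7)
The plan is to prove the contrapositive: if worst-case hardness of QML fails---so that for every QPT $\cD$ and every choice of polynomials $\epsilon',\delta'$ there is a QPT algorithm $\cA'$ that succeeds on \emph{all} inputs with positive max-likelihood---then hardness of agnostic quantum distribution learning with respect to KL also fails. Thus I fix an arbitrary QPT $\cD$ satisfying the positivity condition of \cref{def:Learn_KL} together with arbitrary target polynomials $\epsilon,\delta$, and aim to exhibit a QPT learner $\cA$ and a polynomial $t$ such that for every (unbounded) $\cT$, with probability at least $1-1/\delta(n)$ over $x_1,\ldots,x_{t(n)}\gets\cT(1^n)^{\otimes t(n)}$ and $\cA$'s coins, the output $h$ satisfies $D_{KL}(\cT(1^n)\,\|\,\cD(1^n,h))\le\mathsf{Opt}_n+1/\epsilon(n)$.

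The construction is to apply the QML solver to the $t$-fold product distribution. I would define $\cD'(1^n,z):=\cD(1^n,z)^{\otimes t(n)}$, invoke the assumed QML-easiness for $\cD'$ with inverse-polynomial parameters $1/\epsilon'(n),1/\delta'(n)$ to be set below, and take $\cA$ to be the resulting QPT algorithm $\cA'$. Positivity makes the joint max-likelihood nonzero on every input, so on samples drawn from $\cT^{\otimes t(n)}$ the output $h$ satisfies, with probability $\ge 1-1/\delta'(n)$, the multiplicative MLE bound $\prod_i\Pr[x_i\gets\cD(1^n,h)]\ge 2^{-1/\epsilon'(n)}\max_a\prod_i\Pr[x_i\gets\cD(1^n,a)]$. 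Letting $h^*:=\arg\min_a D_{KL}(\cT\,\|\,\cD(1^n,a))$ and taking logs yields
\begin{align}
\frac{1}{t(n)}\sum_{i=1}^{t(n)}\log\frac{\Pr[x_i\gets\cD(1^n,h^*)]}{\Pr[x_i\gets\cD(1^n,h)]}\le\frac{1}{t(n)\epsilon'(n)}.
\end{align}
Next, I would apply \cref{lem:Hoeffding} to the family $\{f_a(x):=-\log\Pr[x\gets\cD(1^n,a)]\}_{a\in\bit^n}$ of $2^n$ functions: for $t(n)$ large enough, with probability at least $1-1/(2\delta(n))$ over the samples, $\tfrac{1}{t(n)}\sum_i f_a(x_i)$ is within $\gamma(n)$ of $\mathbb{E}_{x\gets\cT}[f_a(x)]=D_{KL}(\cT\,\|\,\cD(1^n,a))+H(\cT)$ simultaneously for every $a$. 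Instantiating this at $a=h^*$ and $a=h$ and combining with the empirical bound gives $D_{KL}(\cT\,\|\,\cD(1^n,h))\le D_{KL}(\cT\,\|\,\cD(1^n,h^*))+2\gamma(n)+1/(t(n)\epsilon'(n))$, which is at most $\mathsf{Opt}_n+1/\epsilon(n)$ for $\gamma(n)=1/(4\epsilon(n))$, $\epsilon'(n)=1$, and $t(n)$ a sufficiently large polynomial, yielding the desired learner $\cA$.

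The main obstacle is that \cref{lem:Hoeffding} requires the functions $f_a$ to have a polynomial range bound $M(n)$, because the required sample size scales as $t(n)=\Theta(M(n)^2 n/\gamma(n)^2)$. Positivity of $\cD$ only ensures $\Pr[x\gets\cD(1^n,a)]>0$, not a polynomial lower bound, so a priori $|f_a|$ can be super-polynomial, blowing up $t(n)$ and destroying the QPT-ness of $\cA$. The resolution, in the spirit of the classical argument of \cite{ML:AW92}, is to carry out the whole proof with the smoothed algorithm $\cD_{1/2}(z)$ that outputs a sample from $\cD(z)$ with probability $1/2$ and a uniformly random string from $\bit^{p(n)}$ with probability $1/2$; then $\Pr[x\gets\cD_{1/2}(a)]\ge 2^{-p(n)-1}$ forces the range bound $M(n)=p(n)+1$, so Hoeffding gives a polynomial sample size. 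The genuinely subtle step, which I expect to be the main technical hurdle, is transferring the guarantee from $\cD_{1/2}$ back to $\cD$ without spoiling the additive $1/\epsilon(n)$ slack: one has to show that an $h$ approximately minimizing $D_{KL}(\cT\,\|\,\cD_{1/2}(h))$ also approximately minimizes $D_{KL}(\cT\,\|\,\cD(h))$ within the required precision, which requires carefully accounting for how $\mathsf{Opt}_n$ transforms under smoothing and how much mass $\cT$ can place on outcomes where $\cD(a)$ is near zero.
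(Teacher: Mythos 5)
Your construction (apply the assumed QML solver to the $t(n)$-fold product $\cD^{\otimes t(n)}$, then use \cref{lem:Hoeffding} to pass from the empirical log-likelihood ratio to the KL divergence via the cross-entropy identity) is exactly the route the paper takes; the telescoping inequality from the MLE guarantee, the cancellation of $H(\cT)$, and the uniform concentration over the family $\{f_a\}_{a\in\bit^n}$ all match the paper's argument.

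Where you go off track is the ``main obstacle'' paragraph and the smoothing detour it motivates. You are right that \cref{lem:Hoeffding} needs a polynomial range bound $M(n)$ on $f_a(x)=-\log\Pr[x\gets\cD(1^n,a)]$, but wrong that positivity alone fails to supply one: condition~(1) of \cref{def:Learn_KL} together with the fact that $\cD$ is a \emph{QPT} sampler forces $\Pr[x\gets\cD(1^n,z)]\geq 2^{-m(n)}$ for some polynomial $m$ — a poly-size quantum circuit over a standard universal gate set cannot assign a nonzero output probability smaller than inverse-exponential. The paper uses this directly (``There exists a polynomial $m$ such that $\Pr[x\la\cD(1^n,z)]\geq2^{-m(n)}$ for all $x\in\bit^{p(n)}$ and $z\in\bit^n$''), sets $M(n)=m(n)$, chooses $t(n)=\Theta\bigl(m(n)^2\epsilon(n)^2(n+\log\delta(n))\bigr)$, and the argument closes with no smoothing and no transfer-back.

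Introducing $\cD_{1/2}$ would in fact create the very problem you flag as ``the main technical hurdle.'' The agnostic-KL hardness hypothesis in \cref{def:Learn_KL} is stated for the specific $\cD$, so a learner that approximately minimizes $D_{KL}(\cT\|\cD_{1/2}(\cdot))$ does not directly contradict it; you would need to relate $\min_a D_{KL}(\cT\|\cD_{1/2}(a))$ to $\min_a D_{KL}(\cT\|\cD(a))$ within the additive $1/\epsilon(n)$ budget, which is delicate. Note that the paper \emph{does} use a mix-with-uniform trick, but in the converse lemma \cref{lem:QML_to_KL_Learn}: there the starting $\cD$ comes from \cref{def:QML} and carries no positivity guarantee, so the mixing is genuinely needed, and the mixing weight is a carefully tuned, exponentially small $C_n$ (not $1/2$) precisely so the transfer bookkeeping survives. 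You appear to have transplanted that concern into the direction where the hypothesis already makes it moot.
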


In the following, we give the proofs.

\begin{proof}[Proof of \cref{lem:PP_hardness}]

For contradiction, assume that for any polynomial $\epsilon$, $\delta$ and $p$, and any QPT algorithm $\cD$, which takes $1^n$ and $z\in\bit^{n}$ as input, and outputs $x\in\bit^{p(n)}$, there exists a QPT algorithm $\cA$ such that for all $x\in\bit^{p(n)}$ with $\max_{a\in\bit^n}\Pr[x\la\cD(1^n,a)]>0$, we have
\begin{align}
\Pr[ \frac{\max_{a\in\bit^n} \Pr[x\la\cD(1^n,a)]}{\Pr[x\la\cD(1^n,h)]} \leq 2^{1/\epsilon(n)} : h\la\cA(1^n,x)]\geq 1-1/\delta(n)
\end{align}
for all sufficiently large $n\in\N$.
Then, we show that, for any $\cL\in\mathbf{PP}$, we have $\cL\in\mathbf{BQP}$.

Because $\mathbf{PP}=\mathbf{PostBQP}$, for any $\cL\in\mathbf{PP}$,
there exist a polynomial $q$ and a QPT algorithm $\cB$, which takes $x$ as input, and outputs $(b,b^*)\in\bit\times\bit$ such that the following three conditions are satisfied:
\begin{enumerate}
\item For all $x\in\bit^*$, $\Pr[b^*=1:(b,b^*)\la\cB(x)]\geq 2^{-q(\abs{x})}$.
\item For all $x\in\cL$, we have
\begin{align}
\Pr_{(b,b^*)\la\cB(x)}[b=1|b^*=1]\geq \frac{3}{4}.
\end{align}
\item For all $x\notin\cL$, we have
\begin{align}
\Pr_{(b,b^*)\la\cB(x)}[b=1|b^*=1]\leq \frac{1}{4}.
\end{align}
\end{enumerate}

From the QPT algorithm $\cB$, we construct the following QPT algorithm $\cB^*$.
\begin{description}
\item[$\cB^*(1^{n+1},x,c)$:]$ $
\begin{enumerate}
\item Take $1^{n+1}$, $x\in\bit^n$, and $c\in\bit$ as input.
\item Run $(b,b^*)\la\cB(x)$.
\item Output $(x,1)$ if $b=c$ and $b^*=1$.
      Otherwise, output $\bot$.
\end{enumerate}
\end{description}
From the construction of $\cB^*$, we have
\begin{align}
\Pr[(x,1)\la \cB^*(1^{\abs{x}+1},x,c)]=\Pr_{(b,b^*)\la \cB(x)}[b=c\wedge b^*=1]
\end{align}
for all $x\in\bit^*$ and $c\in\bit$.
Therefore, the following two conditions are satisfied:
\begin{enumerate}
\item For all $x\in\cL$, we have
\begin{align}
\frac{\Pr[(x,1)\la\cB^*(1^{\abs{x}+1},x,1)]}{\Pr[(x,1)\la\cB^*(1^{\abs{x}+1},x,0)]}
&=
\frac{\Pr_{(b,b^*)\la\cB(x)}[b=1\wedge b^*=1]}{\Pr_{(b,b^*)\la\cB(x)}[b=0\wedge b^*=1]}\\
&=\frac{\Pr_{(b,b^*)\la\cB(x)}[b=1| b^*=1]}{\Pr_{(b,b^*)\la\cB(x)}[b=0| b^*=1]}
\geq 3.
\end{align}
\item 
For all $x\notin\cL$, we have
\begin{align}
\frac{\Pr[(x,1)\la\cB^*(1^{\abs{x}+1},x,1)]}{\Pr[(x,1)\la\cB^*(1^{\abs{x}+1},x,0)]}
&=
\frac{\Pr_{(b,b^*)\la\cB(x)}[b=1\wedge b^*=1]}{\Pr_{(b,b^*)\la\cB(x)}[b=0\wedge b^*=1]}\\
&=\frac{\Pr_{(b,b^*)\la\cB(x)}[b=1| b^*=1]}{\Pr_{(b,b^*)\la\cB(x)}[b=0| b^*=1]}
\leq \frac{1}{3}.
\end{align}
\end{enumerate}

On the other hand, we assume that worst-case hardness of quantum maximum likelihood estimation does not hold.
This means that for any polynomial $\epsilon$, $\delta$ and $p$, and any QPT algorithm $\cD$, which takes $1^n$ and $z\in\bit^{n}$ as input, and outputs $x\in\bit^{p(n)}$, there exists a QPT algorithm $\cA$ such that for all $x\in\bit^{p(n)}$ with $\max_{a\in\bit^n}\Pr[x\la\cD(1^n,a)]>0$, we have
\begin{align}
\Pr[ \frac{\max_{a\in\bit^n} \Pr[x\la\cD(1^n,a)]}{\Pr[x\la\cD(1^n,h)]} \leq 2^{1/\epsilon(n)} : h\la\cA(1^n,x)]\geq 1-1/\delta(n)
\end{align}
for all sufficiently large $n\in\N$.
Therefore, there exists a QPT algorithm $\cA$ such that, 
for all $x\in\bit^n$ with $\max_{b\in\bit}\Pr[(x,1)\la \cB^*(1^{n+1}, (x,b))] > 0$,
\begin{align}
\Pr[ \frac{\max_{b\in\bit} \Pr[(x,1)\la\cB^*(1^{n+1},(x,b))]}{\Pr[(x,1)\la\cB^*(1^{n+1},(x,b^*))]} \leq 2 : (x,b^*)\la\cA(1^{n+1},(x,1))]\geq \frac{5}{6}
\end{align}
for all sufficiently large $n\in\N$.

This implies that $\cA$ satisfies the following:
\begin{enumerate}
\item For all $x\in\cL$, we have
\begin{align}
\Pr[(x,1) \la\cA(1^{\abs{x}},x)]\geq \frac{5}{6}.
\end{align}
\item 
For all $x\notin\cL$, we have
\begin{align}
\Pr[(x,1)\la\cA(1^{\abs{x}},x)]\leq \frac{1}{6}.
\end{align}
\end{enumerate}
This implies that $\cL\in\mathbf{BQP}$.

\end{proof}

\begin{proof}[Proof of \cref{lem:QML_to_KL_Learn}]

Let $\epsilon$, $\delta$, and $p$ be some polynomials, and let $\cD$ be a QPT algorithm given in \cref{def:QML}.
From $\cD$, we construct a QPT algorithm $\cD^*$ such that the following two conditions are satisfied:
\begin{enumerate}
\item For all $n\in\N$, $z\in\bit^n$ and $x\in\bit^{p(n)}$, we have 
\begin{align}
    \Pr[x\la\cD^*(1^n,z)]\neq 0\label{ineq:non_zero}.
\end{align}
\item For any QPT algorithm $\cA$ and any polynomial $t$, there exists an algorithm $\cT$, which takes $1^n$ as input and outputs $x\in\bit^{p(n)}$ such that 
\begin{align}
\Pr[ 
D_{KL}(\cT(1^n)\,\|\,\cD^*(1^n,h))\leq \mathsf{Opt}_n+1/2\epsilon(n):
\begin{array}{ll}
&\{x_i\}_{i\in[t(n)]}\la\cT(1^n)^{\otimes t(n)}\\
& h\la \cA(1^n,\{x_i\}_{i\in[t(n)]})
\end{array}
]\leq 1-1/\delta(n)
\end{align}
for infinitely many $n\in\N$.
Here, $\mathsf{Opt}_n\seteq \min_{a\in\bit^n}\{D_{KL}(\cT(1^n))\,\|\,\cD^*(1^n,a)\}$.
\end{enumerate}

Let $m$ be a polynomial such that $m(n)\geq p(n)$ and for all $z\in\bit^n$ and for all $x\in\bit^{p(n)}$ with $\Pr[x\la\cD(1^n,z)]\neq 0$, 
it holds that
\begin{align}
\Pr[x\la\cD(1^n,z)]\geq 2^{-m(n)}.    
\end{align}
Let us define $C_n$ as follows: 
\begin{align}\label{eqn:A}
C_n\coloneqq(2^{-1/2\epsilon(n)}-2^{-1/\epsilon(n)})\cdot2^{-\left(m(n)+1-p(n)\right)}.
\end{align}

Now, from $\cD$, we give the construct of $\cD^*$ as follows.
\begin{description}
    \item[$\cD^*(1^n,z):$] $ $
    \begin{enumerate}
        \item Run $x\la\cD(1^n,z)$. Sample $x^*\la\bit^{p(n)}$.
        \item Output $x$ with probability $1-C_n$, and output $x^*$ with probability $C_n$.
    \end{enumerate}
\end{description}

It is clear that $\cD^*$ satisfies \cref{ineq:non_zero}.
Next, for contradiction, let us assume that there exists a polynomial $t$ and a QPT algorithm $\cA$ such that for all algorithms $\cT$, which takes $1^n$ as input, and outputs $x\in\bit^{p(n)}$, we have 
\begin{align}\label{eqn:A^*}
\Pr[ 
D_{KL}(\cT(1^n)\,\|\,\cD^*(1^n,h))\leq \mathsf{Opt}_n+1/2\epsilon(n):
\begin{array}{ll}
&\{x_i\}_{i\in[t(n)]}\la\cT(1^n)^{\otimes t(n)}\\
& h\la \cA(1^n,\{x_i\}_{i\in[t(n)]})
\end{array}
]\geq 1-1/\delta(n)
\end{align}
for all sufficiently large $n\in\N$.
Here, $\mathsf{Opt}_n\seteq \min_{a\in\bit^n}\{D_{KL}(\cT(1^n))\,\|\,\cD^*(1^n,a)\}$.
Then, we show that $\cA$ satisfies 
\begin{align}
\Pr[ \frac{\max_{a\in\bit^n} \Pr[x\la\cD(1^n,a)]}{\Pr[x\la\cD(1^n,h)]} \leq 2^{1/\epsilon(n)} : h\la\cA(1^n,x)]\geq 1-1/\delta(n)
\end{align}
for all $x$ with $\max_{a\in\bit^n}\Pr[x\la\cD(1^n,a)]>0$
and
for all sufficiently large $n\in\N$, which is a contradiction.

For an arbitrary $x\in\bit^*$, we write $\cT_x$ to mean an algorithm that outputs $x$.
\cref{eqn:A^*} implies that for all $x\in\bit^{p(n)}$, with probability at least $1-1/\delta(n)$ over $h\la\cA(1^{n},x)$, $h$ satisfies
\begin{align}
\frac{1}{2\epsilon(n)}
&\geq D_{KL}(\cT_x\,\|\,\cD^*(1^n,h))- \mathsf{Opt}_n\\
&=D_{KL}(\cT_x\,\|\,\cD^*(1^n,h)) -\min_{a\in\bit^n}D_{KL}(\cT_x\,\|\,\cD^*(1^n,a))\\
&=\log\left(\frac{1}{\Pr[x\la\cD^*(1^n,h)]}\right)-\min_{a\in\bit^n}\log\left(\frac{1}{\Pr[x\la\cD^*(1^n,a)]}\right)\\
&=\max_{a\in\bit^n}\log\left(\frac{\Pr[x\la\cD^*(1^n,a)]}{\Pr[x\la\cD^*(1^n,h)]}\right)
\end{align}
for all sufficiently large $n\in\N$.

This implies that
$h$ satisfies
\begin{align}
2^{1/2\epsilon(n)}&\geq \frac{\max_{a\in\bit^n}\{\Pr[x\la\cD^*(1^n,a)]\}}{\Pr[x\la\cD^*(1^n,h)]}\\
&=\frac{\{(1-C_n)\max_{a\in\bit^n}\{\Pr[x\la\cD(1^n,a)]\}+C_n\cdot2^{-p(n)}}{(1-C_n)\Pr[x\la\cD(1^n,h)]+C_n\cdot 2^{-p(n)}}
\end{align}
for all sufficiently large $n\in\N$.
This implies that,
for all $x\in\bit^{p(n)}$ with $\max_{a\in\bit^n}\{\Pr[x\la\cD(1^n,a)]\}> 0$,
$h$ satisfies
\begin{align}
\Pr[x\la\cD(1^n,h)]
&\geq 2^{-1/2\epsilon(n)}\max_{a\in\bit^n}\{\Pr[x\la\cD(1^n,a)]\}- \frac{1-2^{-1/2\epsilon(n)}}{1-C_n }C_n\cdot2^{-p(n)}\\
&\geq 2^{-1/2\epsilon(n)}\max_{a\in\bit^n}\{\Pr[x\la\cD(1^n,a)]\}- 2\cdot C_n\cdot 2^{-p(n)}\label{ineq:C_n}\\
&=2^{-1/2\epsilon(n)}\max_{a\in\bit^n}\{\Pr[x\la\cD(1^n,a)]\}- (2^{-1/2\epsilon(n)}-2^{-1/\epsilon(n)}) \cdot 2^{-m(n)}\label{eq:C_n_2}\\
&\geq2^{-1/2\epsilon(n)}\max_{a\in\bit^n}\{\Pr[x\la\cD(1^n,a)]\}- (2^{-1/2\epsilon(n)}-2^{-1/\epsilon(n)}) \max_{a\in\bit^n}\{\Pr[x\la\cD(1^n,a)]\} \label{ineq_5}\\
&= 2^{-1/\epsilon(n)} \cdot \max_{a\in\bit^n}\{\Pr[x\la\cD(1^n,a)]\}.
\end{align}
Here, in the \cref{ineq:C_n}, we have used $1-C_n\geq \frac{1}{2}$, and in the \cref{eq:C_n_2}, we have used the definition of $C_n$, and in the \cref{ineq_5}, we have used that $\max_{a\in\bit^n}\{\Pr[x\la\cD(1^n,a)]\}\geq 2^{-m(n)}$.

\if0
Here, in the \cref{eq:C_n_2}, we have used \cref{eqn:A}, and, in the \cref{ineq_5}, we have used 
$\max_{a\in\bit^n}\{\Pr[x\la\cD(1^n,a)]\}\geq 2^{-m(n)}$.
\fi
This implies that
there exists a QPT $\cA$ such that for all $x\in\bit^{p(n)}$ with $\max_{a\in\bit^n}\Pr[x\gets\cD(1^n,a)]>0$, we have
\begin{align}
\Pr[ \frac{\max_{a\in\bit^n} \Pr[x\la\cD(1^n,a)]}{\Pr[x\la\cD(1^n,h)]} \leq 2^{1/\epsilon(n)} : h\la\cA(1^n,x)]\geq 1-1/\delta(n)
\end{align}
for all sufficiently large $n\in\N$, which is a contradiction.

\if0
For contradiction, assume that hardness of agnostic quantum distribution learning with respect to KL divergence does not hold.
Then, we show\mor{bunpou tekini hen. show no syugoga nai. Tan ni ...does not hold. Then we show... to sureba?} that worst-case hardness of quantum maximum likelihood does not hold.

More formally, suppose that for any polynomial $\epsilon$ and $\delta$, all QPT algorithms $\cD$, which takes $1^n$ and $z\in\bit^n$ as input, outputs $x\in\bit^{\poly(n)}$ and satisfies 
$\Pr[x\la\cD(1^n,z)]\neq 0$ 
\mor{$\Pr[x\la\cD^*(1^n,z)]\neq 0$ no jyouken wa definition niwa naikedo? }
for all $z\in\bit^n$, $x\in\bit^{\poly(n)}$ and $n\in\N$,
there exists a QPT algorithm $\cA$ and a polynomial $t$ such that for all algorithms $\cT$, which takes $1^n$ as input, and outputs $x\in\bit^{\poly(n)}$, we have 
\begin{align}\label{eqn:A^*}
\Pr[ 
D_{KL}(\cT(1^n)\,\|\,\cD(1^n,h))\leq \mathsf{Opt}_n+1/\epsilon(n):
\begin{array}{ll}
&\{x_i\}_{i\in[T]}\la\cT(1^n)^{\otimes t(n)}\\
& h\la \cA(1^n,\{x_i\}_{i\in[t(n)]})
\end{array}
]\geq 1-1/\delta(n)
\end{align}
for all sufficiently large $n\in\N$.
Here, $\mathsf{Opt}_n\seteq \min_{a\in\bit^n}\{D_{KL}(\cT(1^n))\,\|\,\cD(1^n,a)\}$.
Then, for any polynomial $\epsilon$ and $\delta$, and all QPT algorithms $\cD^*$, which takes $1^n$ and $z\in\bit^n$ as input, and outputs $x\in\bit^{\poly(n)}$, we construct a QPT algorithm $\cA$ such that, for all $x\in\bit^{\poly(n)}$ with $\max_{a\in\bit^n}\Pr[x\la\cD^*(1^n,a)]>0$,
\begin{align}
\Pr[ \frac{\max_{a\in\bit^n} \Pr[x\la\cD^*(1^n,a)]}{\Pr[x\la\cD^*(1^n,h)]} \leq 2^{\epsilon(n)} : h\la\cA^*(1^n,x)]\geq 1-1/\delta(n)
\end{align}
for all sufficiently large $n\in\N$.

For any QPT algorithm $\cD^*$, there exists $m=\poly(n)$ such that $\Pr[x\la\cD^*(1^n,z)]\geq 2^{-m}$ for all $x\in\bit^n$ and $z\in\bit^n$ with $\Pr[x\la\cD^*(1^n,z)]\neq 0$.
For simplicity, we let 
\begin{align}\label{eqn:A}
A=(2^{-1/2\epsilon(n)}-2^{-1/\epsilon(n)})\cdot2^{-(m+1-n)}.
\end{align}

For a QPT algorithm $\cD^*$, we consider another QPT algorithm $\cD$.
$\cD$ takes $1^n$, and $z\in\bit^n$ as input.
With probability $1-A$, $\cD$ runs $\cD^*(1^n,z)$ and outputs its output, and
with probability $A$, samples $x\la\bit^{\poly(n)}$, and outputs $x$.
From \cref{eqn:A^*}, there exists $\cA$ and a polynomial $T=\poly(n)$ such that for all algorithms $\cT$, which takes $1^n$ as input and outputs $x$, we have
\begin{align}
\Pr[ 
D_{KL}(\cT(1^n)\,\|\,\cD(1^n,h))\leq \mathsf{Opt}_n+1/2\epsilon(n):
\begin{array}{ll}
&\{x_i\}_{i\in[T]}\la\cT(1^n)^{\otimes T}\\
& h\la \cA(1^n,\{x_i\}_{i\in[T]})
\end{array}
]\geq 1-1/\delta(n)
\end{align}
for all sufficiently large $n\in\N$.

For an arbitrary $x\in\bit^*$, we write $\cT_x$ to mean an algorithm, which takes as input $1^{\abs{x}}$, and outputs $x$.
For all $x\in\bit^n$, with probability at least $1-1/\delta(n)$ over $h\la\cA(1^{n},\{x_i\}_{i\in[T]})$, $h$ satisfies
\begin{align}
\frac{1}{2\epsilon(n)}
&\geq D_{KL}(\cT_x(1^n)\,\|\,\cD(1^n,h))- \mathsf{Opt}_n\\
&=D_{KL}(\cT_x(1^n)\,\|\,\cD(1^n,h)) -\min_{a\in\bit^n}D_{KL}(\cT_x(1^n)\,\|\,\cD(1^n,a))\\
&=\log\left(\frac{1}{\Pr[x\la\cD(1^n,h)]}\right)-\min_{a\in\bit^n}\log\left(\frac{1}{\Pr[x\la\cD(1^n,a)]}\right)\\
&=\max_{a\in\bit^n}\log\left(\frac{\Pr[x\la\cD(1^n,a)]}{\Pr[x\la\cD(1^n,h)]}\right)
\end{align}
for all sufficiently large $n\in\N$.
 This implies that
\begin{align}
2^{1/2\epsilon(n)}&\geq \frac{\max_{a\in\bit^n}\{\Pr[x\la\cD(1^n,a)]\}}{\Pr[x\la\cD(1^n,h)]}\\
&=\frac{\{(1-A)\max_{a\in\bit^n}\{\Pr[x\la\cD^*(1^n,a)]\}+A\cdot2^{-n}}{(1-A)\Pr[x\la\cD^*(1^n,h)]+A\cdot 2^{-n}}.
\end{align}
This implies that
\begin{align}
\Pr[x\la\cD^*(1^n,h)]
&\geq 2^{-1/2\epsilon(n)}\max_{a\in\bit^n}\{\Pr[x\la\cD^*(1^n,a)]\}- \frac{1-2^{-1/2\epsilon(n)}}{1-A }A\cdot2^{-n}\\
&\geq 2^{-1/2\epsilon(n)}\max_{a\in\bit^n}\{\Pr[x\la\cD^*(1^n,a)]\}- 2\cdot A\cdot 2^{-n}\\
&=2^{-1/2\epsilon(n)}\max_{a\in\bit^n}\{\Pr[x\la\cD^*(1^n,a)]\}- (2^{-1/2\epsilon(n)}-2^{-1/\epsilon(n)}) \cdot 2^{-m}\label{eqn:A_2}\\
&\geq 2^{-1/\epsilon(n)} \cdot \max_{a\in\bit^n}\{\Pr[x\la\cD^*(1^n,a)]\}\label{ineq_5}
\end{align}
for all $x\in\bit^n$ with $\max_{a\in\bit^n}\{\Pr[x\la\cD^*(1^n,a)]\}\geq 2^{-m}$.
Here, in the \cref{eqn:A_2}, we have used \cref{eqn:A}, and, in the \cref{ineq_5}, we have used 
$\max_{a\in\bit^n}\{\Pr[x\la\cD^*(1^n,a)]\}\geq 2^{-m}$.
This leads to a contradiction.
\fi
\end{proof}

\begin{proof}[Proof of \cref{lem:KL_learn_to_QML}]

Let $\epsilon$, $\delta$, and $p$ be polynomials, and let $\cD$ be a QPT algorithm given in \cref{def:Learn_KL}.
From $\cD$, we show that there exist a polynomial $t$ and a QPT algorithm $\cD^*$ such that, for any QPT algorithm $\cA$, there exists $x\in\bit^{t(n)\cdot p(n)}$ with $\max_{a\in\bit^n}\Pr[x\la\cD^*(1^n,a)]>0$ such that
\begin{align}\label{ineq:A_QML}
\Pr[ \frac{\max_{a\in\bit^n}\Pr[x\la\cD^*(1^n,a)]}{\Pr[x\la\cD^*(1^n,h)]} \leq 2^{\frac{t(n)}{2\epsilon(n)}} : h\la\cA(1^n,x)]\leq 1-1/\delta(n)
\end{align}
for infinitely many $n\in\N$.

\if0
Let $\epsilon$ and $\delta$ be polynomials, and let $\cD$ be a QPT algorithm that takes $1^n$ and $z\in\bit^n$ as input, and outputs $x\in\bit^{\poly(n)}$ such that $\Pr[x\la\cD(1^n,z)]\neq 0$ for all $n\in\N$, $z\in\bit^n$, and $x\in\bit^{\poly(n)}$, and
for all QPT algorithms $\cA$ and all polynomials $t$, there exists an algorithm $\cT$, which takes $1^n$ and $z\in\bit^n$ as input, and outputs $x\in\bit^{\poly(n)}$, we have
\begin{align}
\Pr[ 
D_{KL}(\cT(1^n)\,\|\,\cD(1^n,h))\leq \mathsf{Opt}_n+1/\epsilon(n):
\begin{array}{ll}
&\{x_i\}_{i\in[T]}\la\cT(1^n)^{\otimes t(n)}\\
& h\la \cA(1^n,\{x_i\}_{i\in[t(n)]})
\end{array}
]\leq 1-1/\delta(n)
\end{align}
for infinitely many $n\in\N$.
Here, $\mathsf{Opt}_n\seteq \min_{a\in\bit^n}\{D_{KL}(\cT(1^n)\,\|\, \cD(1^n,a))\}$.
\fi

There exists a polynomial $m$ such that $\Pr[x\la\cD(1^n,z)]\geq2^{-m(n)}$ for all $x\in\bit^{p(n)}$ and $z\in\bit^n$.
Let $t(n)\coloneqq \lceil1000m^2(n)\epsilon^2(n)\left(n+\log(2\delta(n))\right)\rceil$.
Let us consider the following QPT algorithm $\cD^*$.
\begin{description}
\item[$\cD^*(1^n,z)$:]$ $
\begin{enumerate}
\item Run $x_i\la\cD(1^n,z)$ for all $i\in[t(n)]$.
\item Output $x\seteq\{x_i\}_{i\in[t(n)]}$.
\end{enumerate}
\end{description}

In the following, we will show that $\cD^*$ satisfies \cref{ineq:A_QML}.
For contradiction, let us assume that there exists a QPT algorithm $\cA$ such that for all $x\in\bit^{t(n)\cdot p(n)}$ with 
$\max_{a\in\bit^n}\Pr[x\la\cD^*(1^n,a)]>0$, we have
\begin{align}
\Pr[ \frac{\max_{a\in\bit^n}\Pr[x\la\cD^*(1^n,a)]}{\Pr[x\la\cD^*(1^n,h)]} \leq 2^{\frac{t(n)}{2\epsilon(n)}} : h\la\cA(1^n,x)]\geq 1-1/2\delta(n)
\end{align}
for all sufficiently large $n\in\N$.
Then, we show that $\cA$ contradicts the security of $\cD$.

For showing it, we use the following claim.
\begin{claim}\label{claim:KL}
Let $\cT$ be any algorithm that, on input $1^n$, outputs a $p(n)$-bit string.
For all sufficiently large $n\in\N$,
with probability at least $1-1/\delta(n)$ over $\{x_i\}_{i\in[t(n)]} \la\cT(1^n)^{\otimes t(n)}$ and $h\la\cA(1^n,\{x_i\}_{i\in[t(n)]})$, we have
\begin{align}
\mathbb{E}_{x\la\cT(1^n)}\left[\log\left(\frac{\Pr[x\la \cD(1^n,z)]}{\Pr[x\la \cD(1^n,h)]}\right)\right]\leq 1/\epsilon(n)
\end{align}
for all $z\in\bit^n$.
\end{claim}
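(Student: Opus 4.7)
The target inequality can be rewritten as $D_{KL}(\cT(1^n)\,\|\,\cD(1^n,h)) - D_{KL}(\cT(1^n)\,\|\,\cD(1^n,z)) \leq 1/\epsilon(n)$, since $\mathbb{E}_{x\la\cT}[\log(\Pr[x\la\cD(z)]/\Pr[x\la\cD(h)])]$ equals exactly that difference. The plan is to combine two ingredients: (i) the contradiction hypothesis on $\cA$ controls the \emph{empirical} log-likelihood ratio on the sample, and (ii) Hoeffding's inequality transfers this empirical statement to a statement about expectations under $\cT$. Since $\Pr[x\la\cD(1^n,a)] \in [2^{-m(n)},1]$ for every $a\in\bit^n$ and $x\in\bit^{p(n)}$, each function $f_{z,h}(x) \coloneqq \log(\Pr[x\la\cD(1^n,z)]/\Pr[x\la\cD(1^n,h)])$ takes values in $[-m(n),m(n)]$, and the shifted family $\{f_{z,h}+m(n)\}_{z,h\in\bit^n}$ is a family of $[0,2m(n)]$-valued functions of size $2^{2n}$, which is the object to which \cref{lem:Hoeffding} will be applied.

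For step (i), fix the sample $x=\{x_i\}_{i\in[t(n)]}$. By construction of $\cD^*$, $\Pr[x\la\cD^*(1^n,a)]=\prod_i\Pr[x_i\la\cD(1^n,a)]$, which is strictly positive, so the hypothesis on $\cA$ applies. With probability $\geq 1-1/(2\delta(n))$ over $h\la\cA(1^n,x)$,
\begin{align}
\sum_{i\in[t(n)]}\log\!\frac{\Pr[x_i\la\cD(1^n,z)]}{\Pr[x_i\la\cD(1^n,h)]} \;\leq\; \log\!\frac{\max_{a}\Pr[x\la\cD^*(1^n,a)]}{\Pr[x\la\cD^*(1^n,h)]} \;\leq\; \frac{t(n)}{2\epsilon(n)}
\end{align}
for every $z\in\bit^n$ simultaneously, so dividing by $t(n)$, $\frac{1}{t(n)}\sum_i f_{z,h}(x_i)\leq 1/(2\epsilon(n))$ holds for all $z$. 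For step (ii), \cref{lem:Hoeffding} applied to the family $\{f_{z,h}+m(n)\}_{z,h}$ with $M=2m(n)$, $\eps=1/(2\epsilon(n))$, and failure parameter $1/(2\delta(n))$ requires sample size at least $(2m(n))^2(2\epsilon(n))^2(\log 2^{2n}+\log(2\delta(n))) = 16m(n)^2\epsilon(n)^2(2n+\log(2\delta(n)))$, which is comfortably dominated by $t(n)=\lceil 1000\,m^2(n)\epsilon^2(n)(n+\log(2\delta(n)))\rceil$ for large $n$. Hence with probability $\geq 1-1/(2\delta(n))$ over $\{x_i\}\la\cT(1^n)^{\otimes t(n)}$,
\begin{align}
\Bigl|\tfrac{1}{t(n)}\textstyle\sum_i f_{z,h}(x_i)-\mathbb{E}_{x\la\cT(1^n)}[f_{z,h}(x)]\Bigr|\;\leq\;\tfrac{1}{2\epsilon(n)}\quad\text{for all }z,h\in\bit^n.
\end{align}

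A union bound on the two failure events and the trivial chain $\mathbb{E}_{x\la\cT}[f_{z,h}(x)]\leq \frac{1}{t(n)}\sum_i f_{z,h}(x_i)+\frac{1}{2\epsilon(n)} \leq \frac{1}{\epsilon(n)}$ then yields the claim with probability $\geq 1-1/\delta(n)$ over the joint sampling of $\{x_i\}$ and $h$. The main obstacle I anticipate is bookkeeping: one must apply the concentration inequality \emph{uniformly} in both $z$ and $h$ (not merely $z$), because the empirical statement from $\cA$ is used at the sample-dependent output $h$; this is precisely why \cref{lem:Hoeffding} must be invoked on the doubly-indexed family $\{f_{z,h}\}$, which is what forces the $\log 2^{2n}=2n$ factor in the sample-size requirement and thereby dictates the specific choice of $t(n)$. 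The remaining subtlety is ensuring that the $\cA$-step applies uniformly over all $z$ (for a fixed $h$), but this is immediate since the QML inequality $\max_a\Pr[x\la\cD^*(a)]/\Pr[x\la\cD^*(h)]\leq 2^{t(n)/(2\epsilon(n))}$ trivially upper-bounds the same ratio with $a=z$ for every $z$.
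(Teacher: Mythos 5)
Your proof is correct and follows essentially the same route as the paper's: derive an empirical log-likelihood-ratio bound from $\cA$'s QML guarantee, transfer it to expectations under $\cT$ via \cref{lem:Hoeffding} applied uniformly over a polynomial-size family, and finish with a union bound. The only cosmetic difference is that you apply the concentration inequality to the doubly-indexed family of differences $\{f_{z,h}\}_{z,h\in\bit^n}$ (size $2^{2n}$, range $[0,2m(n)]$, accuracy $1/(2\epsilon(n))$), whereas the paper applies it to the singly-indexed family $\{\log(1/\Pr[\cdot\la\cD(1^n,z)])\}_{z\in\bit^n}$ (size $2^n$, range $[0,m(n)]$, accuracy $1/(4\epsilon(n))$) and then combines the $h$ and $z$ terms by linearity of expectation and the triangle inequality; both yield sample-size requirements comfortably below $t(n)$ and the identical conclusion.
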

We give the proof of \cref{claim:KL} in the end of the proof.
From \cref{claim:KL}, for all sufficiently large $n\in\N$, with probability at least $1-1/\delta(n)$ over $\{x_i\}_{i\in[t(n)]} \la\cT(1^n)^{\otimes t(n)}$ and $h\la\cA(1^n,\{x_i\}_{i\in[t(n)]})$,  we have
\begin{align}
1/\epsilon(n)&\geq\sum_{x\in\bit^{p(n)}}\Pr[x\la\cT(1^n)]\left(
\log\left(\frac{1}{\Pr[x\la \cD(1^n,h)]}\right)
-\log\left(\frac{1}{\Pr[x\la \cD(1^n,z)]}\right)
\right) \\
& =\sum_{x\in\bit^{p(n)}}\Pr[x\la\cT(1^n)]\left(
\log\left(\frac{\Pr[x\la\cT(1^n)]}{\Pr[x\la \cD(1^n,h)]}\right)
-\log\left(\frac{\Pr[x\la\cT(1^n)]}{\Pr[x\la \cD(1^n,z)]}\right)
\right)\\
&=D_{KL}(\cT(1^n)\,\|\, \cD(1^n,h))- D_{KL}(\cT(1^n)\,\|\,\cD(1^n,z))
\end{align}
for all $z\in\bit^n$.
This implies that $h$ satisfies
\begin{align}
D_{KL}(\cT(1^n)\,\|\,\cD(1^n,h))\leq \mathsf{Opt}_n+1/\epsilon(n)
\end{align}
for all sufficiently large $n\in\N$.
This contradicts the security of $\cD$.

\begin{proof}[Proof of \cref{claim:KL}]
From the definition of $\cA$,
for all sufficiently large $n\in\N$,
with probability at least $1-\frac{1}{2\delta(n)}$ over $h\la\cA(1^n,\{x_i\}_{i\in[t(n)]})$, we have
\begin{align}
\frac{1}{t(n)}\sum_{i\in[t(n)]}\log\left(\frac{1}{\Pr[x_i\la\cD(1^n,h)]}\right)+\max_{a\in\bit^n}\frac{1}{t(n)}\sum_{i\in[t(n)]}\log\left(\Pr[x_i\la\cD(1^n,a)]\right)
\leq \frac{1}{2\epsilon(n)}
\end{align}
for all $x\in\bit^{t(n)\cdot p(n)}$ such that $\max_{a\in\bit^n}\Pr[x\la\cD^*(1^n,a)]>0$.
Therefore, for all sufficiently large $n\in\N$, with probability at least $1-\frac{1}{2\delta(n)}$ over $h\la\cA(1^n,\{x_i\}_{i\in[t(n)]})$, for all $z\in\bit^n$, we have
\begin{align}
&\frac{1}{t(n)}\sum_{i\in[t(n)]}\log\left(\frac{1}{\Pr[x_i\la\cD(1^n,h)]}\right)-\frac{1}{t(n)}\sum_{i\in[t(n)]}\log\left(\frac{1}{\Pr[x_i\la\cD(1^n,z)]}\right)\\
&\leq \frac{1}{t(n)}\sum_{i\in[t(n)]}\log\left(\frac{1}{\Pr[x_i\la\cD(1^n,h)]}\right)+\max_{a\in\bit^n}\frac{1}{t(n)}\sum_{i\in[t(n)]}\log\left(\Pr[x_i\la\cD(1^n,a)]\right)\leq \frac{1}{2\epsilon(n)}\label{ineq_6}.
\end{align}

On the other hand,
from \cref{lem:Hoeffding}, with probability at least $1-\frac{1}{2\delta(n)}$ over $\{x_i\}_{i\in[t(n)]}\la \cT(1^n)^{\otimes t(n)}$, 
\begin{align}\label{ineq_7}
\abs{\mathbb{E}_{x\la \cT(1^n)}\left[\log\left(\frac{1}{\Pr[x\la\cD(1^n,z)]}\right) \right]-\frac{1}{t(n)}\sum_{i\in[t(n)]} \log\left(\frac{1}{\Pr[x_i\la\cD(1^n,z)]}\right)}\leq \frac{1}{4\epsilon(n)}
\end{align}
for all $z\in\bit^n$ at the same time.
Here, we have used the fact that
\begin{align}
\log\left(\frac{1}{\Pr[x\la \cD(1^n,z)]}\right)\leq m(n)
\end{align}
for all $x\in\bit^{p(n)}$ and $z\in\bit^n$.
From union bound, \cref{ineq_6,ineq_7} implies that, for all sufficiently large $n\in\N$, with probability at least $1-\frac{1}{\delta(n)}$, over $\{x_i\}_{i\in[t(n)]}\la \cT(1^n)^{\otimes t(n)}$ and $h\la\cA(1^n,\{x_i\}_{i\in[t(n)]})$, we have 
\begin{align}
\mathbb{E}_{x\la \cT(1^n) }\left[\log\left(\frac{1}{\Pr[x\la\cD(1^n,h)]}\right) \right]
-
\mathbb{E}_{x\la \cT(1^n) }\left[\log\left(\frac{1}{\Pr[x\la\cD(1^n,z)]}\right) \right]\leq \frac{1}{\epsilon(n)}
\end{align}
for all $z\in\bit^n$.
Therefore, with probability at least $1-\frac{1}{\delta(n)}$ over $\{x_i\}_{i\in[t(n)]}\la \cT(1^n)^{\otimes t(n)}$ and $h\la\cA(1^n,\{x_i\}_{i\in[t(n)]})$, we have 
\begin{align}
\mathbb{E}_{x\la \cT(1^n) }\left[\log\left(\frac{\Pr[x\la\cD(1^n,z)]}{\Pr[x\la\cD(1^n,h)]}\right) \right]\leq \frac{1}{\epsilon(n)}
\end{align}
for all $z\in\bit^n$.
\end{proof}

\if0
For contradiction, assume that worst-case hardness of quantum maximum likelihood does not hold.
Then, we show that hardness of agnostic quantum distribution learning with respect to KL divergence does not hold.

More formally, suppose that for any polynomial $\epsilon$ and $\delta$, and for all QPT algorithms $\cD$\mor{definition dewa $\cD$}, which takes $1^n$ and $z\in\bit^n$ as input, and outputs $x\in\bit^{\poly(n)}$, there exists a QPT algorithm $\cA$ such that for all $x\in\bit^{\poly(n)}$ with $\max_{a\in\bit^n}\{\Pr[x\la\cD(1^n,a)]\}>0$, we have
\begin{align}\label{eqn:A_behavior}
\Pr[ \frac{\max_{a\in\bit^n}\{\Pr[x\la\cD(1^n,a)]\}}{\Pr[x\la\cD(1^n,h)]} \leq 2^{\epsilon(n)} : h\la\cA(1^n,x)]\leq 1-1/\delta(n)
\end{align}
for all sufficiently large $n\in\N$.

Then, we show that, for any polynomial $\epsilon$ and $\delta$, and for all QPT algorithms $\cD^*$, which takes $1^n$ and $z\in\bit^n$ as input, outputs $x\in\bit^{\poly(n)}$, and satisfies 
\begin{align}
\Pr[x\la\cD^*(1^n,z)]>0
\end{align}
for all $x\in\bit^{\poly(n)}$, $z\in\bit^n$ and $n\in\N$,
there exists a QPT algorithm $\cA^*$ and a polynomial $T=\poly(n)$ such that for all algorithms $\cT$, which takes as input $1^n$, and outputs $x\in\bit^{\poly(n)}$, we have
\begin{align}
\Pr[ 
D_{KL}(\cT(1^n)\,\|\,\cD^*(1^n,h))\leq \mathsf{Opt}_n+1/\epsilon(n):
\begin{array}{ll}
&\{x_i\}_{i\in[T]}\la\cT(1^n)^{\otimes T}\\
& h\la \cA^*(1^n,\{x_i\}_{i\in[T]})
\end{array}
]\leq 1-1/\delta(n)
\end{align}
for all sufficiently large $n\in\N$.

Since $\cD^*$ is a QPT algorithm, there exists $m=\poly(n)$ such that $\Pr[x\la\cD^*(1^n,z)]\geq2^{-m}$ for all $x$ and $z$.
Let $T=1000m^2\epsilon^2(n)\left(n+\log(\delta(n))\right)$.
Let us consider the following QPT algorithm $\cD$.
\begin{description}
\item[The description of $\cD(1^n,z)$:]$ $
\begin{enumerate}
\item Run $x_i\la\cD^*(1^n,z)$ for all $i\in[T]$.
\item Output $x\seteq\{x_i\}_{i\in[T]}$.
\end{enumerate}
\end{description}

Because we assume that worst-case hardness of quantum maximum likelihood does not hold, there exists a QPT algorithm $\cA$ such that for all $x\in\bit^{T\cdot \poly(n)}$, with probability at least $1-\frac{1}{2\delta(n)}$ over $h\la\cA(1^n,x)$, it holds that
\begin{align}\label{eqn:max}
\frac{\max_{a\in\bit^{n}} \{\Pr[x\la\cD(1^n,a)]\}}{\Pr[x\la\cD(1^n,h)]}\leq 2^{\frac{1}{2\epsilon(n)}\cdot T}
\end{align}
for all sufficiently large $n\in\N$.
This implies the following claim.
\begin{claim}\label{claim:KL}
With probability at least $1-1/\delta(n)$ over $\{x_i\}_{i\in[T]} \la\cT(1^n)^{\otimes T}$ and the internal randomness of $h\la\cA(1^n,\{x_i\}_{i\in[T]})$, we have
\begin{align}
\mathbb{E}_{x\la\cT(1^n)}\left[\log\left(\frac{\Pr[x\la \cD^*(1^n,z)]}{\Pr[x\la \cD^*(1^n,h)]}\right)\right]\leq 1/\epsilon(n)
\end{align}
for all $z\in\bit^n$.
\end{claim}
We give the proof of \cref{claim:KL} later.
From \cref{claim:KL}, with probability at least $1-1/\delta(n)$ over $\{x_i\}_{i\in[T]} \la\cT(1^n)^{\otimes T}$ and the internal randomness of $h\la\cA(1^n,\{x_i\}_{i\in[T]})$, we have
\begin{align}
1/\epsilon(n)&\geq\sum_{x\in\bit^{T\cdot n}}\Pr[x\la\cT(1^n)]\left(
\log\left(\frac{1}{\Pr[x\la \cD^*(1^n,h)]}\right)
-\log\left(\frac{1}{\Pr[x\la \cD^*(1^n,z)]}\right)
\right) \\
& =\sum_{x\in\bit^{T\cdot n}}\Pr[x\la\cT(1^n)]\left(
\log\left(\frac{\Pr[x\la\cT(1^n)]}{\Pr[x\la \cD^*(1^n,h)]}\right)
-\log\left(\frac{\Pr[x\la\cT(1^n)]}{\Pr[x\la \cD^*(1^n,z)]}\right)
\right)\\
&=D_{KL}(\cT(1^n)\,\|\, \cD^*(1^n,h))- D_{KL}(\cT(1^n)\,\|\,\cD^*(1^n,z))
\end{align}
for all $z\in\bit^n$.
This implies that
\begin{align}
D_{KL}(\cT(1^n)\,\|\,\cD^*(1^n,h))\leq \mathsf{Opt}_n+1/\epsilon(n).
\end{align}

\begin{proof}[Proof of \cref{claim:KL}]
\cref{eqn:max} implies that
with probability at least $1-\frac{1}{2\delta(n)}$ over $h\la\cA(1^n,x)$, we have
\begin{align}
\frac{1}{T}\sum_{i\in[T]}\log\left(\frac{1}{\Pr[x_i\la\cD^*(1^n,h)]}\right)+\max_{a\in\bit^n}\frac{1}{T}\sum_{i\in[T]}\log\left(\Pr[x_i\la\cD^*(1^n,a)]\right)
\leq \frac{1}{2\epsilon(n)}
\end{align}
for all sufficiently large $n\in\N$.
Therefore, with probability at least $1-\frac{1}{2\delta(n)}$ over $h\la\cA(1^n,x)$, for all $y\in\bit^n$, we have
\begin{align}
&\frac{1}{T}\sum_{i\in[T]}\log\left(\frac{1}{\Pr[x_i\la\cD^*(1^n,h)]}\right)-\frac{1}{T}\sum_{i\in[T]}\log\left(\frac{1}{\Pr[x_i\la\cD^*(1^n,y)]}\right)\\
&\leq \frac{1}{T}\sum_{i\in[T]}\log\left(\frac{1}{\Pr[x_i\la\cD^*(1^n,h)]}\right)+\max_{a\in\bit^n}\frac{1}{T}\sum_{i\in[T]}\log\left(\Pr[x_i\la\cD^*(1^n,a)]\right)\leq \frac{1}{2\epsilon(n)}\label{ineq_6}
\end{align}
for all sufficiently large $n\in\N$.

On the other hand,
from \cref{lem:Hoeffding}, with probability at least $1-\frac{1}{2\delta(n)}$ over $\{x_i\}_{i\in[T]}\la \cT(1^n)^{\otimes T}$, 
\begin{align}\label{ineq_7}
\abs{\mathbb{E}_{x\la \cT(1^n)}\left[\log\left(\frac{1}{\Pr[x\la\cD^*(1^n,y)]}\right) \right]-\frac{1}{T}\sum_{i\in[T]} \log\left(\frac{1}{\Pr[x_i\la\cD^*(1^n,y)]}\right)}\leq \frac{1}{4\epsilon(n)}
\end{align}
for all $y\in\bit^n$ at the same time.
Here, we have used the fact that
\begin{align}
\log\left(\frac{1}{\Pr[x\la \cD(1^n,h)]}\right)\leq m
\end{align}
for all $x\in\bit^{ n}$.
From union bound, \cref{ineq_6,ineq_7} implies that, with probability at least $1-\frac{1}{\delta(n)}$, over $\{x_i\}_{i\in[T]}\la \cT(1^n)^{\otimes T}$ and $h\la\cA(1^n,\{x_i\}_{i\in[T]})$, we have 
\begin{align}
\mathbb{E}_{x\la \cT(1^n) }\left[\log\left(\frac{1}{\Pr[x\la\cD^*(1^n,h)]}\right) \right]
-
\mathbb{E}_{x\la \cT(1^n) }\left[\log\left(\frac{1}{\Pr[x\la\cD^*(1^n,y)]}\right) \right]\leq \frac{1}{\epsilon(n)}
\end{align}
for all $y\in\bit^n$.
Therefore, with probability at least $1-\frac{1}{\delta(n)}$ over $\{x_i\}_{i\in[T]}\la \cT(1^n)^{\otimes T}$ and $h\la\cA(1^n,\{x_i\}_{i\in[T]})$, we have 
\begin{align}
\mathbb{E}_{x\la \cT(1^n) }\left[\log\left(\frac{\Pr[x\la\cD(1^n,y)]}{\Pr[x\la\cD(1^n,h)]}\right) \right]\leq \frac{1}{\epsilon(n)}
\end{align}
for all $y\in\bit^n$ at the same time.
\end{proof}
\fi
\end{proof}

\subsection{Classical Case}
\if0
\color{red}
In this section, we show that $\mathbf{NP} \nsubseteq \mathbf{BPP}$ holds if and only if agnostic classical distribution learning is hard with respect to KL divergence.  
While it is known that $\mathbf{NP} \nsubseteq \mathbf{BPP}$ implies this hardness, the proof is complex, and to the best of our knowledge, the reverse implication is folklore.  
Moreover, the existing techniques are tailored to the classical setting and do not easily extend to the quantum case. 
Therefore, we choose to provide the statement.
Our proof follows the same approach as in \cref{sec:quantum_KL}.

First, we describe the hardness of agnostic classical distribution learning with respect to KL divergence based on \cite{ML:AW92}.
\color{black}

\mor{Intro niwa kaku?}
\taiga{Ueno akaha Intro ni kaku.}
\fi

In this subsection, we apply the previous quantum results to the classical case.

\begin{definition}[Hardness of Agnostic Classical Distribution Learning with respect to KL divergence~\cite{ML:AW92}]
We say that hardness of agnostic classical distribution learning with respect to KL divergence holds if the following holds.
There exist some polynomial $\epsilon$, $\delta$ and $p$, and a PPT algorithm $\cD$, which takes $1^n$ and $z\in\bit^n$ as input, and outputs $x\in\bit^{p(n)}$ such that the following
two conditions are satisfied:
\begin{enumerate}
\item For all $n\in\N$, $z\in\bit^n$, and $x\in\bit^{p(n)}$, we have
\begin{align}
\Pr[x\la\cD(1^n,z)]\neq 0.
\end{align}
\item 
For any PPT algorithm $\cA$ and any polynomial $t$, there exists an algorithm $\cT$, which takes $1^n$ as input and outputs $x\in\bit^{p(n)}$, such that
\begin{align}
\Pr[ 
D_{KL}(\cT(1^n)\,\|\,\cD(1^n,h))\leq \mathsf{Opt}_n+1/\epsilon(n):
\begin{array}{ll}
&\{x_i\}_{i\in[t(n)]}\la\cT(1^n)^{\otimes t(n)}\\
& h\la \cA(1^n,\{x_i\}_{i\in[t(n)]})
\end{array}
]\leq 1-1/\delta(n)
\end{align}
for infinitely many $n\in\N$.
Here, $\mathsf{Opt}_n\seteq \min_{a\in\bit^n}\{D_{KL}(\cT(1^n)\,\|\,\cD(1^n,a))\}$.
\end{enumerate}
\end{definition}

\begin{definition}[Worst-Case Hardness of Classical Maximum Likelihood Estimation]
We say that worst-case hardness of classical maximum likelihood estimation holds if the following holds.
There exist some polynomial $\epsilon$, $\delta$ and $p$, and a PPT algorithm $\cD$, which takes $1^n$ and $z\in\bit^n$ as input, and outputs $x\in\bit^{p(n)}$ such that, for any PPT algorithm $\cA$, there exist $x\in\bit^{p(n)}$ with $\max_{a\in\bit^n}\Pr[x\la\cD(1^n,a)]>0$ such that 
\begin{align}
\Pr[ \frac{\max_{a\in\bit^n} \Pr[x\la\cD(1^n,a)]}{\Pr[x\la\cD(1^n,h)]} \leq 2^{\epsilon(n)} : h\la\cA(1^n,x)]\leq 1-1/\delta(n)
\end{align}
for infinitely many $n\in\N$.
\end{definition}

\begin{theorem}\label{thm:NP_completeness}
The following three conditions are equivalent:
\begin{enumerate}
\item $\mathbf{NP}\nsubseteq \mathbf{BPP}$.
\item Hardness of agnostic classical distribution learning with respect to KL divergence holds.
\item Worst-case hardness of classical maximum likelihood estimation holds.
\end{enumerate}
\end{theorem}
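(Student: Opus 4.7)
The plan is to mirror the proof of \cref{thm:PP_completeness}, decomposing the three-way equivalence into (2) $\Leftrightarrow$ (3), (1) $\Rightarrow$ (3), and the new direction (3) $\Rightarrow$ (1), and replacing each quantum ingredient with its classical counterpart. For (2) $\Leftrightarrow$ (3) I would copy the proofs of \cref{lem:QML_to_KL_Learn,lem:KL_learn_to_QML} almost verbatim, substituting PPT for QPT throughout. Those proofs use only generic properties of polynomial-time sampleable distributions (Hoeffding's inequality applied to log-probabilities and the elementary identity between KL divergence and log-likelihood), so no essential modification is needed; the smoothing term $C_n$ ensuring full support and the amplified sample size $t(n)$ carry over without change. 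For (1) $\Rightarrow$ (3), I would appeal to the construction of \cite{ML:AW92}, which reduces an $\mathbf{NP}$-hard problem to classical maximum likelihood and requires no new proof.

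The substantive new ingredient is (3) $\Rightarrow$ (1), the classical analogue of \cref{lem:PP_easiness}. I will prove the contrapositive: $\mathbf{NP}\subseteq\mathbf{BPP}$ implies that worst-case hardness of classical maximum likelihood estimation does not hold. Fix a PPT algorithm $\cD$ as in the definition. By \cref{thm:estimate}, for any polynomial $\epsilon$ there is a PPT algorithm with an $\mathbf{NP}$ oracle that on input $(z,x)$ outputs a $\left(1\pm 1/\epsilon(n)\right)$-multiplicative approximation of $\Pr[x\gets\cD(1^n,z)]$ with probability $\geq 1-1/\epsilon(n)$, which I would amplify to failure probability $2^{-n}$ by taking the median of independent runs. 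Given this estimator, finding $z^{*}\in\bit^n$ that approximately maximizes $\Pr[x\gets\cD(1^n,z^{*})]$ sits in $\mathbf{BPP}^{\mathbf{NP}}$: binary-search over thresholds $t$ using an $\mathbf{NP}$ oracle that asks whether some $z$ achieves Stockmeyer-estimate $\geq t$, then extract an explicit such $z$ via standard search-to-decision reduction. Under $\mathbf{NP}\subseteq\mathbf{BPP}$ the polynomial-time hierarchy collapses and $\mathbf{BPP}^{\mathbf{NP}}=\mathbf{BPP}$, so the whole procedure is PPT. The resulting algorithm $\cA$ outputs $h$ satisfying $\Pr[x\gets\cD(1^n,h)]\geq 2^{-1/\epsilon(n)}\max_{a}\Pr[x\gets\cD(1^n,a)]$ with high probability, contradicting (3).

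The step I expect to be the main obstacle is threading the multiplicative Stockmeyer error through the binary search and through the universal quantification over $x$ appearing in the classical analogue of \cref{def:QML}. Concretely, I must choose the binary-search grid coarser than the Stockmeyer slack so that the approximation errors cannot compound to violate the $2^{1/\epsilon(n)}$ guarantee, and I must ensure that the existence predicate fed to the $\mathbf{NP}$ oracle is a genuine $\mathbf{NP}$ language rather than a promise problem (achievable by quantizing thresholds to $O(\log n)$ bits and building the Stockmeyer estimator's accepting-path count directly into the reduction). Once this quantitative bookkeeping is in place, the rest of the argument follows the template set by \cref{lem:PP_easiness} and its citation to \cite{HH24}.
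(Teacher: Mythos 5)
Your decomposition matches the paper's, and your (2) $\Leftrightarrow$ (3) and (3) $\Rightarrow$ (1) directions are essentially what the paper does. For (2) $\Leftrightarrow$ (3) the paper also just reruns \cref{lem:QML_to_KL_Learn} and \cref{lem:KL_learn_to_QML} with PPT in place of QPT, and for (3) $\Rightarrow$ (1) the paper likewise says to repeat the argument of \cref{lem:PP_easiness} (the \cite{HH24} argument) with \cref{claim:stockmeyer} substituted for \cref{lem:Extrapolation_PP}; your Stockmeyer-plus-binary-search implementation is a reasonable way to flesh that out. One small inaccuracy: the naive existential query you feed to the oracle (``does some $z$ achieve Stockmeyer estimate $\ge t$'') wraps an $\mathbf{NP}$-oracle computation inside an existential quantifier, so it lives at the $\Sigma_2^{\mathbf{P}}$ level rather than plain $\mathbf{NP}$, and the whole learner sits in $\mathbf{BPP}^{\Sigma_k^{\mathbf{P}}}$ for some small $k$; this is harmless because $\mathbf{NP}\subseteq\mathbf{BPP}$ already collapses the entire polynomial hierarchy into $\mathbf{BPP}$, which is the clean way to phrase the derandomization step rather than arguing about $\mathbf{BPP}^{\mathbf{NP}}$ specifically.

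The one genuine divergence is (1) $\Rightarrow$ (3). You propose to invoke the reduction of \cite{ML:AW92} as a black box. The paper instead gives a new, self-contained proof that mirrors \cref{lem:PP_hardness}: if classical maximum-likelihood estimation were easy, one could decide every $\mathbf{BPP_{path}}$ (equivalently $\mathbf{PostBPP}$) language in $\mathbf{BPP}$ by the same postselection-on-$b^*=1$ construction used for $\mathbf{PostBQP}$ in \cref{lem:PP_hardness}, and since $\mathbf{NP}\subseteq\mathbf{BPP_{path}}$ this would give $\mathbf{NP}\subseteq\mathbf{BPP}$. This is exactly what the paper means by providing ``simpler proofs than those of \cite{ML:AW92}''; it also makes the classical and quantum theorems structurally parallel (the role of $\mathbf{PP}=\mathbf{PostBQP}$ is played by $\mathbf{BPP_{path}}=\mathbf{PostBPP}$). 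Your citation route is correct, but it misses that this direction is one of the things the paper claims to have streamlined, and it breaks the symmetry with \cref{thm:PP_completeness} that the paper's proof sketch is designed to exhibit. Finally, note that you wrote the quality factor as $2^{-1/\epsilon(n)}$, matching \cref{def:QML}, whereas the stated classical definition in the paper uses $2^{\epsilon(n)}$; this looks like a typo in the paper, and your normalization is the one that makes the reduction go through.
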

The proof of \cref{thm:NP_completeness} is similar to \cref{thm:PP_completeness}. Therefore, we give only a sketch of the proof.
\begin{proof}[Proof sketch of \cref{thm:NP_completeness}]
We can show that $(2)\Rightarrow(3)$ and $(2)\Leftarrow (3)$ in the same way as \cref{lem:QML_to_KL_Learn} and \cref{lem:KL_learn_to_QML}, respectively.

$(3)\Leftarrow(1)$ follows in the same way as \cref{lem:PP_hardness}.
We can show that $\mathbf{BPP_{path}}\subseteq \mathbf{BPP}$ if worst-case hardness of classical maximum likelihood estimation does not hold.
$\mathbf{BPP_{path}}\subseteq \mathbf{BPP}$ implies $\mathbf{NP}\subseteq\mathbf{BPP}$.

$(1)\Leftarrow (3)$ follows in the same way as \cref{lem:PP_easiness} (For details, see Lemma 4.2 in \cite{HH24}.)
\cite{HH24} proves \cref{lem:PP_easiness} by using \cref{lem:Extrapolation_PP}.
$(1)\Leftarrow (3)$ follows by using \cref{claim:stockmeyer} instead.
\end{proof}

\color{black}

\section{Agnostic Quantum Distribution Learning with Respect to Statistical Distance}\label{sec:total_variation_distance}

\subsection{Upper Bounds of Distribution Learning with Respect to Statistical Distance}

In this section, we introduce the agnostic distribution learning with respect to statistical distance.
Then, we state that if $\mathbf{PP}=\mathbf{BQP}$ (resp. $ \mathbf{NP}\subseteq\mathbf{BPP}$), then a QPT (resp. PPT) algorithm can achieve agnostic quantum (resp. classical) distribution learning with respect to statistical distance.

\begin{definition}[Agnostic Distribution Learning with Respect to Statistical Distance~\cite{STOC:KMRRSS94}]\label{def:ag_SD}
We say that an algorithm $\cA$ achieves agnostic quantum (resp. classical) distribution learning with respect to statistical distance if the following holds:
For any polynomial $\epsilon$ and $\delta$, there exists a polynomial $t$ such that for any QPT (resp. PPT) algorithm $\cD$, which takes $1^n$ and $z\in\bit^n$ as input, and outputs $x\in\bit^{\poly(n)}$ and an algorithm $\cT$, which takes $1^n$ as input and outputs $x\in\bit^{\poly(n)}$, we have
\begin{align}
\Pr[ 
\mathsf{SD}(\cD(1^n,h),\cT(1^n))\leq (3+1/\epsilon(n))\cdot\mathsf{Opt}_n+1/\epsilon(n):
\begin{array}{ll}
&\{x_i\}_{i\in[t(n)]}\la\cT(1^n)^{\otimes t(n)} \\
& h\la \cA(1^n,\{x_i\}_{i\in[t(n)]})
\end{array}
]\geq 1-1/\delta(n)    
\end{align}
for all sufficiently large $n\in\N$.
Here, $\mathsf{Opt}_n\seteq \min_{a\in\bit^n}\mathsf{SD}(\cD(1^n,a),\cT(1^n))$.
\end{definition}
\begin{remark}
In our definition of agnostic distribution learning, the learner needs to output a hypothesis $h$ such that
$
\mathsf{SD}(\cD(1^n,h), \cT(1^n)) \leq (3+1/\epsilon(n)) \cdot \min_{a} \mathsf{SD}(\cD(1^n,a), \cT(1^n)) +1/\epsilon(n).
$
One might wonder why we consider $(3+1/\epsilon(n)) \cdot \min_{a} \mathsf{SD}(\cD(1^n,a), \cT(1^n))$ instead of $\min_{a} \mathsf{SD}(\cD(1^n,a), \cT(1^n))$.
This is because \cite{myBKM19} shows that no (even unbounded) algorithm can achieve a factor strictly smaller than \(3\).
Therefore, we consider $\mathsf{SD}(\cD(1^n,h), \cT(1^n)) \leq (3+1/\epsilon(n)) \cdot \min_{a} \mathsf{SD}(\cD(1^n,a), \cT(1^n)) +1/\epsilon(n)$ in this work.
\end{remark}

We show that a PPT algorithm with access to a $\Sigma_3^{\mathbf{P}}$ oracle achieves agnostic classical distribution learning with respect to statistical distance.
\begin{theorem}\label{thm:NP_learning}
There exists a PPT algorithm with access to a $\Sigma_3^{\mathbf{P}}$ oracle which achieves agnostic classical distribution learning with respect to statistical distance.
\end{theorem}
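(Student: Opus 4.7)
The plan is to implement the learning algorithm sketched in the technical overview, casting each computational step as either an empirical estimate, a Stockmeyer-style $\mathbf{NP}$ computation, or a query to a $\Sigma_3^{\mathbf{P}}$ oracle. First, for every pair $(a,b)\in \bit^n\times\bit^n$, define the (information-theoretic) optimal distinguisher
\begin{align}
\mathsf{Dis}(a,b,x)\seteq \begin{cases} 1 & \mbox{if } \Pr[x\la\cD(1^n,a)]>\Pr[x\la\cD(1^n,b)]\\ 0 & \mbox{otherwise,}\end{cases}
\end{align}
so that $\bigl|\Pr[1\la\mathsf{Dis}(a,b,x):x\la\cD(1^n,a)]-\Pr[1\la\mathsf{Dis}(a,b,x):x\la\cD(1^n,b)]\bigr|=\mathsf{SD}(\cD(1^n,a),\cD(1^n,b))$. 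Using \cref{thm:estimate}, the probabilities $\Pr[x\la\cD(1^n,c)]$ can be approximated up to multiplicative error $1/\poly(n)$ by a PPT algorithm with an $\mathbf{NP}$ oracle; comparing the two approximations lets a PPT algorithm with an $\mathbf{NP}$ oracle implement $\mathsf{Dis}(a,b,x)$ correctly except on a set of $x$ with total $\cD(1^n,\cdot)$-mass at most $1/\poly(n)$.

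Next, given independent samples $x_1,\ldots,x_{t(n)}$ from $\cT(1^n)$, the learner estimates, for all $(a,b)$ simultaneously, the two quantities
\begin{align}
\widehat{p}_{\cT}(a,b)\seteq \frac{1}{t(n)}\sum_{i\in[t(n)]}\mathsf{Dis}(a,b,x_i),\qquad
\widehat{p}_{\cD}(a,b)\seteq \frac{1}{t(n)}\sum_{i\in[t(n)]}\mathsf{Dis}(a,b,\cD(1^n,a;r_i)),
\end{align}
where $r_1,\ldots,r_{t(n)}$ are fresh PPT-chosen random strings used to simulate $\cD(1^n,a)$. By \cref{lem:Hoeffding} with $t(n)$ chosen polynomially large in $n,\epsilon,\delta$, with probability $\ge 1-1/\delta(n)$ these empirical averages are within $1/\poly(n)$ of $\Pr[1\la\mathsf{Dis}(a,b,x):x\la\cT(1^n)]$ and $\Pr[1\la\mathsf{Dis}(a,b,x):x\la\cD(1^n,a)]$, respectively, uniformly over all $(a,b)\in\bit^{2n}$. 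The learner then outputs any $a^*$ minimizing $\max_b|\widehat{p}_{\cT}(a,b)-\widehat{p}_{\cD}(a,b)|$.

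The correctness analysis follows from the triangle chain in the technical overview: for the optimal choice $b^*\in\arg\min_b\mathsf{SD}(\cT,\cD(1^n,b))$,
\begin{align}
\mathsf{SD}(\cT,\cD(1^n,a^*))
&\le \mathsf{SD}(\cT,\cD(1^n,b^*))+\mathsf{SD}(\cD(1^n,a^*),\cD(1^n,b^*))\\
&\le 2\,\mathsf{Opt}_n+\bigl|\Pr[1\la\mathsf{Dis}(a^*,b^*,x):x\la\cT]-\Pr[1\la\mathsf{Dis}(a^*,b^*,x):x\la\cD(1^n,a^*)]\bigr|,
\end{align}
and by the defining property of $a^*$ together with the trivial bound $|\Pr_{\cT}-\Pr_{\cD(b^*)}|\le \mathsf{SD}(\cT,\cD(1^n,b^*))$ obtained when $a=b^*$, this last term is bounded by $\mathsf{Opt}_n+O(1/\poly(n))$, yielding $\mathsf{SD}(\cT,\cD(1^n,a^*))\le 3\,\mathsf{Opt}_n+1/\epsilon(n)$ after absorbing the $1/\epsilon(n)$ slack into $(3+1/\epsilon(n))\,\mathsf{Opt}_n+1/\epsilon(n)$.

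Finally, the computational realization: once the samples and randomness $\{x_i,r_i\}$ are fixed, the predicate ``$a$ satisfies $\max_b|\widehat{p}_{\cT}(a,b)-\widehat{p}_{\cD}(a,b)|\le\eta$'' has the form $\forall b\,\bigl(\cM^{\mathbf{NP}}(a,b,\{x_i,r_i\})=1\bigr)$ for a deterministic PPT machine $\cM$ with an $\mathbf{NP}$ oracle that carries out the $\poly(n)$ many multiplicative-approximation queries of \cref{thm:estimate} needed to compute the $\mathsf{Dis}$-values and the empirical averages. Hence $\{\{x_i,r_i\}:\exists a\,\forall b\,\cM^{\mathbf{NP}}(a,b,\{x_i,r_i\})=1\}\in\Sigma_3^{\mathbf{P}}$, so a PPT algorithm with access to a $\Sigma_3^{\mathbf{P}}$ oracle can search bit-by-bit for a valid $a^*$. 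The main obstacle I anticipate is bookkeeping the three layers of approximation (Stockmeyer's multiplicative $\mathbf{NP}$ estimate of $\Pr[x\la\cD(1^n,c)]$, the resulting wrong $\mathsf{Dis}$-decisions near ties, and Hoeffding's sampling error) so that (i) they combine additively into a single $1/\poly(n)$ slack, (ii) the uniformity over exponentially many $(a,b)$ is preserved via union bounds built into the Hoeffding parameter $t(n)$, and (iii) the final decision predicate remains computable in $\mathbf{P}^{\mathbf{NP}}$ so that the outer $\exists a\,\forall b$ lands exactly in $\Sigma_3^{\mathbf{P}}$ rather than in a higher level of the hierarchy.
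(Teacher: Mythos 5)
Your proposal follows essentially the same route as the paper's proof: an information-theoretic optimal distinguisher, a Stockmeyer/$\mathbf{NP}$-oracle implementation (\cref{thm:estimate}), empirical estimation of the distinguisher's acceptance probabilities, the triangle-inequality chain to bound $\mathsf{SD}(\cT,\cD(1^n,a^*))$ by roughly $3\,\mathsf{Opt}_n$, and a $\Sigma_3^{\mathbf{P}}$ query of the form $\exists a\,\forall b\,\cM^{\mathbf{NP}}(a,b,\cdot)=1$ to realize the bit-by-bit search. The paper encapsulates the distinguisher step as \cref{thm:NP_distinguish} and the concentration step as \cref{claim:behave_well}, but these are the same ingredients you describe.

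The one concrete gap you have not closed is the threshold-selection problem. You say that once a threshold $\eta$ is fixed, the predicate ``$\max_b\abs{\widehat{p}_{\cT}(a,b)-\widehat{p}_{\cD}(a,b)}\le\eta$'' has the right $\forall b$-over-$\mathbf{P}^{\mathbf{NP}}$ shape, and then ``search bit-by-bit for a valid $a^*$.'' But the right value of $\eta$ is on the order of $\mathsf{Opt}_n + 1/\poly(n)$, and $\mathsf{Opt}_n$ is \emph{unknown} to the learner; if you pick $\eta$ too small the $\exists a$ quantifier is unsatisfiable (so bit-by-bit search returns garbage), and if too large the returned $a^*$ may not give a $3$-approximation. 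You must therefore also search for the threshold, and this interacts with the bit-by-bit search: the paper's algorithm $\cA^{\Sigma_3^{\mathbf{P}}}$ runs a fresh $n$-step binary search over $p(j)\in[0,1]$ \emph{inside} the determination of each output bit $\mathsf{out}_I$, and the correctness proof (\cref{claim:A_output}) tracks how the admissible threshold degrades by $\Theta(1/(n\epsilon))$ per fixed bit, so that after $n$ bits the total slack is still $O(1/\epsilon)$. Relatedly, you acknowledge that the Stockmeyer-based distinguisher is only approximately correct near ties, but the fix is not merely to bound the mass of $x$ on which it errs: the paper's $\mathsf{Dis}_{\epsilon}^{\mathbf{NP}}$ uses a multiplicative gap $(1+\tfrac{1}{16\epsilon(n)})$ in its comparison and, crucially, \cref{thm:NP_distinguish} only promises the approximate-distinguishing guarantee when $\mathsf{SD}(\cD(\ell),\cD(m))\ge 1/\epsilon(n)$; the correctness argument in the main proof therefore splits into the cases $\mathsf{SD}(\cD(\ell),\cD(\mathsf{out}))<1/\epsilon(n)$ (handled by a direct triangle inequality) and $\ge 1/\epsilon(n)$ (where \cref{thm:NP_distinguish} applies). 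Without this case split and the explicit interleaved binary search, your argument is incomplete.
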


We also show that a PPT algorithm with access to a $\Sigma_2^{\mathbf{PP}}$ oracle achieves agnostic quantum distribution learning with respect to statistical distance.
\begin{theorem}\label{thm:PP_learn_quantum}
There exists a PPT algorithm with access to a ${\Sigma_2^{\mathbf{PP}}}$ oracle which achieves agnostic quantum distribution learning with respect to statistical distance.
\end{theorem}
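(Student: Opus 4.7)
The plan is to adapt the proof of \cref{thm:NP_learning} in a direct way, replacing the $\mathbf{NP}$-oracle approximation of \emph{classical} probabilities via \cref{thm:estimate} with a $\mathbf{PP}$-oracle approximation of \emph{quantum} probabilities, which is available because $\mathbf{PP}=\mathbf{PostBQP}$ permits a PPT algorithm with a $\mathbf{PP}$ oracle to estimate $\Pr[x\la\cD(1^n,z)]$ within $1/\poly(n)$ multiplicative error for any QPT $\cD$ (the quantum analogue of Stockmeyer's approximation, provable via the same reduction as in \cref{thm:estimate} applied to a QPT algorithm that also samples $z$). Define the Yatracos distinguisher $\mathsf{Dis}(a,b,x):=\mathbf{1}[\Pr[x\la\cD(1^n,a)]>\Pr[x\la\cD(1^n,b)]]$; it is computable by a PPT algorithm with a $\mathbf{PP}$ oracle, and by construction realizes $\mathsf{SD}(\cD(1^n,a),\cD(1^n,b))=|\Pr_{x\la\cD(1^n,a)}[\mathsf{Dis}(a,b,x)=1]-\Pr_{x\la\cD(1^n,b)}[\mathsf{Dis}(a,b,x)=1]|$.

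I would then reproduce the Yatracos--Devroye--Lugosi argument sketched in the technical overview of \cref{thm:classical_upperbound}: the triangle inequality implies that any $z^*$ approximately minimizing
\begin{align*}
\max_{b\in\bit^n}\Big|\Pr_{x\la\cT(1^n)}[\mathsf{Dis}(z^*,b,x)=1]-\Pr_{x\la\cD(1^n,z^*)}[\mathsf{Dis}(z^*,b,x)=1]\Big|
\end{align*}
satisfies $\mathsf{SD}(\cT(1^n),\cD(1^n,z^*))\le (3+1/\epsilon(n))\mathsf{Opt}_n+1/\epsilon(n)$ as required in \cref{def:ag_SD}. With $T=\poly(n,\epsilon,\delta)$ samples from $\cT$, applying \cref{lem:Hoeffding} uniformly over all candidate pairs $(z,b)\in\bit^n\times\bit^n$ guarantees that the empirical average $\frac{1}{T}\sum_{i\in[T]}\mathsf{Dis}(z,b,x_i)$ is within $1/\poly(n)$ of the first term simultaneously for all $(z,b)$, with probability at least $1-1/\delta(n)$. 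The search for $z^*$ is then phrased as a single $\Sigma_2^{\mathbf{PP}}$ query of the form $\exists z\,\forall b\;[\cM^{\mathbf{PP}}(x_1,\ldots,x_T,z,b)=1]$, where the deterministic polynomial-time machine $\cM$ uses its $\mathbf{PP}$ oracle to (i) evaluate each $\mathsf{Dis}(z,b,x_i)$, (ii) approximate the quantum expectation $\Pr_{x\la\cD(1^n,z)}[\mathsf{Dis}(z,b,x)=1]$, and (iii) verify that the absolute difference falls below the desired threshold. A PPT learner then extracts $z^*$ bit-by-bit via polynomially many self-reduction queries to this oracle.

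The main obstacle is step (ii): unlike the classical case, the learner cannot sample from the QPT distribution $\cD(1^n,z)$, so the expectation $\Pr_{x\la\cD(1^n,z)}[\mathsf{Dis}(z,b,x)=1]$ must be extracted from the $\mathbf{PP}$ oracle rather than by Monte Carlo. To handle this, I would absorb a discretized guess $\tilde{q}_b\in\{0,1,\ldots,\poly(n)\}$ of this expectation into the outer $\exists$ witness of the $\Sigma_2^{\mathbf{PP}}$ query and extend the $\forall$ quantifier to also range over a refutation witness for $\tilde{q}_b$; the inner $\cM^{\mathbf{PP}}$ accepts iff (a) the proposed empirical-vs-$\tilde{q}_b$ discrepancy bound holds, and (b) no valid refutation witness shows $|\tilde{q}_b-\Pr_{x\la\cD(1^n,z)}[\mathsf{Dis}(z,b,x)=1]|>1/\poly$, where (b) reduces to a $\mathbf{PP}$ query because it amounts to comparing a $\mathbf{PostBQP}$ acceptance probability (the quantum-plus-PP computation that samples $x\la\cD(1^n,z)$ and outputs $\mathsf{Dis}(z,b,x)$) to a threshold. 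Propagating the $1/\poly$-additive errors through $\mathsf{Dis}$, the empirical average, and the guessed expectation, and balancing them so that the final bound $(3+1/\epsilon(n))\mathsf{Opt}_n+1/\epsilon(n)$ of \cref{def:ag_SD} is preserved, is the delicate but otherwise routine bookkeeping that completes the proof.
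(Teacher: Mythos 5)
Your overall template---Yatracos distinguisher realizing the total variation distance, Hoeffding concentration over a net of $(z,b)$ pairs, a $\Sigma_2^{\mathbf{PP}}$ search, and bit-by-bit extraction---is the right one and matches the paper. You also correctly flag the one place where the quantum case diverges from \cref{thm:NP_learning}: the learner cannot run $\cD(1^n,z)$ itself to produce the ``fresh samples'' needed for the Monte Carlo estimate of $\Pr_{x\la\cD(1^n,z)}[\mathsf{Dis}(z,b,x)=1]$ inside the $\Sigma_2^{\mathbf{PP}}$ query. However, your proposed fix for this obstacle does not go through as stated, and it is precisely here that the paper's proof uses an ingredient you are missing.

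Your plan is to place a discretized guess $\tilde{q}_b$ of $\Pr_{x\la\cD(1^n,z)}[\mathsf{Dis}(z,b,x)=1]$ under the outer $\exists$ quantifier and check it under the $\forall$ quantifier via a single $\mathbf{PP}$ call, on the grounds that the quantity being compared to a threshold is ``a $\mathbf{PostBQP}$ acceptance probability.'' It is not: $\mathsf{Dis}(z,b,\cdot)$ is itself a $\mathbf{P}^{\mathbf{PP}}$-computable predicate (it compares two quantum probabilities), so the procedure ``sample $x\la\cD(1^n,z)$ and output $\mathsf{Dis}(z,b,x)$'' is a $\mathbf{BQP}^{\mathbf{PP}}$ (or $\mathbf{BPP}^{\mathbf{PP}}$) computation, and deciding whether its acceptance probability exceeds a threshold is a $\mathbf{PP}^{\mathbf{PP}}$ question, not a $\mathbf{PP}$ question. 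The inner machine of a $\Sigma_2^{\mathbf{PP}}$ formula only gets a plain $\mathbf{PP}$ oracle, and it is not known that $\mathbf{PP}^{\mathbf{PP}}\subseteq\mathbf{P}^{\mathbf{PP}}$ (this would collapse the counting hierarchy). Nor does adding a $\forall$-quantified ``refutation witness'' obviously repair this, since there is no poly-size witness format that certifies the value of $\Pr_{x\la\cD(1^n,z)}[\mathsf{Dis}(z,b,x)=1]$.

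The paper avoids the problem entirely with a second consequence of the Fortnow--Rogers extrapolation lemma (\cref{lem:Extrapolation_PP}), stated as \cref{thm:PP_sample}: a PPT algorithm $\cB$ with a $\mathbf{PP}$ oracle can \emph{exactly sample} from $\cD(1^n,z)$. The outer learner draws uniformly random strings $\{r_i\}_i$ in the clear (outside any quantifier), and the deterministic inner machine $\cM^{\mathbf{PP}}$ reconstitutes samples $X_{a,r_i}=\cB^{\mathbf{PP}}(a;r_i)$ and evaluates $\mathsf{Dis}^{\mathbf{PP}}$ on them, so that the empirical comparison is a genuine $\mathbf{P}^{\mathbf{PP}}$ computation of Monte Carlo averages, just as in the classical case the randomness for $\cD(a;r_i(2))$ is sampled outside and fed in. Hoeffding then ensures that, with high probability over the $\{r_i\}$, the empirical averages track the true expectations simultaneously for all $(a,b)$, and the rest of the argument is identical to \cref{thm:NP_learning}. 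In short: you need to replace ``guess and refute the expectation'' with ``exactly sample from $\cD(z)$ via the $\mathbf{PP}$ oracle and Monte Carlo with externally supplied randomness''; otherwise the inner machine needs strictly more than $\mathbf{PP}$.
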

We give the proof of \cref{thm:NP_learning} in \cref{sec:Proof_NP_learning}.
The proof of \cref{thm:PP_learn_quantum} is similar to that of \cref{thm:NP_learning}.
For clarity, we give the sketch of the proof in \cref{sec:Proof_PP_learning}.
\cref{thm:NP_learning,thm:PP_learn_quantum} imply the following \cref{cor:NP_learn_classical,cor:PP_learn_quantum}, respectively.

\begin{corollary}\label{cor:NP_learn_classical}
If $\mathbf{NP}\subseteq\mathbf{BPP}$, then there exists a PPT algorithm which achieves agnostic classical distribution learning with respect to the statistical distance.
\end{corollary}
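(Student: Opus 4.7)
The plan is to derive the corollary directly from \cref{thm:NP_learning}. Under the hypothesis $\mathbf{NP}\subseteq\mathbf{BPP}$, the polynomial-time hierarchy collapses into $\mathbf{BPP}$, so the $\Sigma_3^{\mathbf{P}}$ oracle used by the learning algorithm of \cref{thm:NP_learning} can be replaced by a PPT subroutine, yielding an oracle-free PPT learner.

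First I would record the collapse $\Sigma_3^{\mathbf{P}}\subseteq\mathbf{BPP}$. Proceeding by induction on $k$, with base case $\Sigma_1^{\mathbf{P}}=\mathbf{NP}\subseteq\mathbf{BPP}$ given by the hypothesis, and inductive step
\begin{align}
\Sigma_{k+1}^{\mathbf{P}}=\mathbf{NP}^{\Sigma_k^{\mathbf{P}}}\subseteq\mathbf{NP}^{\mathbf{BPP}}\subseteq\mathbf{BPP}^{\mathbf{NP}}\subseteq\mathbf{BPP}^{\mathbf{BPP}}=\mathbf{BPP},
\end{align}
where $\mathbf{NP}^{\mathbf{BPP}}\subseteq\mathbf{BPP}^{\mathbf{NP}}$ is Ko's inclusion and the final equality uses that $\mathbf{BPP}$ is self-low. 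Iterating three steps gives $\Sigma_3^{\mathbf{P}}\subseteq\mathbf{BPP}$.

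Next I would invoke \cref{thm:NP_learning} to obtain a PPT oracle algorithm $\cA^{\Sigma_3^{\mathbf{P}}}$ satisfying \cref{def:ag_SD}. Let $q$ be a polynomial upper bound on the number of queries $\cA$ makes on inputs of size $n$. Using the inclusion $\Sigma_3^{\mathbf{P}}\subseteq\mathbf{BPP}$, each query to the $\Sigma_3^{\mathbf{P}}$ oracle can be answered by a PPT subroutine whose error probability, after standard amplification, is at most $1/(2q(n)\delta(n))$, where $\delta$ is the failure polynomial from \cref{def:ag_SD}. A union bound over the at most $q(n)$ queries shows that, except with probability at most $1/(2\delta(n))$, every simulated query returns the same answer as the $\Sigma_3^{\mathbf{P}}$ oracle would; on this event, the simulated algorithm is indistinguishable from $\cA^{\Sigma_3^{\mathbf{P}}}$. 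Combining with the original failure event of $\cA$ and absorbing the factor of $2$ into $\delta$ completes the proof that the resulting oracle-free PPT algorithm achieves agnostic classical distribution learning with respect to the statistical distance.

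I do not expect any serious obstacle: once the hierarchy collapse $\Sigma_3^{\mathbf{P}}\subseteq\mathbf{BPP}$ is in hand, the rest is routine bookkeeping with error amplification and a union bound.
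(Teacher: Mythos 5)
Your proof is correct and follows essentially the same route as the paper: derive $\Sigma_3^{\mathbf{P}}\subseteq\mathbf{BPP}$ from the hypothesis, then replace the $\Sigma_3^{\mathbf{P}}$ oracle in the learner from \cref{thm:NP_learning} by a PPT subroutine. The paper states the oracle collapse and the simulation step without the explicit induction or the union-bound bookkeeping you supply, but these are exactly the (standard) details being elided.
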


\begin{corollary}\label{cor:PP_learn_quantum}
If $\mathbf{PP}\subseteq\mathbf{BQP}$, then there exists a QPT algorithm which achieves agnostic quantum distribution learning with respect to the statistical distance.
\end{corollary}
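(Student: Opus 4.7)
The plan is to derive Corollary~\ref{cor:PP_learn_quantum} from Theorem~\ref{thm:PP_learn_quantum} by showing that, under the assumption $\mathbf{PP}\subseteq\mathbf{BQP}$, every query to a $\Sigma_2^{\mathbf{PP}}$ oracle can be simulated by a QPT algorithm (up to exponentially small error). Given this, the PPT learner of \cref{thm:PP_learn_quantum}, which uses only polynomially many queries to the $\Sigma_2^{\mathbf{PP}}$ oracle, can be turned into a fully QPT learner with the same learning guarantee.

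The key complexity-theoretic step is to show $\Sigma_2^{\mathbf{PP}}\subseteq\mathbf{BQP}$ under the hypothesis $\mathbf{PP}\subseteq\mathbf{BQP}$. I would carry this out in two sub-steps. First, I would invoke the relativized form of Toda's theorem ($\mathbf{PH}^A\subseteq \mathbf{P}^{\mathbf{PP}^A}$ for any oracle $A$), which gives
\begin{align}
\Sigma_2^{\mathbf{PP}}\subseteq \mathbf{PH}^{\mathbf{PP}}\subseteq \mathbf{P}^{\mathbf{PP}^{\mathbf{PP}}}.
\end{align}
Second, I would use the hypothesis together with closure of $\mathbf{BQP}$ under polynomial-time Turing reductions to itself (i.e.\ $\mathbf{BQP}^{\mathbf{BQP}}=\mathbf{BQP}$, which follows from standard error amplification for QPT oracle algorithms) to conclude
\begin{align}
\mathbf{PP}^{\mathbf{PP}}\subseteq \mathbf{BQP}^{\mathbf{BQP}}=\mathbf{BQP},\qquad \mathbf{P}^{\mathbf{PP}^{\mathbf{PP}}}\subseteq \mathbf{P}^{\mathbf{BQP}}\subseteq \mathbf{BQP}.
\end{align}
Chaining these gives $\Sigma_2^{\mathbf{PP}}\subseteq \mathbf{BQP}$.

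Given this containment, I would finish as follows. Let $\cA^{\Sigma_2^{\mathbf{PP}}}$ be the PPT learner from \cref{thm:PP_learn_quantum}; it makes at most $q(n)=\poly(n)$ oracle queries on samples of length $\poly(n)$. Construct $\cA^*$ as a QPT algorithm that runs $\cA$ step by step, and whenever $\cA$ would issue a query to $\Sigma_2^{\mathbf{PP}}$, $\cA^*$ instead runs the $\mathbf{BQP}$ algorithm guaranteed by the containment above, with its soundness/completeness error amplified to at most $1/(4q(n)\delta(n))$ using majority voting over polynomially many repetitions. By a union bound, with probability at least $1-1/(4\delta(n))$ every simulated query returns the correct answer, and conditioned on this event $\cA^*$ behaves identically to $\cA^{\Sigma_2^{\mathbf{PP}}}$. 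Combining this with the $1-1/\delta(n)$-success guarantee of $\cA^{\Sigma_2^{\mathbf{PP}}}$ from \cref{def:ag_SD} yields that $\cA^*$ satisfies the agnostic learning condition with a slightly worsened (but still inverse-polynomial) failure probability, which can be absorbed by suitably re-parameterizing $\delta$.

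The only non-routine ingredient is the relativized Toda theorem combined with $\mathbf{BQP}^{\mathbf{BQP}}=\mathbf{BQP}$; both are standard, so I expect no genuine obstacle here. The one point that does demand a moment of care is the error-amplification of the $\mathbf{BQP}$ simulation of $\Sigma_2^{\mathbf{PP}}$ along a chain of $\mathbf{BQP}$-oracle reductions, but this is handled by the usual repeat-and-majority argument applied at each level of the constant-depth tower $\mathbf{P}\subseteq\mathbf{PP}^{\mathbf{PP}}\subseteq\mathbf{BQP}$, since the depth is $O(1)$ and each level preserves polynomial running time.
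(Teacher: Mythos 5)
Your overall plan — derive the corollary from Theorem~\ref{thm:PP_learn_quantum} by showing $\Sigma_2^{\mathbf{PP}}\subseteq\mathbf{BQP}$ under the hypothesis, amplify the $\mathbf{BQP}$ simulation of each oracle query, and take a union bound over the polynomially many queries — is exactly the paper's route; the paper simply asserts the containment $\Sigma_2^{\mathbf{PP}}\subseteq\mathbf{BQP}$ without proof. Your decomposition via relativized Toda, $\Sigma_2^{\mathbf{PP}}\subseteq\mathbf{PH^{PP}}\subseteq\mathbf{P^{PP^{PP}}}$, is fine, as is the final step $\mathbf{P^{BQP}}\subseteq\mathbf{BQP^{BQP}}=\mathbf{BQP}$.

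However, the pivotal inclusion $\mathbf{PP^{PP}}\subseteq\mathbf{BQP^{BQP}}$ is not justified by what you cite. The closure $\mathbf{BQP^{BQP}}=\mathbf{BQP}$ handles a polynomial-time \emph{outer} machine querying a $\mathbf{BQP}$ oracle; here the outer machine is a $\mathbf{PP}$ machine, which has unbounded acceptance error, so ``repeat-and-majority'' amplification of the oracle subroutine does not by itself show that the composite is in $\mathbf{BQP}$. Moreover, the hypothesis $\mathbf{PP}\subseteq\mathbf{BQP}$ is a non-relativizing statement, so one cannot substitute it inside the oracle in the form ``$\mathbf{PP}^A\subseteq\mathbf{BQP}^A$''. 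The missing ingredient is the Fortnow--Rogers lowness theorem, $\mathbf{PP^{BQP}}=\mathbf{PP}$ (equivalently, $\mathbf{PQP}=\mathbf{PP}$ together with a gap-vs-error comparison): since the acceptance gap of a $\mathbf{PP}$ machine running in time $t$ is at least $2^{-t}$, replacing the $\mathbf{PP}$ oracle (which is in $\mathbf{BQP}$ by hypothesis) with a $\mathbf{BQP}$ subroutine whose error is pushed below $2^{-2t^2}$ preserves the acceptance threshold, and the resulting unbounded-error quantum machine shows $\mathbf{PP^{PP}}\subseteq\mathbf{PP^{BQP}}\subseteq\mathbf{PQP}=\mathbf{PP}\subseteq\mathbf{BQP}$. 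Once you insert this lemma in place of the $\mathbf{BQP^{BQP}}=\mathbf{BQP}$ appeal (and fix the garbled ``tower'' $\mathbf{P}\subseteq\mathbf{PP^{PP}}\subseteq\mathbf{BQP}$ in your last paragraph), your chain closes and the rest of the argument — simulating the learner's $\Sigma_2^{\mathbf{PP}}$ queries by amplified QPT subroutines and re-parameterizing $\delta$ — is correct.
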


\begin{proof}[Proof of \cref{cor:NP_learn_classical}]
If $\mathbf{NP}\subseteq \mathbf{BPP}$, then we have $\Sigma_3^{\mathbf{P}}\subseteq\mathbf{BPP}$.
Therefore, a PPT algorithm can simulate $\Sigma_3^{\mathbf{P}}$ oracle.
Hence, from \cref{thm:NP_learning}, a PPT algorithm can achieve agnostic classical distribution learning with respect to statistical distance assuming $\mathbf{NP}\subseteq \mathbf{BPP}$.
\end{proof}

\begin{proof}[Proof of \cref{cor:PP_learn_quantum}]
This is the same as \cref{cor:NP_learn_classical}.
If $\mathbf{PP}\subseteq\mathbf{BQP}$, then we have $\Sigma_2^{\mathbf{PP}}\subseteq\mathbf{BQP}$.
Therefore, a QPT algorithm can simulate $\Sigma_2^{\mathbf{PP}}$ oracle.
Hence, from \cref{thm:NP_learning}, a QPT algorithm can achieve agnostic quantum distribution learning with respect to statistical distance assuming $\mathbf{PP}\subseteq \mathbf{BQP}$.
\end{proof}

\subsection{Quantum Advantage and Hardness of Agnostic Quantum Distribution Learning}

In this section, we study relation between hardness of agnostic quantum distribution learning and sampling-based quantum advantage.

\begin{assumption}[Hardness of Agnostic Quantum Distribution Learning against PPT with $\Sigma_3^{\mathbf{P}}$ Oracle]\label{assumption:PP_hardness_of_Dis_learn}
We say that hardness of agnostic quantum distribution learning against PPT with access to a $\Sigma_3^{\mathbf{P}}$ oracle holds if the following holds.
There exist some polynomials $\epsilon$ and $\delta$, and a QPT algorithm $\cD$, which takes $1^n$ and $z\in\bit^n$ as input, and outputs $x\in\bit^{\poly(n)}$ such that, for any PPT algorithm $\cA$ with access to a $\Sigma_3^{\mathbf{P}}$ oracle and any polynomial $t$, there exist an algorithm $\cT$, which takes $1^n$ as input and outputs $x\in\bit^{\poly(n)}$, such that
\begin{align}\label{ineq:Sigma_3}
\Pr[ 
\mathsf{SD}(\cD(1^n,h),\cT(1^n))\leq (3+1/\epsilon(n))\cdot\mathsf{Opt}_n+1/\epsilon(n):
\begin{array}{ll}
&\{x_i\}_{i\in[t(n)]}\la\cT(1^n)^{\otimes t(n)} \\
& h\la \cA^{\Sigma_3^{\mathbf{P}}}(1^n,\{x_i\}_{i\in[t(n)]})
\end{array}
]\leq 1-1/\delta(n)
\end{align}
for infinitely many $n\in\N$.
Here, $\mathsf{Opt}_n\seteq \min_{a\in\bit^n}\mathsf{SD}(\cD(1^n,a),\cT(1^n))$.
\end{assumption}

\begin{assumption}[PP-hardness of Agnostic Quantum Distribution Learning with Respect to Statistical Distance]\label{ass:PP-hard}
We say that agnostic quantum distribution learning with respect to statistical distance is $\mathbf{PP}$-hard in a PPT black-box reduction if there exist polynomials $\epsilon$ and $\delta$, and a QPT algorithm $\cD$, which takes $1^n$ and $z\in\bit^n$ as input, and outputs $x\in\bit^{\poly(n)}$, such that the following condition is satisfied:
For any oracle $\mathcal{O}$ such that there exists a polynomial $t$ such that for any algorithm $\cT$, which takes $1^n$ as input, and outputs $x\in\bit^{\poly(n)}$,
\begin{align}
\Pr[ 
\mathsf{SD}(\cD(1^n,h),\cT(1^n))\leq (3+1/\epsilon(n))\cdot\mathsf{Opt}_n+1/\epsilon(n):
\begin{array}{ll}
&\{x_i\}_{i\in[t(n)]}\la\cT(1^n)^{\otimes t(n)} \\
& h\la \cO(1^n,\{x_i\}_{i\in[t(n)]})
\end{array}
]\geq 1-1/\delta(n)
\end{align}
for all sufficiently large $n\in\N$,
we have $\mathbf{PP}\subseteq \mathbf{BPP}^{\mathcal{O}}$.
\end{assumption}

Our first result is that the hardness of learning is derived from 
the standard quantum advantage assumption (\cref{def:QAA}) plus
$\mathbf{PP}\nsubseteq \mathbf{BQP}^{\Sigma_3^{\mathbf{P}}}$.
\begin{theorem}\label{thm:Learning_hardness_is_weaker}
$\mathbf{PP}\nsubseteq \mathbf{BQP}^{\Sigma_3^{\mathbf{P}}}$ and quantum advantage assumption (\cref{def:QAA}) imply
\cref{assumption:PP_hardness_of_Dis_learn}.
\end{theorem}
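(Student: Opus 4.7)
The plan is to argue by contrapositive, extracting from a hypothetical $\mathrm{PPT}^{\Sigma_3^{\mathbf{P}}}$ agnostic learner a quantum polynomial-time procedure with a $\Sigma_3^{\mathbf{P}}$ oracle that multiplicatively approximates the output probabilities $\Pr[x\la C(1^n)]$ of the QAA circuits, and then invoking clause 2(b) of \cref{def:QAA} to force a polynomial-hierarchy collapse that is inconsistent with $\mathbf{PP}\not\subseteq\mathbf{BQP}^{\Sigma_3^{\mathbf{P}}}$. Negating \cref{assumption:PP_hardness_of_Dis_learn} furnishes, for the QPT $\cD$ constructed below and any polynomials $\epsilon,\delta$ of our choosing, a polynomial $t$ and a $\mathrm{PPT}^{\Sigma_3^{\mathbf{P}}}$ learner $\cA$ that succeeds against every target $\cT$ with failure probability at most $1/\delta(n')$ for all sufficiently large $n'$. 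From $\cA$ we build an oracle $\cO$ satisfying the premise of clause 2(b) of \cref{def:QAA}. Since $\cO$ runs in quantum polynomial time with a $\Sigma_3^{\mathbf{P}}$ oracle, $\mathbf{BPP}^{\cO}\subseteq\mathbf{BQP}^{\Sigma_3^{\mathbf{P}}}$; \cref{def:QAA} then yields $\mathbf{P}^{\#\mathbf{P}}\subseteq\mathbf{BQP}^{\Sigma_3^{\mathbf{P}}}$, and $\mathbf{PP}\subseteq\mathbf{P}^{\#\mathbf{P}}$ gives $\mathbf{PP}\subseteq\mathbf{BQP}^{\Sigma_3^{\mathbf{P}}}$, contradicting the second hypothesis.

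For the reduction, let $\{\cC_n\}$ be the QAA circuit family with its associated polynomials $p,\gamma$, and define the QPT sampler $\cD(1^{n'},z)$ for a suitably large $n'=\poly(n)$ that parses $z\in\bit^{n'}$ as a (padded) description of some $C\in\cC_n$ and outputs $C(1^n)$. On a query $(C^*,x^*)$, the oracle $\cO$ sets $\cT_{C^*}\coloneqq C^*(1^n)$, which is QPT-samplable from the description of $C^*$, draws $t(n')$ samples from $\cT_{C^*}$, hands them to $\cA$ to obtain a hypothesis $h$ encoding some circuit $C'$, and then runs Stockmeyer (\cref{claim:stockmeyer}, which lives in $\mathrm{PPT}^{\mathbf{NP}}\subseteq \mathrm{PPT}^{\Sigma_3^{\mathbf{P}}}$) on $C'$ at point $x^*$ to output a multiplicative-$1/r(n)$ estimate of $\Pr[x^*\la C'(1^n)]$. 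Because $C^*$ lies in $\{\cD(1^{n'},z)\}_z$, realizability gives $\mathsf{Opt}_{n'}=0$, so with probability at least $1-1/\delta(n')$ the learner returns $h$ with $\mathsf{SD}(C'(1^n),C^*(1^n))\leq 1/\epsilon(n')$.

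The passage from additive SD-closeness to pointwise multiplicative accuracy is the standard Aaronson--Arkhipov Markov trick: an SD bound of $1/\epsilon(n')$ between $C'(1^n)$ and $C^*(1^n)$ forces all but a $1/\sqrt{\epsilon(n')}$ fraction of $x\in\bit^n$ to satisfy $|\Pr[x\la C'(1^n)]-\Pr[x\la C^*(1^n)]|\leq 2/(\sqrt{\epsilon(n')}\cdot 2^n)$; intersecting with the $1/\gamma(n)$ fraction of anti-concentrated pairs guaranteed by clause 2(a) of \cref{def:QAA} (on which $\Pr[x^*\la C^*(1^n)]\geq 1/(p(n)2^n)$) yields, on a $1/\gamma(n)-1/\sqrt{\epsilon(n')}$ fraction of $(C^*,x^*)\in\cC_n\times\bit^n$, a $(2p(n)/\sqrt{\epsilon(n')})$-multiplicative bound. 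Chaining with Stockmeyer's $1/r(n)$ multiplicative accuracy via the triangle inequality and choosing $\epsilon,\delta,r$ as sufficiently large polynomials relative to $p$ and $\gamma$ ensures $\cO$ meets the $1/p(n)$ multiplicative bound on the required $1/\gamma(n)-1/p(n)$ fraction of $(C^*,x^*)$. The main obstacle is this parameter accounting: the three failure contributions (learner $1/\delta$, Markov outlier $1/\sqrt{\epsilon}$, Stockmeyer $1/r$) must be simultaneously driven below the $1/p(n)$ slack allowed by clause 2(b) of \cref{def:QAA} when the joint randomness of the $t(n')$ samples, $\cA$, and Stockmeyer is marginalized against the uniform choice of $(C^*,x^*)$; once this is arranged, invoking \cref{def:QAA} closes the argument.
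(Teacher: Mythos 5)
Your proposal attempts a direct one-shot reduction: negate \cref{assumption:PP_hardness_of_Dis_learn} to get a $\mathrm{PPT}^{\Sigma_3^{\mathbf{P}}}$ agnostic learner $\cA$, wrap $\cA$ in an oracle $\cO$ that feeds $\cA$ samples from a QAA circuit $C^*$, receives a hypothesis circuit $C'$, and then applies Stockmeyer (\cref{claim:stockmeyer}) to $C'$ to estimate $\Pr[x^*\la C'(1^n)]$, and finally invokes clause 2(b) of \cref{def:QAA}. The step that breaks is the Stockmeyer step: \cref{claim:stockmeyer} applies only to classical PPT samplers, whereas the hypothesis $C'$ is a description of a \emph{quantum} circuit (since $\cD(1^{n'},\cdot)$ in your construction is a QPT sampler simulating circuits from $\cC_n$). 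Multiplicatively approximating the output probability $\Pr[x^*\la C'(1^n)]$ of a quantum circuit is not a task that $\mathrm{PPT}^{\mathbf{NP}}$, $\mathrm{PPT}^{\Sigma_3^{\mathbf{P}}}$, or even $\mathbf{BQP}^{\Sigma_3^{\mathbf{P}}}$ can perform in general --- indeed, if your oracle $\cO$ could do this, then that alone would imply $\mathbf{PP}\subseteq\mathbf{BQP}^{\Sigma_3^{\mathbf{P}}}$ by \cref{lem:Extrapolation_PP}-style arguments, \emph{independently} of whether a learner $\cA$ exists, which shows the reduction cannot be carrying real content. In short, a proper learner hands you a circuit, not a probability estimate, and going from the former to the latter is exactly the hardness one is trying to leverage; you cannot take it for free inside the oracle.

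The paper sidesteps this obstacle entirely by routing through cryptography instead of attempting a direct oracle construction. The chain is: QAA together with $\mathbf{P^{\#P}}\nsubseteq\mathbf{BQP}^{\Sigma_3^{\mathbf{P}}}$ yields average-case hardness of quantum probability estimation against $\mathrm{QPT}^{\Sigma_3^{\mathbf{P}}}$ (\cref{QPE}, following Theorem 5.1 of~\cite{STOC:KT25}); this yields distributional OWPuzzs secure against $\mathrm{QPT}^{\Sigma_3^{\mathbf{P}}}$ (following Theorem 5.2 of~\cite{STOC:KT25}); these yield a nuQPRG secure against $\mathrm{QPT}^{\Sigma_3^{\mathbf{P}}}$ (following Theorem 29 of~\cite{C:ChuGolGra24}); and then the relativized version of \cref{lem:OWPuzz_to_Learn_avg} shows this nuQPRG gives average-case hardness of proper quantum distribution learning against $\mathrm{QPT}^{\Sigma_3^{\mathbf{P}}}$, which immediately implies \cref{assumption:PP_hardness_of_Dis_learn}. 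The crucial point is that in \cref{lem:OWPuzz_to_Learn_avg}, the learner's task is hard because of the pseudorandomness of the nuQPRG --- the learner must effectively \emph{distinguish} $\Gen(\mu)$ from uniform in order to output the right flag bit $b$, which is a decision task, not an estimation task --- so the reduction never needs to build a probability estimator out of the learner. The needed estimator is instead obtained further up the chain, out of a distributional inverter (Theorem 5.2 of~\cite{STOC:KT25}), via a hashing-style argument that does not have a learner-analogue. This is the missing idea: when the hypothesis class is quantum, learning and probability estimation are not interchangeable, and only the latter plugs into clause 2(b) of QAA.
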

\begin{remark}
In the above Theorem, we established hardness against a PPT algorithm with access to a $\Sigma_3^{\mathbf{P}}$ oracle.
The same hardness result holds even for a QPT algorithm with access to a $\Sigma_3^{\mathbf{P}}$ oracle.
\end{remark}

We give the proof of \cref{thm:Learning_hardness_is_weaker} in \cref{app:proof_1}.

Our next result is that hardness of agnostic quantum distribution learning with respect to statistical distance implies sampling-based quantum advantage.

\begin{theorem}\label{cor:Q_AD}
\cref{assumption:PP_hardness_of_Dis_learn} implies $\mathbf{SampBQP}\neq \mathbf{SampBPP}$.
\end{theorem}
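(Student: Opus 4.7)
The plan is to prove the contrapositive: assuming $\mathbf{SampBQP}=\mathbf{SampBPP}$, I will show that \cref{assumption:PP_hardness_of_Dis_learn} fails. The high-level idea is that under $\mathbf{SampBQP}=\mathbf{SampBPP}$, any QPT sampler $\cD$ can be approximated by a PPT sampler $\cC$; and then \cref{thm:NP_learning} supplies a PPT$^{\Sigma_3^{\mathbf{P}}}$ learner that agnostically learns the \emph{classical} family $\{\cC(1^n,z)\}_z$. A triangle-inequality argument transfers this learner into one that agnostically learns $\{\cD(1^n,z)\}_z$, contradicting \cref{assumption:PP_hardness_of_Dis_learn}.

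First, I would fix the QPT algorithm $\cD$ and the polynomials $\epsilon,\delta$ that are supposed to witness \cref{assumption:PP_hardness_of_Dis_learn}. Invoking $\mathbf{SampBQP}=\mathbf{SampBPP}$ applied to the sampling problem $\{\cD(1^n,z)\}_{(n,z)}$, I obtain a PPT algorithm $\cC$ such that $\mathsf{SD}(\cC(1^{\lfloor 1/\eta\rfloor},(1^n,z)),\cD(1^n,z))\le \eta$ for every $\eta>0$ and every $(n,z)$. I will then choose a polynomial $p$ (to be fixed much larger than $\epsilon,\delta$) and set $\eta=1/p(n)$, writing $\cC_n(z)\coloneqq \cC(1^{p(n)},(1^n,z))$, so that $\mathsf{SD}(\cC_n(z),\cD(1^n,z))\le 1/p(n)$ for all $z$ and all $n$.

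Next, I would apply \cref{thm:NP_learning} to the PPT family $\cC_n$ with an accuracy polynomial $\epsilon'$ and confidence $\delta$ (both to be chosen large enough relative to $\epsilon$). This yields a PPT learner $\cB^{\Sigma_3^{\mathbf{P}}}$ and a sample-complexity polynomial $t$ such that, for any algorithm $\cT$,
\begin{align*}
\Pr\!\Big[\mathsf{SD}(\cC_n(h),\cT(1^n))\le (3+1/\epsilon'(n))\,\mathsf{Opt}^{\cC}_n+1/\epsilon'(n):\ \{x_i\}_i\la\cT(1^n)^{\otimes t(n)},\ h\la \cB^{\Sigma_3^{\mathbf{P}}}(1^n,\{x_i\}_i)\Big]\ge 1-1/\delta(n),
\end{align*}
where $\mathsf{Opt}^{\cC}_n\coloneqq \min_{a}\mathsf{SD}(\cC_n(a),\cT(1^n))$. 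I would then use $\cB^{\Sigma_3^{\mathbf{P}}}$ directly as a learner for $\cD$: on the same samples $\{x_i\}_i\la\cT(1^n)^{\otimes t(n)}$, output $h\la \cB^{\Sigma_3^{\mathbf{P}}}(1^n,\{x_i\}_i)$.

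The core step is the triangle-inequality bound. For every $h$, $\mathsf{SD}(\cD(1^n,h),\cT(1^n))\le \mathsf{SD}(\cD(1^n,h),\cC_n(h))+\mathsf{SD}(\cC_n(h),\cT(1^n))\le 1/p(n)+\mathsf{SD}(\cC_n(h),\cT(1^n))$, and similarly $\mathsf{Opt}^{\cC}_n\le \mathsf{Opt}^{\cD}_n+1/p(n)$ where $\mathsf{Opt}^{\cD}_n\coloneqq \min_a \mathsf{SD}(\cD(1^n,a),\cT(1^n))$. Combining these, whenever $\cB^{\Sigma_3^{\mathbf{P}}}$ succeeds for $\cC_n$, we obtain
\begin{align*}
\mathsf{SD}(\cD(1^n,h),\cT(1^n))\le (3+1/\epsilon'(n))\,\mathsf{Opt}^{\cD}_n+\tfrac{3+1/\epsilon'(n)}{p(n)}+\tfrac{1}{\epsilon'(n)}+\tfrac{1}{p(n)}.
\end{align*}
Choosing $\epsilon'$ and $p$ sufficiently large (e.g., $\epsilon'(n)\ge 2\epsilon(n)$ and $p(n)\ge 10\epsilon(n)$), the right-hand side is at most $(3+1/\epsilon(n))\,\mathsf{Opt}^{\cD}_n+1/\epsilon(n)$. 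This shows that $\cB^{\Sigma_3^{\mathbf{P}}}$ is a PPT$^{\Sigma_3^{\mathbf{P}}}$ learner achieving the success probability $1-1/\delta(n)$ for every $n$ and every $\cT$, directly contradicting \cref{assumption:PP_hardness_of_Dis_learn}.

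The only subtlety, which I would flag as the main sanity-check rather than a real obstacle, is coordinating the parameters: the quality polynomial $p$ of the SampBPP simulator and the accuracy polynomial $\epsilon'$ of the classical learner must both be chosen \emph{after} $\epsilon$ is fixed, but $\cD$ is fixed first and only then is $\cC$ (a single PPT that works for every $\eta$) invoked; this is fine because the SampBPP definition supplies a single uniform $\cC$ parametrised by $1^{\lfloor 1/\eta\rfloor}$. Beyond that bookkeeping, the argument is a clean composition of SampBPP-simulation, \cref{thm:NP_learning}, and two triangle inequalities.
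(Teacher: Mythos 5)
Your proposal is correct and follows essentially the same route as the paper's proof: contrapositive, invoke $\mathbf{SampBQP}=\mathbf{SampBPP}$ to get a PPT approximator $\cC$ of $\cD$, apply \cref{thm:NP_learning} to $\cC$, and transfer the guarantee back to $\cD$ via triangle inequalities (on both $\mathsf{SD}(\cD(1^n,h),\cT)$ and $\mathsf{Opt}^{\cC}_n$ vs.\ $\mathsf{Opt}^{\cD}_n$). The only difference is cosmetic: you keep separate polynomials $p$ and $\epsilon'$ and fix them at the end, whereas the paper hard-codes both to $1/(100\epsilon(n))$ from the start.
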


Our third result is that showing $\mathbf{PP}$-hardness of agnostic quantum distribution learning (in a black-box way) is difficult.
\begin{theorem}\label{cor:PP_harndess}
If agnostic quantum distribution learning with respect to statistical distance is $\mathbf{PP}$-hard in a PPT black-box reduction (\cref{ass:PP-hard}), then $\mathbf{PP}\nsubseteq\mathbf{BPP}^{\Sigma_3^{\mathbf{P}}}$ implies $\mathbf{SampBQP}\neq\mathbf{SampBPP}$.    
\end{theorem}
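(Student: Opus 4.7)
The plan is to argue by contrapositive: assuming $\mathbf{SampBQP}=\mathbf{SampBPP}$, I will construct an oracle $\mathcal{O}$ that solves the agnostic quantum distribution learning problem (in the sense of \cref{ass:PP-hard}) and is itself implementable by a PPT algorithm with access to a $\Sigma_3^{\mathbf{P}}$ oracle. Plugging this $\mathcal{O}$ into the assumed black-box $\mathbf{PP}$-hardness reduction then yields $\mathbf{PP}\subseteq \mathbf{BPP}^{\mathcal{O}}\subseteq \mathbf{BPP}^{\Sigma_3^{\mathbf{P}}}$, contradicting the hypothesis $\mathbf{PP}\not\subseteq \mathbf{BPP}^{\Sigma_3^{\mathbf{P}}}$.

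First, I would fix the QPT sampler $\cD$ promised by \cref{ass:PP-hard} together with the polynomials $\epsilon,\delta$. Using the assumption $\mathbf{SampBQP}=\mathbf{SampBPP}$, applied to the QPT sampling problem $\{\cD(1^n,z)\}_{(n,z)}$ (one can for instance fold $z$ into the input $x$ in the definition of $\mathbf{SampBQP}$), I obtain a PPT algorithm $\cD'$ such that for every polynomial $q$ and all sufficiently large $n$, $\mathsf{SD}(\cD(1^n,z),\cD'(1^n,z))\le 1/q(n)$ for all $z\in\bit^n$. The key reduction step is then to observe that agnostic learning against $\cD'$ implies agnostic learning against $\cD$ with only a small loss of parameters: by the triangle inequality, for any target $\cT$ and any $a$,
\begin{align}
\bigl|\mathsf{SD}(\cD(1^n,a),\cT(1^n))-\mathsf{SD}(\cD'(1^n,a),\cT(1^n))\bigr|\le 1/q(n),
\end{align}
so the two $\mathsf{Opt}_n$ values differ by at most $1/q(n)$, and any hypothesis $h$ that is $(3+1/\epsilon'(n))\mathsf{Opt}_n^{\cD'}+1/\epsilon'(n)$-good for $\cD'$ is automatically $(3+1/\epsilon(n))\mathsf{Opt}_n^{\cD}+1/\epsilon(n)$-good for $\cD$ once $q,\epsilon'$ are chosen as sufficiently fast-growing polynomials in $\epsilon$.

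Second, I would invoke \cref{thm:NP_learning} on the PPT sampler $\cD'$: it yields a PPT algorithm with access to a $\Sigma_3^{\mathbf{P}}$ oracle that achieves agnostic classical distribution learning against $\cD'$ with respect to the statistical distance, for any polynomial parameters $\epsilon',\delta$. Composing this learner with the loss analysis of the previous step defines the desired oracle $\mathcal{O}$: on input $(1^n,\{x_i\}_{i\in[t(n)]})$, $\mathcal{O}$ runs the $\Sigma_3^{\mathbf{P}}$-oracle learner for $\cD'$ with sharper parameters $\epsilon'$ and outputs the same hypothesis $h$. By the argument above, $h$ satisfies the agnostic learning guarantee for $\cD$ with parameters $(\epsilon,\delta)$ for all sufficiently large $n$, which is exactly the premise of \cref{ass:PP-hard}. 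Since $\mathcal{O}\in \mathrm{PPT}^{\Sigma_3^{\mathbf{P}}}$, the black-box $\mathbf{PP}$-hardness reduction gives $\mathbf{PP}\subseteq \mathbf{BPP}^{\mathcal{O}}\subseteq \mathbf{BPP}^{\Sigma_3^{\mathbf{P}}}$, the desired contradiction.

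The only substantive technical point, and the one I would handle most carefully, is the parameter bookkeeping in the reduction from $\cD$-learning to $\cD'$-learning: I need the approximation error $1/q(n)$ guaranteed by $\mathbf{SampBQP}=\mathbf{SampBPP}$ to be absorbed both into the additive $1/\epsilon(n)$ slack and into the multiplicative factor $3+1/\epsilon(n)$, so that the learner's output is still a legitimate $(3+1/\epsilon(n))$-approximate solution relative to the QPT $\cD$ rather than the PPT $\cD'$. This is a standard triangle-inequality calculation, but it is the one place where the proof could silently break if the oracle reduction were not truly black box; fortunately, \cref{ass:PP-hard} is stated in a fully oracle-style way, so any $\mathcal{O}$ satisfying the agnostic guarantee for $\cD$ can be substituted, and the remaining steps follow without further difficulty.
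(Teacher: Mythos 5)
Your proof is correct and follows essentially the same approach as the paper: the paper reduces the claim to showing that \cref{ass:PP-hard} together with $\mathbf{PP}\nsubseteq\mathbf{BPP}^{\Sigma_3^{\mathbf{P}}}$ implies \cref{assumption:PP_hardness_of_Dis_learn}, and then invokes \cref{cor:Q_AD} for the final step, where \cref{cor:Q_AD} itself contains the $\mathbf{SampBQP}=\mathbf{SampBPP}\Rightarrow\text{PPT approximating sampler}$ step, the application of \cref{thm:NP_learning}, and the triangle-inequality parameter bookkeeping that you carry out explicitly. You simply unfold that chain inline rather than delegating to \cref{cor:Q_AD}, so the logical content is the same.
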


\begin{proof}[Proof of \cref{cor:Q_AD}]
For contradiction, suppose $\mathbf{SampBQP}=\mathbf{SampBPP}$.
Then, we will show that \cref{assumption:PP_hardness_of_Dis_learn} does not hold.

From the assumption of $\mathbf{SampBQP}=\mathbf{SampBPP}$,
for any polynomial $\epsilon$ and any QPT algorithm $\cD$, which takes $1^n$ and $z\in\bit^n$ as input, and outputs $x\in\bit^{\poly(n)}$, there exists a PPT algorithm $\cC_{\cD,\epsilon}$ that takes $1^n$ and $z\in\bit^n$ as input, and outputs $x\in\bit^{\poly(n)}$ such that
\begin{align}
\mathsf{SD}(\cD(1^n,z),\cC_{\cD,\epsilon}(1^n,z))\leq\frac{1}{100\epsilon(n)}    
\end{align}
for all $z\in\bit^n$ and all $n\in\N$.
From \cref{thm:NP_learning}, for any PPT algorithm $\cC_{\cD,\epsilon}(1^n,\cdot)$,
there exists a PPT algorithm $\cA$ with access to a $\Sigma_3^{\mathbf{P}}$ oracle and a polynomial $t$ such that for any algorithm $\cT(1^n)$, which takes $1^n$ as input and outputs $x\in\bit^{\poly(n)}$, we have 
\begin{align}
 \Pr[
\begin{array}{ll}
\mathsf{SD}(\cT(1^n), \cC_{\cD,\epsilon}(1^n,h))\leq (3+\frac{1}{100\epsilon(n)})\cdot\mathsf{Opt}_n^*+\frac{1}{100\epsilon(n)} 
\end{array}
:
\begin{array}{ll}
\{x_i\}_{i\in[t(n)]}\la \cT(1^n)^{\otimes t(n)}\\
h\la \cA^{\Sigma_3^{\mathbf{P}}}(1^n,\{x_i\}_{i\in[t(n)]})
\end{array}
]
\geq 1-1/\delta(n)
\end{align}
for all sufficiently large $n\in\N$. 
Here, $\mathsf{Opt}_n^*\seteq \min_{a\in\bit^n}\mathsf{SD}(\cT(1^n),\cC_{\cD,\epsilon}(1^n,a))$.

This implies that with probability at least $1-\frac{1}{\delta(n)}$ over $\{x_i\}_{i\in[t(n)]}\la\cT(1^n)^{\otimes t(n)}$ and $h\la\cA^{\Sigma_3^{\mathbf{P}}}(1^n,\{x_i\}_{i\in[t(n)]})$,
\begin{align}
\mathsf{SD}(\cT(1^n),\cD(1^n,h))
&\leq \mathsf{SD}(\cT(1^n),\cC_{\cD,\epsilon}(1^n,h))+\mathsf{SD}(\cD(1^n,h),\cC_{\cD,\epsilon}(1^n,h) )\\
&\leq \left(3+\frac{1}{100\epsilon(n)}\right)\cdot \mathsf{Opt}_n^*+\frac{2}{100\epsilon(n)}\\
&\leq \left(3+\frac{1}{100\epsilon(n)}\right)\cdot \left(\mathsf{Opt}_n +\frac{1}{100\epsilon(n)}\right)+\frac{2}{100\epsilon(n)}\\
&\leq \left(3+\frac{1}{\epsilon(n)}\right)\cdot \mathsf{Opt}_n+\frac{1}{\epsilon(n)}
\end{align}
for all sufficiently large $n\in\N$, where $\mathsf{Opt}_n\seteq \min_{a\in\bit^n}\{\mathsf{SD}\left(\cT(1^n),\cD(1^n,a) \right) \}$.
Here, in the third inequality, we have used $\mathsf{Opt}_n^*\leq \mathsf{Opt}_{n}+\frac{1}{100\epsilon(n)}$.
\if0
%
This holds because if we denote $\ell$ (resp. $m$) to mean $\mathsf{Opt}_n^*\seteq\mathsf{SD}(\cT(1^n),\cC_{\cD,\epsilon}(1^n,\ell))$ (resp. $\mathsf{Opt}_n\seteq\mathsf{SD}(\cT(1^n),\cD(1^n,m)) $), then we have
\begin{align}
\mathsf{Opt}_n^*&=\mathsf{SD}(\cT(1^n),\cC_{\cD,\epsilon}(1^n,\ell))\\
                &\leq \mathsf{SD}(\cT(1^n),\cC_{\cD,\epsilon}(1^n,m))\\
                &\leq\mathsf{SD}(\cT(1^n),\cD(1^n,m))+\mathsf{SD}(\cC_{\cD,\epsilon}(1^n,m),\cD(1^n,m))\\
                &\leq \mathsf{Opt}_n+\frac{1}{100\epsilon(n)}.
\end{align}
\fi

This is a contradiction to \cref{assumption:PP_hardness_of_Dis_learn}.
\end{proof}

\begin{proof}[Proof of \cref{cor:PP_harndess}]
From \cref{cor:Q_AD}, it is sufficient to prove that \cref{ass:PP-hard} and $\mathbf{PP}\nsubseteq\mathbf{BPP}^{\Sigma_3^{\mathbf{P}}}$ imply \cref{assumption:PP_hardness_of_Dis_learn}.
For contradiction, suppose that \cref{assumption:PP_hardness_of_Dis_learn} does not hold, then we show that either \cref{ass:PP-hard} or $\mathbf{PP}\nsubseteq\mathbf{BPP}^{\Sigma_3^{\mathbf{P}}}$ does not hold.
For showing this, it is sufficient to show that if \cref{assumption:PP_hardness_of_Dis_learn} does not hold, then \cref{ass:PP-hard} implies $\mathbf{PP}\subseteq \mathbf{BPP}^{{\Sigma_3^{\mathbf{P}}}}$.

Because we assume that \cref{assumption:PP_hardness_of_Dis_learn} does not hold,
for any polynomial $\epsilon$ and $\delta$,
there exists a PPT algorithm $\cA$ with access to a $\Sigma_3^{\mathbf{P}}$ oracle such that \cref{ineq:Sigma_3} does not hold for all sufficiently large $n\in\N$.
Furthermore, from \cref{ass:PP-hard},
for any language $\cL\in\mathbf{PP}$,
there exists a PPT algorithm $\cR$ such that the following two conditions are satisfied:
\begin{enumerate}
\item For all $x\in\cL$, we have 
\begin{align}
   \Pr[1\la \cR^{\cA^{\Sigma_3^{\mathbf{P}}}}(x)]\geq \frac{2}{3}.
\end{align}
\item For all $x\notin \cL$, we have
\begin{align}
    \Pr[1\la \cR^{\cA^{\Sigma_3^{\mathbf{P}}}}(x)]\leq \frac{1}{3}.
\end{align}
\end{enumerate}
Then, by simulating $\cR^{\cA^{\Sigma_3^{\mathbf{P}}}}(x)$, a PPT algorithm $\cM$ with access to a $\Sigma_3^{\mathbf{P}}$ oracle can decide if $x$ is in $\cL$ or not.
This implies that $\mathbf{PP}\subseteq\mathbf{BPP}^{\Sigma_3^{\mathbf{P}}}$.
\end{proof}

\if0
\begin{remark}
In the same way as \cref{thm:NP_learning}, we can show \cref{thm:PP_learning}.
\begin{theorem}[Agnostic Distribution Learning for QPT Algorithms]\label{thm:PP_learning}
For any QPT algorithm $\cD$ which takes $1^n$ and $z\in\bit^n$ as input and outputs $x\in\bit^n$, there exists a PPT algorithm with access to a $\mathbf{PP^{NP}}$ oracle $\cL$ and a polynomial $T=\poly(n,\lfloor1/\epsilon\rfloor,\lfloor 1/\delta\rfloor)$ such that, for any distributions $\cT_n$ over $\bit^n$, we have
\begin{align}
    &\Pr[
        \begin{array}{ll}
        \mathsf{SD}(\cT_n, \cD(1^n,h))\leq 3\cdot\mathsf{Opt}+\epsilon
        \end{array}
        :
    \begin{array}{ll}
       \{x_1,...,x_{T} \}\la \cT_n^{\otimes T}\\
        h\la \cL(1^n,1^{\lfloor1/\epsilon\rfloor},1^{\lfloor1/\delta\rfloor},\{x_i\}_{i\in[T]},\mathsf{Opt})
    \end{array}
        ]\\
        &\geq 1-\delta
\end{align}
for all $\epsilon\in\N$ and $\delta\in\N$ and all sufficiently large $n\in\N$.
Here, $\mathsf{Opt}\seteq \min_{a\in\bit^n}\left\{\mathsf{SD}(\cT_n,  \cD(1^n,a))\right\}$.
\end{theorem}
\end{remark}
\fi

\subsection{Proof of \cref{thm:NP_learning}}\label{sec:Proof_NP_learning}
In this section, we give the proof of \cref{thm:NP_learning}.
For showing \cref{thm:NP_learning}, we use the following \cref{thm:NP_distinguish}, which we prove later.
\begin{theorem}\label{thm:NP_distinguish}
For any polynomial $\epsilon$, any PPT algorithm $\cD$ that takes $1^n$ and $z\in\bit^n$ as input and outputs $x\in\bit^{\poly(n)}$, there exists a PPT algorithm $\mathsf{Dis}_{\epsilon}$ with access to an $\mathbf{NP}$ oracle, which takes $1^n$, $\ell\in\bit^n$, $m\in\bit^n$, and $x\in\bit^{\poly(n)}$ as input, and outputs $b\in\bit$, such that for all $(\ell,m)\in\bit^n\times\bit^n$ with $\mathsf{SD}(\cD(1^n,\ell),\cD(1^n,m))\geq 1/\epsilon(n)$, we have
\begin{align}
&\mathsf{SD}(\cD(1^n,\ell),\cD(1^n,m))\\
&\leq \left(1+\frac{1}{4\epsilon(n)}\right) \Bigg|\Pr[1\la\mathsf{Dis}_{\epsilon}^{\mathbf{NP}}(1^n,\ell,m,x):x\la\cD(1^n,\ell)]\\
&\hspace{4cm}-\Pr[1\la\mathsf{Dis}_{\epsilon}^{\mathbf{NP}}(1^n,\ell,m,x):x\la\cD(1^{n},m)]\Bigg|
   +\frac{1}{4\epsilon(n)}
\end{align}
\if0
\begin{align}
   &\mathsf{SD}(\cD(1^n,\ell),\cD(1^n,m))\\
   &\leq \left(1+\frac{1}{4\epsilon(n)}\right) \abs{\Pr_{x\la\cD(1^n,\ell)}[1\la\mathsf{Dis}_{\epsilon}^{\mathbf{NP}}(1^n,\ell,m,x)]-\Pr_{x\la\cD(1^{n},m)}[1\la\mathsf{Dis}_{\epsilon}^{\mathbf{NP}}(1^n,\ell,m,x)]}
   +\frac{1}{4\epsilon(n)}
\end{align}
\fi
for all sufficiently large $n\in\N$.
\end{theorem}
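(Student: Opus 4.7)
The construction of $\mathsf{Dis}_\epsilon^{\mathbf{NP}}$ will imitate the optimal information-theoretic distinguisher, which on input $x$ outputs $1$ iff $\Pr[x\la\cD(1^n,\ell)] > \Pr[x\la\cD(1^n,m)]$ and achieves distinguishing advantage exactly $\mathsf{SD}(\cD(1^n,\ell),\cD(1^n,m))$. By \cref{thm:estimate}, a PPT algorithm with an $\mathbf{NP}$ oracle can approximate these output probabilities within any inverse-polynomial multiplicative factor, so it suffices to use those estimates in place of the exact probabilities. On input $(1^n,\ell,m,x)$, I would invoke the estimator of \cref{thm:estimate} twice — once on $(1^n,\ell,x)$ and once on $(1^n,m,x)$ — with accuracy parameter $1/\eta(n)$ for $\eta(n):=1/(100\epsilon(n))$, obtain estimates $\hat p_\ell(x)$ of $p_\ell(x):=\Pr[x\la\cD(1^n,\ell)]$ and $\hat p_m(x)$ of $p_m(x):=\Pr[x\la\cD(1^n,m)]$, and output $1$ iff $\hat p_\ell(x)>\hat p_m(x)$.

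For the analysis, set a decision gap $\alpha(n):=1/(16\epsilon(n))$ and partition $\bit^{\poly(n)}$ into $A:=\{x:p_\ell(x)\geq(1+\alpha)p_m(x)\}$, $B:=\{x:p_m(x)\geq(1+\alpha)p_\ell(x)\}$, and $C$ the remainder. Let $q(x)$ denote the probability over the estimator's internal randomness that $\mathsf{Dis}_\epsilon^{\mathbf{NP}}$ outputs $1$ on $x$. A union bound over the two estimator calls, together with $\alpha(n)>2\eta(n)/(1-\eta(n))$ (which forces $(1-\eta)(1+\alpha)>1+\eta$ for large $n$), shows $q(x)\geq 1-2\eta(n)$ on $A$ and $q(x)\leq 2\eta(n)$ on $B$. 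Writing $\Delta:=\Pr[1\la\mathsf{Dis}_\epsilon^{\mathbf{NP}}(1^n,\ell,m,x):x\la\cD(1^n,\ell)]-\Pr[1\la\mathsf{Dis}_\epsilon^{\mathbf{NP}}(1^n,\ell,m,x):x\la\cD(1^n,m)]=\sum_x(p_\ell(x)-p_m(x))q(x)$, I would decompose
\begin{align*}
\mathsf{SD}(\cD(1^n,\ell),\cD(1^n,m))-\Delta
=\sum_{x:\,p_\ell>p_m}(p_\ell-p_m)(1-q(x))+\sum_{x:\,p_m>p_\ell}(p_m-p_\ell)\,q(x),
\end{align*}
and bound the four pieces separately: on $A$, $(p_\ell-p_m)(1-q(x))\leq 2\eta\cdot p_\ell(x)$; on $B$, $(p_m-p_\ell)q(x)\leq 2\eta\cdot p_m(x)$; on $C\cap\{p_\ell>p_m\}$ one has $p_\ell-p_m\leq\alpha\,p_\ell(x)$ from $p_m\geq p_\ell/(1+\alpha)$; and symmetrically on $C\cap\{p_m>p_\ell\}$. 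Summing and using $\sum_x p_\ell(x)=\sum_x p_m(x)=1$ gives $\mathsf{SD}-\Delta\leq 4\eta(n)+2\alpha(n)\leq 1/(4\epsilon(n))$ for all sufficiently large $n$. Combined with $\mathsf{SD}\geq 1/\epsilon(n)$ this yields $\Delta\geq 3/(4\epsilon(n))>0$, hence $|\Delta|=\Delta$, and the stated bound follows (in fact with the tighter form $\mathsf{SD}\leq\Delta+1/(4\epsilon(n))$, which is strictly stronger than $(1+1/(4\epsilon))|\Delta|+1/(4\epsilon)$).

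The main obstacle is the tension between the two parameters: making $\eta$ small keeps $\mathsf{Dis}_\epsilon$ faithful to the ordering $p_\ell\lessgtr p_m$ on $A$ and $B$, but $\eta$ constrains $\alpha$ from below through $\alpha>2\eta/(1-\eta)$, and both $\eta$ (via the failure mass on $A\cup B$) and $\alpha$ (via the uncontrolled behavior on $C$) contribute additively to the loss $\mathsf{SD}-\Delta$; any choice has to satisfy $4\eta+2\alpha\leq 1/(4\epsilon)$ together with the comparison condition. A secondary subtlety is that the estimator's $1-1/\poly$ success probability is pointwise in $x$, not uniform over all $x$, so one cannot freeze a single ``good'' randomness string; this is precisely why the analysis averages $q(x)$ against $p_\ell(x)-p_m(x)$ rather than conditioning on an all-$x$ good event.
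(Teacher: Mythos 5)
Your proposal is correct and uses the same core ingredient (the Stockmeyer-style estimator of \cref{thm:estimate}) but assembles the argument differently. The paper's $\mathsf{Dis}_\epsilon$ outputs $1$ only if $E_\ell \geq \bigl(1+\frac{1}{16\epsilon}\bigr)E_m$, building a one-sided multiplicative margin into the test itself; the analysis there splits the domain asymmetrically into $\cS_n$ (large $\ell$-advantage, test passes w.h.p.), all of $\cU_n=\{p_m>p_\ell\}$ (the decision margin forces the test to fail even for an arbitrarily small $m$-advantage), and a middle band $\cT_n$ whose contribution is simply dropped in a line-by-line lower bound on $\Delta$, together with a separate claim bounding $\sum_{x\in\cS_n}(p_\ell(x)-p_m(x))$ from below by $\mathsf{SD}-\frac{1}{8\epsilon}$. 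You instead use the plain comparison $\hat p_\ell>\hat p_m$ and upper-bound the loss $\mathsf{SD}-\Delta$ directly, with a symmetric partition $A,B,C$ in which the middle set is controlled pointwise by $|p_\ell-p_m|\leq\alpha\max(p_\ell,p_m)$ rather than ignored. Both routes work; yours is slightly cleaner because the identity $\mathsf{SD}-\Delta=\sum_{p_\ell>p_m}(p_\ell-p_m)(1-q)+\sum_{p_m>p_\ell}(p_m-p_\ell)q$ gives a more transparent entry point, it needs no separate estimate of the $\cS_n$ mass, and it yields the strictly stronger conclusion $\mathsf{SD}\leq\Delta+\frac{1}{4\epsilon}$ without the $(1+\frac{1}{4\epsilon})$ multiplicative factor. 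A cosmetic note: your $A$ and $B$ overlap on $\{x:p_\ell(x)=p_m(x)=0\}$; assign such $x$ arbitrarily (they contribute zero to every sum) to make $A,B,C$ a genuine partition.
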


\begin{proof}[Proof of \cref{thm:NP_learning}]
Let $\cD$ be a PPT algorithm that takes $1^n$ and $z\in\bit^n$ as input, and outputs $x\in\bit^{\poly(n)}$.
Let $\epsilon$ and $\delta$ be polynomials.
In the following, we show that there exists a polynomial $t$, and a PPT algorithm $\cA$ with access to a $\Sigma_3^{\mathbf{P}}$ oracle such that, for all algorithms $\cT$, which takes $1^n$ and outputs $x\in\bit^{\poly(n)}$,
we have
\begin{align}
\Pr[ 
\mathsf{SD}(\cD(1^n,h),\cT(1^n))\leq (3+1/\epsilon(n))\cdot\mathsf{Opt}_n+1/\epsilon(n):
\begin{array}{ll}
&\{x_i\}_{i\in[t(n)]}\la\cT(1^n)^{\otimes t(n)} \\
& h\la \cA^{\Sigma_3^{\mathbf{P}}}(1^n,\{x_i\}_{i\in[t(n)]})
\end{array}
]\geq 1-1/\delta(n)    
\end{align}
for all sufficiently large $n\in\N$.
Here, $\mathsf{Opt}_n\seteq \min_{a\in\bit^n}\mathsf{SD}(\cT(1^n),\cD(1^n,a))$.

For simplicity, in the following, we often omit $1^n$ of $\cT(1^n)$, $\cD(1^n,a)$, and $\mathsf{Dis}_{\epsilon}^{\mathbf{NP}}(1^n,a,b,x)$.
We set $t(n)\seteq 100n^2\cdot\epsilon(n)^2\left(n +\log(\delta(n))\right)$.
To construct $\cA$, for any polynomial-time computable function $\omega$, let us consider the following deterministic polynomial-time algorithm $\cM_{\omega}$ with access to an $\mathbf{NP}$ oracle.
\begin{description}
\item[$\cM_{\omega}^{\mathbf{NP}}(1^n,a,b,\{x_i,r_i(1),r_i(2),r_i(3)\}_{i\in[t(n)]})$:] $ $
\begin{enumerate}
\item Run $\mathsf{Dis}_{\epsilon}^{\mathbf{NP}}(a,b, x_i ;r_i(1))$ for all $i\in[t(n)]$.
\item For all $i\in[t(n)]$, run $X_{a,r_i(2)}\la\cD(a;r_i(2))$ and $\mathsf{Dis}_{\epsilon}^{\mathbf{NP}}(a,b, X_{a,r_i(2)} ;r_i(3))$.
\item Output $1$ if 
\begin{align}
\abs{\sum_{i\in[t(n)]}\frac{\mathsf{Dis}_{\epsilon}^{\mathbf{NP}}(a,b, x_i ;r_i(1))}{t(n)}-\sum_{i\in[t(n)]}\frac{\mathsf{Dis}_{\epsilon}^{\mathbf{NP}}(a,b, X_{a,r_i(2)} ;r_i(3))}{t(n)}}\leq\omega(n).
\end{align}
\if0
\begin{align}
\abs{\widetilde{\cT}(a,b;\{x_i,r_i(1)\}_{i\in[t(n)]})-\widetilde{\cD}(a,b;\{r_i(2),r_i(3)\}_{i\in[t(n)]})}\leq\mathsf{Opt}_n+\frac{1}{2}\left(\epsilon_1(n)+\epsilon_2(n)\right).
\end{align}
\fi
Otherwise, output $0$.
\end{enumerate}
\end{description}

For any polynomial-time computable function $\omega$, and $(a_1,...,a_I)\in\bit^{I}$ with $I\in\N$, let us define
\begin{align}
&\cL_{\omega,(a_1,...,a_I)}\\
&\seteq 
\bigg\{ \{x_i,r_i(1),r_i(2),r_i(3)\}_{i\in[t(n)]}\in\bit^*,n\in\N:\exists a\in\bit^{n-I} \mbox{ such that } \forall b\in\bit^n \mbox{, we have}\\
&1\la \cM_{\omega}^{\mathbf{NP}}(1^n,a_1\|...\|a_{I}\|a,b,\{x_i,r_i(1),r_i(2),r_i(3)\}_{i\in[t(n)]})
\bigg\}.
\end{align} 
We have $\cL_{\omega,a_1\|...\|a_I}\in\Sigma_2^{\mathbf{NP}}=\Sigma_3^{\mathbf{P}}$.

Now, we construct the PPT algorithm $\cA$ with access to a $\Sigma_3^{\mathbf{P}}$ oracle.
\begin{description}
\item[$\cA^{\Sigma_3^{\mathbf{P}}}(1^n,\{x_i\}_{i\in[t(n)]})$:] $ $
\begin{enumerate}
\item Uniformly sample random strings $\{r_i(1), r_i(2), r_i(3)\}_{i \in [t(n)]}$.
\item For each index $I \in [n]$, determine the output bit $\mathsf{out}_I$ as follows:
    \begin{enumerate}
    \item Initialize $p(1) = \frac{1}{2}$.
    \item For $j = 1$ to $n$, do:
        \begin{enumerate}
        \item For each $\beta \in \bit$, check whether
        \begin{align}
        (\{x_i, r_i(1), r_i(2), r_i(3)\}_{i \in [t(n)]}, n)
        \in \cL_{p(j), \mathsf{out}_1\| \dots\| \mathsf{out}_{I-1} \beta}.
        \end{align}
        Let $Q_\beta = 1$ if the above holds, and $Q_\beta = 0$ otherwise.
        \item  If $Q_0 = Q_1 = 1$ and $j < n$, then set $p(j+1)=p(j)-\left(\frac{1}{2}\right)^{j+1} $ and continue to the next $j$.

        \item If $Q_0 = Q_1 = 0$ and $j < n$, then set $p(j+1) = p(j)+\left(\frac{1}{2}\right)^{j+1}$ and continue to the next $j$.

        \item  If $Q_b = 1$ and $Q_{b \oplus 1} = 0$ for some $b \in \bit$, then set $\mathsf{out}_I = b$ and break the $j$-loop.

        \item If $j = n$, then set $\mathsf{out}_I = 1$, and break the $j$-loop.
        \end{enumerate}
    \end{enumerate}
\item Output the string $\mathsf{out}_1\| \dots\| \mathsf{out}_n$.
\end{enumerate}
\end{description}

We use the following \cref{claim:A_output}.
\begin{claim}\label{claim:A_output}
With probability at least $(1-1/\delta(n))$ over $\{x_i\}_{i\in[t(n)]}\la\cT^{\otimes t(n)}$ and $\mathsf{out}\la\cA^{\Sigma_3^\mathbf{P}}(1^n,\{x_i\}_{i\in[t(n)]})$, $\mathsf{out}$ satisfies
\begin{align}\label{ineq:out}
\abs{\Pr[1\la\mathsf{Dis}_{\epsilon}^{\mathbf{NP}}(\mathsf{out},b,x):x\la\cT]-\Pr[1\la\mathsf{Dis}_{\epsilon}^{\mathbf{NP}}(\mathsf{out},b,x) :x\la\cD(\mathsf{out})]}\leq \mathsf{Opt}_n+\frac{1}{2\epsilon(n)}
\end{align}
for all $b\in\bit^n$.
Here, $\mathsf{Opt}_n\seteq \min_{a\in\bit^n}\mathsf{SD}(\cT,\cD(a))$.
\end{claim}
We defer the proof of \cref{claim:A_output}.

For simplicity, in the following, let us denote $\ell$ to mean $\ell\seteq\mathsf{argmin}_{a\in\bit^n}\{\mathsf{SD}(\cD(a),\cT)\}$ so that $\mathsf{Opt}_n =\mathsf{SD}(\cD(\ell),\cT)$.
In the following, we show that, for the $\mathsf{out}$ such that \cref{ineq:out} holds for all $b\in\bit^n$, we have
\begin{align}
\mathsf{SD}(\cT,\cD(\mathsf{out}))\leq (3+1/\epsilon(n))\cdot \mathsf{Opt}_n+1/\epsilon(n).
\end{align}

First, let us consider the case where $\mathsf{SD}(\cD(\ell),\cD(\mathsf{out}))<  1/\epsilon(n)$.
In this case, we have
\begin{align}
\mathsf{SD}(\cT,\cD(\mathsf{out}))\leq\mathsf{SD}(\cT,\cD(\ell))+\mathsf{SD}(\cD(\ell),\cD(\mathsf{out})) < \mathsf{Opt}_n+\frac{1}{\epsilon(n)}.
\end{align}
Here, in the first inequality, we have used triangle inequality.

Now, let us consider the case, where $\mathsf{SD}(\cD(\ell),\cD(\mathsf{out}))\geq  1/\epsilon(n)$.
In this case, we have
\begin{align}
    &\mathsf{SD}(\cT,\cD(\mathsf{out}))\\
    &\leq \mathsf{SD}(\cT,\cD(\ell))+\mathsf{SD}(\cD(\ell),\cD(\mathsf{out})))\\
    &\leq \mathsf{Opt}_n+\frac{1}{4\epsilon(n)}\label{ineq_8}\\
    &+\left(1+\frac{1}{4\epsilon(n)}\right)\abs{\Pr[1\la\mathsf{Dis}_{\epsilon}^{\mathbf{NP}}(\mathsf{out},\ell,x):x\la\cD(\ell)]-\Pr[1\la\mathsf{Dis}_{\epsilon}^{\mathbf{NP}}(\mathsf{out},\ell,x):x\la\cD(\mathsf{out})]}\\
    &\leq\mathsf{Opt}_n+\frac{1}{4\epsilon(n)}\label{ineq:triangle}\\
    &+\left(1+\frac{1}{4\epsilon(n)}\right)\abs{\Pr[1\la\mathsf{Dis}_{\epsilon}^{\mathbf{NP}}(\mathsf{out},\ell,x):x\la\cD(\mathsf{out})]-\Pr[1\la\mathsf{Dis}_{\epsilon}^{\mathbf{NP}}(\mathsf{out},\ell,x):x\la\cT]}\\
    &+\left(1+\frac{1}{4\epsilon(n)}\right)\abs{\Pr[1\la\mathsf{Dis}_{\epsilon}^{\mathbf{NP}}(\mathsf{out},\ell,x):x\la\cT]-\Pr[1\la\mathsf{Dis}_{\epsilon}^{\mathbf{NP}}(\mathsf{out},\ell,x):x\la\cD(\ell)]}\\
    &\leq \mathsf{Opt}_n+\frac{1}{4\epsilon(n)}+\left(1+\frac{1}{4\epsilon(n)}\right)\left(\mathsf{Opt}_n+\frac{1}{2\epsilon(n)}\right)+\left(1+\frac{1}{4\epsilon(n)}\right) \mathsf{Opt}_n\label{ineq_9}\\
    &\leq \left(3+\frac{1}{\epsilon(n)}\right)\cdot\mathsf{Opt}_n+\frac{1}{\epsilon(n)}.
\end{align}
Here, in \cref{ineq_8}, we have used \cref{thm:NP_distinguish}, in \cref{ineq:triangle}, we have used the triangle inequality, and
in \cref{ineq_9}, we have used $\mathsf{SD}(\cT(1^n),\cD(\ell))\leq \mathsf{Opt}_n$
and $\mathsf{out}$ satisfies the following inequality
\begin{align}
\abs{\Pr[1\la\mathsf{Dis}^{\mathbf{NP}}(\mathsf{out},b,x):x\la\cT]-\Pr[1\la\mathsf{Dis}^{\mathbf{NP}}(\mathsf{out},b,x):x\la\cD(\mathsf{out})]}\leq \mathsf{Opt}_n+\frac{1}{2\epsilon(n)}
\end{align}
for all $b\in\bit^n$.
This completes the proof.

Now, let us show the \cref{claim:A_output}, which we deferred the proof.
For showing it, we use the following \cref{claim:behave_well}, which can be shown by the standard average argument.
We defer the proof of \cref{claim:behave_well} in the end of the proof.
\begin{claim}\label{claim:behave_well}
Let us denote $X_{a,r_i(2)}$ to mean the output of $\cD(a;r_i(2))$.
Let $p(n)$ be the size of $\{r_i(1),r_i(2),r_i(3)\}_{i\in[t(n)]}$.
Let $\mathsf{Good}_n$ be a set of $\{x_i,r_i(1),r_i(2),r_i(3)\}_{i\in[t(n)]}$ such that, for all $I\in[n]$ and for all $(a_1,...,a_{I-1})\in\bit^{I-1}$, the following two conditions hold.
\begin{enumerate}
\item 
If there exists $(a_{I},...,a_{n})\in\bit^{n-I+1}$ such that
\begin{align}
&\Bigg|\Pr[1\la\mathsf{Dis}_{\epsilon}^{\mathbf{NP}}(a_1\|...\|a_n,b,x):x\la\cT]-\Pr[1\la\mathsf{Dis}_{\epsilon}^{\mathbf{NP}}(a_1\|...\|a_n,b,x):x\la\cD(a_1\|...\|a_n)]\Bigg|\\
&\leq \mathsf{Opt}_n+\frac{I-1}{2n\cdot\epsilon(n)}
\end{align}
for all $b\in\bit^n$,
then there exists $(a_{I+1},...,a_n)\in\bit^{n-I+1}$ such that
\begin{align}
\abs{\sum_{i\in[t(n)]}\frac{\mathsf{Dis}_{\epsilon}^{\mathbf{NP}}(a_1\|...\|a_n,b, x_i ;r_i(1))}{t(n)} -\sum_{i\in[t(n)]}\frac{\mathsf{Dis}_{\epsilon}^{\mathbf{NP}}(a_1\|...\|a_n,b, X_{a,r_i(2)} ;r_i(3))}{t(n)}}\leq \mathsf{Opt}_n+\frac{\left(4I-3\right)}{8n\cdot\epsilon(n)}
\end{align}
 for all $b\in\bit^n$.
\item 
If for all $(a_{I+1},...,a_{n})\in\bit^{n-I+1}$, there exists $b\in\bit^n$ such that 
\begin{align}
&\Bigg|\Pr[1\la\mathsf{Dis}_{\epsilon}^{\mathbf{NP}}(a_1\|...\|a_n,b,x):x\la\cT]-\Pr[1\la\mathsf{Dis}_{\epsilon}^{\mathbf{NP}}(a_1\|...\|a_n,b,x):x\la\cD(a_1\|...\|a_n)]\Bigg|\\
&\geq \mathsf{Opt}_n+\frac{I}{2n\cdot\epsilon(n)},
\end{align}
then for all $(a_{I+1},...,a_n)\in\bit^{n-I+1}$, there exists $b\in\bit^n$ such that
\begin{align}
\abs{\sum_{i\in[t(n)]}\frac{\mathsf{Dis}_{\epsilon}^{\mathbf{NP}}(a_1\|...\|a_n,b, x_i ;r_i(1))}{t(n)} -\sum_{i\in[t(n)]}\frac{\mathsf{Dis}_{\epsilon}^{\mathbf{NP}}(a_1\|...\|a_n,b, X_{a,r_i(2)} ;r_i(3))}{t(n)}}\geq \mathsf{Opt}_n+\frac{\left(4I-1\right)}{8n\cdot\epsilon(n)}.
\end{align}
\end{enumerate}
Then, we have
\begin{align}
\Pr[\{x_i,r_i(1),r_i(2),r_i(3)\}_{i\in[t(n)]}\in\mathsf{Good}_n : 
\begin{array}{ll}
     & \{x_i\}_{i\in[t(n)]}\la\cT^{\otimes t(n)} \\
     & \{r_i(1),r_i(2),r_i(3)\}_{i\in[t(n)]}\la\bit^{p(n)}
\end{array}
]\geq 1-1/\delta(n).
\end{align}
\end{claim}

\begin{proof}[Proof of \cref{claim:A_output}]

Let $\mathsf{Good}_n$ be a set of $\{x_i,r_i(1),r_i(2),r_i(3)\}_{i\in[t(n)]}$ defined in \cref{claim:behave_well}.
In the following, we show that, for all $\{x_i,r_i(1),r_i(2),r_i(3)\}_{i\in[t(n)]}\in\mathsf{Good}_n$,  $\cA^{\Sigma_3^{\mathbf{P}}}(1^n,\{x_i\}_{i\in[t(n)]};\{r_i(1),r_i(2),r_i(3)\}_{i\in[t(n)]})$ outputs $\mathsf{out}\seteq\mathsf{out}_1\|...\|\mathsf{out}_n$
such that 
\begin{align}
\abs{\Pr[1\la\mathsf{Dis}_{\epsilon}^{\mathbf{NP}}(\mathsf{out},b,x):x\la\cT]-\Pr[1\la\mathsf{Dis}_{\epsilon}^{\mathbf{NP}}(\mathsf{out},b,x) :x\la\cD(\mathsf{out})]}\leq \mathsf{Opt}_n+\frac{1}{2\epsilon(n)}
\end{align}
for all $b\in\bit^n$.
This can be shown by induction as follows.

First, we show that for the $\mathsf{out}_1$, there exists $(a_2,...,a_n)\in\bit^{n-1}$ such that
\begin{align}\label{ineq_induction}
&\abs{\Pr[1\la\mathsf{Dis}_{\epsilon}^{\mathbf{NP}}(\mathsf{out}_1\|a_2\|...\|a_n,b,x):x\la\cT]-\Pr[1\la\mathsf{Dis}_{\epsilon}^{\mathbf{NP}}(\mathsf{out}_1\|a_2\|...\|a_n,b,x) :x\la\cD(\mathsf{out}_1\|a_2\|...\|a_n)]}\\
&\leq \mathsf{Opt}_n+\frac{1}{2n\cdot\epsilon(n)}
\end{align}
for all $b\in\bit^n$.
From the definition of $\mathsf{Opt}_n$, 
there exists 
$a\in\bit^{n}$ such that
\begin{align}
\abs{\Pr[1\la\mathsf{Dis}_{\epsilon}^{\mathbf{NP}}(a,b,x):x\la\cT]-\Pr[1\la\mathsf{Dis}_{\epsilon}^{\mathbf{NP}}(a,b,x) :x\la\cD(a)]}\leq \mathsf{Opt}_n
\end{align}
for all $b\in\bit^n$.
From the first item of \cref{claim:behave_well}, this implies that
there exists 
$a\in\bit^{n}$ such that
\begin{align}
&\Bigg|\sum_{i\in[t(n)]}\frac{\mathsf{Dis}_{\epsilon}^{\mathbf{NP}}(a,b, x_i ;r_i(1))}{t(n)}-\sum_{i\in[t(n)]}\frac{\mathsf{Dis}_{\epsilon}^{\mathbf{NP}}(a,b, X_{a,r_i(2)} ;r_i(3))}{t(n)}
\Bigg|\leq \mathsf{Opt}_n+\frac{1}{8n\cdot\epsilon(n)}
\end{align}
for all $b\in\bit^n$.
From the construction of $\cA^{\Sigma_3^{\mathbf{P}}}$, for the $\mathsf{out}_1\in\bit$, there exists 
$(a_{2},...,a_n)\in\bit^{n-1}$ 
such that
\begin{align}
&\Bigg|\sum_{i\in[t(n)]}\frac{\mathsf{Dis}_{\epsilon}^{\mathbf{NP}}(\mathsf{out}_1\|a_2\|...\|a_n,b, x_i ;r_i(1))}{t(n)}\\
&\hspace{1cm}-\sum_{i\in[t(n)]}\frac{\mathsf{Dis}_{\epsilon}^{\mathbf{NP}}(\mathsf{out}_1\|a_2\|...\|a_n,b, X_{\mathsf{out}_1\|a_2\|...\|a_n,r_i(2)} ;r_i(3))}{t(n)}
\Bigg|\leq \mathsf{Opt}_n+\frac{3}{8n\cdot\epsilon(n)}
\end{align}
for all $b\in\bit^n$.
From the second item of \cref{claim:behave_well}, this implies that, for such $\mathsf{out}_1$, there exists $(a_{2},...,a_n)\in\bit^{n-1}$ such that 
\cref{ineq_induction} holds
for all $b\in\bit^n$.

For induction,
suppose that for the $(\mathsf{out}_1,...,\mathsf{out}_{I})\in\bit^{I}$, there exists $(a_{I+1}, ..., a_n)\in\bit^{n-I}$ such that 
\begin{align}
&\Biggl|\Pr[1\la\mathsf{Dis}_{\epsilon}^{\mathbf{NP}}(\mathsf{out}_1\|...\|\mathsf{out}_{I}\|a_{I+1}\|...\|a_n,b, x) :x\la\cT]\\
&\hspace{1cm}-\Pr[1\la\mathsf{Dis}_{\epsilon}^{\mathbf{NP}}(\mathsf{out}_1\|...\|\mathsf{out}_{I}\| a_{I+1}\|...\|a_n,b, x) :x\la\cD(\mathsf{out}_1\|...\|\mathsf{out}_{I}\|a_{I+1}\|...\|a_n)]
\Biggl|
\leq\mathsf{Opt}_n+\frac{I}{2n\cdot \epsilon(n)}
\end{align}
for all $b\in\bit^n$.
From the first item of \cref{claim:behave_well}, there exists 
$(a_{I+1},...,a_n)\in\bit^{n-I}$ such that
\begin{align}
&\Bigg|\sum_{i\in[t(n)]}\frac{\mathsf{Dis}_{\epsilon}^{\mathbf{NP}}(\mathsf{out}_1\|...\|\mathsf{out}_{I}\|a_{I+1}\|...\|a_n,b, x_i ;r_i(1))}{t(n)} \\
&\hspace{1cm}-\sum_{i\in[t(n)]}\frac{\mathsf{Dis}_{\epsilon}^{\mathbf{NP}}(\mathsf{out}_1\|...\|\mathsf{out}_{I}\|a_{I+1}\|...\|a_n,b, X_{\mathsf{out}_1\|...\|\mathsf{out}_{I}\|a_{I+1}\|...\|a_n,r_i(2)} ;r_i(3))}{t(n)}
\Bigg|
\leq \mathsf{Opt}_n+\frac{4I+1}{8n\cdot\epsilon(n)}
\end{align}
for all $b\in\bit^n$.
From the construction of $\cA^{\Sigma_3^{\mathbf{P}}}$, for the $(\mathsf{out}_1,...,\mathsf{out}_{I+1})$, there exist $(a_{I+2},...,a_n)\in\bit^{n-I-1}$ such that
\begin{align}
&\Bigg|\sum_{i\in[t(n)]}\frac{\mathsf{Dis}_{\epsilon}^{\mathbf{NP}}(\mathsf{out}_1\|...\|\mathsf{out}_{I+1}\|a_{I+2}\|...\|a_n,b, x_i ;r_i(1))}{t(n)} \\
&\hspace{1cm}-\sum_{i\in[t(n)]}\frac{\mathsf{Dis}_{\epsilon}^{\mathbf{NP}}(\mathsf{out}_1\|...\|\mathsf{out}_{I+1}\|a_{I+2}\|...\|a_n,b, X_{\mathsf{out}_1\|...\|\mathsf{out}_{I+1}\|a_{I+2}\|...\|a_n,r_i(2)} ;r_i(3))}{t(n)}
\Bigg|
\leq \mathsf{Opt}_n+\frac{4I+3}{8n\cdot\epsilon(n)}
\end{align}
for all $b\in\bit^n$.
From the second item of \cref{claim:behave_well}, this implies that, for the $\mathsf{out}_{1},...,\mathsf{out}_{I+1}$, there exists 
$(a_{I+2},...,a_n)\in\bit^{n-I-1}$ such that
\begin{align}
&\Biggl|\Pr[1\la\mathsf{Dis}_{\epsilon}^{\mathbf{NP}}(\mathsf{out}_1\|...\|\mathsf{out}_{I+1}\| a_{I+2}\|...\|a_n,b, x) :x\la\cT]\\
&\hspace{0.5cm}-\Pr[1\la\mathsf{Dis}_{\epsilon}^{\mathbf{NP}}(\mathsf{out}_1\|...\|\mathsf{out}_{I+1}\| a_{I+2}\|...\|a_n,b, x) :x\la\cD(\mathsf{out}_1\|...\|\mathsf{out}_{I+1}\|a_{I+2}\|...\|a_n)]
\Biggl|
\leq\mathsf{Opt}_n+\frac{I+1}{2n\cdot \epsilon(n)}
\end{align}
for all $b\in\bit^n$.

From induction, this implies that, for all $\{x_i,r_i(1),r_i(2),r_i(3)\}_{i\in[t(n)]}\in\mathsf{Good}_n$, $\cA^{\Sigma_3^{\mathbf{P}}}(1^n,\{x_i\}_{i\in[t(n)]}\allowbreak;\{r_i(1),r_i(2),r_i(3)\}_{i\in[t(n)]})$ outputs
$\mathsf{out}_n\seteq \mathsf{out}_1\|...\|\mathsf{out}_n$ such that
\begin{align}
\abs{\Pr[1\la\mathsf{Dis}_{\epsilon}^{\mathbf{NP}}(\mathsf{out},b,x):x\la\cT]-\Pr[1\la\mathsf{Dis}_{\epsilon}^{\mathbf{NP}}(\mathsf{out},b,x) :x\la\cD(\mathsf{out})]}\leq \mathsf{Opt}_n+\frac{1}{2\epsilon(n)}
\end{align}
for all $b\in\bit^n$.
Because we have
\begin{align}
\Pr[\{x_i,r_i(1),r_i(2),r_i(3)\}_{i\in[t(n)]}\in\mathsf{Good}_n : 
\begin{array}{ll}
     & \{x_i\}_{i\in[t(n)]}\la\cT^{\otimes t(n)} \\
     & \{r_i(1),r_i(2),r_i(3)\}_{i\in[t(n)]}\la\bit^{p(n)}
\end{array}
]\geq 1-1/\delta(n),
\end{align}
this completes the proof.
\end{proof}

\begin{proof}[Proof of \cref{claim:behave_well}]

From Hoeffding's inequality,
for each fixed $(a,b)\in\bit^n\times\bit^n$, with probability at least $1-2\cdot 2^{\left(\frac{-t(n)}{32n^2\cdot\epsilon(n)^2}\right)}$ over $\{x_i\}_{i\in[t(n)]}\la \cT^{\otimes t(n)}$ and $\{r_i(1),r_i(2),r_i(3)\}_{i\in[t(n)]}\la \bit^{p(n)}$, where $p(n)$ is the size of $\{r_i(1),r_i(2),r_i(3)\}_{i\in[t(n)]}$, we have
\begin{align}
&\Biggl|\left(\sum_{i\in[t(n)]}\frac{\mathsf{Dis}_{\epsilon}^{\mathbf{NP}}(a,b, x_i ;r_i(1))}{t(n)} -\sum_{i\in[t(n)]}\frac{\mathsf{Dis}_{\epsilon}^{\mathbf{NP}}(a,b, X_{a,r_i(2)} ;r_i(3))}{t(n)}\right)\\
&- \left(\Pr[1\la\mathsf{Dis}_{\epsilon}^{\mathbf{NP}}(a,b,x):x\la\cT]- \Pr[1\la\mathsf{Dis}_{\epsilon}^{\mathbf{NP}}(a,b,x):x\la\cD(a)] \right)\Biggl|\leq \frac{1}{8n\cdot\epsilon(n)}.
\end{align}
Therefore, from union bound, with probability at least $1-1/\delta(n)$ over $\{x_i\}_{i\in[t(n)]}\la \cT^{\otimes t(n)}$ and $\{r_i(1),r_i(2),r_i(3)\}_{i\in[t(n)]}\la \bit^{p(n)}$, we have
\begin{align}
&\Biggl|\left(\sum_{i\in[t(n)]}\frac{\mathsf{Dis}_{\epsilon}^{\mathbf{NP}}(a,b, x_i ;r_i(1))}{t(n)} -\sum_{i\in[t(n)]}\frac{\mathsf{Dis}_{\epsilon}^{\mathbf{NP}}(a,b, X_{a,r_i(2)} ;r_i(3))}{t(n)}\right)\\
&- \left(\Pr[1\la\mathsf{Dis}_{\epsilon}^{\mathbf{NP}}(a,b,x):x\la\cT]- \Pr[1\la\mathsf{Dis}_{\epsilon}^{\mathbf{NP}}(a,b,x):x\la\cD(a)] \right)\Biggl|\leq \frac{1}{8n\cdot\epsilon(n)}\label{ineq:good}
\end{align}
for all $(a,b)\in\bit^n\times\bit^n$.
Let $\mathsf{Good}_{n}$ be a set of $\{x_i,r_i(1),r_i(2),r_i(3)\}_{i\in[t(n)]}$ such that
\cref{ineq:good} holds for all $(a,b)\in\bit^n\times\bit^n$.
From \cref{ineq:good}, for all $\{x_i,r_i(1),r_i(2),r_i(3)\}_{i\in[t(n)]}\in\mathsf{Good}_{n}$ and 
for all $I\in[n]$,
if there exists $a\in\bit^n$ such that, for all $b\in\bit^n$,
\begin{align}
\abs{\Pr[1\la\mathsf{Dis}_{\epsilon}^{\mathbf{NP}}(a,b,x):x\la\cT]-\Pr[1\la\mathsf{Dis}_{\epsilon}^{\mathbf{NP}}(a,b,x):x\la\cD(a)]}\leq \mathsf{Opt}_n+\frac{I-1}{2n\cdot\epsilon(n)},
\end{align}
then there exists $a\in\bit^n$ such that, for all $b\in\bit^n$,
\begin{align}
\abs{\sum_{i\in[t(n)]}\frac{\mathsf{Dis}_{\epsilon}^{\mathbf{NP}}(a,b, x_i ;r_i(1))}{t(n)} -\sum_{i\in[t(n)]}\frac{\mathsf{Dis}_{\epsilon}^{\mathbf{NP}}(a,b, X_{a,r_i(2)} ;r_i(3))}{t(n)}}\leq \mathsf{Opt}_n+\frac{\left(4I-1\right)}{8n\cdot\epsilon(n)}.
\end{align}

From \cref{ineq:good}, for all $\{x_i,r_i(1),r_i(2),r_i(3)\}_{i\in[t(n)]}\in\mathsf{Good}_{n}$, and for all $I\in[n]$, if for all $a\in\bit^n$, there exists $b\in\bit^n$ such that
\begin{align}
\abs{\Pr[1\la\mathsf{Dis}_{\epsilon}^{\mathbf{NP}}(a,b,x):x\la\cT]-\Pr[1\la\mathsf{Dis}_{\epsilon}^{\mathbf{NP}}(a,b,x):x\la\cD(a)]}\geq \mathsf{Opt}_n+\frac{I}{2n\cdot\epsilon(n)},
\end{align}
then for all $a\in\bit^n$, there exists $b\in\bit^n$ such that
\begin{align}
\abs{\sum_{i\in[t(n)]}\frac{\mathsf{Dis}_{\epsilon}^{\mathbf{NP}}(a,b, x_i ;r_i(1))}{t(n)} -\sum_{i\in[t(n)]}\frac{\mathsf{Dis}_{\epsilon}^{\mathbf{NP}}(a,b, X_{a,r_i(2)} ;r_i(3))}{t(n)}}\geq \mathsf{Opt}_n+\frac{(4I-1)}{8n\cdot\epsilon(n)}.
\end{align}
This completes the proof.
\end{proof}
\end{proof}

\if0
\begin{proof}[Proof of \cref{thm:NP_learning}]
Let $\cD$ be a PPT algorithm that takes $1^n$ and $z\in\bit^n$ as input, and outputs $x\in\bit^{\poly(n)}$.
Let $\epsilon$ and $\delta$ be polynomials.
In the following, we show that there exists a polynomial $t$, and a PPT algorithm $\cA$ with the $\Sigma_3^{\mathbf{P}}$ oracle such that, for all algorithms $\cT$, which takes $1^n$ and outputs $x\in\bit^{\poly(n)}$,
we have
\begin{align}
\Pr[ 
\mathsf{SD}(\cD(1^n,h),\cT(1^n))\leq (3+1/\epsilon(n))\cdot\mathsf{Opt}_n+1/\epsilon(n):
\begin{array}{ll}
&\{x_i\}_{i\in[t(n)]}\la\cT(1^n)^{\otimes t(n)} \\
& h\la \cA^{\Sigma_3^{\mathbf{P}}}(1^n,\{x_i\}_{i\in[t(n)]})
\end{array}
]\geq 1-1/\delta(n)    
\end{align}
for all sufficiently large $n\in\N$.
Here, $\mathsf{Opt}_n\seteq \min_{a\in\bit^n}\{\mathsf{SD}(\cT(1^n),\cD(1^n,a))\}$.

For simplicity, in the following, we often omit $1^n$ of $\cT(1^n)$, $\cD(1^n,a)$, and $\mathsf{Dis}_{\epsilon}^{\mathbf{NP}}(1^n,a,b,x)$.
We set $t(n)\seteq 100n^2\cdot\epsilon(n)^2\left(n +\log(\delta(n))\right)$.
For any polynomial-time computable function $\omega$, let us consider the following deterministic polynomial-time algorithm $\cM_{\omega}$ with $\mathbf{NP}$ oracle.
\begin{description}
\item[$\cM_{\omega}^{\mathbf{NP}}(1^n,a,b,\{x_i,r_i(1),r_i(2),r_i(3)\}_{i\in[t(n)]})$:] $ $
\begin{enumerate}
\item Run $\mathsf{Dis}_{\epsilon}^{\mathbf{NP}}(a,b, x_i ;r_i(1))$ for all $i\in[t(n)]$.
\item For all $i\in[t(n)]$, run $X_{a,r_i(2)}\la\cD(a;r_i(2))$ and $\mathsf{Dis}_{\epsilon}^{\mathbf{NP}}(a,b, X_{a,r_i(2)} ;r_i(3))$.
\item Output $1$ if 
\begin{align}
\abs{\sum_{i\in[t(n)]}\frac{\mathsf{Dis}_{\epsilon}^{\mathbf{NP}}(a,b, x_i ;r_i(1))}{t(n)}-\sum_{i\in[t(n)]}\frac{\mathsf{Dis}_{\epsilon}^{\mathbf{NP}}(a,b, X_{a,r_i(2)} ;r_i(3))}{t(n)}}\leq\mathsf{Opt}_n+\omega(n).
\end{align}
\if0
\begin{align}
\abs{\widetilde{\cT}(a,b;\{x_i,r_i(1)\}_{i\in[t(n)]})-\widetilde{\cD}(a,b;\{r_i(2),r_i(3)\}_{i\in[t(n)]})}\leq\mathsf{Opt}_n+\frac{1}{2}\left(\epsilon_1(n)+\epsilon_2(n)\right).
\end{align}
\fi
Otherwise, output $0$.
\end{enumerate}
\end{description}

For any polynomial-time computable function $\omega$, and $a_1,...,a_i\in\bit^{i}$ with $i\in\N$, let us define
\begin{align}
&\cL_{\omega,(a_1,...,a_i)}\\
&\seteq 
\bigg\{ \{x_i,r_i(1),r_i(2),r_i(3)\}_{i\in[t(n)]}\in\bit^*,n\in\N:\exists a\in\bit^{n-i} \mbox{ such that } \forall b\in\bit^n \mbox{, we have}\\
&1\la \cM_{\omega}^{\mathbf{NP}}(1^n,(a_1,...,a_{i}),a,b,\{x_i,r_i(1),r_i(2),r_i(3)\}_{i\in[t(n)]})
\bigg\}.
\end{align} 
We have $\cL_{\omega,(a_1,...,a_i)}\in\Sigma_2^{\mathbf{NP}}=\Sigma_3^{\mathbf{P}}$.

Now, we construct the PPT algorithm $\cA$ with $\Sigma_3^{\mathbf{P}}$ oracle.
\begin{description}
\item[$\cA^{\Sigma_3^{\mathbf{P}}}(1^n,\{x_i\}_{i\in[t(n)]})$:] $ $ 
\begin{enumerate}
\item Uniformly randomly sample $\{r_i(1),r_i(2),r_i(3)\}_{i\in[t(n)]}$.
\item For all $I\in[n]$, run the following.
\begin{enumerate}
\item Check if there exists $a_{I+1},...,a_n\in\bit^{n-I}$ such that
\begin{align}
&\Biggl|\sum_{i\in[t(n)]}\frac{\mathsf{Dis}_{\epsilon}^{\mathbf{NP}}(\mathsf{out}_1,...,\mathsf{out}_{I-1},1,a_{I+1},...,a_n,b, x_i ;r_i(1))}{t(n)}\\
&\hspace{1cm}-\sum_{i\in[t(n)]}\frac{\mathsf{Dis}_{\epsilon}^{\mathbf{NP}}(\mathsf{out}_1,...,\mathsf{out}_{I-1},1,a_{I+1},...,a_n,b, X_{(\mathsf{out}_1,...,\mathsf{out}_{I-1},1,a_{I+1},...,a_n),r_i(2)} ;r_i(3))}{t(n)}
\Biggl|\\
&
\hspace{3cm}\leq\mathsf{Opt}_n+\frac{2I-1}{4n\cdot \epsilon(n)}
\end{align}
for all $b\in\bit^n$
by querying  $\left(\{x_i,r_i(1),r_i(2),r_i(3)\}_{i\in[t(n)]},n\right)$ to $ \cL_{\frac{2I-1}{4n\cdot\epsilon(n)},(\mathsf{out}_1,...,\mathsf{out}_{I-1},1)}$.
\item If so, set $\mathsf{out}_I=1$. Otherwise, set $\mathsf{out}_{I}=0$.
\end{enumerate}
\item Output $\mathsf{out}_1,...,\mathsf{out}_n$.
\end{enumerate}
\end{description}

We use the following \cref{claim:A_output}.
\begin{claim}\label{claim:A_output}
With probability at least $(1-1/\delta(n))$ over $\{x_i\}_{i\in[t(n)]}\la\cT^{\otimes t(n)}$ and $\mathsf{out}\la\cA^{\Sigma_3^\mathbf{P}}(1^n,\{x_i\}_{i\in[t(n)]})$, $\mathsf{out}$ satisfies
\begin{align}\label{ineq:out}
\abs{\Pr[1\la\mathsf{Dis}_{\epsilon}^{\mathbf{NP}}(\mathsf{out},b,x):x\la\cT]-\Pr[1\la\mathsf{Dis}_{\epsilon}^{\mathbf{NP}}(\mathsf{out},b,x) :x\la\cD(\mathsf{out})]}\leq \mathsf{Opt}_n+\frac{1}{2\epsilon(n)}
\end{align}
for all $b\in\bit^n$.
Here, $\mathsf{Opt}_n\seteq \min_{a\in\bit^n}\mathsf{SD}(\cT,\cD(a))$.
\end{claim}
We defer the proof of \cref{claim:A_output}.

For simplicity, in the following, let us denote $\ell$ to mean $\ell\seteq\mathsf{argmin}_{a\in\bit^n}\{\mathsf{SD}(\cD(a),\cT)\}$ so that $\mathsf{Opt}_n =\mathsf{SD}(\cD(\ell),\cT)$.
In the following, for the $\mathsf{out}$ such that \cref{ineq:out} holds for all $b\in\bit^n$, we have
\begin{align}
\mathsf{SD}(\cT,\cD(\mathsf{out}))\leq (3+1/\epsilon(n))\cdot \mathsf{Opt}_n+1/\epsilon(n).
\end{align}

First, let us consider the case where $\mathsf{SD}(\cD(\ell),\cD(\mathsf{out}))<  1/\epsilon(n)$.
In this case, we have
\begin{align}
\mathsf{SD}(\cT,\cD(\mathsf{out}))\leq\mathsf{SD}(\cT,\cD(\ell))+\mathsf{SD}(\cD(\ell),\cD(\mathsf{out}))) < \mathsf{Opt}_n+\frac{1}{\epsilon(n)}.
\end{align}
Here, in the first inequality, we have used triangle inequality.

Now, let us consider the case, where $\mathsf{SD}(\cD(\ell),\cD(\mathsf{out}))\geq  1/\epsilon(n)$.
In this case, we have
\begin{align}
    &\mathsf{SD}(\cT,\cD(\mathsf{out}))\\
    &\leq \mathsf{SD}(\cT,\cD(\ell))+\mathsf{SD}(\cD(\ell),\cD(\mathsf{out})))\\
    &\leq \mathsf{Opt}_n+\frac{1}{4\epsilon(n)}\label{ineq_8}\\
    &+\left(1+\frac{1}{4\epsilon(n)}\right)\abs{\Pr[1\la\mathsf{Dis}_{\epsilon}^{\mathbf{NP}}(\mathsf{out},\ell,x):x\la\cD(\ell)]-\Pr[1\la\mathsf{Dis}_{\epsilon}^{\mathbf{NP}}(\mathsf{out},\ell,x):x\la\cD(\mathsf{out})]}\\
    &\leq\mathsf{Opt}_n+\frac{1}{4\epsilon(n)}\label{ineq:triangle}\\
    &+\left(1+\frac{1}{4\epsilon(n)}\right)\abs{\Pr[1\la\mathsf{Dis}_{\epsilon}^{\mathbf{NP}}(\mathsf{out},\ell,x):x\la\cD(\mathsf{out})]-\Pr[1\la\mathsf{Dis}_{\epsilon}^{\mathbf{NP}}(\mathsf{out},\ell,x):x\la\cT]}\\
    &+\left(1+\frac{1}{4\epsilon(n)}\right)\abs{\Pr[1\la\mathsf{Dis}_{\epsilon}^{\mathbf{NP}}(\mathsf{out},\ell,x):x\la\cT]-\Pr[1\la\mathsf{Dis}_{\epsilon}^{\mathbf{NP}}(\mathsf{out},\ell,x):x\la\cD(\ell)]}\\
    &\leq \mathsf{Opt}_n+\frac{1}{4\epsilon(n)}+\left(1+\frac{1}{4\epsilon(n)}\right)\left(\mathsf{Opt}_n+\frac{1}{2\epsilon(n)}\right)+\left(1+\frac{1}{4\epsilon(n)}\right) \mathsf{Opt}_n\label{ineq_9}\\
    &\leq \left(3+\frac{1}{\epsilon(n)}\right)\cdot\mathsf{Opt}_n+\frac{1}{\epsilon(n)}.
\end{align}
Here, in \cref{ineq_8}, we have used \cref{thm:NP_distinguish}, in \cref{ineq:triangle}, we have used the triangle inequality, and
in \cref{ineq_9}, we have used $\mathsf{SD}(\cT(1^n),\cD(\ell))\leq \mathsf{Opt}_n$
and $\mathsf{out}$ satisfies the following inequality
\begin{align}
\abs{\Pr[1\la\mathsf{Dis}^{\mathbf{NP}}(\mathsf{out},b,x):x\la\cT]-\Pr[1\la\mathsf{Dis}^{\mathbf{NP}}(\mathsf{out},b,x):x\la\cD(\mathsf{out})]}\leq \mathsf{Opt}_n+\frac{1}{2\epsilon(n)}
\end{align}
for all $b\in\bit^n$.
This completes the proof.

Now, let us show the \cref{claim:A_output}, which we deferred the proof.
For showing it, we use the following \cref{claim:behave_well}, which can be shown by a standard probabilistic argument.
We defer the proof of \cref{claim:behave_well} in the end of the proof.
\begin{claim}\label{claim:behave_well}
Let us denote $X_{a,r_i(2)}$ to mean the output of $\cD(a;r_i(2))$.
Let $p(n)$ be the size of $\{r_i(1),r_i(2),r_i(3)\}_{i\in[t(n)]}$.
Let $\mathsf{Good}_n$ be a set of $\{x_i,r_i(1),r_i(2),r_i(3)\}_{i\in[t(n)]}$ such that, for all $I\in[n]$ and $a_1,...,a_{I-1}\in\bit^{I-1}$, the following two conditions hold.
\begin{enumerate}
\item 
If there exists $a_{I},...a_{n}\in\bit^{n-I+1}$ such that for all $b\in\bit^n$, we have
\begin{align}
&\Bigg|\Pr[1\la\mathsf{Dis}_{\epsilon}^{\mathbf{NP}}(a_1,...,a_n,b,x):x\la\cT]-\Pr[1\la\mathsf{Dis}_{\epsilon}^{\mathbf{NP}}(a_1,...,a_n,b,x):x\la\cD(a_1,...,a_n)]\Bigg|\\
&\leq \mathsf{Opt}_n+\frac{I-1}{2n\cdot\epsilon(n)},
\end{align}
then there exists $a_{I+1},...,a_n\in\bit^{n-I+1}$ such that, for all $b\in\bit^n$,
\begin{align}
\abs{\sum_{i\in[t(n)]}\frac{\mathsf{Dis}_{\epsilon}^{\mathbf{NP}}(a_1,...,a_n,b, x_i ;r_i(1))}{t(n)} -\sum_{i\in[t(n)]}\frac{\mathsf{Dis}_{\epsilon}^{\mathbf{NP}}(a_1,...,a_n,b, X_{a,r_i(2)} ;r_i(3))}{t(n)}}\leq \mathsf{Opt}_n+\frac{\left(2I-1\right)}{4n\cdot\epsilon(n)}.
\end{align}
\item 
If for all $a_{I+1},...,a_{n}\in\bit^{n-I+1}$, there exists $b\in\bit^n$ such that 
\begin{align}
&\Bigg|\Pr[1\la\mathsf{Dis}_{\epsilon}^{\mathbf{NP}}(a_1,...,a_n,b,x):x\la\cT]-\Pr[1\la\mathsf{Dis}_{\epsilon}^{\mathbf{NP}}(a_1,...,a_n,b,x):x\la\cD(a_1,...,a_n)]\Bigg|\\
&\geq \mathsf{Opt}_n+\frac{I}{2n\cdot\epsilon(n)},
\end{align}
then for all $a_{I+1},...,a_n\in\bit^{n-I}$ there exists $b\in\bit^n$ such that
\begin{align}
\abs{\sum_{i\in[t(n)]}\frac{\mathsf{Dis}_{\epsilon}^{\mathbf{NP}}(a_1,...,a_n,b, x_i ;r_i(1))}{t(n)} -\sum_{i\in[t(n)]}\frac{\mathsf{Dis}_{\epsilon}^{\mathbf{NP}}(a_1,...,a_n,b, X_{a,r_i(2)} ;r_i(3))}{t(n)}}\geq \mathsf{Opt}_n+\frac{\left(2I-1\right)}{4n\cdot\epsilon(n)}.
\end{align}
\end{enumerate}
Then, we have
\begin{align}
\Pr[\{x_i,r_i(1),r_i(2),r_i(3)\}_{i\in[t(n)]}\in\mathsf{Good}_n : 
\begin{array}{ll}
     & \{x_i\}_{i\in[t(n)]}\la\cT^{\otimes t(n)} \\
     & \{r_i(1),r_i(2),r_i(3)\}_{i\in[t(n)]}\la\bit^{p(n)}
\end{array}
]\geq 1-1/\delta(n).
\end{align}
\end{claim}

\begin{proof}[Proof of \cref{claim:A_output}]

Let $\mathsf{Good}_n$ be a set of $\{x_i,r_i(1),r_i(2),r_i(3)\}_{i\in[t(n)]}$ defined in \cref{claim:behave_well}.
In the following, for all $\{x_i,r_i(1),r_i(2),r_i(3)\}_{i\in[t(n)]}\in\mathsf{Good}_n$,  $\cA^{\Sigma_3^{\mathbf{P}}}(1^n,\{x_i\}_{i\in[t(n)]};\{r_i(1),r_i(2),r_i(3)\}_{i\in[t(n)]})$ outputs $\mathsf{out}_1,...,\mathsf{out}_n$
such that 
\begin{align}
\abs{\Pr[1\la\mathsf{Dis}_{\epsilon}^{\mathbf{NP}}(\mathsf{out},b,x):x\la\cT]-\Pr[1\la\mathsf{Dis}_{\epsilon}^{\mathbf{NP}}(\mathsf{out},b,x) :x\la\cD(\mathsf{out})]}\leq \mathsf{Opt}_n+\frac{1}{2\epsilon(n)}
\end{align}
for all $b\in\bit^n$.
Because we have
\begin{align}
\Pr[\{x_i,r_i(1),r_i(2),r_i(3)\}_{i\in[t(n)]}\in\mathsf{Good}_n : 
\begin{array}{ll}
     & \{x_i\}_{i\in[t(n)]}\la\cT^{\otimes t(n)} \\
     & \{r_i(1),r_i(2),r_i(3)\}_{i\in[t(n)]}\la\bit^{p(n)}
\end{array}
]\geq 1-1/\delta(n),
\end{align}
this completes the proof.
We use mathematical induction for showing it.

First, we show that for the $\mathsf{out}_1$, there exists $a_2,...,a_n\in\bit^{n-1}$ such that
\begin{align}\label{ineq_induction}
&\abs{\Pr[1\la\mathsf{Dis}_{\epsilon}^{\mathbf{NP}}(\mathsf{out}_1,a_2,...a_n,b,x):x\la\cT]-\Pr[1\la\mathsf{Dis}_{\epsilon}^{\mathbf{NP}}(\mathsf{out}_1,a_2,...,a_n,b,x) :x\la\cD(\mathsf{out}_1,a_2,...,a_n)]}\\
&\leq \mathsf{Opt}_n+\frac{1}{2n\cdot\epsilon(n)}
\end{align}
for all $b\in\bit^n$.
From the first item of \cref{claim:behave_well} and the construction of $\cA^{\Sigma_3^{\mathbf{P}}}$,
for the $\mathsf{out}_1$, there exists 
$a_{2},...,a_n\in\bit^{n-1}$ such that
\begin{align}
&\Bigg|\sum_{i\in[t(n)]}\frac{\mathsf{Dis}_{\epsilon}^{\mathbf{NP}}(\mathsf{out}_1,a_2...,a_n,b, x_i ;r_i(1))}{t(n)}\\
&\hspace{1cm}-\sum_{i\in[t(n)]}\frac{\mathsf{Dis}_{\epsilon}^{\mathbf{NP}}(\mathsf{out}_1,a_2,...,a_n,b, X_{(\mathsf{out}_1,a_2,...,a_n),r_i(2)} ;r_i(3))}{t(n)}
\Bigg|\leq \mathsf{Opt}_n+\frac{1}{4n\cdot\epsilon(n)}
\end{align}
for all $b\in\bit^n$.
From the second item of \cref{claim:behave_well}, for such $\mathsf{out}_1$, there exists $a_{2},...,a_n\in\bit^{n-1}$ such that 
\cref{ineq_induction} holds
for all $b\in\bit^n$.

For mathematical induction,
suppose that for the $\mathsf{out}_1,...,\mathsf{out}_{I}$, there exists $a_{I+1}, ..., a_n\in\bit^{n-I}$ such that 
\begin{align}
&\Biggl|\Pr[1\la\mathsf{Dis}_{\epsilon}^{\mathbf{NP}}(\mathsf{out}_1,...,\mathsf{out}_{I}, a_{I+1}...,,a_n,b, x :x\la\cT]\\
&\hspace{1cm}-\Pr[1\la\mathsf{Dis}_{\epsilon}^{\mathbf{NP}}(\mathsf{out}_1,...,\mathsf{out}_{I}, a_{I+1}...,,a_n,b, x :x\la\cD(\mathsf{out}_1,...,\mathsf{out}_{I},a_{I+1},...,a_n)]
\Biggl|
\leq\mathsf{Opt}_n+\frac{I}{2n\cdot \epsilon(n)}.
\end{align}
for all $b\in\bit^n$.
From the first item of \cref{claim:behave_well} and the construction of $\cA^{\Sigma_3^{\mathbf{P}}}$,
for the $\mathsf{out}_{1},...,\mathsf{out}_{I+1}$, there exists 
$a_{I+2},...,a_n\in\bit^{n-I-1}$ such that
\begin{align}
&\Bigg|\sum_{i\in[t(n)]}\frac{\mathsf{Dis}_{\epsilon}^{\mathbf{NP}}(\mathsf{out}_1,...,\mathsf{out}_{I+1},a_{I+2}...,a_n,b, x_i ;r_i(1))}{t(n)} \\
&\hspace{1cm}-\sum_{i\in[t(n)]}\frac{\mathsf{Dis}_{\epsilon}^{\mathbf{NP}}(\mathsf{out}_1,...,\mathsf{out}_{I+1},a_{I+2},...,a_n,b, X_{(\mathsf{out}_1,...,\mathsf{out}_{I+1},a_{I+2},...,a_n),r_i(2)} ;r_i(3))}{t(n)}
\Bigg|
\leq \mathsf{Opt}_n+\frac{2I+1}{4n\cdot\epsilon(n)}
\end{align}
for all $b\in\bit^n$.
From the second item of \cref{claim:behave_well}, 
for the $\mathsf{out}_{1},...,\mathsf{out}_{I+1}$, there exists 
$a_{I+2},...,a_n\in\bit^{n-I-1}$ such that
\begin{align}
&\Biggl|\Pr[1\la\mathsf{Dis}_{\epsilon}^{\mathbf{NP}}(\mathsf{out}_1,...,\mathsf{out}_{I+1}, a_{I+2}...,,a_n,b, x :x\la\cT]\\
&\hspace{1cm}-\Pr[1\la\mathsf{Dis}_{\epsilon}^{\mathbf{NP}}(\mathsf{out}_1,...,\mathsf{out}_{I+1}, a_{I+2}...,,a_n,b, x :x\la\cD(\mathsf{out}_1,...,\mathsf{out}_{I+1},a_{I+2},...,a_n)]
\Biggl|
\leq\mathsf{Opt}_n+\frac{I+1}{2n\cdot \epsilon(n)}.
\end{align}
for all $b\in\bit^n$.

From mathematical induction, this implies that 
$\mathsf{out}_1,...,\mathsf{out}_n$
satisfies
\begin{align}
\abs{\Pr[1\la\mathsf{Dis}_{\epsilon}^{\mathbf{NP}}(\mathsf{out},b,x):x\la\cT]-\Pr[1\la\mathsf{Dis}_{\epsilon}^{\mathbf{NP}}(\mathsf{out},b,x) :x\la\cD(\mathsf{out})]}\leq \mathsf{Opt}_n+\frac{1}{2\epsilon(n)}
\end{align}
for all $b\in\bit^n$.
This completes the proof.
\end{proof}

\begin{proof}[Proof of \cref{claim:behave_well}]

From Hoeffding's inequality,
for each fixed $(a,b)\in\bit^n\times\bit^n$, with probability at least $1-2\cdot 2^{\left(\frac{-t(n)}{8n^2\cdot\epsilon(n)^2}\right)}$ over $\{x_i\}_{i\in[t(n)]}\la \cT^{\otimes t(n)}$ and $\{r_i(1),r_i(2),r_i(3)\}_{i\in[t(n)]}\la \bit^{p(n)}$, where $p(n)$ is the size of $\{r_i(1),r_i(2),r_i(3)\}_{i\in[t(n)]}$, we have
\begin{align}
&\Biggl|\left(\sum_{i\in[t(n)]}\frac{\mathsf{Dis}_{\epsilon}^{\mathbf{NP}}(a,b, x_i ;r_i(1))}{t(n)} -\sum_{i\in[t(n)]}\frac{\mathsf{Dis}_{\epsilon}^{\mathbf{NP}}(a,b, X_{a,r_i(2)} ;r_i(3))}{t(n)}\right)\\
&- \left(\Pr[1\la\mathsf{Dis}_{\epsilon}^{\mathbf{NP}}(a,b,x):x\la\cT]- \Pr[1\la\mathsf{Dis}_{\epsilon}^{\mathbf{NP}}(a,b,x):x\la\cD(a)] \right)\Biggl|\leq \frac{1}{4n\cdot\epsilon(n)}.
\end{align}
Therefore, from union bound, with probability at least $1-1/\delta(n)$ over $\{x_i\}_{i\in[t(n)]}\la \cT^{\otimes t(n)}$ and $\{r_i(1),r_i(2),r_i(3)\}_{i\in[t(n)]}\la \bit^{p(n)}$, we have
\begin{align}
&\Biggl|\left(\sum_{i\in[t(n)]}\frac{\mathsf{Dis}_{\epsilon}^{\mathbf{NP}}(a,b, x_i ;r_i(1))}{t(n)} -\sum_{i\in[t(n)]}\frac{\mathsf{Dis}_{\epsilon}^{\mathbf{NP}}(a,b, X_{a,r_i(2)} ;r_i(3))}{t(n)}\right)\\
&- \left(\Pr[1\la\mathsf{Dis}_{\epsilon}^{\mathbf{NP}}(a,b,x):x\la\cT]- \Pr[1\la\mathsf{Dis}_{\epsilon}^{\mathbf{NP}}(a,b,x):x\la\cD(a)] \right)\Biggl|\leq \frac{1}{4n\cdot\epsilon(n)}\label{ineq:good}.
\end{align}
for all $(a,b)\in\bit^n\times\bit^n$.
Let $\mathsf{Good}_{n}$ be a set of $\{x_i,r_i(1),r_i(2),r_i(3)\}_{i\in[t(n)]}$ such that
\cref{ineq:good} holds for all $(a,b)\in\bit^n\times\bit^n$.

From \cref{ineq:good}, for all $\{x_i,r_i(1),r_i(2),r_i(3)\}_{i\in[t(n)]}\in\mathsf{Good}_{n}$ and 
for all $I\in[n]$,
if there exists $a\in\bit^n$ such that, for all $b\in\bit^n$,
\begin{align}
\abs{\Pr[1\la\mathsf{Dis}_{\epsilon}^{\mathbf{NP}}(a,b,x):x\la\cT]-\Pr[1\la\mathsf{Dis}_{\epsilon}^{\mathbf{NP}}(a,b,x):x\la\cD(a)]}\leq \mathsf{Opt}_n+\frac{I-1}{2n\cdot\epsilon(n)},
\end{align}
then there exists $a\in\bit^n$ such that, for all $b\in\bit^n$,
\begin{align}
\abs{\sum_{i\in[t(n)]}\frac{\mathsf{Dis}_{\epsilon}^{\mathbf{NP}}(a,b, x_i ;r_i(1))}{t(n)} -\sum_{i\in[t(n)]}\frac{\mathsf{Dis}_{\epsilon}^{\mathbf{NP}}(a,b, X_{a,r_i(2)} ;r_i(3))}{t(n)}}\leq \mathsf{Opt}_n+\frac{\left(2I-1\right)}{4n\cdot\epsilon(n)}.
\end{align}

From \cref{ineq:good}, for all $\{x_i,r_i(1),r_i(2),r_i(3)\}_{i\in[t(n)]}\in\mathsf{Good}_{n}$, and for all $I\in[n]$, if for all $a\in\bit^n$, there exists $b\in\bit^n$ such that
\begin{align}
\abs{\Pr[1\la\mathsf{Dis}_{\epsilon}^{\mathbf{NP}}(a,b,x):x\la\cT]-\Pr[1\la\mathsf{Dis}_{\epsilon}^{\mathbf{NP}}(a,b,x):x\la\cD(a)]}\geq \mathsf{Opt}_n+\frac{I}{2n\cdot\epsilon(n)},
\end{align}
then for all $a\in\bit^n$, there exists $b\in\bit^n$ such that
\begin{align}
\abs{\sum_{i\in[t(n)]}\frac{\mathsf{Dis}_{\epsilon}^{\mathbf{NP}}(a,b, x_i ;r_i(1))}{t(n)} -\sum_{i\in[t(n)]}\frac{\mathsf{Dis}_{\epsilon}^{\mathbf{NP}}(a,b, X_{a,r_i(2)} ;r_i(3))}{t(n)}}\geq \mathsf{Opt}_n+\frac{(2I-1)}{4n\cdot\epsilon(n)}.
\end{align}
This completes the proof.
\end{proof}
\end{proof}
\fi

\paragraph{Proof of \cref{thm:NP_distinguish}}
In the following, we give a proof of \cref{thm:NP_distinguish}.
\begin{proof}[Proof of \cref{thm:NP_distinguish}]
In the following, we often omit $1^n$ of $\cD(1^n,a)$, and $\mathsf{Estimate}_{\epsilon}(1^n,a,x)$.

Let $\epsilon$ be an arbitrary polynomial.
Let $\mathsf{Estimate}^{\mathbf{NP}}_{\epsilon}$ be a PPT algorithm with access to an $\mathbf{NP}$ oracle given in \cref{thm:estimate} such that
\begin{align}
\Pr\left[\left(1-\frac{1}{\epsilon(n)}\right)\Pr[x\la\cD(z)]\leq \mathsf{Estimate}_{\epsilon}^{\mathbf{NP}}(z,x) \leq \left(1+\frac{1}{\epsilon(n)}\right)\Pr[x\la\cD(z)]\right]\geq 1-\frac{1}{\epsilon(n)}
\end{align}
for all $(x,z)\in\bit^n\times\bit^n$, where the probability is taken over the internal randomness of $\mathsf{Estimate}^{\mathbf{NP}}_{\epsilon}$.

From $\mathsf{Estimate}^{\mathbf{NP}}_{\epsilon}$, 
we construct $\mathsf{Dis}^{\mathbf{NP}}_{\epsilon}$ as follows:
\begin{description}
\item[$\mathsf{Dis}_{\epsilon}^{\mathbf{NP}}(\ell,m,x)$:]$ $
\begin{enumerate}
\item Run $E_\ell\la\mathsf{Estimate}_{500\epsilon}^{\mathbf{NP}}(\ell,x)$ and $E_m\la\mathsf{Estimate}_{500\epsilon}^{\mathbf{NP}}(m,x)$.
\item 
Output $1$ if $E_\ell\geq (1+\frac{1}{16\epsilon(n)})E_{m}$.
Otherwise, output $0$.
\end{enumerate}
\end{description}
We will show that, when $\mathsf{SD}(\cD(\ell),\cD(m))\geq\frac{1}{\epsilon(n)}$, we have
\begin{align}
&\mathsf{SD}(\cD(\ell),\cD(m))\\
&\leq
\left(1+\frac{1}{4\epsilon(n)}\right)\abs{
\Pr[1\la\mathsf{Dis}_{\epsilon}^{\mathbf{NP}}(\ell,m,x):x\la\cD(\ell)] 
-
\Pr[1\la\mathsf{Dis}_{\epsilon}^{\mathbf{NP}}(\ell,m,x):x\la\cD(m)] }+\frac{1}{4\epsilon(n)}.
\end{align}

For showing this, let us define
\begin{align}
&\cS_n\seteq\left\{x\in\bit^{\poly(n)}: \Pr[x\la\cD(1^n,\ell)] \geq \left(1+\frac{1}{8\epsilon(n)}\right)\Pr[x\la\cD(1^n,m)]\right\}\\
&\cT_n\seteq\left\{x\in\bit^{\poly(n)}:\left(1+\frac{1}{8\epsilon(n)}\right)\cdot \Pr[x\la\cD(1^n,m)]> \Pr[x\la\cD(1^n,\ell)]\geq \Pr[x\la\cD(1^n,m)]\right\}\\
&\cU_n\seteq \{x\in\bit^{\poly(n)}: \Pr[x\la\cD(1^n,m)]> \Pr[x\la\cD(1^n,\ell)]\}.
\end{align}
We use the following \cref{fact:distance,claim:Distinguish_well_in_S,claim:Not_fooled_in_U}, which we prove later.

\begin{claim}\label{claim:Distinguish_well_in_S}
For all $x\in\cS_n$, we have
\begin{align}
\Pr[1\la\mathsf{Dis}_{\epsilon}^{\mathbf{NP}}(\ell,m,x)]\geq 1-\frac{1}{250\epsilon(n)}.
\end{align}
\end{claim}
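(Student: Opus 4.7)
The plan is to apply a union bound on the two invocations of $\mathsf{Estimate}_{500\epsilon}^{\mathbf{NP}}$ so that both estimates are simultaneously within their guaranteed multiplicative accuracy, and then, conditioned on this good event, to show that the test $E_\ell \geq (1 + \tfrac{1}{16\epsilon(n)})E_m$ succeeds whenever $x \in \cS_n$.

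First I would invoke the accuracy guarantee of $\mathsf{Estimate}_{500\epsilon}^{\mathbf{NP}}$ from \cref{thm:estimate} twice, once on input $(\ell,x)$ and once on input $(m,x)$. Each invocation outputs a value that lies within a multiplicative $(1 \pm \tfrac{1}{500\epsilon(n)})$ factor of the true probability with probability at least $1 - \tfrac{1}{500\epsilon(n)}$. By a union bound over the internal randomness of the two invocations, with probability at least $1 - \tfrac{1}{250\epsilon(n)}$ we simultaneously have
\begin{align*}
\left(1 - \tfrac{1}{500\epsilon(n)}\right)\Pr[x \la \cD(\ell)] \,\leq\, E_\ell \,&\leq\, \left(1 + \tfrac{1}{500\epsilon(n)}\right)\Pr[x \la \cD(\ell)], \\
\left(1 - \tfrac{1}{500\epsilon(n)}\right)\Pr[x \la \cD(m)] \,\leq\, E_m \,&\leq\, \left(1 + \tfrac{1}{500\epsilon(n)}\right)\Pr[x \la \cD(m)].
\end{align*}

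Next I would combine these with the hypothesis $x \in \cS_n$, which guarantees $\Pr[x \la \cD(\ell)] \geq (1 + \tfrac{1}{8\epsilon(n)})\Pr[x \la \cD(m)]$. Dividing the lower bound on $E_\ell$ by the upper bound on $E_m$ yields
\[
\frac{E_\ell}{E_m} \,\geq\, \frac{\bigl(1 - \tfrac{1}{500\epsilon(n)}\bigr)\bigl(1 + \tfrac{1}{8\epsilon(n)}\bigr)}{1 + \tfrac{1}{500\epsilon(n)}}.
\]
A short elementary manipulation shows that this lower bound is at least $1 + \tfrac{1}{16\epsilon(n)}$: setting $a = \tfrac{1}{500\epsilon(n)}$ and $b = \tfrac{1}{8\epsilon(n)}$, the inequality $(1-a)(1+b) \geq (1 + b/2)(1+a)$ reduces to $\tfrac{b}{2} \geq 2a + \tfrac{3ab}{2}$, which holds since $\tfrac{b}{2} = \tfrac{1}{16\epsilon(n)}$ dominates $2a = \tfrac{1}{250\epsilon(n)}$ and the $O(1/\epsilon(n)^2)$ term is negligible.

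Hence, on the good event of probability at least $1 - \tfrac{1}{250\epsilon(n)}$, the test in step~2 of $\mathsf{Dis}_\epsilon^{\mathbf{NP}}$ is satisfied and the algorithm outputs $1$, which gives the claim. The only delicate point is keeping the constants in the multiplicative slackness lined up so that the estimation accuracy $\tfrac{1}{500\epsilon(n)}$ is small enough compared to the gap $\tfrac{1}{8\epsilon(n)}$ coming from membership in $\cS_n$ to absorb the $\tfrac{1}{16\epsilon(n)}$ threshold used by $\mathsf{Dis}$; I would simply track these constants explicitly in the write-up rather than introduce asymptotic notation.
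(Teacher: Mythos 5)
Your proposal is correct and takes essentially the same route as the paper: union-bound over the two invocations of $\mathsf{Estimate}_{500\epsilon}^{\mathbf{NP}}$ to get both multiplicative estimates within $(1\pm\tfrac{1}{500\epsilon(n)})$ with probability at least $1-\tfrac{1}{250\epsilon(n)}$, and then combine the bound $\Pr[x\la\cD(\ell)]\ge(1+\tfrac{1}{8\epsilon(n)})\Pr[x\la\cD(m)]$ with the estimation slack to conclude $E_\ell\ge(1+\tfrac{1}{16\epsilon(n)})E_m$. The only cosmetic difference is that the paper writes the final step as a chain of inequalities ending with $\tfrac{500\epsilon(n)-1}{500\epsilon(n)+1}(1+\tfrac{1}{8\epsilon(n)})\ge 1+\tfrac{1}{16\epsilon(n)}$, whereas you factor out the ratio and verify the equivalent inequality $\tfrac{(1-a)(1+b)}{1+a}\ge 1+b/2$ explicitly.
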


\begin{claim}\label{claim:Not_fooled_in_U}
For all $x\in\cU_n$, we have
\begin{align}
\Pr[1\la\mathsf{Dis}_{\epsilon}^{\mathbf{NP}}(\ell,m,x)]\leq \frac{1}{250\epsilon(n)}.
\end{align}
\end{claim}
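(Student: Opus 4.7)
The plan is to proceed by a straightforward union-bound argument that controls the two multiplicative estimates used inside $\mathsf{Dis}_\epsilon^{\mathbf{NP}}$. Fix any $x \in \cU_n$, so $\Pr[x\la\cD(\ell)] < \Pr[x\la\cD(m)]$. Invoking the accuracy guarantee of $\mathsf{Estimate}_{500\epsilon}^{\mathbf{NP}}$ (from \cref{thm:estimate}, as recalled just before the construction of $\mathsf{Dis}_\epsilon^{\mathbf{NP}}$) twice and taking a union bound, except with probability at most $2/(500\epsilon(n)) = 1/(250\epsilon(n))$ over the internal randomness of $\mathsf{Dis}_\epsilon^{\mathbf{NP}}$ we are in the ``good event'' that both
\begin{align}
E_\ell &\leq \left(1+\tfrac{1}{500\epsilon(n)}\right)\Pr[x\la\cD(\ell)], \\
E_m &\geq \left(1-\tfrac{1}{500\epsilon(n)}\right)\Pr[x\la\cD(m)]
\end{align}
hold simultaneously.

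Conditioning on this good event, I would then use $\Pr[x\la\cD(\ell)] < \Pr[x\la\cD(m)]$ to bound
\begin{align}
\frac{E_\ell}{E_m} \leq \frac{(1+\tfrac{1}{500\epsilon(n)})\Pr[x\la\cD(\ell)]}{(1-\tfrac{1}{500\epsilon(n)})\Pr[x\la\cD(m)]} < \frac{1+\tfrac{1}{500\epsilon(n)}}{1-\tfrac{1}{500\epsilon(n)}}.
\end{align}
A short parameter check then shows that the right-hand side is strictly smaller than $1 + \tfrac{1}{16\epsilon(n)}$ for all $\epsilon(n) \geq 1$: writing $a = 1/(500\epsilon(n))$, we have $(1+a)/(1-a) = 1 + 2a/(1-a) \leq 1 + 2/(499\epsilon(n)) < 1 + 1/(16\epsilon(n))$ since $32 < 499$. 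Hence on the good event the test $E_\ell \geq (1 + \tfrac{1}{16\epsilon(n)}) E_m$ fails, and $\mathsf{Dis}_\epsilon^{\mathbf{NP}}(\ell,m,x)$ outputs $0$.

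Combining the two observations, $\Pr[1 \la \mathsf{Dis}_\epsilon^{\mathbf{NP}}(\ell,m,x)]$ is upper bounded by the probability of the bad event, which is at most $1/(250\epsilon(n))$, establishing the claim. There is no real obstacle here: the entire content of the proof is the numerical verification that the slack in $\mathsf{Estimate}_{500\epsilon}^{\mathbf{NP}}$'s multiplicative error (controlled by $1/(500\epsilon(n))$) is small enough that it cannot cross the decision threshold $1 + 1/(16\epsilon(n))$ used by $\mathsf{Dis}_\epsilon^{\mathbf{NP}}$, and that the union-bound failure probability matches the target bound $1/(250\epsilon(n))$. The analogous argument for \cref{claim:Distinguish_well_in_S} (which is stated but not yet proved) will mirror this one: there the gap $\Pr[x\la\cD(\ell)] \geq (1+\tfrac{1}{8\epsilon(n)})\Pr[x\la\cD(m)]$ provides even more room, so the same choice of constants suffices.
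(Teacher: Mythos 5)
Your argument is correct and mirrors the paper's own proof: both condition on the good event (probability at least $1-\tfrac{1}{250\epsilon(n)}$ by union bound over the two $\mathsf{Estimate}_{500\epsilon}^{\mathbf{NP}}$ calls) that the estimates are multiplicatively accurate, then use $\Pr[x\la\cD(\ell)]<\Pr[x\la\cD(m)]$ to show the ratio $E_\ell/E_m$ stays strictly below the threshold $1+\tfrac{1}{16\epsilon(n)}$. The only cosmetic difference is that you use the one-sided bounds and compare the ratio directly, while the paper keeps both sides of the multiplicative error and rearranges into $\bigl(1+\tfrac{1}{16\epsilon(n)}\bigr)E_m > E_\ell$; the numerical verification is the same.
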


\begin{claim}\label{fact:distance}
For all $\cD(\ell)$ and $\cD(m)$ such that 
$
\mathsf{SD}(\cD(\ell),\cD(m))\geq \frac{1}{\epsilon(n)}
$
we have
\begin{align}
\sum_{x\in\cS_n}\Pr[x\la\cD(\ell)]-\Pr[x\la\cD(m)]\geq \mathsf{SD}(\cD(\ell),\cD(m))-\frac{1}{8\epsilon(n)}.
\end{align}
\end{claim}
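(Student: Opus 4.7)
The plan is to exploit the standard identity
\[
\mathsf{SD}(\cD(\ell),\cD(m)) \;=\; \sum_{x:\Pr[x\la\cD(\ell)]\geq\Pr[x\la\cD(m)]}\bigl(\Pr[x\la\cD(\ell)]-\Pr[x\la\cD(m)]\bigr),
\]
and observe that under the definitions in the excerpt, the summation index set on the right is exactly $\cS_n\cup\cT_n$, since $\cS_n$, $\cT_n$, $\cU_n$ partition $\bit^{\poly(n)}$ and on $\cU_n$ the difference is negative. So I would split the sum as a contribution from $\cS_n$ plus a contribution from $\cT_n$, and then show the $\cT_n$ contribution is small.

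Next, I would bound the $\cT_n$ contribution. On $\cT_n$ the definition gives $\Pr[x\la\cD(\ell)] < (1+\tfrac{1}{8\epsilon(n)})\Pr[x\la\cD(m)]$, so termwise
\[
\Pr[x\la\cD(\ell)]-\Pr[x\la\cD(m)] \;<\; \tfrac{1}{8\epsilon(n)}\Pr[x\la\cD(m)].
\]
Summing over $x\in\cT_n$ and using $\sum_{x\in\cT_n}\Pr[x\la\cD(m)]\leq 1$ yields
\[
\sum_{x\in\cT_n}\bigl(\Pr[x\la\cD(\ell)]-\Pr[x\la\cD(m)]\bigr) \;\leq\; \tfrac{1}{8\epsilon(n)}.
\]
Rearranging the split identity then gives exactly the claimed inequality. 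Note that the hypothesis $\mathsf{SD}(\cD(\ell),\cD(m))\geq \tfrac{1}{\epsilon(n)}$ is not even needed for the inequality to hold; it is only used later (outside this claim) to ensure that the right-hand side is non-trivial.

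The proof is essentially a two-line calculation once the partition $\cS_n\sqcup\cT_n\sqcup\cU_n$ is in hand, so there is no real obstacle. The only point that requires a bit of care is to make sure the boundary case (ties $\Pr[x\la\cD(\ell)]=\Pr[x\la\cD(m)]$) is correctly absorbed into $\cT_n$ rather than $\cU_n$, which matches the definitions given in the excerpt (the non-strict inequality in the lower bound of $\cT_n$ and the strict one in $\cU_n$).
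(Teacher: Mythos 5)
Your proposal is correct and follows essentially the same route as the paper: decompose the standard identity $\mathsf{SD}(\cD(\ell),\cD(m))=\sum_{x\in\cS_n\cup\cT_n}(\Pr[x\la\cD(\ell)]-\Pr[x\la\cD(m)])$ into the $\cS_n$ and $\cT_n$ contributions, bound the $\cT_n$ part termwise by $\tfrac{1}{8\epsilon(n)}\Pr[x\la\cD(m)]$, and sum. Your side observations (the hypothesis $\mathsf{SD}\ge 1/\epsilon(n)$ is unused here, and ties fall into $\cT_n$) are also accurate.
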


From \cref{fact:distance,claim:Distinguish_well_in_S,claim:Not_fooled_in_U}, we have
\begin{align}
&\Pr[1\la\mathsf{Dis}_{\epsilon}^{\mathbf{NP}}(\ell,m,x):x\la\cD(\ell)]
-
\Pr[1\la\mathsf{Dis}_{\epsilon}^{\mathbf{NP}}
(\ell,m,x):x\la\cD(m)]\\
&=\sum_{x\in\cS_n}\left(\Pr[x\la\cD(\ell)]-\Pr[x\la\cD(m)]\right)\cdot\Pr[1\la\mathsf{Dis}_{\epsilon}^{\mathbf{NP}}(\ell,m,x)]\\
&+\sum_{x\in\cT_n}\left(\Pr[x\la\cD(\ell)]-\Pr[x\la\cD(m)]\right)\cdot\Pr[1\la\mathsf{Dis}_{\epsilon}^{\mathbf{NP}}(\ell,m,x)]\\
&+\sum_{x\in\cU_n}\left(\Pr[x\la\cD(\ell)]-\Pr[x\la\cD(m)]\right)\cdot\Pr[1\la\mathsf{Dis}_{\epsilon}^{\mathbf{NP}}(\ell,m,x)]\\
&\geq \sum_{x\in\cS_n}\left(\Pr[x\la\cD(\ell)]-\Pr[x\la\cD(m)]\right)\cdot\Pr[1\la\mathsf{Dis}_{\epsilon}^{\mathbf{NP}}(\ell,m,x)]\label{ineq_10}\\
&+\sum_{x\in\cU_n}\left(\Pr[x\la\cD(\ell)]-\Pr[x\la\cD(m)]\right)\cdot\Pr[1\la\mathsf{Dis}_{\epsilon}^{\mathbf{NP}}(\ell,m,x)]\\
&\geq \sum_{x\in\cS_n}\left(\Pr[x\la\cD(\ell)]-\Pr[x\la\cD(m)]\right)\cdot\left(1-\frac{1}{250\epsilon(n)}\right)\label{ineq_11} \\
&+\sum_{x\in\cU_n}\left(\Pr[x\la\cD(\ell)]-\Pr[x\la\cD(m)]\right)\cdot\Pr[1\la\mathsf{Dis}_{\epsilon}^{\mathbf{NP}}(\ell,m,x)]\\
&\geq \sum_{x\in\cS_n}\left(\Pr[x\la\cD(\ell)]-\Pr[x\la\cD(m)]\right)\cdot\left(1-\frac{1}{250\epsilon(n)}\right) 
-\sum_{x\in\cU_n}\Pr[x\la\cD(m)]\cdot\Pr[1\la\mathsf{Dis}_{\epsilon}^{\mathbf{NP}}(\ell,m,x)]\\
&\geq \sum_{x\in\cS_n}\left(\Pr[x\la\cD(\ell)]-\Pr[x\la\cD(m)]\right)\cdot\left(1-\frac{1}{250\epsilon(n)}\right) 
-\frac{1}{250\epsilon(n)}\label{ineq_12}\\
&\geq \left(\mathsf{SD}(\cD(\ell),\cD(m))-\frac{1}{8\epsilon(n)}\right)\left(1-\frac{1}{250\epsilon(n)}\right)-\frac{1}{250\epsilon(n)}\label{ineq_13}.
\end{align}
Here, 
in \cref{ineq_10}, we have used that $\Pr[x\la\cD(\ell)]-\Pr[x\la\cD(m)]\geq 0$ for all $x\in\cT_n$,
in \cref{ineq_11},
we have used \cref{claim:Distinguish_well_in_S},
and in \cref{ineq_12}, we have used \cref{claim:Not_fooled_in_U}, and in \cref{ineq_13}, we have used \cref{fact:distance}.
This implies that
\begin{align}
&\mathsf{SD}(\cD(\ell),\cD(m))\\
&\leq
\left(1+\frac{1}{4\epsilon(n)}\right)\abs{
\Pr[1\la\mathsf{Dis}_{\epsilon}^{\mathbf{NP}}(\ell,m,x):x\la\cD(\ell)] 
-
\Pr[1\la\mathsf{Dis}_{\epsilon}^{\mathbf{NP}}(\ell,m,x):x\la\cD(m)] }+\frac{1}{4\epsilon(n)}.
\end{align}

\begin{proof}[Proof of \cref{claim:Distinguish_well_in_S}]
In the $\mathsf{Dis}_{\epsilon}^{\mathbf{NP}}(\ell,m,x)$ algorithm,
it runs $\mathsf{Estimate}_{500\epsilon}^{\mathbf{NP}}(\ell,x)$ and $\mathsf{Estimate}_{500\epsilon}^{\mathbf{NP}}\allowbreak(m,x)$.
From union bound, with probability at least $1-\frac{1}{250\epsilon(n)}$ over the internal randomness of $\mathsf{Estimate}_{500\epsilon}^{\mathbf{NP}}(\ell,x)$ and $\mathsf{Estimate}_{500\epsilon}^{\mathbf{NP}}(m,x)$, we have
\begin{align}
\left(1+\frac{1}{500\epsilon(n)}\right)\Pr[x\la\cD(\ell)]
\geq \mathsf{Estimate}_{500\epsilon}^{\mathbf{NP}}(\ell,x)
\geq \left(1-\frac{1}{500\epsilon(n)}\right)\Pr[x\la\cD(\ell)]
\end{align}
and
\begin{align}
\left(1+\frac{1}{500\epsilon(n)}\right)\Pr[x\la\cD(m)]
\geq \mathsf{Estimate}_{500\epsilon}^{\mathbf{NP}}(m,x)
\geq \left(1-\frac{1}{500\epsilon(n)}\right)\Pr[x\la\cD(m)].
\end{align}
From the definition of $\cS_n$, for all $x\in\cS_n$, we have
\begin{align}
&\mathsf{Estimate}_{500\epsilon}^{\mathbf{NP}}(\ell,x)
\geq 
\left(1-\frac{1}{500\epsilon(n)}\right)\left(1+\frac{1}{8\epsilon(n)}\right)\Pr[x\la\cD(m)]\\ 
&\geq\frac{500\epsilon(n)-1}{500\epsilon(n)+1}\left(1+\frac{1}{8\epsilon(n)}\right)
\mathsf{Estimate}_{500\epsilon}^{\mathbf{NP}}(m,x)
\geq\left(1+\frac{1}{16\epsilon(n)}\right)\mathsf{Estimate}_{500\epsilon}^{\mathbf{NP}}(m,x).
\end{align}
This completes the proof.
\end{proof}

\begin{proof}[Proof of \cref{claim:Not_fooled_in_U}]
In the $\mathsf{Dis}_{\epsilon}^{\mathbf{NP}}(\ell,m,x)$ algorithm,
it runs $\mathsf{Estimate}_{500\epsilon}^{\mathbf{NP}}(\ell,x)$ and $\mathsf{Estimate}_{500\epsilon}^{\mathbf{NP}}\allowbreak(m,x)$.
From union bound, with probability at least $1-\frac{1}{250\epsilon(n)}$ over the internal randomness of $\mathsf{Estimate}_{500\epsilon}^{\mathbf{NP}}(\ell,x)$ and $\mathsf{Estimate}_{500\epsilon}^{\mathbf{NP}}(m,x)$, we have
\begin{align}
\left(1+\frac{1}{500\epsilon(n)}\right)\Pr[x\la\cD(\ell)]
\geq \mathsf{Estimate}_{500\epsilon}^{\mathbf{NP}}(\ell,x)
\geq \left(1-\frac{1}{500\epsilon(n)}\right)\Pr[x\la\cD(\ell)]
\end{align}
and
\begin{align}
\left(1+\frac{1}{500\epsilon(n)}\right)\Pr[x\la\cD(m)]
\geq \mathsf{Estimate}_{500\epsilon}^{\mathbf{NP}}(m,x)
\geq \left(1-\frac{1}{500\epsilon(n)}\right)\Pr[x\la\cD(m)].
\end{align}
From the definition of $\cU_n$, for all $x\in\cU_n$, we have
\begin{align}
&\mathsf{Estimate}_{500\epsilon}^{\mathbf{NP}}(m,x)\geq \left(1-\frac{1}{500\epsilon(n)}\right)\Pr[x\la\cD(m)]>\left(1-\frac{1}{500\epsilon(n)}\right)\Pr[x\la\cD(\ell)]\\
&\geq\left(\frac{500\epsilon(n)-1}{500\epsilon(n)+1}\right)\mathsf{Estimate}_{500\epsilon}^{\mathbf{NP}}(\ell,x).
\end{align}
This implies that
\begin{align}
\left(1+\frac{1}{16\epsilon(n)}\right)\mathsf{Estimate}_{500\epsilon}^{\mathbf{NP}}(m,x)>\mathsf{Estimate}_{500\epsilon}^{\mathbf{NP}}(\ell,x).
\end{align}
Therefore, 
the probability is at most $\frac{1}{250\epsilon(n)}$ that the following event occurs
\begin{align}
\mathsf{Estimate}_{500\epsilon}^{\mathbf{NP}}(\ell,x)\geq\left(1+\frac{1}{16\epsilon(n)}\right)\mathsf{Estimate}_{500\epsilon}^{\mathbf{NP}}(m,x).
\end{align}
\end{proof}
\end{proof}

\begin{proof}[Proof of \cref{fact:distance}]

We have 
\begin{align}
&\mathsf{SD}(\cD(m),\cD(\ell))\\
&=\sum_{x\in\cS_n}(\Pr[x\la\cD(\ell)]-\Pr[x\la\cD(m)])+\sum_{x\in\cT_n}(\Pr[x\la\cD(\ell)]-\Pr[x\la\cD(m)])\\
&<\sum_{x\in\cS_n}(\Pr[x\la\cD(\ell)]-\Pr[x\la\cD(m)])+\frac{1}{8\epsilon(n)}\sum_{x\in\cT_n}\Pr[x\la\cD(m)]\\
&\leq \sum_{x\in\cS_n}(\Pr[x\la\cD(\ell)]-\Pr[x\la\cD(m)])+\frac{1}{8\epsilon(n)}
\end{align}
Hence, we have
\begin{align}
\sum_{x\in\cS_n}\Pr[x\la\cD(\ell)]-\Pr[x\la\cD(m)]\geq\mathsf{SD}(\cD(\ell),\cD(m))-\frac{1}{8\epsilon(n)}.
\end{align}
\end{proof}

\if0
\section{Definition of Agnostic Distribution Learning}

In this section, we formalize our learning model, which we call agnostic distribution learning.

\begin{definition}[Agnostic Distribution Learning]\label{def:distribution_learn}
    Let $\cD$ be an algorithm that takes $1^n$ with $n\in\N$ and $z\in\bit^{n}$, and outputs $x\in\bit^n$.
    We say that an algorithm $\cL$ can agnostically learn $\cD$ 
    if, for all distribution $\cT=\{\cT_n\}_{n\in\N}$, there exists a polynomial $t$ such that 
    \begin{align}
        &\Pr[
        \begin{array}{ll}
        \mathsf{SD}(\cT_n, \cD(1^n,h))\leq 3\cdot\min_{a\in\bit^n}\left\{\mathsf{SD}(\cT_n,  \cD(1^n,a))\right\}+\epsilon
        \end{array}
        :
    \begin{array}{ll}
       \{x_1,...,x_{t(n)} \}\la \cT_n^{\otimes t(n,\lfloor1/\epsilon\rfloor,\lfloor1/\delta\rfloor)}\\
        h\la \cL(1^n,1^{\lfloor1/\epsilon\rfloor},1^{\lfloor1/\delta\rfloor},\{x_i\}_{i\in[t(n,\lfloor1/\epsilon\rfloor,\lfloor1/\epsilon\rfloor)]})
    \end{array}
        ]\\
        &\geq 1-\delta
    \end{align}
    for all sufficiently large $n\in\N$.
\end{definition}
\color{black}
\fi

\ifnum\anonymous=0
\paragraph{Acknowledgements.}

TM is supported by
JST CREST JPMJCR23I3,
JST Moonshot R\verb|&|D JPMJMS2061-5-1-1, 
JST FOREST, 
MEXT QLEAP, 
the Grant-in Aid for Transformative Research Areas (A) 21H05183,
and 
the Grant-in-Aid for Scientific Research (A) No.22H00522.
\else
\fi

\ifnum\llncs=1
\bibliographystyle{alpha} 
\bibliography{abbrev3,crypto,reference}
\else
\bibliographystyle{alpha} 
\bibliography{abbrev3,crypto,reference}
\fi

\appendix

\section{Proof of \cref{thm:PP_learn_quantum}}\label{sec:Proof_PP_learning}

For showing \cref{thm:PP_learn_quantum}, we will use the following \cref{thm:PP_distinguish,thm:PP_sample}.
\begin{theorem}\label{thm:PP_distinguish}
For any QPT algorithm $\cD$ that takes $1^n$ and $z\in\bit^n$ as input and outputs $x\in\bit^{\poly(n)}$, there exists a deterministic polynomial-time algorithm $\mathsf{Dis}$ with access to a $\mathbf{PP}$ oracle, which takes $1^n$, $\ell\in\bit^n$, $m\in\bit^n$ and $x\in\bit^{\poly(n)}$, such that for all $(\ell,m)\in\bit^n\times\bit^n$, we have
\begin{align}
   &\mathsf{SD}(\cD(1^n,\ell),\cD(1^n,m))\\
   &= \abs{\Pr[1\la\mathsf{Dis}^{\mathbf{PP}}(1^n,\ell,m,x):x\la\cD(1^n,\ell)]-\Pr[1\la\mathsf{Dis}^{\mathbf{PP}}(1^n,\ell,m,x):x\la\cD(1^{n},m)]}
\end{align}
for all sufficiently large $n\in\N$.
\end{theorem}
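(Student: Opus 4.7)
The plan is to realize $\mathsf{Dis}^{\mathbf{PP}}$ as the \emph{optimal likelihood-ratio distinguisher}: on input $(1^n,\ell,m,x)$, it outputs $1$ whenever $\Pr[x\gets\cD(1^n,\ell)]>\Pr[x\gets\cD(1^n,m)]$ and outputs $0$ otherwise. It is standard that this distinguisher saturates the statistical distance; the entire content of the theorem is therefore to argue that the comparison of two QPT output probabilities can be carried out with a single $\mathbf{PP}$ oracle query.

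First, I would recall the well-known fact that for every QPT algorithm $\cQ$ that outputs classical bit strings, there exist a polynomial $p$ and a $\mathsf{GapP}$ function $g$ such that $\Pr[y\gets\cQ(w)]=g(w,y)/2^{p(|w|)}$ for every input $w$ and output $y$ (this follows from writing the acceptance probability as a sum of products of entries of the quantum circuit's unitaries and applying the standard $\mathsf{GapP}$ closure properties, as used in the proof that $\mathbf{BQP}\subseteq\mathbf{PP}$). Applying this to $\cD(1^n,\ell)$ and $\cD(1^n,m)$ yields two $\mathsf{GapP}$ functions $g_\ell,g_m$ and a common denominator $2^{p(n)}$, so that the question
\[
\Pr[x\gets\cD(1^n,\ell)]>\Pr[x\gets\cD(1^n,m)]
\]
reduces to deciding the sign of $g_\ell(x)-g_m(x)$. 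Since $\mathsf{GapP}$ is closed under subtraction and deciding the sign of a $\mathsf{GapP}$ function is exactly what $\mathbf{PP}$ does, this comparison defines a language in $\mathbf{PP}$. Hence a single oracle call lets a deterministic polynomial-time machine implement $\mathsf{Dis}^{\mathbf{PP}}$ as described.

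Next, I would verify the identity claimed in the theorem by a direct calculation:
\begin{align}
&\Pr[1\gets\mathsf{Dis}^{\mathbf{PP}}(1^n,\ell,m,x):x\gets\cD(1^n,\ell)]-\Pr[1\gets\mathsf{Dis}^{\mathbf{PP}}(1^n,\ell,m,x):x\gets\cD(1^n,m)]\\
&\qquad=\sum_{x:\Pr[x|\ell]>\Pr[x|m]}\bigl(\Pr[x\gets\cD(1^n,\ell)]-\Pr[x\gets\cD(1^n,m)]\bigr)\\
&\qquad=\mathsf{SD}(\cD(1^n,\ell),\cD(1^n,m)),
\end{align}
where the last line uses the standard characterization of statistical distance as the maximum advantage over all (even computationally unbounded) distinguishers, achieved exactly by the likelihood-ratio test. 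Ties $\Pr[x|\ell]=\Pr[x|m]$ contribute zero on both sides and are irrelevant. Since the quantity inside the absolute value on the right-hand side is non-negative by construction, taking absolute values does not change anything.

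The main (and essentially only) non-trivial step is the $\mathbf{PP}$ implementation of the comparison; everything else is a textbook identity for statistical distance. One minor subtlety worth double-checking is that $\cD(1^n,\cdot)$ may use ancilla and intermediate measurements, but these can be deferred via the principle of deferred measurement, after which the output probability fits the $\mathsf{GapP}/2^{p(n)}$ form uniformly in $(\ell,x)$. This uniformity is what makes the resulting oracle a single fixed $\mathbf{PP}$ language, as required by the statement of the theorem.
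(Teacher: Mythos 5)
Your proposal is correct and takes essentially the same route as the paper. The paper discharges the $\mathbf{PP}$ implementation by citing a lemma of Fortnow and Rogers (their Lemma~5.5, quoted here as \cref{lem:Extrapolation_PP}) that lets a $\mathbf{P^{PP}}$ machine compute $\Pr[x\gets\cD(1^n,z)]$ exactly and then compare the two values, whereas you rederive the needed fact directly from the $\mathsf{GapP}$ representation of QPT output probabilities and observe that only the sign of $g_\ell(x)-g_m(x)$ is needed (a single $\mathbf{PP}$ query); the likelihood-ratio distinguisher and the statistical-distance identity are the same in both.
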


\begin{theorem}\label{thm:PP_sample}
For any QPT algorithm $\cD$ that takes $1^n$ and $z\in\bit^n$ as input and outputs $x\in\bit^{\poly(n)}$, there exists a PPT algorithm $\cB$ with access to a $\mathbf{PP}$ oracle such that for all $z\in\bit^n$, we have
\begin{align}
\mathsf{SD}(\cD(1^n,z),\cB^{\mathbf{PP}}(1^n,z))=0
\end{align}
for all sufficiently large $n\in\N$.
\end{theorem}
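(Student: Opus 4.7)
The plan is to combine the Feynman path integral with the dyadic structure of quantum probabilities in a real universal gate set, and then perform exact inverse-CDF sampling via $\mathbf{PP}$ oracle queries. Invoking the Aharonov-style universality result for $\{H, \mathrm{Toffoli}\}$, I may assume without loss of generality that $\cD(1^n, z)$ is implemented by a $\poly(n)$-size $\{H, \mathrm{Toffoli}\}$-circuit $U_z$ acting on $|0^{m(n)}\rangle$ followed by a computational-basis measurement of a $q(n)$-bit output register. Let $p(n)$ be the number of Hadamard gates in $U_z$. Because each Hadamard contributes a factor $\pm 1/\sqrt{2}$ and Toffoli is a permutation, every amplitude has the form $\langle y | U_z | 0^{m(n)}\rangle = (N_y^+ - N_y^-)/\sqrt{2^{p(n)}}$, where $N_y^\pm \in \mathbf{\#P}$ count the Feynman paths that reach $|y\rangle$ with positive and negative contribution respectively. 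Summing over the traced-out ancillas,
\[
\Pr[x \la \cD(1^n, z)] \;=\; \frac{M_x}{2^{p(n)}}, \qquad M_x \;:=\; \sum_{y\,:\,y_{\mathrm{out}} = x}(N_y^+ - N_y^-)^2 \in \mathbb{Z}_{\ge 0},
\]
so every output probability is an integer multiple of $2^{-p(n)}$ with a common dyadic denominator.

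The key computational ingredient is that for any prefix $u \in \bit^{\le q(n)}$, the quantity $Q(u) := \sum_{x \text{ extending } u} M_x$ is computable exactly in $\mathbf{FP}^{\mathbf{PP}}$. Indeed, $N_y^\pm \in \mathbf{\#P} \subseteq \mathbf{GapP}$; $\mathbf{GapP}$ is closed under multiplication (hence squaring) and under polynomially-bounded exponential sums (by nondeterministically guessing both the completion of $u$ and the ancilla register), so $Q(u) \in \mathbf{GapP}$; and every $\mathbf{GapP}$ function is computable bitwise in $\mathbf{FP}^{\mathbf{PP}}$ via $O(p(n))$ binary-search queries of the form ``is the $i$-th bit of $Q(u)$ equal to $1$?''. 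Observe also that $Q(\varepsilon) = 2^{p(n)}$ by unitarity and $Q(u) = Q(u\|0) + Q(u\|1)$ by additivity.

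The sampler $\cB^{\mathbf{PP}}(1^n, z)$ is then the classical inverse-CDF sampler: draw $R$ uniformly from $\{0, 1, \dots, 2^{p(n)} - 1\}$ using $p(n)$ fresh coins; initialize $u \leftarrow \varepsilon$; for $i = 1, \dots, q(n)$, query $Q(u\|0)$ through the $\mathbf{PP}$ oracle and either set $u \leftarrow u\|0$ if $R < Q(u\|0)$, or replace $R$ with $R - Q(u\|0)$ and set $u \leftarrow u\|1$ otherwise; finally output $u$. A direct induction on prefix length shows $\Pr[\cB^{\mathbf{PP}}(1^n, z) = x] = M_x/2^{p(n)} = \Pr[x \la \cD(1^n, z)]$ for every $x$, so $\mathsf{SD}(\cD(1^n, z), \cB^{\mathbf{PP}}(1^n, z)) = 0$. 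The total resource usage is $p(n)$ random bits, $O(q(n) \cdot p(n))$ $\mathbf{PP}$-oracle queries, and arithmetic on $p(n)$-bit integers, all polynomial.

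The main obstacle is ensuring every ingredient is genuinely exact rather than merely negligibly close, since the theorem demands $\mathsf{SD} = 0$. Three places need care: (i) the gate-set reduction, which is why I use the real universal set $\{H, \mathrm{Toffoli}\}$ — its amplitude ring $\mathbb{Z}[1/\sqrt{2}]$ forces a uniform dyadic denominator $2^{p(n)}$, so no rounding is needed anywhere; (ii) the draw of $R$, which is exactly uniform on $\{0,\dots,2^{p(n)}-1\}$ because $p(n)$ is an integer and we use that many fresh coins; and (iii) the evaluation of $Q(u)$, which is exact because $\mathbf{GapP}\subseteq\mathbf{FP}^{\mathbf{PP}}$ allows bitwise extraction without any approximation. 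Once these exactness points are secured, the correctness of the inverse-CDF sampler is a routine induction, yielding $\mathsf{SD} = 0$ as claimed.
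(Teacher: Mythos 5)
Your proposal is correct and follows the same underlying route as the paper, which cites Fortnow--Rogers (Lemma~\ref{lem:Extrapolation_PP}) for exact evaluation of quantum output probabilities in $\mathbf{FP}^{\mathbf{PP}}$ and leaves the remaining sampling step to the reader. You rederive the probability-evaluation claim from scratch via $\mathbf{GapP}$ closure under products and exponential sums, and you make explicit the standard inverse-CDF/bit-by-bit prefix sampler that converts exact cumulative probabilities into an exactly-distributed sample. The $\mathbf{GapP}$ reasoning, the prefix-tree navigation, and the resource accounting are all sound.

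One point deserves more care than the phrase ``without loss of generality'' suggests. Passing from an arbitrary QPT algorithm to a $\{H, \mathrm{Toffoli}\}$-circuit is not a lossless transformation in general: Solovay--Kitaev compilation from a gate set with non-dyadic amplitudes (e.g.\ $\{H, \mathrm{CNOT}, T\}$) is only approximate, and an approximate compilation would break the exact $\mathsf{SD}=0$ claim. What you actually need --- and what the paper implicitly needs when it invokes FR99's exact equality $\cC^{\mathbf{PP}}(1^n,\cD,z)=\Pr[z\la\cD(1^n)]$ --- is the convention that ``QPT'' is defined over a fixed gate set whose output probabilities are dyadic rationals with a uniform $2^{-p(n)}$ denominator. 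With that convention stated up front (rather than posed as a reduction), your argument is complete and matches the paper's.
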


Note that \cref{thm:PP_distinguish,thm:PP_sample} directly follows from the following \cref{lem:Extrapolation_PP}.
Therefore, we omit the proof of them.
\begin{lemma}[\cite{FR99}]\label{lem:Extrapolation_PP}
    There exists a deterministic polynomial-time algorithm $\cC$ with access to a $\mathbf{PP}$ oracle such that the following holds.
    \begin{align}
        \cC^{\mathbf{PP}}(1^n,\cC,z)=\Pr[z\la\cD(1^n)].
    \end{align}
    Here, 
    $\cD$ is a quantum polynomial-time algorithm that takes $1^n$ as input, and outputs $z$.
\end{lemma}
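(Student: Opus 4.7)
The plan is to reduce the computation of $\Pr[z\la\cD(1^n)]$ to extracting the bits of a $\mathbf{GapP}$ function, which a deterministic polynomial-time machine can do with a $\mathbf{PP}$ oracle. First I would fix a universal gate set with real dyadic amplitudes, namely Hadamard together with Toffoli (or equivalently Hadamard, CNOT, and Toffoli). By standard Solovay--Kitaev-style compilation, we may assume without loss of generality that $\cD$ is given by a polynomial-size circuit $U=U_m\cdots U_1$ over this gate set acting on $p(n)$ qubits initialized to $|0^{p(n)}\rangle$, with some designated output register holding $z$ and the remaining ancilla register holding $y\in\bit^{q(n)}$ after tracing. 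Let $h$ denote the number of Hadamard gates appearing in $U$; every Toffoli (and CNOT) entry is in $\{0,1\}$, and each Hadamard contributes a factor of $\pm 1/\sqrt{2}$ along any computation path.

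Next I would write the output probability in Feynman path-integral form as
\begin{align}
\Pr[z\la\cD(1^n)]=\sum_{y\in\bit^{q(n)}}\left|\langle z,y\,|\,U\,|\,0^{p(n)}\rangle\right|^{2}=\frac{1}{2^{h}}\sum_{y\in\bit^{q(n)}}A_{z,y}^{2},
\end{align}
where each $A_{z,y}\in\mathbb{Z}$ is the signed integer obtained by summing, over all $2^{h}$ assignments specifying which branch each Hadamard gate takes, the product of gate entries along the corresponding classical path from $|0^{p(n)}\rangle$ to $|z,y\rangle$ (with the $2^{-h/2}$ normalizations factored out). Writing $A_{z,y}=A_{z,y}^{+}-A_{z,y}^{-}$ as a difference of two polynomial-time-verifiable path counts shows that $A_{z,y}$ is a $\mathbf{GapP}$ function of $(z,y)$, and hence $N_{z}\seteq\sum_{y}A_{z,y}^{2}$ is also a $\mathbf{GapP}$ function of $z$. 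Since $N_{z}$ is manifestly a nonnegative integer bounded by $2^{2h}=2^{\poly(n)}$, the probability $\Pr[z\la\cD(1^n)]=N_{z}/2^{h}$ is a rational with $\poly(n)$-bit numerator and known power-of-two denominator.

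The remaining step is exact bit extraction of $N_{z}$ using the $\mathbf{PP}$ oracle. I would invoke the standard fact that for any $\mathbf{GapP}$ function $f$ and any integer threshold $k$ written in binary, the predicate ``$f(x)\geq k$'' lies in $\mathbf{P}^{\mathbf{PP}}$; this is immediate from the closure of $\mathbf{GapP}$ under subtraction of constants and from the sign-query characterization of $\mathbf{PP}$ on $\mathbf{GapP}$ functions. Performing binary search on the $O(h)=\poly(n)$ bits of $N_{z}$ therefore pins down the exact value of $N_{z}$ using $\poly(n)$ queries, after which the algorithm outputs the rational $N_{z}/2^{h}$ in deterministic polynomial time.

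The main subtlety is not any single hard step but rather the setup: one must commit to a gate set whose amplitudes live in $\{0,\pm 1,\pm 1/\sqrt{2}\}$ so that, after pulling out the uniform normalization $2^{-h/2}$, every path amplitude is an integer and the squared-magnitude sum is a bona fide $\mathbf{GapP}$ function. Universality of Hadamard$+$Toffoli for $\mathbf{BQP}$ (with polynomial overhead, and crucially with \emph{exact} simulation rather than only approximation) is what makes the reduction apply uniformly to every QPT algorithm $\cD$, and once this is in place the reduction of exact bit extraction for $\mathbf{GapP}$ functions to $\mathbf{PP}$ queries is routine.
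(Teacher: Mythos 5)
The paper does not give its own proof here --- \cref{lem:Extrapolation_PP} is cited to \cite{FR99}, so the relevant comparison is to the Fortnow--Rogers (and Adleman--DeMarrais--Huang) argument. Your core approach matches theirs: write $\Pr[z\la\cD(1^n)]$ as a Feynman sum over computation paths, observe that after factoring out the Hadamard normalization $2^{-h/2}$ every path amplitude is $\pm 1$, conclude that the integer $N_z=\sum_y A_{z,y}^2$ is a $\mathbf{GapP}$ function (using closure under products and exponential sums), and extract its bits by binary search against $\mathbf{PP}$ threshold queries on $N_z - k$. All of that is correct and is exactly the argument the cited reference uses.

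The genuine gap is in your reduction to the Hadamard+Toffoli gate set. You appeal to ``standard Solovay--Kitaev-style compilation'' and then assert in your final paragraph that Hadamard+Toffoli universality holds ``crucially with \emph{exact} simulation rather than only approximation.'' No such exact-universality theorem exists, and one easily sees that it cannot: a gate set containing an amplitude such as $1/3$ can produce output probabilities with odd denominators, whereas every Hadamard+Toffoli circuit produces probabilities whose denominators are powers of two. Solovay--Kitaev only yields $\epsilon$-approximations in operator norm, which perturbs the exact output probability, so your algorithm would compute $\Pr[z\la\widetilde{\cD}(1^n)]$ for an approximating circuit $\widetilde{\cD}$ rather than the exact quantity required by the lemma and by its downstream uses in \cref{thm:PP_distinguish,thm:PP_sample}, both of which demand equalities. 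The fix is much simpler than the machinery you invoke: under the standard convention a QPT algorithm already \emph{is} a uniform circuit family over a fixed universal gate set with efficiently computable (e.g.\ dyadic or $\pm 2^{-1/2}$-valued) matrix entries, so no compilation step is needed and the path sum already has the required integer form. With that change your argument goes through.
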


\begin{proof}[Proof of \cref{thm:PP_learn_quantum}]
The following proof is similar to that of \cref{thm:NP_learning}.

Let $\mathsf{Dis}^{\mathbf{PP}}$ be a deterministic polynomial-time algorithm with access to a $\mathbf{PP}$ oracle given in \cref{thm:PP_distinguish}.
For a QPT algorithm $\cD$, which takes $1^n$ and $z\in\bit^n$ as input, let $\cB$ be a PPT algorithm with access to a $\mathbf{PP}$ oracle given in \cref{thm:PP_sample} such that
\begin{align}
\mathsf{SD}(\cD(1^n,z),\cB^{\mathbf{PP}}(1^n,z))=0.
\end{align}

We set $t(n)\seteq 100n^2\cdot \epsilon(n)^2\left(n +\log(\delta(n))\right)$.
Let us consider the following deterministic algorithm $\cM_{\omega}$ with access to a $\mathbf{PP}$ oracle.

\begin{description}
\item[$\cM_{\omega}^{\mathbf{PP}}(1^n,a,b,\{x_i,r_i\}_{i\in[t(n)]})$:] $ $
\begin{enumerate}
\item Run $\mathsf{Dis}_{\epsilon}^{\mathbf{PP}}(a,b, x_i)$ for all $i\in[t(n)]$.
\item For all $i\in[t(n)]$, run $X_{a,r_i(2)}\la\cB^{\mathbf{PP}}(a;r_i)$ and $\mathsf{Dis}_{\epsilon}^{\mathbf{PP}}(a,b, X_{a,r_i})$.
\item Output $1$ if 
\begin{align}
\abs{\sum_{i\in[t(n)]}\frac{\mathsf{Dis}_{\epsilon}^{\mathbf{PP}}(a,b, x_i )}{t(n)}-\sum_{i\in[t(n)]}\frac{\mathsf{Dis}_{\epsilon}^{\mathbf{PP}}(a,b, X_{a,r_i})}{t(n)}}\leq\mathsf{Opt}_n+\omega(n).
\end{align}
Otherwise, output $0$.
\end{enumerate}
\end{description}

For any function $\omega$, and any $a_1,...,a_i\in\bit^{i}$ with $i\in\N$, let us define
\begin{align}
&\cL_{\omega,(a_1,...,a_i)}\\
&\seteq 
\bigg\{\{x_i,r_i\}_{i\in[t(n)]}\in\bit^*,n\in\N:
\exists a\in\bit^{n-i} \mbox{ such that } \forall b\in\bit^n \mbox{, we have}\\
&1\la \cM_{\omega}^{\mathbf{PP}}(1^n,(a_1,...,a_{i}),a,b,\{x_i,r_i\}_{i\in[t(n)]})
\bigg\}.
\end{align} 
Clearly, we have $\cL_{\omega,(a_1,...,a_i)}\in\Sigma_2^{\mathbf{PP}}$.

We consider the following PPT algorithm  $\cA$ with access to a $\Sigma_2^{\mathbf{PP}}$ oracle:
\begin{description}
\item[$\cA^{\Sigma_2^{\mathbf{PP}}}(1^n,\{x_i\}_{i\in[t(n)]})$:] $ $ 
\begin{enumerate}
\item Uniformly randomly sample $\{r_i\}_{i\in[t(n)]}$.
\item For all $I\in[n]$, run the following.
\begin{enumerate}
\item Check if there exists $a_{I+1},...,a_n\in\bit^{n-I}$ such that
\begin{align}
&\Biggl|\sum_{i\in[t(n)]}\frac{\mathsf{Dis}_{\epsilon}^{\mathbf{PP}}(\mathsf{out}_1,...,\mathsf{out}_{I-1},1,a_{I+1},...,a_n,b, x_i)}{t(n)}\\
&\hspace{1cm}-\sum_{i\in[t(n)]}\frac{\mathsf{Dis}_{\epsilon}^{\mathbf{PP}}(\mathsf{out}_1,...,\mathsf{out}_{I-1},1,a_{I+1},...,a_n,b, X_{(\mathsf{out}_1,...,\mathsf{out}_{I-1},1,a_{I+1},...,a_n),r_i} )}{t(n)}
\Biggl|\\
&
\hspace{5cm}\leq\mathsf{Opt}_n+\frac{2I-1}{4n\cdot \epsilon(n)}
\end{align}
for all $b\in\bit^n$
by checking that
$\left(\{x_i,r_i\}_{i\in[t(n)]},n\right)$ is contained in $ \cL_{\frac{2I-1}{4n\cdot\epsilon(n)},(\mathsf{out}_1,...,\mathsf{out}_{I-1},1)}$.
\item If so, set $\mathsf{out}_I=1$. Otherwise, set $\mathsf{out}_{I}=0$.
\end{enumerate}
\item Output $\mathsf{out}_1,...,\mathsf{out}_n$.
\end{enumerate}
\end{description}
We can show that
$\cA^{\Sigma_2^{\mathbf{PP}}}$ satisfies
\begin{align}
\Pr[ 
\mathsf{SD}(\cD(1^n,h),\cT(1^n))\leq \left(3+\frac{1}{\epsilon(n)}\right)\cdot\mathsf{Opt}_n+\frac{1}{\epsilon(n)}:
\begin{array}{ll}
&\{x_i\}_{i\in[t(n)]}\la\cT(1^n)^{\otimes t(n)} \\
& h\la \cA^{\Sigma_2^{\mathbf{PP}}}(1^n,\{x_i\}_{i\in[t(n)]})
\end{array}
]\geq 1-1/\delta(n)    
\end{align}
for all sufficiently large $n\in\N$.
Here, $\mathsf{Opt}_n\seteq \min_{a\in\bit^n}\{\mathsf{SD}(\cT(1^n),\cD(1^n,a))\}$.
The analysis is the same as \cref{thm:NP_learning}, and thus we omit it.
\end{proof}

\section{Proof of \cref{thm:Learning_hardness_is_weaker}}\label{app:proof_1}

In the following, we show \cref{thm:Learning_hardness_is_weaker}.
For showing \cref{thm:Learning_hardness_is_weaker}, let us introduce average-case hardness of quantum probability estimation secure against QPT algorithm with access to a $\Sigma_3^{\mathbf{P}}$ oracle as follows.

\begin{definition}[Average-case hardness of quantum probability estimation secure against QPT algorithm with $\Sigma_3^{\mathbf{P}}$ oracle]\label{QPE}
We say that average-case hardness of quantum probability estimation secure against QPT algorithm with access to a $\Sigma_3^{\mathbf{P}}$ oracle holds if the following holds:
There exists a QPT algorithm $\cS$ that takes $1^n$ as input and outputs $x\in\bit^n$ such that for all QPT algorithms $\cA$ with access to a $\Sigma_3^{\mathbf{P}}$ oracle and constants $c>0$
\begin{align}
\Pr_{x\la\cS(1^n)}\left[\abs{\cA^{\Sigma_3^{\mathbf{P}}}(x)-\Pr[x\la\cS(1^n)]}\leq \frac{\Pr[x\la\cS(1^n)]}{n^c}\right]\leq 1-\frac{1}{n^c}
\end{align}
for infinitely many $n\in\N$.
\end{definition}

\begin{proof}[Proof of \cref{thm:Learning_hardness_is_weaker}]
The proof follows from \cite{STOC:KT25,C:ChuGolGra24} and \cref{lem:OWPuzz_to_Learn_avg}.

In the similar way as Theorem 5.1 of \cite{STOC:KT25}, we can show that \cref{def:QAA} and $\mathbf{P^{\#P}}\nsubseteq \mathbf{BQP^{\Sigma_3^{\mathbf{P}}}}$ implies \cref{QPE}.
In the similar way as Theorem 5.2 of \cite{STOC:KT25}~\footnote{
See also Lemma 4.3 of \cite{myHM24}.
}, we can show that \cref{QPE} implies distributionally OWPuzzs secure against QPT algorithm with access to a $\Sigma_3^{\mathbf{P}}$ oracle.
In the similar way as Theorem 29 of \cite{C:ChuGolGra24}, we can show that distributionally OWPuzzs secure against QPT with access to a $\Sigma_3^{\mathbf{P}}$ oracle implies nuQEFID secure against QPT algorithm with access to a $\Sigma_3^{\mathbf{P}}$ oracle.
In the similar way as \cref{lem:OWPuzz_to_Learn_avg}, we can show that nuQEFID secure against QPT algorithm with access to a $\Sigma_3^{\mathbf{P}}$ oracle implies
average-case hardness of proper quantum distribution learning secure against QPT algorithm with access to a $\Sigma_3^{\mathbf{P}}$ oracle.
Average-case hardness of proper quantum distribution learning secure against QPT algorithm with access to a $\Sigma_3^{\mathbf{P}}$ oracle directly implies \cref{assumption:PP_hardness_of_Dis_learn}.
\end{proof}

\if0
\color{red}
\section{On polynomial Hierarchy}
\taiga{Prepare the following just in case.}

\begin{theorem}
If $\mathbf{PP}\subseteq\mathbf{BQP}$, then $\Sigma_2^{\mathbf{PP}}\subseteq\mathbf{BQP}$.
\end{theorem}

\begin{proof}
First, we show that if $\mathbf{PP}\subseteq \mathbf{BQP}$, then $\Sigma_1^{\mathbf{PP}}\subseteq\mathbf{BQP}$.
Let $\cL\in\Sigma_1^{\mathbf{PP}}$.
Then, there exists a deterministic polynomial-time algorithm $\cM$ and a polynomial $p$ such that the following two conditions are satisfied:
\begin{enumerate}
\item For any $x\in\cL$, there exists $w_1\in\bit^{p(\abs{x})}$ such that
\begin{align}
1\la\cM^{\mathbf{PP}}(x,w_1).
\end{align}
\item For any $x\notin\cL$, for all $w_1\in\bit^{p(\abs{x})}$, we have
\begin{align}
0\la\cM^{\mathbf{PP}}(x,w_1).
\end{align}
\end{enumerate}
We consider the following deterministic polynomial-time algorithm $\cM^*(x)$ with $\mathbf{PP}$ oracle.
\begin{description}
\item[$\cM^{*\mathbf{PP}}(x)$: ]$ $
\begin{enumerate}
\item Sample $w_1\la\bit^{p(\abs{x})}$.
\item Run $\cM^{\mathbf{PP}}(x,w_1)$, and output its output.
\end{enumerate}
\end{description}
From the assumption of $\mathbf{BQP}=\mathbf{PP}$, there exists a QPT algorithm $\cQ$ such that

$\cM^{*\mathbf{PP}}(x)$ satisfies the following conditions:
\begin{enumerate}
    \item If $x\in\cL$, then
\begin{align}
\Pr[1\la\cM^{*\mathbf{PP}}(x)]>1/2.
\end{align}
\item If $x\notin\cL$, then
\begin{align}
\Pr[1\la\cM^{*\mathbf{PP}}(x)]<1/2.
\end{align}
\end{enumerate}
\end{proof}
\fi

\ifnum\cameraready=1
\else
\ifnum\submission=1
\newpage
\setcounter{tocdepth}{1}
\tableofcontents
\else
\fi
\fi

\end{document}